%%%  dac-ab.tex  
%%%  Identities and periodic oscillations of divide-and-conquer recurrences
%%%  splitting at half 
%%%  Hsien-Kuei Hwang, Svante Janson and Tsung-Hsi Tsai
%%%
\documentclass[12pt,a4paper]{article}
\usepackage[latin1]{inputenc}
\usepackage[english]{babel}
\usepackage{amsmath,amsthm,times,fullpage}
\usepackage[toc,page]{appendix}
\usepackage{wrapfig,fancybox,makecell}
\usepackage{mwe,caption,mathtools}
\usepackage{amssymb,amsfonts,etoolbox}
\usepackage{relsize,proof,dsfont,stmaryrd}
\usepackage{algorithmic,shadow,empheq}
\usepackage{yfonts,tikz,pgf,shadow,pifont,datetime,xcolor}
\usepackage{graphicx,color}
\usepackage{xcolor,longtable,multirow}
\usepackage{wasysym,marvosym,setspace}
\usepackage{enumerate,mathrsfs,adjustbox,comment}
\usepackage[format=plain,
            font=it]{caption}
\usetikzlibrary{shadows}
\usetikzlibrary{scopes,shapes,snakes,arrows}

\numberwithin{equation}{section}

\renewcommand\le{\leqslant}
\renewcommand\ge{\geqslant}
\def\ve{\varepsilon}
\def\tr#1{\lfloor #1\rfloor}
\def\ltr#1{\bigl\lfloor #1\bigr\rfloor}
\def\cl#1{\lceil #1\rceil}
\def\lcl#1{\bigl\lceil #1\bigr\rceil}

\def\lpa#1{\bigl({#1}\bigr)}
\def\Lpa#1{\Bigl({#1}\Bigr)}
\def\llpa#1{\biggl({#1}\biggr)}

\def\eqtext#1{\quad\text{#1}\quad}

\newcounter{fg}
 %new counter fg
 %repeats the last fg
 %defines #1 as the last fg

\newtheorem{thm}{Theorem}[section]
\newtheorem{cor}[thm]{Corollary}
\newtheorem{lemma}[thm]{Lemma}
\newtheorem{prop}[thm]{Proposition}

\theoremstyle{definition}
\newtheorem{definition}[thm]{Definition}   
\newtheorem{rem}[thm]{Remark}   

\newtheorem{example}[thm]{Example}
\AtEndEnvironment{example}{\null\hfill\qedsymbol}
 %no qed at end

\newtheorem{remark}[thm]{Remark}
\let\oldrem\remark
\renewcommand{\remark}{\oldrem\normalfont}
\AtEndEnvironment{remark}{\null\hfill\qedsymbol}

\newenvironment{romenumerate}[1][0pt]{%
\addtolength{\leftmargini}{#1}\begin{enumerate}
% gives (i), (ii) etc.
 %
 %
 }{\end{enumerate}}

% thmenumerate gives (i) run in after ``Theorem ...''
% now works with \label too, but needs '%' after \label{...}%
\newcounter{thmenumerate}
\newenvironment{thmenumerate}
{\setcounter{thmenumerate}{0}%
 \def\item{\par% \ifnum\thethmenumerate=0\else\par\fi %\noindent\fi
 \refstepcounter{thmenumerate}\textup{(\roman{thmenumerate})\enspace}}
}
{}

\newenvironment{alphenumerate}[1][5pt]{%
\addtolength{\leftmargini}{#1}\begin{enumerate}
% gives (i), (ii) etc.
\addtolength{\itemsep}{\smallskipamount}

\setlength\parindent{10pt} %works
 }{\end{enumerate}}

\newcommand\A{\textup{\textrm{A}}}

\newcommand\Ieven{\mathbf{1}_{n \text{ is even}}}
\newcommand\Iodd{\mathbf{1}_{n \text{ is odd}}}

\newcommand\dd{\,\mathrm{d}}
\newcounter{CC}
\newcommand{\CC}{\stepcounter{CC}\CCx} %new constant C_i
\newcommand{\CCx}{C_{\arabic{CC}}}     %repeats the last C_i
  %defines #1 as the last C_i
 %new C_i and defines #1 as it
 %repeats from  C_1
\newcounter{cc}
 %new constant c_i
     %repeats the last c_i
  %defines #1 as the last c_i
 %new c_i and defines #1 as it
 %repeats from  c_1
\renewcommand\rho{\varrho}
\newcommand\xpar[1]{(#1)}
\newcommand\bigpar[1]{\bigl(#1\bigr)}
\newcommand\Bigpar[1]{\Bigl(#1\Bigr)}
\newcommand\biggpar[1]{\biggl(#1\biggr)}
\newcommand\lrpar[1]{\left(#1\right)}

\newcommand\lrsqpar[1]{\left[#1\right]}

\newcommand\bigabs[1]{\bigl\lvert#1\bigr\rvert}

\newcommand{\refT}[1]{Theorem~\ref{#1}}

\newcommand{\refC}[1]{Corollary~\ref{#1}}

\newcommand{\refL}[1]{Lemma~\ref{#1}}

\newcommand{\refP}[1]{Proposition~\ref{#1}}
\newcommand{\refR}[1]{Remark~\ref{#1}}

\newcommand\refE[1]{Example \ref{#1}}
\newcommand\refEs[1]{Examples \ref{#1}}
\newcommand{\refS}[1]{Section~\ref{#1}}

\newcommand\refApp[1]{Appendix~\ref{#1}}
\newcommand\refTab[1]{Table~\ref{#1}}
\newcommand\ga{\alpha}
\newcommand\gb{\beta}
\newcommand\gd{\delta}
\newcommand\gD{\Delta}
\newcommand\gf{\varphi}
\newcommand\gam{\gamma}

\newcommand\gl{\lambda}
\newcommand\gL{\Lambda}

\newcommand\gs{\sigma}

\newcommand\gth{\theta}

\newcommand\eps{\varepsilon}

\newcommand\para[1]{\ensuremath{\{#1\}}}

\newcommand\oi{\ensuremath{[0,1]}}
\newcommand\ceil[1]{\lceil#1\rceil}
\newcommand\floor[1]{\lfloor#1\rfloor}
\newcommand\lrfloor[1]{\left\lfloor#1\right\rfloor}
\newcommand\frax[1]{\{#1\}}
\newcommand\qw{^{-1}}

\newcommand\qq{^{1/2}}

\newcommand\Holder{H\"older}
\newcommand\ii{\mathrm{i}}
\renewcommand\ii{i}

\newcommand\pfitemx[1]{\par#1:}
\newcommand\pfitemref[1]{\pfitemx{\ref{#1}}}

\newcommand\ntoo{\ensuremath{n\to\infty}}
\newcommand\bbR{\mathbb R}
\newcommand\bbC{\mathbb C}
\newcommand\bbZ{\mathbb Z}

\newcommand\BV{\mathrm{BV}}
\newcommand\HH{\mathrm{H}}
\newcommand\gab{{\ga,\gb}}
\newcommand\XLambda{\widehat\Lambda}
\newcommand\hP{\widehat P}
\newcommand\rchi{\rho+\chi}
\newcommand\richi{\rho-1+\chi}
\newcommand\Res{\operatorname{Res}}
\newcommand\FV{_{\mathrm{fin}}}
\newcommand\sqrthalf{\frac{\sqrt2}{2}}
\newcommand\xf{\widetilde{f}}
\newcommand\url{\texttt} %%%SJ
\newcommand\Tstrut{\rule{0pt}{2.6ex}}         % = `top' strut
\newcommand\Bstrut{\rule[-1.0ex]{0pt}{0pt}}   % = `bottom' strut

\usepackage[pdftex]{hyperref}  %%%SJ
\hypersetup{plainpages=True, pdfstartview=FitV,
pageanchor=false, colorlinks=true,linkcolor=blue!50,citecolor=blue!50}

\definecolor{dg}{rgb}{0,0.3,0}
\definecolor{lightblue}{rgb}{0.798,0.898,0.996}
\definecolor{gold}{rgb}{1,0.84,0.}

\title{Identities and periodic oscillations of divide-and-conquer
recurrences splitting at half\footnote{%
The work of the first author was partially supported by
National Science and Technology Council under the Grant
MOST-108-2118-M-001-005-MY3, and part of it was carried out while he
was visiting Department of Mathematics, Uppsala University; he thanks
the Department for hospitality and support. Part of the work of the
second author was carried out during visits to the Isaac Newton
Institute for Mathematical Sciences (EPSCR Grant Number EP/K032208/1)
and was partially supported by a grant from the Simons Foundation,
and a grant from the Knut and Alice Wallenberg Foundation; he thanks
these for hospitality and support.}}

\author{Hsien-Kuei Hwang \\
    Institute of Statistical Science, \\
    Academia Sinica\\
    Taipei 115\\
    Taiwan
\and Svante Janson\\ 
    Department of Mathematics\\ 
    Uppsala University\\ 
    % PO Box 480, SE-751 06\\
    Uppsala\\ 
    Sweden
\and Tsung-Hsi Tsai \\
    Institute of Statistical Science \\
    Academia Sinica\\
    Taipei 115\\
    Taiwan}
\date{\today}
% this version 
% modified by SJ 19 October, 2022

%\pagecolor{yellow!15}
\graphicspath{{./dac-ab-figs/}}

\begin{document}
    
\maketitle

\begin{abstract}
	
We study divide-and-conquer recurrences of the form 
\begin{equation*}
    f(n) 
    = \alpha f\lpa{\ltr{\tfrac n2}}
    + \beta f\lpa{\lcl{\tfrac n2}} 
    + g(n) \qquad(n\ge2),
\end{equation*}
with $g(n)$ and $f(1)$ given, where $\alpha,\beta\ge0$ with 
$\alpha+\beta>0$; such recurrences appear often in analysis of 
computer algorithms, numeration systems, combinatorial sequences,  
and related areas. We show that the solution satisfies always the 
simple \emph{identity}
\begin{equation*}
    f(n) 
    = n^{\log_2(\alpha+\beta)} P(\log_2n) - Q(n) 
\end{equation*}
under an optimum (iff) condition on $g(n)$. This form is not only an
identity but also an asymptotic expansion because $Q(n)$ is of a
smaller order. Explicit forms for the \emph{continuity} of the
periodic function $P$ are provided, together with a few other
smoothness properties. We show how our results can be easily applied
to many dozens of concrete examples collected from the literature,
and how they can be extended in various directions. Our method of
proof is surprisingly simple and elementary, but leads to the
strongest types of results for all examples to which our theory
applies.

\end{abstract} 
\tableofcontents

\section{Introduction}

This paper is a sequel to \cite{Hwang2017}, where we studied the 
case $(\alpha,\beta)=(1,1)$ of the following recurrence
\begin{equation}\label{a1}
    f(n)
    = \alpha f\lpa{\ltr{\tfrac{n}{2}}} 
    + \beta f\lpa{\lcl{\tfrac{n}{2}}} 
    + g(n) \qquad (n\ge 2),  
\end{equation}
with $f(1)$ and $\{g(n)\}_{n\ge2}$ given; we focus here on the
general case of two given constants $\alpha,\beta > 0$. (The case
when $\alpha\le0$ or $\beta\le0$ is briefly discussed in Section
\ref{S:nonpositive}.) As in \cite{Hwang2017}, our aim in this paper will be
\begin{itemize}
	
\item to establish \emph{optimum iff-conditions} for the identity
\begin{equation}\label{a2}
    f(n)
    = n^{\rho}P(\log_{2}n)-Q(n) \qquad(n\ge1),  
\end{equation}
which is also an asymptotic expansion, where 
\begin{align}\label{rho}
    \rho 
    := \log_2(\alpha+\beta),   
\end{align}
$P$ is a bounded, \emph{continuous}, \emph{periodic} function and
$Q(n)$ is of a smaller order $o(n^{\rho})$; and 

\item to explore the usefulness of such a result by examining other
associated properties and applying to many concrete examples.

\end{itemize}

In addition, we also examine further \emph{smoothness properties} of
the periodic function $P$, and introduce and explore a new notion to
describe the \emph{equivalence of different recurrences}.

\paragraph{An elementary interpolation approach.}
The crucial step of our approach is to identify a (generally
nonlinear) interpolation function $\varphi(x)$ such that the sequence
$f(n)$ as defined by \eqref{a1} for positive integers $n$ can be
extended to a \emph{continuous} function $f(x)$ defined for all real
$x\ge1$ in the way that $f(x)$ equals the original sequence $f(n)$
when $x=n$, a positive integer, and there is a version of the
recurrence \eqref{a1} valid for all $x$; see
Section~\ref{S:recurrence} for details.

Such an interpolation-based analysis for \eqref{a1} will then be 
extended (in \refS{S:qary}) to the more general $q$-ary recurrence 
($q\ge2$) of the form 
\begin{equation}\label{E:qary}
	f(n)
	= \sum_{0\le j<q}\alpha_{j}f\lpa{\ltr{\tfrac{n+j}q}}
	+g(n) \qquad (n\ge q),
\end{equation}
for some given constants $\alpha_0,\dots,\alpha_{q-1}$, and a result
of the form \eqref{a2} will also be derived under some conditions.
Typical situations where \eqref{E:qary} arises is the application of
divide-and-conquer into $q$ parts whose sizes are as evenly as
possible. The special case when $\alpha_j=1$ for $0\le j<q$ was
already discussed in \cite{Hwang2017}. Another special case is
$\ga_j=0$ for $1\le j\le q-2$, which yields (with $\ga=\ga_0$ and
$\gb=\ga_{q-1}$) the recursion $f(n) = \alpha f(\tr{\frac nq})+\beta
f(\cl{\frac nq})+g(n)$ considered in \cite[Theorem 4.1]{Cormen2022}
and \cite{Kuszmaul2021} although they allow also non-integer $q>1$.

\paragraph{Recurrences with or without floors and ceilings.}
While the divide-and-conquer paradigm with evenly divided parts is
widely used in computer algorithms, our formulation of the
divide-and-conquer recurrence \eqref{E:qary}, as well as the very
precise identity \eqref{a2}, is surprisingly rare in the computer
algorithm literature; instead one finds predominantly a recurrence of 
the form
\begin{align}\label{E:srr}
	f(n) 
	= (\alpha+\beta) f\lpa{\tfrac n2}+g(n),
\end{align}
or more generally
\begin{align}\label{E:srr2}
	f(n) 
	= \sum_{0\le j<d}\alpha_j f\lpa{\tfrac n{q_j}}
	+ g(n),
\end{align}
where $\alpha, \alpha_j>0$, $d=1,2,\dots$ and $q_j>1$. According to  
the first edition of Cormen et al.'s widely used textbook on 
Algorithms \cite[p.~54]{Cormen1990}: ``\emph{When we state and solve
recurrences, we often omit floors, ceilings, and boundary conditions.
We forge ahead without these details and later determine whether or
not they matter. \dots we shall address some of these details to show
the fine points of recurrence solution methods.}" 

Such a simplifying approach also appears in most publications on
Algorithms. One of our aims in this paper is to show that
\emph{retaining floors and ceilings is not much more complicated than
omitting them, and with various advantages that are mostly unnoticed
in the literature.} This suggests that one main reason of omitting
floors and ceilings in handling a divide-and-conquer recurrence lies
more in methodological deficiencies than simply technical
conventions; the approach proposed in this paper will then complete
to some extent the required methodological developments.

More precisely, when $\alpha,\beta, g(n)\ge0$, typical approaches
adopted in the computer algorithms community to solving the
recurrence \eqref{a1} include

\begin{itemize}
\item dropping floor and ceiling in \eqref{a1} by assuming $n$ to be 
a power of $2$, resulting in the closed-form expression 
\begin{align}\label{E:2^m}
	f(2^m)
	= \sum_{1\le j\le m}(\alpha+\beta)^{m-j}g(2^j)
	+ (\alpha+\beta)^m f(1),
\end{align}
and 
\item lower- and upper-bounding $f$ by keeping only floor and only 
ceiling function in \eqref{a1}, leading to 
\begin{align}\label{E:fmn}
	f_-(n) := (\alpha+\beta)f_-\lpa{\tr{\tfrac n2}} + g(n),
\end{align}
and
\begin{align}\label{E:fpn}
	f_+(n) := (\alpha+\beta)f_+\lpa{\cl{\tfrac n2}} + g(n),
\end{align}
respectively.
\end{itemize}
While simple and effective in estimating the growth order of $f(n)$
for large $n$, both approaches suffer from subtle oversights and 
intrinsic limitations, with different shortcomings.

\paragraph{Monotonicity.}
First, in either of the approaches \eqref{E:2^m} and
\eqref{E:fmn}--\eqref{E:fpn}, the next crucial property used in
estimating the asymptotic growth of $f(n)$ is \emph{monotonicity},
which in the first approach is of the form $f(2^m)\le f(2^m+\ell)\le
f(2^{m+1})$ for $0\le \ell \le 2^m$, and $f_-(n)\le f(n)\le f_+(n)$
in the second approach. However, these inequalities may not hold in
general due to the periodic nature of the recurrences \eqref{a1},
\eqref{E:fmn} and \eqref{E:fpn}. For example, take $\alpha=\beta=1$
and $g(n) = 1+\mathbf{1}_{n \text{ odd}}$ with $f(1)=f_+(1) = 0$.
Then $8=f(7)>f(8)=7$, and $f(15)=17>f_+(15)=16$. Thus the use of the
monotonicity is more subtle than it is generally taken to be.

On the other hand, when $g(n) = \mathbf{1}_{n \text{ odd}}$ with
$f(1)=0$ (which yields A296062 in the \emph{On-Line Encyclopedia of
Integer Sequences} database \cite{OEIS2022} with many combinatorial
interpretations), then $f(n)$ oscillates between $0$ and $\Theta(n)$.
Thus not only monotonicity fails but also the growth order oscillates
violently, although our result yields that $f(n) = nP(\log_2n)$ for
some continuous periodic function; see \eqref{E:A296062-P}.

\paragraph{Discontinuity.}
Apart from monotonicity, there is yet another deeper reason why the
original sequence \eqref{a1} is preferred to its simplified one-sided
versions \eqref{E:fmn} and \eqref{E:fpn}: the periodic function $P$
in \eqref{a2} is always \emph{continuous}, while the corresponding
one for the solution of either \eqref{E:fmn} or \eqref{E:fpn} is
almost always \emph{discontinuous}; more precisely, in typical cases
the function $P(t)$ is discontinuous at every $t$ such that $2^t$ is
a dyadic rational, i.e., has a finite binary representation (see
\refS{Salphabeta=0} and, for details in the case $\ga+\gb=2$,
\cite[Section 8]{Hwang2017}). But why does continuity matters here?
The reason is because the periodic function $P$ when evaluated at
$\log_2n$ (see \eqref{a2}), involves indeed functions evaluated at
the dyadic rational $n/2^{\tr{\log_2n}}$ (see \eqref{b16} and
\eqref{t1y}), so that discontinuity causes the sequence (or the
original cost function) to have more violent jumps even for
neighbouring input sizes. Thus simplifying the recurrence \eqref{a1}
to either \eqref{E:fmn} or \eqref{E:fpn} has the advantage of being
easily solvable by iteration, but suffers from structural
discontinuities, or more rough oscillations.

\paragraph{Master theorems.}
Another commonly used approach to solve \eqref{E:srr} is to apply the
so-called ``master theorems" (see \cite{Bentley1980,Cormen2022}),
which are generally effective and user-friendly, but does not provide
more precise asymptotic approximations. For example, in the case of
\eqref{a1}, if $g(n) = O(n^{\rho-\ve})$, $\ve>0$, then the master
theorem \cite[\S~4.5]{Cormen2022} or \cite{Kuszmaul2021} gives $f(n)
= \Theta(n^\rho)$, where $\rho$ is defined in \eqref{rho}, while
under the same growth order of $g$, our approach (see
Corollary~\ref{C2}) again guarantees \eqref{a2} with explicitly
computable functions $P$ and $Q$. There do exist finer master
theorems that give more precise asymptotics under stronger
assumptions on $g(n)$ (such as monotonicity; see \cite{Drmota2013}),
but none of them is as precise as our identity \eqref{a2}.

\paragraph{Discrete and continuous master theorems.}
An additional feature of our interpolation-based analysis is that we
always work on the same real function $f(x)$, which coincides with
the original sequence $f(n)$ at integer parameters, unlike general 
master theorems that distinguish between ``discrete master theorems''
and ``continuous master theorems''; for example, according to 
\cite{Kuszmaul2021}:
\begin{quote}
\emph{To distinguish the two situations, we call the master theorem
without floors and ceilings the continuous master theorem and the
master theorem with floors and ceilings the discrete master theorem.}
\end{quote}
The subtleties of the two different versions (together with other 
issues) are only very recently thoroughly examined in 
\cite{Kuszmaul2021}, where they write:
\begin{quote}
\emph{Several academic works provide proofs and proof sketches of the 
discrete master theorem. To the best of our knowledge, however, all 
of these proofs are either incomplete, incorrect, or require 
sophisticated mathematics.}
\end{quote}

See also the long paper (more than 300 pages) \cite{Campbell2020} for
other delicate issues arising from divide-and-conquer recurrences.
Additionally, the chapter on ``Recurrences'' (Chapter I.4) in the
first edition of Cormen et al.'s book \cite{Cormen1990} is now
largely expanded and updated in the latest, very recent, edition
\cite[Ch.~I.4]{Cormen2022}, more than three decades after its first
edition and following the corresponding developments in clarifying
the subtleties; see \cite{Campbell2020, Cormen2022}.

For more information and references on master theorems and
divide-and-conquer recurrences, see, for example, \cite{Drmota2013,
Garet2022, Heuberger2022, Kuszmaul2021, Roura2001}. See also
\cite{Hwang2017} for more references on other approaches (including
complex-analytic, Tauberian, renewal, fractal geometry, and Ansatz or
guess-and-prove, called the substitution method in \cite{Cormen2022})
used in the literature for solving \eqref{a1}.

\paragraph{Periodic equivalence of sequences.} 
For a more canonical way to group or classify the diverse periodic
functions $P$, we will introduce in Section~\ref{sec-pe} a useful
notion called ``periodic equivalence'', roughly meaning that
different sequences share, modulo amplitude and scale, the same
oscillating part. For example, denote by $S_{2,1}(n)$ (A006046) the
total number of odd entries in the first $n$ rows of Pascal triangle;
then $f(n) = S_{2,1}(n)$ satisfies \eqref{a1} with
$(\alpha,\beta)=(2,1)$, $g(n)=0$ and $f(1)=1$. We have $S_{2,1}(n)=
n^{\log_23}P(\log_2n)$; see \refE{Ex:pascal}. Then the following
OEIS sequences, all satisfying \eqref{a1} with
$(\alpha,\beta)=(2,1)$, are periodically equivalent to $S_{2,1}(n)$
(involving, up to scale and amplitude, the same $P$):
\begin{center}
\begin{tabular}{cccc}
OEIS id. & $g(n)$ & $f(1)$ & $f(n)$ \\ \hline
A051679 & $\frac18n^2 - \begin{cases}
	\frac n4,& \text{$n$ even}\\
	\frac18,& \text{$n$ odd}
\end{cases}$ & $0$
& $\binom{n+1}2-S_{2,1}(n)$\\
A080978 & $-2$ & $3$ & $2S_{2,1}(n)+1$ \\
A159912 & $\tr{\frac n2}$ & $1$ & $2S_{2,1}(n)-n$\\
A171378 & $\cl{\frac n2}^2-\mathbf{1}_{n \text{ odd}}$ 
& $0$ & $n^2-S_{2,1}(n)$\\
A267700 & $\tr{\frac n2}$ & $0$ & $S_{2,1}(n)-n$\\ \hline
 % & $c$ & $b$ & $(b+\frac c2)S_{2,1}(n)-\frac12c$
\end{tabular}	
\end{center}
See \refE{Ex:pascal} for more periodically equivalent sequences. 

\paragraph{Smoothness of the periodic function.}
While the periodic equivalence is introduced to identify the same
fluctuating part of different sequences satisfying the \emph{same}
recurrence, we also examine the varying smoothness nature exhibited
by \emph{different} recurrences. The main motivating observation is
that most periodic functions we obtain have visible cusps (likely to
be non-differentiable points), and there are other classes of
functions (such as Lipshitz and H\"older continuous) between the
class of continuous functions and that of continuously differentiable
functions. We will thus clarify the different, characteristic, 
inherent types of H\"older continuity of the interpolated function 
$f(x)$ and the periodic function $P(t)$ when $(\alpha,\beta)$ varies.

For example, in Figure~\ref{F:Sab}, we plot the periodic functions
$P$ (as defined in \eqref{a2}) when $f$ satisfies \eqref{a1} with
$g(n)=0$ and $f(1)=1$, and with $(\alpha,\beta)=(\alpha,1)$,
$\alpha=2,3,4,5$ (the lower blue curves) and
$(\alpha,\beta)=(1,\beta)$, $\beta=2,3,4,5$ (the upper green curves).
Our results show that these periodic functions are H\"older
continuous with exponent $\log_2(1+\alpha^{-1})$ and
$\log_2(1+\beta^{-1})$, respectively, and we conjecture that these
exponents are the best possible; see Section~\ref{S:smooth} for
details. Thus for these recurrences, the larger the values of
$\alpha$ or $\beta$, the ``less smooth" the periodic functions.

\begin{figure}[!ht]
	\begin{center}
		\includegraphics[height=3.5cm]{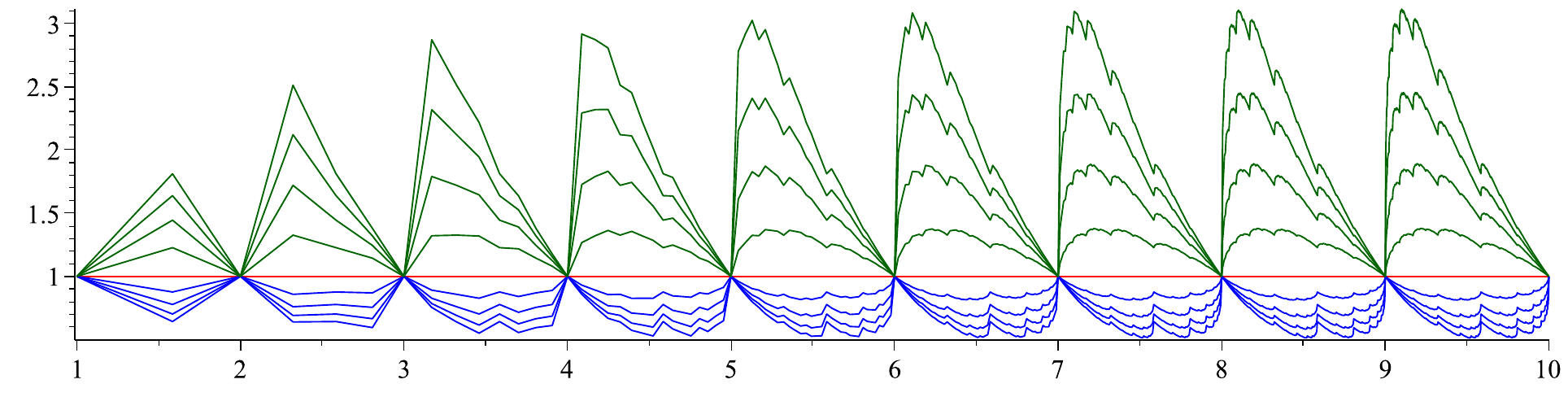}
	\end{center}
	\caption{Periodic fluctuations of the functions $P(\log_2n)
	=f(n)n^{-\log_2(\alpha+\gb)}$ for $n=2,\dots,1024$ when $f$ 
	satisfies \eqref{a1} with $g(n)=0$, $f(1)=1$ and with (the blue 
	curves) $\beta=1$ and $\alpha=2,3,4,5$ (from top to bottom),
	and (the green curves) $\alpha=1$ and $\beta=2,3,4,5$ (from 
	bottom to top).} \label{F:Sab}
\end{figure}

Moreover, we show that many of the periodic functions appearing in
our analysis are indeed not continuously differentiable at all points
in $(0,1)$. Several examples are piecewise differentiable with jumps
in the derivative at some points (see, for example, \refR{Ra=b} and
\refEs{E52} and \ref{E22-piece}). Other examples are less smooth; for
brevity of presentation, we discuss only one case (A006581) in detail
in \refE{E22-ns} and show (in Appendix~\ref{AE22-ns}) that the
periodic function is continuous but nowhere differentiable, leaving
such a deeper property for other sequences to the interested reader.

\paragraph{A generating function viewpoint.}
To see how the general cases differ from the special case
$(\alpha,\beta)=(1,1)$ or more generally $\alpha=\beta$, we consider
the generating function
\begin{equation}\label{gg1}
    A(z) 
    := \sum_{n\ge1} f(n) z^n,
\end{equation}
which, by \eqref{a1} satisfies the functional equation 
\begin{equation}\label{gg2}
    A(z) 
    = \frac{\beta+(\alpha+\beta)z+\alpha z^2}{z}\, A(z^2)
    + B(z),
\end{equation}
where
\begin{equation}\label{gg3}
    B(z) 
    := f(1)(1-\beta)z+\sum_{n\ge2}g(n)z^n.
\end{equation}
We see that if $\alpha=\beta$, then the functional equation becomes
\begin{equation}\label{gg4}
    A(z) 
    = \frac{\alpha(1+z)^2}z\,A(z^2)+B(z),
\end{equation}
so that the generating function $\bar{A}(z) := \frac{(1-z)^2}z\,A(z)$
of the second difference of $f(n)$ satisfies the simpler equation
\begin{equation}\label{gg5}
    \bar{A}(z) 
    = \alpha \bar{A}(z^2)+\bar{B}(z),
\end{equation}
where $\bar{B}(z) := \frac{(1-z)^2}z\,B(z)$. Assuming for simplicity
(without real loss of generality) that $\bar{B}(0)=0$, we then
obtain, by iteration, the exact solution
\begin{equation}\label{gg6}
    A(z) 
    = \frac{z}{(1-z)^2}\sum_{k\ge0}
    \alpha^{k}\bar{B}\lpa{z^{2^k}}. 
\end{equation}
Such a neat representation is the basis of the analytic approach
introduced in \cite{Flajolet1994}, but is not available in general
when $\alpha\ne\beta$. We will therefore use a different method.

This paper is structured as follows. We develop in
Section~\ref{S:recurrence} the required technicalities in order to
prove \eqref{a2}, and then address the smoothness properties of $P$
in Section~\ref{S:smooth}. These two sections provide a theoretical
foundation to the resolution of the divide-and-conquer recurrences of
the form \eqref{a1}. Applications of our theory to concrete examples
are discussed in Section~\ref{S:app1} for $\alpha\ne\beta$ and in
Section~\ref{S:app2} for $\alpha=\beta$. We then extend very briefly
our analysis to nonpositive $\alpha$ or $\beta$ in
Section~\ref{S:nonpositive} and to general $q$-ary recurrence
\eqref{E:qary} in Section~\ref{S:qary}. For completeness, an appendix
on the connection of our approach to Mellin transforms is given,
together with two others providing detailed proofs of some results in
the paper.

\paragraph{Notation.}

For convenience, we introduce the operator $\Lambda_{\alpha, \beta}$ 
as follows:
\begin{equation}
    \Lambda_{\alpha ,\beta}[f](n)
    := f(n)-\alpha f\left( \left\lfloor \tfrac{n}{2}
    \right\rfloor \right) -\beta f\left( \left\lceil     
    \tfrac{n}{2}\right\rceil\right) .
\end{equation}

Let, for $x>0$, 
\begin{align}\label{L,theta}
    L_{x}
    &:=\left\lfloor \log_{2}x\right\rfloor, 
    &\gth_x&:=\frax{\log_2x}=\log_2x-L_x\in[0,1),
\end{align}
and $L_0 := 0$.

For real $x$ and $y$, we let $x\vee y:=\max(x,y)$.

\section{The recurrence $\Lambda_{\alpha ,\beta}[f]=g$}
\label{S:recurrence}

\emph{Here and throughout this section, we assume that
$\alpha,\beta>0$, and define $g(1):=0$}.
The recurrence (\ref{a1}) can be rewritten as
\begin{equation}\label{b1}
	\begin{split}
	    f(2n) 
	    &=(\alpha +\beta )f(n)+g(2n),   \\
	    f(2n+1) 
	    &=\alpha f(n)+\beta f(n+1)+g(2n+1),
	\end{split}
\end{equation}
for $n\ge 1$, with given $f(1)$ and $g(n)$, $n\ge 2$. 

\paragraph{Extending the sequence $f(n)$ to a function $f(x)$ in
  $\mathbb{R}^{+}$.}  
We extend the sequence $f(n)$ to a continuous function $f(x)$
defined for real $x\ge1$ by interpolation between the integers with
scaled copies of a certain function $\varphi:\oi\to\oi$ constructed
below:
\begin{align}\label{b2}
%	\begin{split}
    f(n+t)
    &:= f(n)+\varphi(t)\left( f(n+1)-f(n)\right) \notag\\
    &= \left( 1-\varphi(t)\right) f\left( n\right) 
    + \varphi(t)f\left( n+1\right) ,
%	\end{split}
\end{align}
for $n\ge 1$ and $0\le t\le 1$; in a similar way, we construct 
$g(x)$ from $g(n)$. The function $\varphi$ is constructed so that 
we will have
\begin{equation}\label{b3}
    f(x)
    = (\alpha +\beta )f\left( \frac{x}{2}\right) 
    + g(x), \qquad x\ge2.  
\end{equation}
We further define $g(x):=0$ for $x\in[0,1)$; thus $g(x)$ is defined 
for $x\ge0$.

The function $\varphi(t)$ depends on $\alpha$ and $\beta$; 
we sometimes write it as $\varphi_{\alpha,\beta}(t)$ to emphasise the 
dependence on parameters.

\paragraph{Construction of the interpolation function $\varphi $.} 
We first give relations on $\varphi $ that imply  the functional
equation (\ref{b3}). The existence of such a $\gf$ is shown later.

\begin{lemma}\label{Lemma1}
Let $\varphi(t)=\varphi_{\alpha,\beta}(t)$ be a function on $[0,1]$
such that $\varphi(0)=0$, $\varphi(1)=1$ and
\begin{equation}\label{b4}
    \varphi(t)
    = \begin{cases}
        \tfrac{\beta}{\alpha +\beta}\,\varphi(2t), 
        & \text{if\/ }t\in [0,\frac{1}{2}], \\
        \tfrac{\alpha}{\alpha +\beta}\,\varphi(2t-1)
        +\tfrac{\beta}{\alpha +\beta}, 
        & \text{if\/ }t\in [\frac{1}{2},1].
    \end{cases} 
\end{equation}
Assume that \eqref{b1} holds. Then \eqref{b3} holds if we extend 
$f(n)$ and $g(n)$ by \eqref{b2} to $f(x)$ and $g(x)$, respectively, 
for all real $x\ge 1$.
\end{lemma}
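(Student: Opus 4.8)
The plan is to verify the functional equation \eqref{b3} directly by a short case analysis on the last binary digit of the relevant argument, the point being that the two alternatives in \eqref{b4} are precisely what the two cases demand. Fix $x\ge2$ and write $x/2=n+t$ with $n=\lfloor x/2\rfloor\ge1$ and $t\in[0,1]$, so that $x=2n+2t$; when $x/2$ is an integer the two admissible representations ($t=1$ versus $t=0$ with $n$ replaced by $n+1$) agree because $\varphi(0)=0$ and $\varphi(1)=1$, so this is unambiguous. I then split according to whether $2t\le1$ or $2t\ge1$, using the interpolation rule \eqref{b2} for $f$ and for $g$ to express $f(x)$ and $g(x)$ in terms of the integer values, substituting the two lines of \eqref{b1}, and collecting coefficients of $f(n)$ and $f(n+1)$.

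In the first case, $t\in[0,\tfrac12]$, we have $x=2n+2t$ with $2t\in[0,1]$, so \eqref{b2} gives $f(x)=(1-\varphi(2t))f(2n)+\varphi(2t)f(2n+1)$ and likewise $g(x)=(1-\varphi(2t))g(2n)+\varphi(2t)g(2n+1)$. Substituting $f(2n)=(\alpha+\beta)f(n)+g(2n)$ and $f(2n+1)=\alpha f(n)+\beta f(n+1)+g(2n+1)$ from \eqref{b1}, the $g$-terms collapse to exactly $g(x)$, and what remains is
\[
f(x)=\bigl(\alpha+\beta-\beta\varphi(2t)\bigr)f(n)+\beta\varphi(2t)f(n+1)+g(x).
\]
Since $(\alpha+\beta)f(x/2)+g(x)=(\alpha+\beta)\bigl((1-\varphi(t))f(n)+\varphi(t)f(n+1)\bigr)+g(x)$, matching the coefficients of $f(n)$ and of $f(n+1)$ both give the single scalar identity $\beta\varphi(2t)=(\alpha+\beta)\varphi(t)$, which is exactly the first line of \eqref{b4}; hence \eqref{b3} holds identically in this range. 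The second case $t\in[\tfrac12,1]$ is entirely analogous: now $x=(2n+1)+(2t-1)$ with $2t-1\in[0,1]$, so \eqref{b2} expresses $f(x)$ and $g(x)$ through $f(2n+1),f(2n+2)$ and $g(2n+1),g(2n+2)$; applying \eqref{b1} to $f(2n+1)$ and to $f(2n+2)=f(2(n+1))=(\alpha+\beta)f(n+1)+g(2n+2)$ and simplifying reduces \eqref{b3} to $\beta+\alpha\varphi(2t-1)=(\alpha+\beta)\varphi(t)$, i.e.\ the second line of \eqref{b4}.

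Finally I would dispose of the minor bookkeeping: that the two cases agree at $t=\tfrac12$ (both yield $f(x)=f(2n+1)$, using $\varphi(0)=0$, $\varphi(1)=1$), that \eqref{b4} is itself consistent at $t=\tfrac12$, and that every argument of $f$ and $g$ appearing in the computation lies in the range where \eqref{b1} and the interpolation are available — all of them integers $\ge2$ or reals $\ge1$, since $n\ge1$. There is no genuine obstacle here; the only things to be careful about are to carry the $g$-terms correctly through the substitution (so that they reassemble into $g(x)$) and to observe that the $f(n)$- and $f(n+1)$-coefficients impose the \emph{same} condition in each case, so that \eqref{b4} is exactly what makes \eqref{b3} an identity irrespective of the actual values $f(n)$.
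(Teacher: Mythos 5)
Your proposal is correct and is essentially the paper's own proof: the same two-case split (according to whether $x$ lies between an even integer and the next odd one, or between an odd integer and the next even one), the same substitution of \eqref{b1} into the interpolation \eqref{b2}, and the same observation that the $g$-terms reassemble into $g(x)$ while the remaining coefficients of $f(n)$ and $f(n+1)$ match precisely when the corresponding line of \eqref{b4} holds. The only difference is presentational — you match coefficients to see that \eqref{b3} is equivalent to \eqref{b4} in each case, whereas the paper runs the chain of equalities forward using \eqref{b4}.
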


\begin{proof}
If $n\ge 1$ and $0\le t\le \frac{1}{2}$, then, using the interpolation
(\ref{b2}), the recurrences (\ref{b1}) and the assumption (\ref{b4}), 
we have 
\begin{align}\nonumber
    \MoveEqLeft f(2n+2t) -g(2n+2t)  \\
    &= (1-\varphi(2t))(f(2n) -g(2n)) 
    +\varphi(2t)(f(2n+1) -g(2n+1))
    \nonumber \\
    &= (1-\varphi(2t)) (\alpha+\beta) f(n) 
    +\varphi(2t)\bigpar{\alpha f(n) +\beta f(n+1) }  
	\nonumber \\
    &= (\alpha+\beta) f(n) -\varphi(2t)\beta f(n) 
    +\varphi(2t)\beta f(n+1)  \nonumber \\
    &= (\alpha+\beta) f(n) -(\alpha +\beta) \varphi(t)f(n) 
    +(\alpha+\beta) \varphi(t)f(n+1)  \nonumber \\
    &= (\alpha+\beta) \bigpar{(1-\varphi(t)) f(n) 
    +\varphi(t)f(n+1)}  \nonumber \\
    &= (\alpha+\beta) f(n+t) . \label{b5} 
\end{align}
Similarly, the lower part of (\ref{b4}) can be rewritten as
\begin{align}\label{b4l}
    \xpar{\alpha +\beta}\,\varphi\left(\tfrac{1}{2}+t\right) 
    = \alpha\varphi(2t)+\beta,\qquad t\in[0,\tfrac{1}{2}],
\end{align}
and thus, still for $n\ge 1$ and $0\le t\le \frac{1}{2}$, 
\begin{align}\nonumber
    \MoveEqLeft f(2n+1+2t)-g(2n+1+2t)  \\
    &= (1-\varphi(2t))(f(2n+1) -g(2n+1))
    +\varphi(2t)(f(2n+2) -g(2n+2))  \nonumber \\
    &= (1-\varphi(2t))(\alpha f(n) +\beta f(n+1))
    +\varphi(2t)(\alpha+\beta) f(n+1)  \nonumber \\
    &= \alpha f(n) +\beta f(n+1) -\varphi(2t)\alpha f(n)
    +\varphi(2t)\alpha f(n+1)  \nonumber \\
    &= \alpha f(n) +\beta f(n+1) -\left((\alpha+\beta)
    \varphi\left(\tfrac{1}{2}+t\right) -\beta \right)
    \bigpar{f(n)-f(n+1)}\nonumber \\
    &= (\alpha+\beta) f(n) -(\alpha +\beta)
    \varphi\left(\tfrac{1}{2}+t\right) f(n)
    +(\alpha+\beta) \varphi\left(\tfrac{1}{2}+t\right) f(n+1)
    \nonumber \\
    &= (\alpha+\beta)
    \left(\left(1-\varphi\left(\tfrac{1}{2}+t\right) 
	\right) f(n)
    +\varphi\left(\tfrac{1}{2}+t\right)f(n+1) \right)  
	\nonumber \\
    &= (\alpha+\beta) f\left(n+\tfrac{1}{2}+t \right).  
	\label{b6} 
\end{align}
Combining (\ref{b5}) and (\ref{b6}), we obtain
\begin{equation}\label{b7}
    f(n+2t)
    = (\alpha +\beta )f\left(\tfrac{n}{2}+t\right) +g(n+2t)  
\end{equation}
for $n\ge 2$ and $0\le t\le \frac{1}{2}$, and thus (\ref{b3}) holds.
\end{proof} 

\begin{rem}\label{R2.2}
Conversely, it is immediate that if \eqref{b3} holds for the
extensions of $f$ and $g$ defined by \eqref{b2}, then \eqref{b1}
holds.
\end{rem}

In the case $\ga=\gb=1$ treated in \cite{Hwang2017}, the system of
equations (\ref{b4}) has the solution $\gf(t)=t$. In general, it is
not obvious that a continuous solution to (\ref{b4}) exists. We now
show that this is the case, and that this $\varphi $ is unique.

\begin{lemma}\label{Lemma2} 
If\/ $\alpha ,\beta >0$, then there exists a unique continuous
function $\varphi(t)=\gf_{\ga,\gb}(t)$ on $[0,1]$ such that
$\varphi(0)=0$, $\varphi(1)=1$ and \eqref{b4} holds. Moreover,
$\varphi $ is strictly increasing.
\end{lemma}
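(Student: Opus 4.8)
The plan is to construct $\varphi$ as the uniform limit of a sequence of functions obtained by iterating the functional equation \eqref{b4}, viewed as a fixed-point equation in the Banach space $C[0,1]$. Write $T$ for the operator that sends a function $\psi$ on $[0,1]$ to the function $T\psi$ defined by the right-hand side of \eqref{b4}, i.e. $(T\psi)(t) = \tfrac{\beta}{\alpha+\beta}\psi(2t)$ for $t\in[0,\tfrac12]$ and $(T\psi)(t) = \tfrac{\alpha}{\alpha+\beta}\psi(2t-1) + \tfrac{\beta}{\alpha+\beta}$ for $t\in[\tfrac12,1]$. One checks first that $T$ maps the (closed, convex) set $\mathcal{S}$ of continuous functions $\psi$ with $\psi(0)=0$, $\psi(1)=1$, and $\psi$ nondecreasing into itself: the two pieces agree at $t=\tfrac12$ (both give $\tfrac{\beta}{\alpha+\beta}$), continuity at the join is automatic, the endpoint values are preserved, and monotonicity of $\psi$ is inherited by each affinely-rescaled piece, with the left piece ending at height $\tfrac{\beta}{\alpha+\beta}$ and the right piece starting there, so no downward jump occurs. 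Thus $T\colon\mathcal S\to\mathcal S$.

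Next I would show $T$ is a contraction on $C[0,1]$ in the supremum norm. If $\psi_1,\psi_2$ agree at $0$ and $1$ — which holds on $\mathcal S$ — then on each half-interval $T\psi_1 - T\psi_2$ is, up to the affine change of variable $t\mapsto 2t$ or $t\mapsto 2t-1$, the function $\psi_1-\psi_2$ scaled by $\tfrac{\beta}{\alpha+\beta}$ or $\tfrac{\alpha}{\alpha+\beta}$ respectively; hence $\|T\psi_1 - T\psi_2\|_\infty \le \max\bigl(\tfrac{\alpha}{\alpha+\beta},\tfrac{\beta}{\alpha+\beta}\bigr)\|\psi_1-\psi_2\|_\infty$, and since $\alpha,\beta>0$ this Lipschitz constant is strictly less than $1$. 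By the Banach fixed-point theorem applied to the complete metric space $\mathcal S$ (closed in $C[0,1]$), there is a unique $\varphi\in\mathcal S$ with $T\varphi=\varphi$; uniqueness among \emph{all} continuous solutions of \eqref{b4} with the stated endpoint conditions follows from the same contraction estimate, since any two such solutions have sup-distance bounded by a factor $<1$ times itself, forcing it to be $0$ (note the contraction bound only used agreement at the endpoints, not a priori monotonicity).

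It remains to upgrade ``nondecreasing'' to ``strictly increasing.'' Here I would argue by contradiction: suppose $\varphi$ is constant on some nondegenerate interval $[a,b]\subseteq[0,1]$, and take the maximal such interval. The functional equation \eqref{b4} lets one pull a flat stretch back to a flat stretch of \emph{twice} the length (a flat piece of $\varphi$ on a subinterval of $[0,\tfrac12]$ forces $\varphi(2t)$ flat on the doubled interval, and similarly on $[\tfrac12,1]$ via \eqref{b4l}), unless the flat interval straddles $\tfrac12$, in which case it still forces flatness on $[0,1]$ after pulling back the two halves and using that $\varphi(\tfrac12)=\tfrac{\beta}{\alpha+\beta}$ lies strictly between $\varphi(0)=0$ and $\varphi(1)=1$ — contradicting $\varphi(0)=0\ne1=\varphi(1)$. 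Repeatedly doubling the length of a flat interval contained in $[0,1]$ is impossible, so $\varphi$ is strictly increasing. I expect this last strict-monotonicity step to be the main obstacle: getting the bookkeeping of which half-interval the flat piece lands in exactly right, and handling the boundary case where the flat piece contains $\tfrac12$, requires a little care, whereas the existence and uniqueness via contraction is essentially routine.
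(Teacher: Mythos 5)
Your argument is correct, and for existence it is essentially the paper's third construction in disguise: the Picard iterates of your operator $T$ starting from $\varphi_0(t)=t$ are exactly the functions $\varphi_k$ of \eqref{b9}, and the telescoping bound $|\varphi_{k+1}-\varphi_k|\le\bigl(\tfrac{\alpha\vee\beta}{\alpha+\beta}\bigr)^k$ used there is the contraction estimate you invoke; packaging it as the Banach fixed-point theorem on the complete affine subspace $\{\psi\in C[0,1]:\psi(0)=0,\ \psi(1)=1\}$ (needed so that the two branches of $T\psi$ agree at $t=\tfrac12$) is a clean equivalent. Your uniqueness argument differs from the paper's, which observes that \eqref{b4} determines $\varphi$ on the dense set of dyadic rationals; the contraction route is interchangeable and arguably shorter. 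The genuine departure is strict monotonicity: the paper's recursive construction only yields weak monotonicity and explicitly defers strictness to its other two proofs (the probabilistic one, where $\varphi$ is the distribution function of a random binary expansion, and the digital-sum one, where the explicit increment formula \eqref{bb10} gives $\varphi(t+2^{-N})-\varphi(t)=\alpha^s\beta^{N-s}/(\alpha+\beta)^N>0$). Your self-similarity argument — a nondegenerate interval of constancy doubles in length under \eqref{b4} until it must straddle $\tfrac12$ — is a valid alternative that avoids the explicit formula \eqref{b8}. One point is stated too quickly: when the flat interval $[a,b]$ straddles $\tfrac12$, the common value is $\varphi(\tfrac12)=\tfrac{\beta}{\alpha+\beta}$, and the two branches of \eqref{b4} give $\varphi\equiv1$ on $[2a,1]$ and $\varphi\equiv0$ on $[0,2b-1]$ rather than ``flatness on $[0,1]$''; if these two intervals do not overlap you must feed the flat interval $[0,2b-1]$ back into your doubling mechanism finitely many more times until it reaches $\tfrac12$, whence $\varphi(\tfrac12)=0$, a contradiction. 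You flag this case as delicate, and the iteration you have already set up closes it, so this is a matter of exposition rather than a gap.
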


\begin{proof}\multlinegap=0pt
The equation (\ref{b4}) and the conditions $\varphi(0)=0$, $\varphi
(1)=1$ define recursively $\varphi(t)$ uniquely for dyadic rational
$t\in [0,1]$; hence, there is at most one continuous $\varphi $
satisfying these requirements. The existence of such a $\varphi$ can 
be proved in several different ways. 

\paragraph{First proof (probabilistic).}
Let $\gf$ be the distribution function $\gf(t):=\Pr(X\le t)$ of the
random variable $X\in\oi$ defined by the binary expansion
$X=0.B_1B_2\dots$, where the bits $B_1,B_2,\dots$ are independent and
with $\Pr(B_i=1)=\frac{\ga}{\ga+\gb}$. Since the random variable
$X':=0.B_2B_3\dots=2X-B_1$ has the same distribution as $X$, we see
that
\begin{equation}
	\begin{split}
	    \gf(t)
	    &:= \Pr(X\le t)\\
	    &=\begin{cases}
	        \Pr(B_1=0)\Pr(X'\le 2t),
	        % =\frac{\gb}{\ga+\gb}\,\gf(2t),
 	        & t\in[0,\frac12], 
	        \\
	        \Pr(B_1=0)+\Pr(B_1=1)\Pr(X'\le 2t-1),
	        % =\frac{\gb}{\ga+\gb}+\frac{\ga}{\ga+\gb}\,\gf(2t-1),
 	        & t\in[\frac12,1],  
	    \end{cases}\\
		&= \begin{cases}
			\frac{\gb}{\ga+\gb}\,\gf(2t) ,
		    & t\in[0,\frac12], \\
			\frac{\gb}{\ga+\gb}+\frac{\ga}{\ga+\gb}\,\gf(2t-1),
			& t\in[\frac12,1],
		\end{cases}
	\end{split}
\end{equation}
which verifies \eqref{b4}. Furthermore, $\gf$ is continuous and
strictly increasing on $\oi$, with $\gf(0)=0$ and $\gf(1)=1$.

\paragraph{Second proof (digital sums).}
An alternative approach begins with an explicit construction, which
will also be useful later. For $t=\sum_{j\ge 1}b_{j}2^{-j}\in 
[0,1]$, where $b_{j}\in \{0,1\}$, define 
\begin{align}\label{b8a}
    \varphi(t)
    := \sum_{j\ge 1}b_{j}
	\left(\frac{\alpha}{\beta}\right)^{b_{1}+\cdots +b_{j-1}}
    \biggpar{\frac{\beta}{\alpha +\beta}}^{j}.
\end{align}
Equivalently, define, for $t=\sum_{k\ge 1}2^{-e_{k}} \in\oi$, where
$1\le e_{1}<e_{2}<\cdots $ is a finite or infinite sequence,
\begin{equation}\label{b8}
    \varphi \biggpar{t=\sum_{k\ge 1}2^{-e_{k}}} 
    := \sum_{k\ge 1}
    \frac{\alpha^{k-1}\beta^{e_{k}-k+1}}
    {\left( \alpha +\beta \right)^{e_{k}}}.
\end{equation}
We first show that both \eqref{b8a} and \eqref{b8} are well defined 
for dyadic rational $t$ with two different representations, namely,
\begin{align}\label{mag}
    \sum_{1\le i\le k}2^{-e_{i}}
    = \sum_{1\le j<k}2^{-e_{j}}
    + \sum_{j\ge 1}2^{-( e_{k}+j) }.
\end{align}
The value in \eqref{b8} for the last term on the left-hand side of
\eqref{mag} is $\frac{\alpha^{k-1}\beta^{e_{k}-k+1}}{\left(\alpha
+\beta \right)^{e_{k}}}$; the value for the sum on the right-hand
side equals
\begin{equation}
    \sum_{j\ge 1}
    \frac{\alpha^{(k+j-1)-1}\beta^{e_{k}+j-( k+j-1)+1}}
    {(\alpha +\beta)^{e_{k}+j}}
    =\frac{\alpha^{k-1}\beta^{e_{k}-k+1}}
    {(\alpha+\beta)^{e_{k}}}
    \sum_{j\ge 1}\frac{\alpha^{j-1}\beta}
    {(\alpha+\beta)^{j}}
    =\frac{\alpha^{k-1}\beta^{e_{k}-k+1}}
    {(\alpha+\beta)^{e_{k}}},
\end{equation}
since the final geometric series sums to 1. Hence, the two
representations in \eqref{mag} yield the same sum \eqref{b8}, which
always equals the sum \eqref{b8a}, and thus $\gf$ is well-defined
on $[0,1]$.

Clearly, $\gf(0)=0$ and $\gf(1)=1$, again by summing the same
geometric sum. Moreover, \eqref{b4} follows easily from \eqref{b8a},
considering the cases $b_1=0$ and $b_1=1$ separately.

Next, if $t=\sum_{1\le j\le N} b_j 2^{-j}$ is a dyadic rational in
$[0,1)$, and $s:=\sum_{1\le j\le N} b_j$, then, by \eqref{b4} and
induction on $N\ge0$, we have
\begin{align}\label{bb10}
    \gf\bigpar{t+2^{-N}}-\gf(t)
    =\frac{\ga^s\gb^{N-s}}{(\ga+\gb)^N}.
\end{align}
It follows from \eqref{bb10} that $\gf$ is strictly increasing on the
set of dyadic rationals in $\oi$. Furthermore, suppose that
$t_1=j2^{-N}$ is a dyadic rational in $[0,1)$, and let
$t_2:=t+2^{-N}$. If $t\in[t_1,t_2]$, then both $t$ and $t_2$ have
binary expansions beginning with the expansion of $t_1$ (choosing the
infinite representation for $t_2$), and it follows from \eqref{b8}
that $\gf(t_1)\le \gf(t)\le\gf(t_2)$. Now, suppose $0\le t<u\le 1$,
and choose $N$ such that $2^{1-N}<u-t$. Then there exist $j$ and $k$
with $j2^{-N} \le t\le (j+1)2^{-N}<k2^{-N}\le u\le(k+1)2^{-N}$.
Hence, by what was just shown, $\gf(t)\le \gf
\bigpar{(j+1)2^{-N}}<\gf\bigpar{k2^{-N}}\le\gf(u)$. Thus, $\gf$ is
strictly increasing.

The monotonicity implies that any discontinuity of $\gf$ must be a
jump. Define
\begin{align}
    \gD_+\gf(t)
    := \lim_{u\searrow t}\gf(u)-\gf(t), \qquad t\in[0,1),  
\end{align}
the jump to the right at $t$, and let $R:=\sup_{t\in[0,1)}\gD_+(t)$.
It follows from \eqref{b4} that $R=\frac{{\ga\lor\gb}}{\ga+\gb} R$,
and thus $R=0$. Hence, $\gf$ is right-continuous. Similarly, $\gf$ 
is left-continuous.

\begin{figure}[!ht]
\begin{center}
	\includegraphics[height=3.3cm]{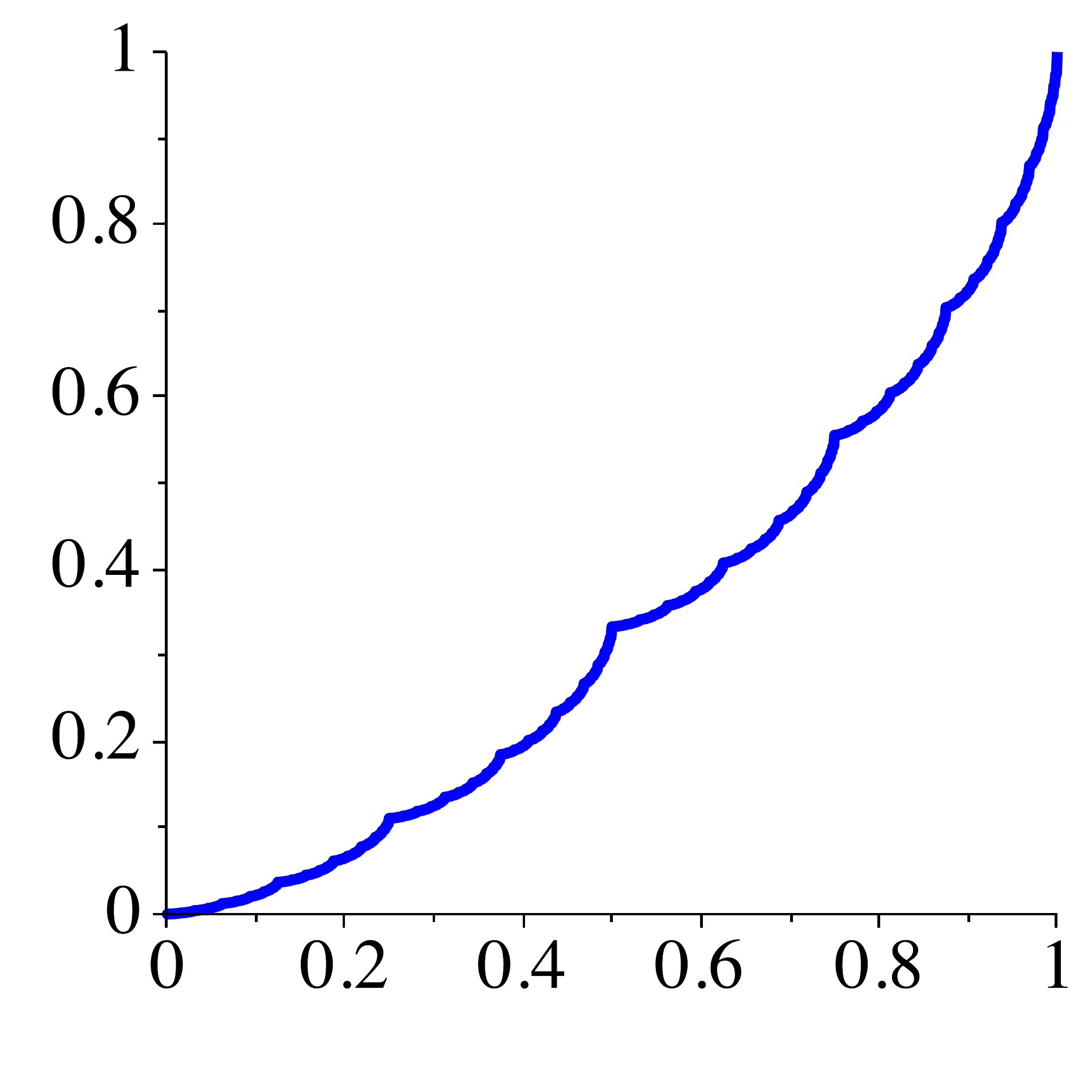}\,
	\includegraphics[height=3.3cm]{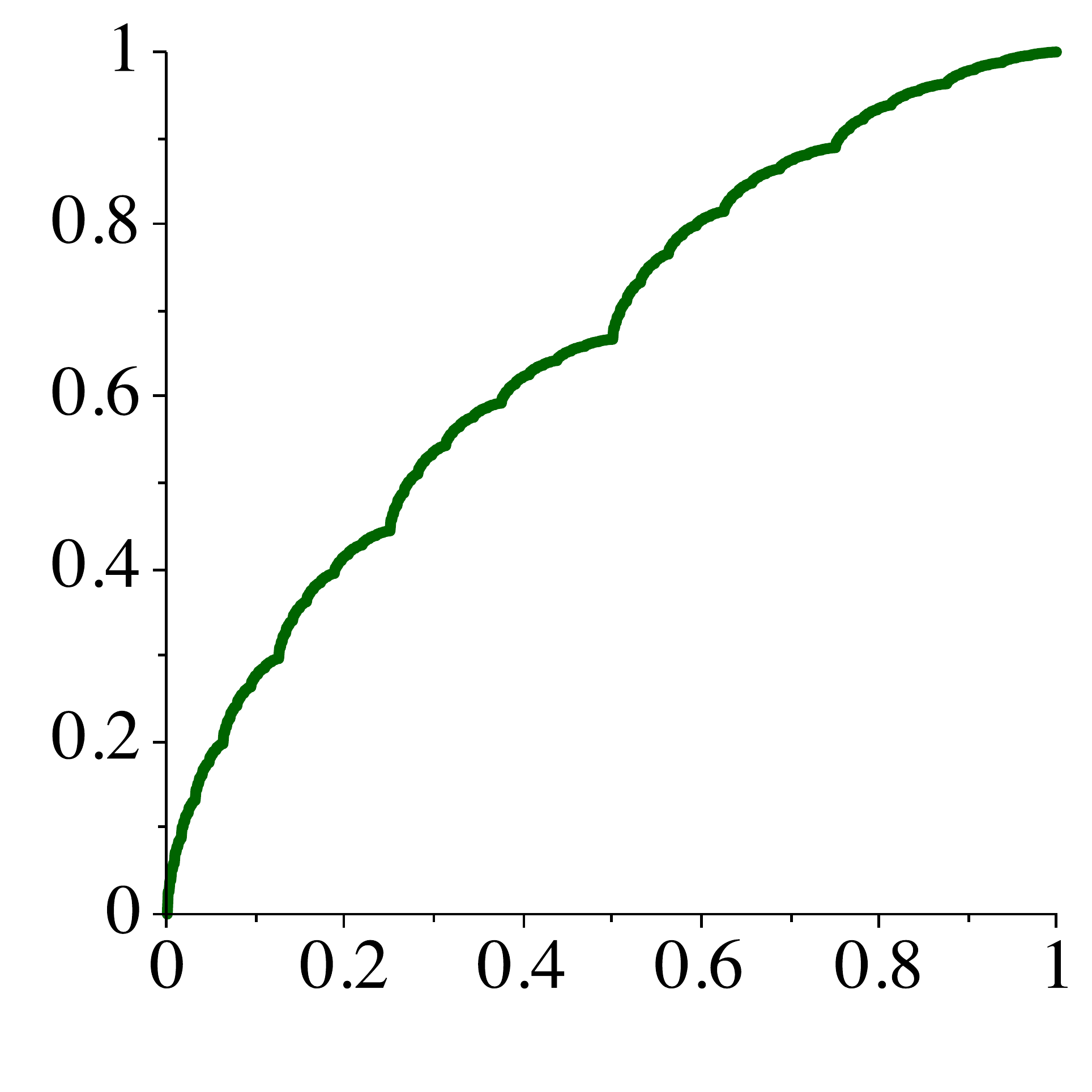}\,
	\includegraphics[height=3.3cm]{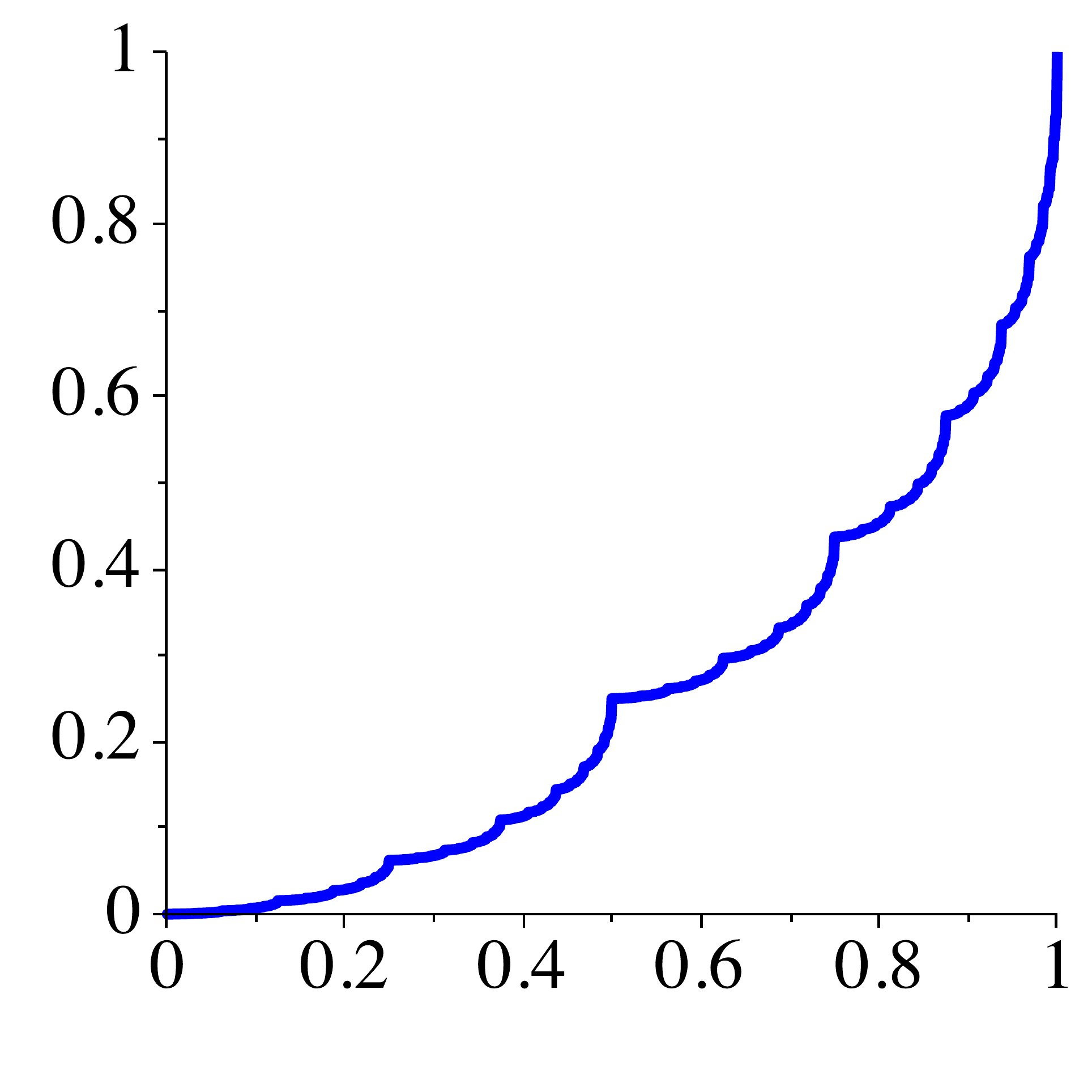}\,
	\includegraphics[height=3.3cm]{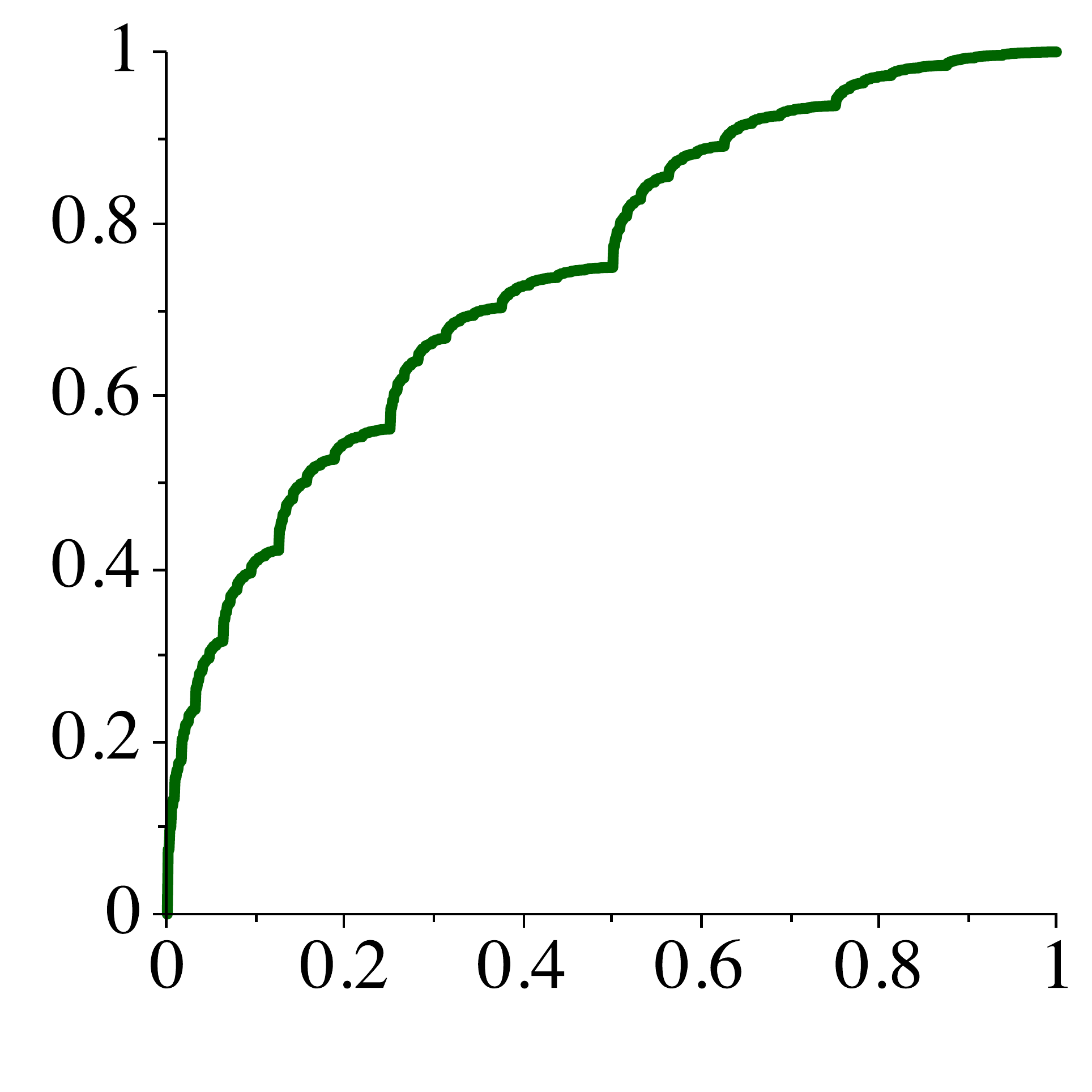}
\end{center}
\vspace*{-.3cm} 
\caption{{The function $\varphi$ for $(\alpha,\beta)=(2,1),
(1,2), (3,1), (1,3)$ (from left to right).}}
\label{Fig1}
\end{figure}

\paragraph{Third proof (recursive construction).}
Yet another alternative, similar to the construction of Koch's
snowflake curve \cite{vonKoch1904}, is to define
$\varphi^{}_{0}(t):=t, t\in [0,1]$, and then recursively let
$\varphi^{}_{k+1}$ consist of two suitably scaled copies of
$\varphi^{}_{k}$; more precisely
\begin{equation}\label{b9}
    \varphi^{}_{k+1}(t)
    := \begin{cases}
        \tfrac{\beta}{\alpha +\beta}\,\varphi^{}_{k}(2t), 
        & \text{if }t\in [0,\frac{1}{2}], \\
        \tfrac{\alpha}{\alpha +\beta}\,\varphi^{}_{k}(2t-1)
        +\tfrac{\beta}{\alpha+\beta}, 
        & \text{if }t\in [\frac{1}{2},1];
    \end{cases}
\end{equation}
see Figure \ref{Fig2} for an illustration. Then, by induction,
$|\gf_{k+1}(t)-\gf_k(t)|\le \bigpar{\frac{{\ga\lor\gb}}{\ga+\gb}}^k$
for $k\ge0$ and $t\in[0,1]$; consequently, the functions
$\varphi^{}_{k}(t)$ converge uniformly on $\oi$ to a continuous
function $\gf$ satisfying (\ref {b4}), with $\gf(0)=0$ and
$\gf(1)=1$. It follows also that $\gf$ is weakly increasing; it is
then easy to show, using \eqref{b4}, that such $\gf$ is strictly
increasing, but we omit this, since we have already shown this using
the other constructions.
\end{proof}

\begin{rem}\label{Rlinear}
If $\ga=\gb$, then \eqref{b4} is solved by $\gf(t)=t$, just as in
the case $\ga=\gb=1$ treated in \cite{Hwang2017}, so $f(x)$ is
defined in \eqref{b2} by linear interpolation. If $\ga\neq\gb$,
then this is not the case; a linear interpolation would not yield
\eqref{b3}.

Figure \ref{Fig1} shows that $\gf$ has a typical fractal shape when
$\ga\neq\gb$. We will show in Lemma \ref{LHolder} and \refR{RHolder}
below that $\gf$ is \Holder{} continuous but not Lipschitz continuous
when $\ga\neq\gb$.
\end{rem}

\vspace*{-.3cm}
\begin{figure}[!h]
\begin{center}
	\includegraphics[height=3.2cm]{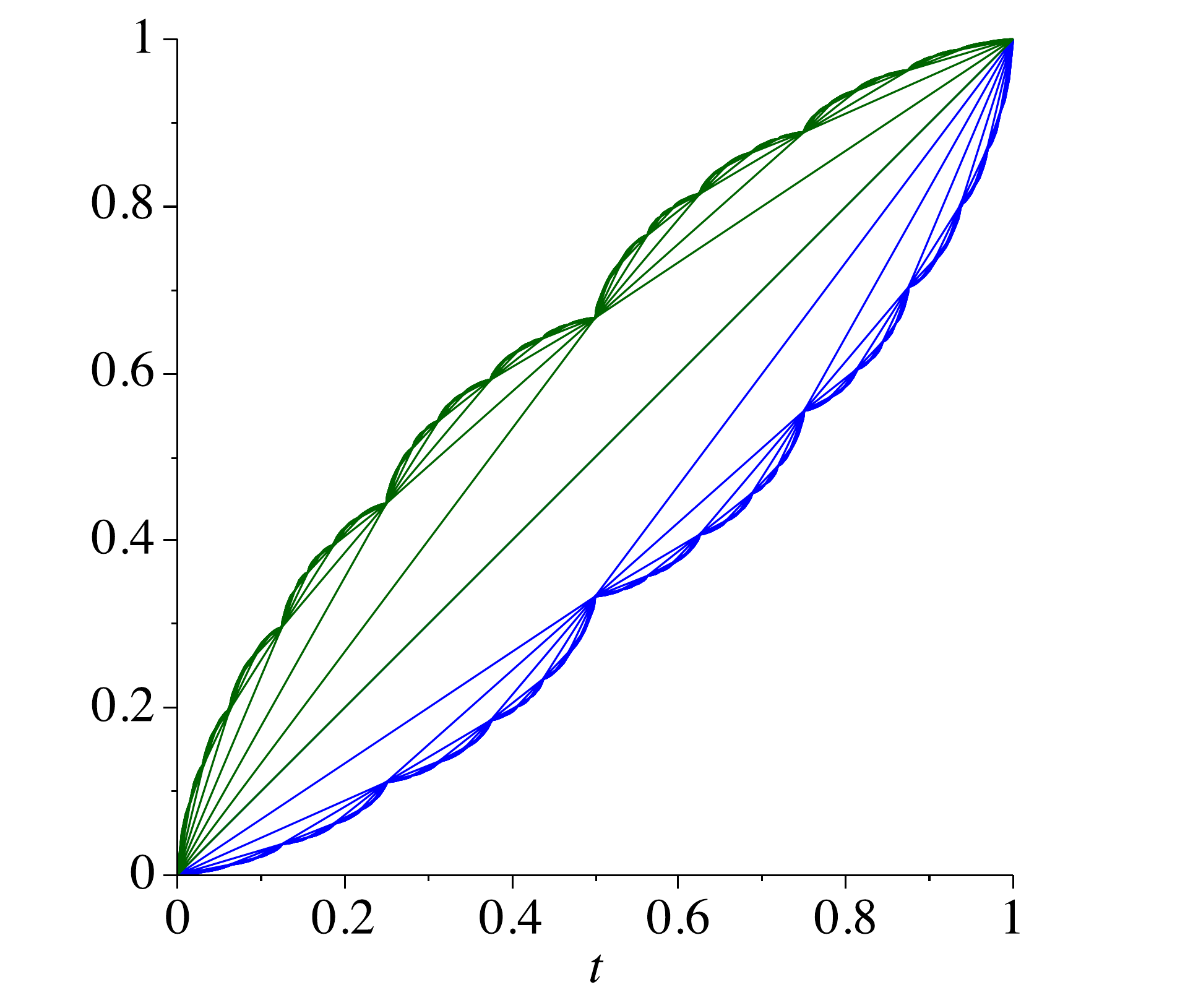}\,
	\includegraphics[height=3.2cm]{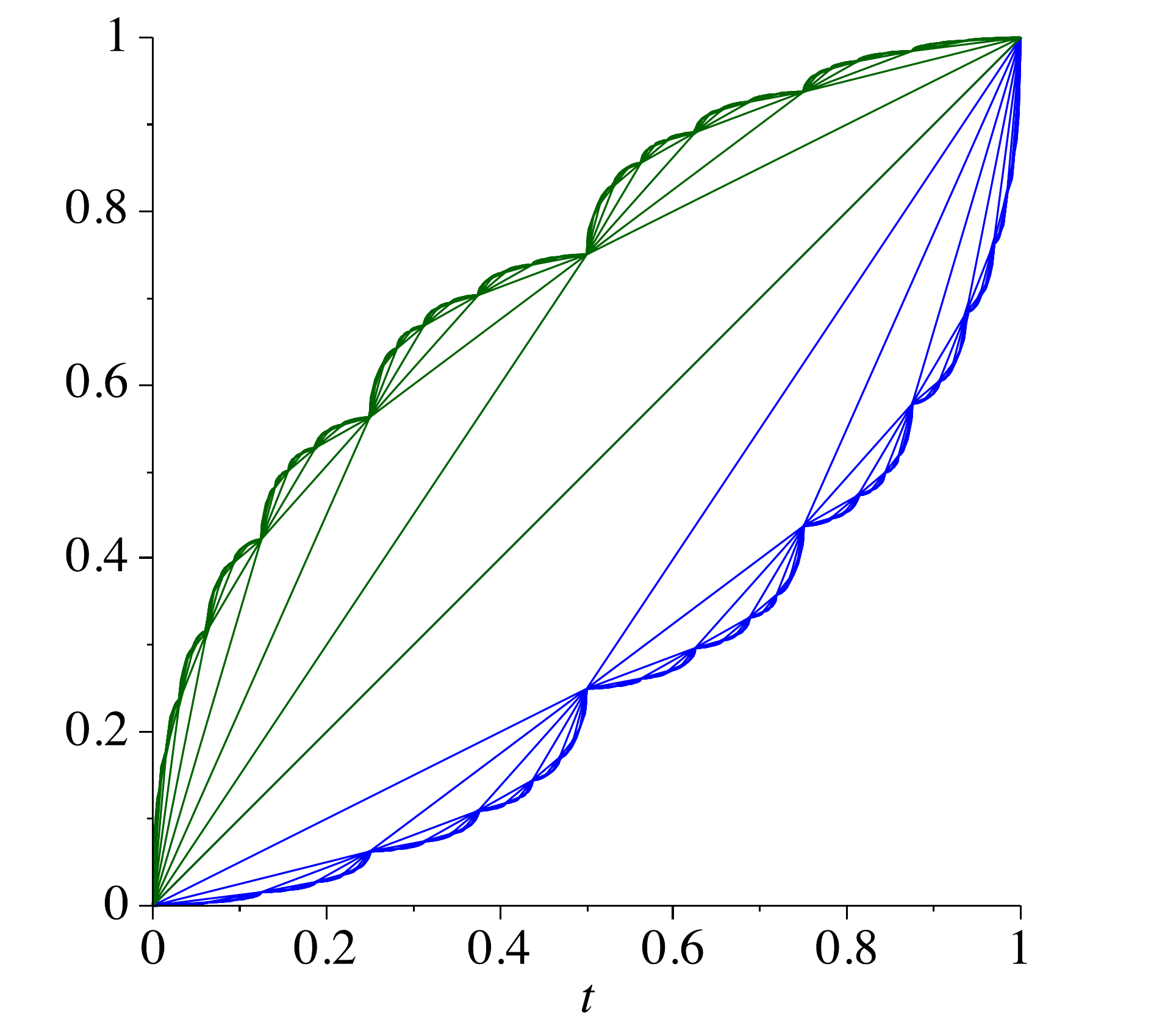}\,
	\includegraphics[height=3.2cm]{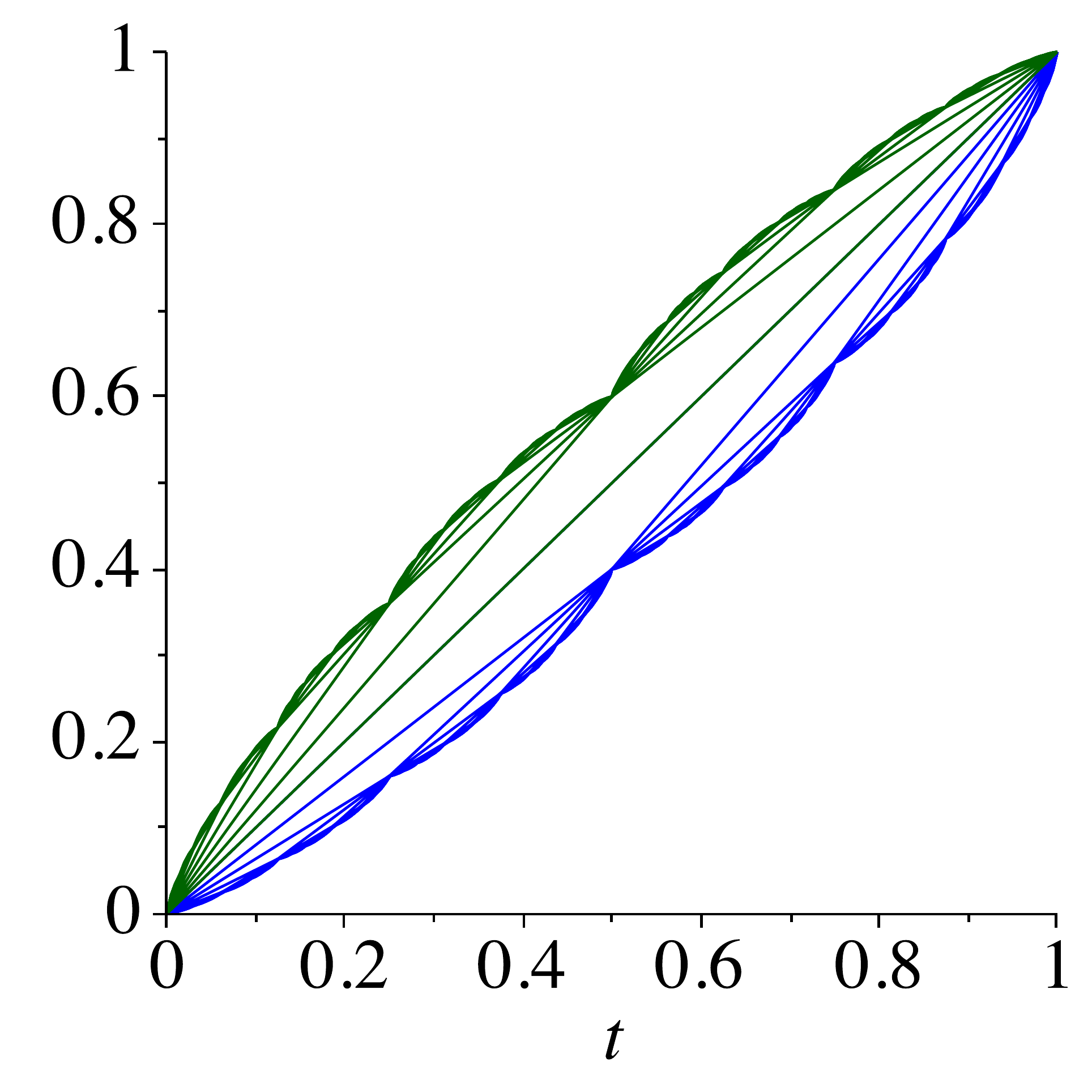}\,
	\includegraphics[height=3.2cm]{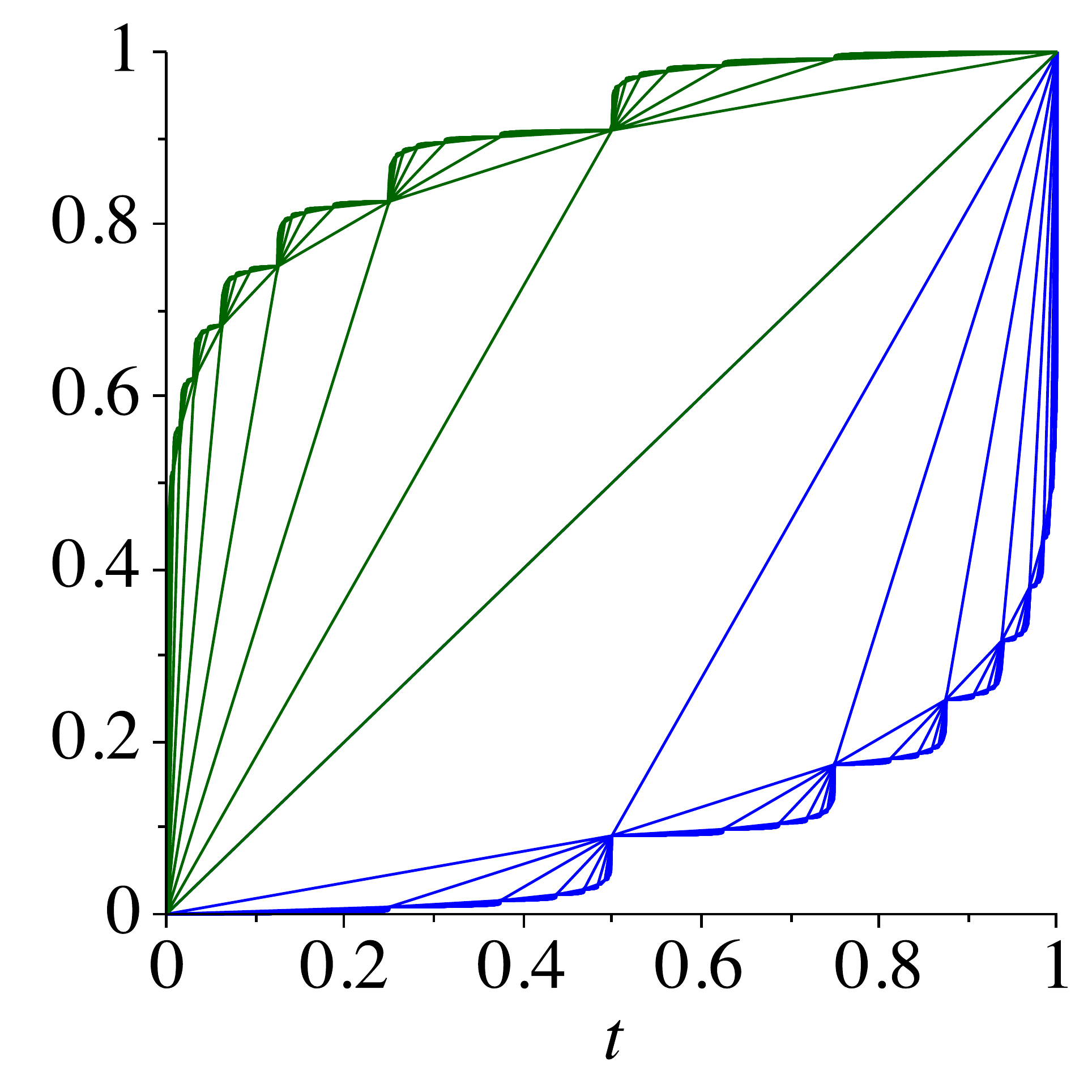}
\end{center}
\vspace*{-.5cm} \caption{{The functions $\varphi^{}_k$ for
$k=0,\dots,15$, where $(\alpha,\beta)=(2,1), (1,2)$ (left),
$(\alpha,\beta)=(3,1), (1,3)$ (middle-left), $(\alpha,\beta)=(3,2),
(2,3)$ (middle-right), and $(\alpha,\beta)=(10,1), (1,10)$
(right).}} \label{fig-koch} \vspace*{-.1cm}
\label{Fig2}
\end{figure}

\begin{rem}
It follows from \eqref{b4} that
\begin{align}
	\varphi_{\beta,\alpha}(t) 
	= 1-\varphi_{\alpha,\beta}(1-t),
\end{align}
so the graph of $\varphi_{\beta,\alpha}$ equals the graph of 
$\varphi_{\alpha,\beta}$ rotated about $(\frac12,\frac12)$. See
Figures~\ref{Fig1} and \ref{fig-koch} for a graphical illustration of 
some examples.
\end{rem}

\begin{remark}\label{Rintegral}
For later use we note that
\begin{align}\label{rintegral}
    \int_0^1\gf(x)\dd x
	=\frac{\gb}{\ga+\gb}.
\end{align}
To see this, we may denote the integral by $I$ and integrate both
sides of \eqref{b4}, which then yields
\begin{align}
	I 
	= \frac{\gb}{\ga+\gb}\cdot\frac12I+ 
	  \frac{\ga}{\ga+\gb}\cdot\frac12I
	+ \frac{\gb}{\ga+\gb}\cdot\frac12
	= \frac12I + \frac12 \frac{\gb}{\ga+\gb};
\end{align}
solving $I$ gives \eqref{rintegral}.
\end{remark}

\paragraph{Identities.} 
By iterating the functional equation (\ref{b3}), we obtain, for
$x\ge1$,
\begin{equation}\label{b10}
    f(x)
    =\sum_{0\le k<m}(\alpha +\beta )^{k}g(2^{-k}x)
    +(\alpha+\beta)^{m}f(2^{-m}x) ,
    \qquad 0\le m\le L_x. 
\end{equation}

This leads to the following identity. 
\begin{lemma}\label{Lemma3} 
Assume that $f$ satisfies \eqref{b1} and \eqref{b2}. Then,
for $x\ge 1$,
\begin{equation}\label{b15}
    x^{-\rho }f(x)
    =\sum_{k\ge 0}(2^{-k}x)^{-\rho}g(2^{-k}x)
    +f(1)P_0\left( \log_{2}x\right) ,  
\end{equation}
where $\rho:=\log_2(\ga+\gb)$ and 
\begin{equation}\label{b16}
    P_0\left( t\right)
    := \left( 1+( \alpha +\beta-1) 
    \varphi\lpa{2^{\{t\}}-1}\right) 
    (\alpha +\beta)^{-\{t\}}  
\end{equation}
is a continuous $1$-periodic function satisfying $P_0(0)=P_0(1)=1$.
\end{lemma}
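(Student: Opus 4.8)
The plan is to start from the iterated identity \eqref{b10}, specialize it to $m=L_x$, and then evaluate the single remaining ``fractional'' term via the interpolation formula \eqref{b2}. Concretely: since $x\ge1$ forces $L_x\ge0$, \eqref{b10} applies with $m=L_x$, so $f(x)=\sum_{0\le k<L_x}(\alpha+\beta)^k g(2^{-k}x)+(\alpha+\beta)^{L_x}f(2^{-L_x}x)$. Using $(\alpha+\beta)^k=2^{k\rho}$ and dividing through by $x^\rho$ converts each summand into $(2^{-k}x)^{-\rho}g(2^{-k}x)$, while the last term becomes $(\alpha+\beta)^{-\gth_x}f(2^{\gth_x})$, because $2^{-L_x}x=2^{\gth_x}\in[1,2)$ and $x^{-\rho}(\alpha+\beta)^{L_x}=(2^{-L_x}x)^{-\rho}=(\alpha+\beta)^{-\gth_x}$.

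Next I would evaluate $f(2^{\gth_x})$ and $g(2^{\gth_x})$. Writing $2^{\gth_x}=1+t$ with $t=2^{\gth_x}-1\in[0,1)$ and applying \eqref{b2} with $n=1$, together with $f(2)=(\alpha+\beta)f(1)+g(2)$ from \eqref{b1} and the convention $g(1)=0$, gives $f(2^{\gth_x})=f(1)\bigpar{1+(\alpha+\beta-1)\varphi(2^{\gth_x}-1)}+g(2)\varphi(2^{\gth_x}-1)$ and likewise $g(2^{\gth_x})=g(2)\varphi(2^{\gth_x}-1)$. The crucial point is that for every $k>L_x$ one has $2^{-k}x<1$, so $g(2^{-k}x)=0$ by the convention $g(y)=0$ on $[0,1)$; hence the tail $\sum_{k\ge L_x}(2^{-k}x)^{-\rho}g(2^{-k}x)$ reduces to its single $k=L_x$ term, equal to $(\alpha+\beta)^{-\gth_x}g(2)\varphi(2^{\gth_x}-1)$. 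Subtracting this from $(\alpha+\beta)^{-\gth_x}f(2^{\gth_x})$ leaves precisely $(\alpha+\beta)^{-\gth_x}f(1)\bigpar{1+(\alpha+\beta-1)\varphi(2^{\gth_x}-1)}=f(1)P_0(\log_2x)$ (recall $\gth_x=\{\log_2x\}$), and splitting $\sum_{k\ge0}=\sum_{0\le k<L_x}+\sum_{k\ge L_x}$ then yields \eqref{b15}.

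It remains to record the properties of $P_0$. It is $1$-periodic because it depends on $t$ only through $\{t\}$, and continuous on each interval $[m,m+1)$ since $\varphi$ is continuous by \refL{Lemma2}; the only thing to check is agreement across integers, i.e. $P_0(1^-)=P_0(0)$. As $t\uparrow1$ we have $2^{\{t\}}-1\uparrow1$ and $\varphi(2^{\{t\}}-1)\to\varphi(1)=1$, so $P_0(1^-)=\bigpar{1+(\alpha+\beta-1)}(\alpha+\beta)^{-1}=1$, while $P_0(0)=(1+0)\cdot1=1$; hence $P_0$ is continuous and $P_0(0)=P_0(1)=1$. I do not expect any real obstacle here: the one point that needs care is the bookkeeping that promotes the finite sum in \eqref{b10} to the infinite sum in \eqref{b15}, which rests entirely on $g\equiv0$ on $[0,1)$ and on evaluating $f$ and $g$ at the lone leftover argument $2^{\gth_x}\in[1,2)$ through \eqref{b2}.
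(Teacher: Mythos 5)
Your proposal is correct and follows essentially the same route as the paper: iterate \eqref{b3} to get \eqref{b10} with $m=L_x$, evaluate $f$ and $g$ at the leftover argument $2^{\gth_x}\in[1,2)$ via \eqref{b2} with $n=1$ together with $f(2)=(\alpha+\beta)f(1)+g(2)$ and $g(1)=0$, absorb the resulting $g(2^{\gth_x})$ term as the $k=L_x$ summand (all later terms vanishing since $g\equiv0$ on $[0,1)$), and verify continuity of $P_0$ at the integers from $\varphi(1)=1$. No gaps.
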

Since $g(x)=0$ for $x\le1$, the sum in \eqref{b15} is indeed finite.
\begin{proof}
First, \eqref{b16} yields $P_0(0)=1$ and $\lim_{t\nearrow 1}P_0(t)
=\bigpar{1+(\ga+\gb-1)\gf(1)}(\ga+\gb)\qw =(\ga+\gb)(\ga+\gb)\qw
=1=P_0(1)$. Furthermore, $P_0$ is continuous on $[0,1)$ since $\gf$
is continuous, and $P_0$ is $1$-periodic. It follows that $P_0$ is
continuous on $\mathbb R$.

By \eqref{b1}, $f(2)=\left(\alpha +\beta \right) f(1)+g(2)$. Thus,
for $1\le x< 2$, recalling $g(1)=0$,
\begin{align}\label{b16a}
    f(x) 
    &= f(1)+\varphi(x-1)(f(2)-f(1))  \notag\\
    &= f(1)+\varphi(x-1) \bigpar{(\alpha+\beta-1)f(1)+g(2)}\notag\\
    &= f(1)\bigpar{1+(\ga+\gb-1)\gf\xpar{x-1}}+g(x)\notag\\
    &= f(1)P_0\bigpar{\log_2x}(\ga+\gb)^{\log_2x}+g(x).
\end{align}

Now consider $x\ge1$. Take $m=L_{x}$ in (\ref{b10}) and use
\eqref{b16a} with $x$ replaced by $2^{-L_x}x=2^{\gth_x}\in[1,2)$;
we then obtain, by the relation $\ga+\gb=2^\rho$,
\begin{align}\label{b17}
    f(x) 
    &=\sum_{0\le k<L_{x}}(\alpha +\beta)^{k}g(2^{-k}x)
    +(\alpha+\beta )^{L_{x}}f(2^{-L_{x}}x) \notag\\
    &=\sum_{0\le k\le L_{x}}(\alpha +\beta )^{k}g(2^{-k}x)
    +(\alpha+\beta )^{L_{x}+\gth_x}f(1)P_0\bigpar{\gth_x}\notag \\
    &=\sum_{ k\ge0}2^{\rho k}g(2^{-k}x)
    +2^{\rho\log_2x}f(1)P_0\bigpar{\log_2x}.
\end{align}
This implies (\ref{b15}).
\end{proof}

\begin{rem}\label{Ra=b}
If $\ga=\gb$, then $\gf(t)=t$ by \refR{Rlinear}, and thus \eqref{b16} 
yields
\begin{equation}\label{b16=}
    P_0\left( t\right)
    := \bigpar{ 1+( \alpha +\beta-1) 
    \lpa{2^{\{t\}}-1}}
    (\alpha +\beta)^{-\{t\}} . 
\end{equation}
In the case $\ga=\gb=1$ studied in \cite{Hwang2017}, and also in the
case $\ga=\gb=\frac12$, this yields $P_0(t)\equiv1$. In all other
cases with $\ga=\gb>0$, the periodic function $P_0(t)$ is infinitely
differentiable in $(0,1)$, but a simple calculation shows that the
derivative has a jump at the integers; hence, $P_0$ is Lipschitz but
not continuously differentiable.
\end{rem}

\begin{rem}\label{Rb=1}
If $\gb=1$, then \eqref{b16} and \eqref{b4} yield also
  \begin{align}\label{b161}
    P_0\left( t\right)
    = (\alpha +1)^{1-\{t\}}\varphi\lpa{2^{\{t\}-1}}.  
\end{align}
\end{rem}

We can now give an extension of Theorem 2 in our previous paper 
\cite{Hwang2017}.
\begin{thm}\label{Theorem1} 
Suppose that $f$ and $g$ are given by \eqref{b1} and \eqref{b2}.
The following are equivalent.
\begin{romenumerate}
\item \label{Theorem1a}
$n^{-\rho }f(n)=P(\log_{2}n)+o(1)$ as $n\to \infty $, for some
continuous $1$-periodic function $P$ on $\mathbb{R}$.
    \item \label{Theorem1ab} 
    $x^{-\rho }f(x)=P(\log_{2}x)+o(1)$ as $x\to \infty $, for some 
    continuous $1$-periodic function $P$ on $\mathbb{R}$.
    
    \item \label{Theorem1b}
    \begin{align}\label{t1b}
        x^{-\rho }f(x)
        =P(\log_{2}x)+o(1) \qquad \text{as $x\to \infty $}, 
    \end{align}
    for some $1$-periodic function $P$ on $\mathbb{R}$.
    
    \item \label{Theorem1c}
The sum
\begin{align}\label{t1Q}
    Q(x):=\sum_{m\ge 1}2^{-\rho m}g(2^{m}x)  
\end{align}
converges uniformly
    for $x\in [1,2]$.
\end{romenumerate}
Furthermore, when these conditions hold, 
\begin{equation}\label{t1z}
    f(x)
    =x^{\rho }P(\log_{2}x)-Q(x), \qquad x\ge1,
\end{equation}
where $Q(x)=o(x^\rho)$ as $x\to\infty$, $Q(x)$ is defined by
\eqref{t1Q} for all $x>0$, and the continuous periodic function
$P(t)$ is given by
\begin{equation}\label{t1y}
    P(t)
    =\sum_{m\in \mathbb{Z}}2^{-\rho (m+t)}g(2^{m+t})
    +f(1)P_0(t), \qquad t\in\mathbb{R},
\end{equation}
with $P_0(t)$ given by \eqref{b16}. 
\end{thm}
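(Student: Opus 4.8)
The plan is to prove the chain of equivalences $\ref{Theorem1a}\Rightarrow\ref{Theorem1b}$ trivially, then close the loop with $\ref{Theorem1b}\Rightarrow\ref{Theorem1c}\Rightarrow\ref{Theorem1ab}\Rightarrow\ref{Theorem1a}$, and finally read off \eqref{t1z} and \eqref{t1y} from \refL{Lemma3}. The implications $\ref{Theorem1a}\Rightarrow\ref{Theorem1b}$ and $\ref{Theorem1ab}\Rightarrow\ref{Theorem1a}$ are immediate (restriction to integers, and dropping the continuity/periodicity of $P$ is weakening). The heart of the matter is the identity from \refL{Lemma3}, which I would rewrite, after substituting $k\mapsto -k$ and splitting the sum at $k=0$, in the form
\begin{equation*}
    x^{-\rho}f(x) - f(1)P_0(\log_2 x)
    = \sum_{k\le 0}(2^{-k}x)^{-\rho}g(2^{-k}x)
    = \underbrace{\sum_{0\le j\le L_x}(2^{j}x)^{-\rho}g(2^{j}x)}_{\text{(finite, but }\to\text{ an infinite sum)}}
\end{equation*}
wait --- more carefully, since $g(y)=0$ for $y<2$, the sum $\sum_{k\ge0}(2^{-k}x)^{-\rho}g(2^{-k}x)$ in \eqref{b15} is genuinely finite, so \eqref{b15} already gives $x^{-\rho}f(x)$ exactly. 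The point is instead to relate this to $Q$ and $P$: writing $x=2^{m+t}$ with $t=\gth_x\in[0,1)$ and $m=L_x$, the finite sum in \eqref{b15} is $\sum_{k\le m}2^{-\rho(k'... )}$; shifting indices, $x^{-\rho}f(x) = \sum_{j\ge 0}2^{-\rho(j+... )}$ — the cleanest bookkeeping is: define $P(t)$ by \eqref{t1y} and $Q(x)$ by \eqref{t1Q}, and observe that formally $\sum_{m\in\bbZ} = \sum_{m\ge -L_x}\big(\text{terms present in \eqref{b15}}\big) + \sum_{m< -L_x}\big(\text{the tail}\big)$, so that $x^{-\rho}f(x) = P(\log_2 x) - x^{-\rho}Q(x)$ as an identity of finite expressions plus a tail, provided the tail $Q$ makes sense.

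So the real work is in $\ref{Theorem1b}\Rightarrow\ref{Theorem1c}$ and $\ref{Theorem1c}\Rightarrow\ref{Theorem1ab}$. For $\ref{Theorem1c}\Rightarrow\ref{Theorem1ab}$: assuming \eqref{t1Q} converges uniformly on $[1,2]$, one checks it converges uniformly on compacts of $(0,\infty)$ (since for $x\in[2^{-j_0},2^{-j_0+1}]$ one just reindexes), that $Q$ is continuous, and that $Q(x)=o(x^\rho)$ — this last because the uniform convergence on $[1,2]$ forces $2^{-\rho m}g(2^m x)\to0$ uniformly, hence the partial sums grow slower than any $\eps 2^{\rho m}$, and writing $x=2^{L_x}\cdot 2^{\gth_x}$ turns $Q(x)$ into a tail sum that is $o(x^\rho)$ by a standard Cauchy-tail estimate. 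Then combine with \eqref{b15}: the identity $x^{-\rho}f(x)=P(\log_2 x)-x^{-\rho}Q(x)$ holds where $P$ is the function \eqref{t1y}; one must verify $P$ so defined is well-defined (the bi-infinite sum converges — again from uniform convergence of $Q$ plus the finiteness of the negative part since $g$ vanishes near $0$), continuous (locally uniform limit of continuous functions, using continuity of $P_0$ from \refL{Lemma3}), and $1$-periodic (shifting $m\mapsto m+1$ in \eqref{t1y} together with $1$-periodicity of $P_0$). Then $x^{-\rho}f(x)=P(\log_2x)+o(1)$ follows since $x^{-\rho}Q(x)=o(1)$, giving \ref{Theorem1ab} and simultaneously \eqref{t1z}.

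For the converse $\ref{Theorem1b}\Rightarrow\ref{Theorem1c}$ — which I expect to be the main obstacle — the strategy is to run \refL{Lemma3} in reverse. From \eqref{b15}, for any $x\ge1$ and any $m\ge0$,
\begin{equation*}
    (2^m x)^{-\rho}f(2^m x)
    = \sum_{k\ge 0}(2^{-k+m}x)^{-\rho}g(2^{-k+m}x)
    + f(1)P_0(\log_2 x),
\end{equation*}
and subtracting the same identity at exponent $m'<m$ telescopes the $g$-terms: the difference $(2^m x)^{-\rho}f(2^m x)-(2^{m'}x)^{-\rho}f(2^{m'}x)$ equals precisely a block of the series $\sum 2^{-\rho j}g(2^j x)$, namely $\sum_{m'<j\le m}2^{-\rho j}g(2^j x)$ (after reindexing). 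Now hypothesis \ref{Theorem1b} says $(2^m x)^{-\rho}f(2^m x)=P(\log_2 x)+o(1)$ uniformly... but here is the subtlety: \ref{Theorem1b} gives $x^{-\rho}f(x)\to P(\log_2 x)$ only pointwise in the fractional part, i.e. along the sequence $2^m x$ we get convergence to the single value $P(\log_2 x)$ (the integer part of $\log_2(2^m x)$ is irrelevant to $P$), so the difference above tends to $0$ as $m,m'\to\infty$ — hence the series $\sum_j 2^{-\rho j}g(2^j x)$ is Cauchy, i.e. converges, \emph{for each fixed} $x$. Upgrading pointwise convergence of $Q$ to \emph{uniform} convergence on $[1,2]$ is the delicate point; I would argue that the $o(1)$ in \ref{Theorem1b} is automatically uniform in the fractional variable — this is where one uses that $x\mapsto x^{-\rho}f(x)-Q\text{-partial-sum}$ factors through $\gth_x$ and the convergence statement, combined with the continuity of $P_0$ and a compactness argument on $[1,2]$ (equivalently $\gth_x\in[0,1)$), forces the tails of $Q$ to be uniformly small; the telescoping identity makes the tail of $Q$ at $x$ exactly a difference of two values of $x^{-\rho}f(x)$-type quantities, so uniform control of the tails is equivalent to uniform control of the $o(1)$, which one gets because the partial sums $\sum_{0\le j\le N}2^{-\rho j}g(2^j x)$ are continuous in $x$ and converge, on the compact $[1,2]$, monotonically in a suitable sense — or more robustly, one notes \ref{Theorem1ab}$\Leftrightarrow$\ref{Theorem1a} has already been localized so that the $o(1)$ over reals is governed by the same data. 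I would present this step by first proving $\ref{Theorem1b}\Rightarrow$ ``$Q(x)$ converges pointwise and $x^{-\rho}f(x)=P^*(\log_2 x)-x^{-\rho}Q(x)$ for the \emph{specific} $P^*$ of \eqref{t1y}'', then deducing $P=P^*$ a.e. hence everywhere (both sides being determined), and finally observing that $x^{-\rho}Q(x)=x^{-\rho}f(x)-P^*(\log_2x)=o(1)$ by hypothesis; the uniform convergence of $Q$ on $[1,2]$ then follows from an Abel/Cauchy-criterion repackaging, since the tail of \eqref{t1Q} at $x$ is controlled by $\sup$ over the compact block of $|x^{-\rho}f(x)-P^*(\log_2x)|$ which is $o(1)$.
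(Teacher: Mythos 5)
Your overall route is the paper's own: everything is funneled through the identity \eqref{b15} of \refL{Lemma3}; the direction \ref{Theorem1c}$\Rightarrow$\ref{Theorem1ab} is the reindexing $x=2^m y$ plus the tail estimate for $Q$; and the converse \ref{Theorem1b}$\Rightarrow$\ref{Theorem1c} is the telescoping of \eqref{b15} at $2^m y$ and $2^{m'}y$, which identifies a block of the series \eqref{t1Q} with a difference of two quantities each within the $o(1)$ of \ref{Theorem1b}. Your worry about pointwise versus uniform convergence in that last step resolves exactly as your final sentence suggests: the $o(1)$ in \ref{Theorem1b} is a statement about all real $x\to\infty$, so evaluating at $x=2^m y$ gives an error bounded by $\sup_{x\ge 2^m}\bigl|x^{-\rho}f(x)-P(\log_2 x)\bigr|$, which is independent of $y\in[1,2]$; no extra compactness or monotonicity argument is needed, and the uniform Cauchy criterion then gives \ref{Theorem1c} directly.

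The one genuine gap is the link \ref{Theorem1a}$\Rightarrow$\ref{Theorem1b}, which you declare immediate. It is not: \ref{Theorem1a} controls $f$ only at integer arguments, while \ref{Theorem1b} is an assertion about all real $x\to\infty$, so neither ``restriction to integers'' nor ``dropping continuity of $P$'' applies to this direction; those two observations justify \ref{Theorem1ab}$\Rightarrow$\ref{Theorem1a} and \ref{Theorem1ab}$\Rightarrow$\ref{Theorem1b}, respectively, but not \ref{Theorem1a}$\Rightarrow$\ref{Theorem1b}. As written, your cycle \ref{Theorem1a}$\Rightarrow$\ref{Theorem1b}$\Rightarrow$\ref{Theorem1c}$\Rightarrow$\ref{Theorem1ab}$\Rightarrow$\ref{Theorem1a} therefore does not close. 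The missing ingredient is precisely \refP{Proposition1}: by \eqref{b2} and the fact that $\varphi$ maps $[0,1]$ into $[0,1]$, the value $f(x)$ is a convex combination of $f(\lfloor x\rfloor)$ and $f(\lceil x\rceil)$, and since the $P$ furnished by \ref{Theorem1a} is continuous (hence uniformly continuous) and $\log_2 x-\log_2\lfloor x\rfloor=O(1/x)$, one gets $x^{-\rho}f(\lfloor x\rfloor)=P(\log_2 x)+o(1)$ and likewise for the ceiling, whence \ref{Theorem1ab} and \emph{a fortiori} \ref{Theorem1b}. Note that this argument genuinely uses the continuity hypothesis in \ref{Theorem1a}, so it cannot be waved away as a formal weakening. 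With that short sandwich argument inserted, the rest of your proposal goes through and coincides in substance with the paper's proof.
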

Note that \eqref{t1z} is not only an identity but also an 
asymptotic expansion.  

Before proving \refT{Theorem1}, we give two partial results.

\begin{prop}\label{Proposition1} 
Suppose that $h(x)$ is a function such that $h(x)$ lies between
$h\left( \left\lfloor x\right\rfloor \right) $ and $h\left(
\left\lceil x\right\rceil \right)$. Then, the following are 
equivalent.
\begin{romenumerate}
    \item\label{P1a}
    $n^{-\rho }h(n)=P(\log_{2}n)+o(1)$ as $n\to \infty $, for some
    continuous $1$-periodic function $P$ on $\mathbb{R}$.
    
    \item\label{P1b}
    $x^{-\rho }h(x)=P(\log_{2}x)+o(1)$ as $x\to \infty $, for some
    continuous $1$-periodic function $P$ on $\mathbb{R}$.
\end{romenumerate}
\end{prop}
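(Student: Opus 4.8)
The plan is to prove this as a purely "soft" statement about functions that interpolate between integer values, with no reference to the recurrence at all; the only arithmetic input is that $2^\rho>0$, i.e., that $\rho$ is a fixed real constant. The direction \ref{P1b}$\Rightarrow$\ref{P1a} is trivial: restrict $x$ to integers. So the work is entirely in \ref{P1a}$\Rightarrow$\ref{P1b}. First I would fix $x$ large and write $n=\floor{x}$, $n+1=\ceil{x}$ (the case $x\in\bbZ$ being immediate), so that $h(x)$ lies between $h(n)$ and $h(n+1)$ by hypothesis. The assumption \ref{P1a} says $h(n)=n^\rho P(\log_2 n)+n^\rho o(1)$ and likewise for $n+1$. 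The goal is to show $x^{-\rho}h(x)=P(\log_2 x)+o(1)$.

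The key steps, in order. (i) Since $h(x)$ is between $h(n)$ and $h(n+1)$, we have $x^{-\rho}h(x)$ trapped between $x^{-\rho}h(n)$ and $x^{-\rho}h(n+1)$; more precisely $\abs{x^{-\rho}h(x)-x^{-\rho}h(n)}\le \abs{x^{-\rho}h(n+1)-x^{-\rho}h(n)}$, and the same bound with roles of $n$, $n+1$ swapped. So it suffices to show both that $x^{-\rho}h(n)=P(\log_2 x)+o(1)$ and that $x^{-\rho}\bigpar{h(n+1)-h(n)}=o(1)$. (ii) For the first: $x^{-\rho}h(n)=(x/n)^{-\rho}\cdot n^{-\rho}h(n)$. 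As $x\to\infty$ we have $1\le x/n\le 1+1/n\to1$, hence $(x/n)^{-\rho}\to1$; and $n^{-\rho}h(n)=P(\log_2 n)+o(1)$ is bounded (since $P$ is continuous and periodic, hence bounded). Therefore $x^{-\rho}h(n)=P(\log_2 n)+o(1)$. Finally $\abs{\log_2 x-\log_2 n}=\log_2(x/n)\le\log_2(1+1/n)\to0$, so by uniform continuity of the bounded periodic continuous function $P$, $P(\log_2 n)=P(\log_2 x)+o(1)$, giving $x^{-\rho}h(n)=P(\log_2 x)+o(1)$. (iii) For the second: write $x^{-\rho}\bigpar{h(n+1)-h(n)} = (x/(n+1))^{-\rho}\cdot(n+1)^{-\rho}h(n+1) - (x/n)^{-\rho}\cdot n^{-\rho}h(n)$. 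Both $n^{-\rho}h(n)$ and $(n+1)^{-\rho}h(n+1)$ equal $P(\log_2 x)+o(1)$ by the computation in (ii) (applied to $n$ and to $n+1$, noting $\log_2(n+1)-\log_2 x\to0$ as well), and both prefactors $(x/n)^{-\rho}$, $(x/(n+1))^{-\rho}$ tend to $1$; so the difference is $\bigpar{P(\log_2 x)+o(1)}-\bigpar{P(\log_2 x)+o(1)}=o(1)$, using again that $P$ is bounded. Combining (i)--(iii) yields \ref{P1b}.

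There is no serious obstacle here; the statement is essentially a packaging lemma. The only point that needs a little care — and the one I would write out — is that all the "$o(1)$" terms are genuinely uniform in the relevant sense: the error in $n^{-\rho}h(n)-P(\log_2 n)$ is $o(1)$ by hypothesis, the errors $(x/n)^{-\rho}-1$ and $P(\log_2 n)-P(\log_2 x)$ are controlled by $1/n\le 1/(x-1)$, and $P$ being continuous and $1$-periodic is uniformly continuous and bounded on $\bbR$, so the modulus-of-continuity estimate is uniform. Once these are all expressed in terms of $x$ (via $n=\floor{x}$), letting $x\to\infty$ finishes it. I would also remark that the same argument shows the limit function $P$ in \ref{P1b} is forced to be the same as in \ref{P1a}, which is what lets us later identify $P$ explicitly via \refL{Lemma3}.
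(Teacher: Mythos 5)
Your proposal is correct and follows essentially the same route as the paper: the direction \ref{P1b}$\Rightarrow$\ref{P1a} by restriction to integers, and the converse by sandwiching $x^{-\rho}h(x)$ between $x^{-\rho}h(\floor{x})$ and $x^{-\rho}h(\ceil{x})$, each shown to equal $P(\log_2 x)+o(1)$ via $(x/\floor{x})^{\rho}\to1$, boundedness of $P$, and uniform continuity applied to $\log_2 x-\log_2\floor{x}=o(1)$. The paper's version is just a more compressed writing of the same argument.
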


\begin{proof}
\ref{P1b}$\implies$\ref{P1a} is trivial.  

The proof of \ref{P1a}$\implies$\ref{P1b} is very similar to the
proof of (i)$\implies $(ii) in \cite[Theorem 2]{Hwang2017}, so we
omit some details. First, $\log_2x-\log_2\floor{x}=O(1/x)=o(1)$ for
large $x$ and $P$ is uniformly continuous, so $P(\log_2x)
=P\bigpar{\log_2\floor{x}}+o(1)$. Hence, \ref{P1a} implies that
\begin{align}
    \floor{x}^{-\rho} h\bigpar{\floor{x}}
    =P\bigpar{\log_2\floor{x}}+o(1)
    =P\bigpar{\log_2{x}}+o(1)
\end{align}
and consequently, ${x}^{-\rho} h\bigpar{\floor{x}}
=P\bigpar{\log_2{x}}+o(1)$. Similarly, ${x}^{-\rho}
h\bigpar{\ceil{x}}=P\bigpar{\log_2{x}}+o(1)$, and \ref{P1b} follows.
\end{proof}

\begin{prop}\label{Proposition2} 
Suppose that $g(x)$ is a continuous function on $(0,\infty)$ with 
$g(x)=0$ for $x\le 1$. Define
\begin{equation}\label{b12}
    h(x)
    :=\sum_{k\ge 0}2^{k\rho }g(2^{-k}x).  
\end{equation}
Then, the following are equivalent.
\begin{romenumerate}
    \item \label{P2a}
    $x^{-\rho }h(x)=P_1(\log_{2}x)+o(1)$ as $x\to \infty $, for some
    continuous $1$-periodic function $P_1$ on $\mathbb{R}$.
    \item  \label{P2b}
    $x^{-\rho }h(x)=P_1(\log_{2}x)+o(1)$ as $x\to \infty $, for 
	some $1$-periodic function $P_1$ on $\mathbb{R}$.
    \item \label{P2c} 
    $Q(x):=\sum_{k\ge 1}2^{-\rho k}g(2^{k}x)$ converges uniformly
    for $x\in [1,2]$.
\end{romenumerate}
Furthermore, when these conditions hold, $Q(x)$ is defined for all 
$x>0$, $Q(x)=o(x^\rho)$ as $x\to\infty$,
\begin{equation}\label{b13}
    P_1(t)
    =\sum_{m\in \mathbb{Z}}2^{-\rho (m+t)}g(2^{m+t}),
    \qquad t\in\mathbb{R}, 
\end{equation}
and
\begin{equation}\label{b14}
    h(x)
    =x^{\rho }P_1(\log_{2}x)-Q(x),\qquad x>0.  
\end{equation}
\end{prop}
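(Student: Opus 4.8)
The plan is to establish the cycle \ref{P2a}$\Rightarrow$\ref{P2b}$\Rightarrow$\ref{P2c}$\Rightarrow$\ref{P2a} and then read off \eqref{b13}, \eqref{b14} and the bound $Q(x)=o(x^\rho)$ from the construction used in the last implication; the step \ref{P2a}$\Rightarrow$\ref{P2b} is trivial (a continuous periodic function is periodic). Everything rests on one bookkeeping identity: since $g(y)=0$ for $y\le1$, the series \eqref{b12} is a finite sum for each $x\ge1$, and reindexing ($k\mapsto k-N$) gives, for all $x\ge1$ and $N\ge0$,
\begin{equation*}
    2^{-\rho N}h(2^Nx)=h(x)+\sum_{1\le k\le N}2^{-\rho k}g(2^kx),
\end{equation*}
so the partial sums of $Q$ are $Q_N(x):=\sum_{1\le k\le N}2^{-\rho k}g(2^kx)=2^{-\rho N}h(2^Nx)-h(x)$. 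The same reindexing yields the functional equations $h(2x)=2^\rho h(x)+g(2x)$ and $Q(2x)=2^\rho Q(x)-g(2x)$ (the latter wherever the series for $Q$ converges), whence both $h+Q$ and $x\mapsto x^\rho P_1(\log_2x)$ (for any $1$-periodic $P_1$) satisfy $F(2x)=2^\rho F(x)$ on $(0,\infty)$. Note also that $h(x)=g(x)$ on $[1,2]$, since for $x\in[1,2]$ every term with $k\ge1$ in \eqref{b12} has argument $2^{-k}x\le1$.

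For \ref{P2b}$\Rightarrow$\ref{P2c}: assuming $x^{-\rho}h(x)=P_1(\log_2x)+o(1)$, substitute $2^Nx$ for $x$ with $x\in[1,2]$. Since $2^Nx\to\infty$ uniformly on $[1,2]$ and $P_1$ is $1$-periodic, $(2^Nx)^{-\rho}h(2^Nx)=P_1(\log_2x)+o(1)$ uniformly in $x\in[1,2]$; multiplying by the bounded factor $x^\rho$ gives $Q_N(x)=2^{-\rho N}h(2^Nx)-h(x)=x^\rho P_1(\log_2x)-h(x)+o(1)$ uniformly on $[1,2]$. In particular $Q_N$ converges uniformly there, which is \ref{P2c}.

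For \ref{P2c}$\Rightarrow$\ref{P2a} and the ``furthermore'' part: granted uniform convergence of $Q_N$ on $[1,2]$, the limit $Q$ is continuous there (each $Q_N$ being a finite sum of continuous functions), and the functional equation $Q(x)=2^{-\rho}(Q(2x)+g(2x))$ propagates this to a function $Q$ on all of $(0,\infty)$ for which the series in \ref{P2c} converges (for $x$ in a dyadic block, a reindexing expresses it through $Q$ on $[1,2)$); writing $x=2^jy$ with $y\in[1,2]$, $j\ge0$, one gets $x^{-\rho}Q(x)=y^{-\rho}(Q(y)-Q_j(y))\to0$ uniformly as $j\to\infty$, i.e.\ $Q(x)=o(x^\rho)$. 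Now set $P_1(t):=2^{-\rho t}(g(2^t)+Q(2^t))$ for $t\in[0,1]$ and extend $1$-periodically; this is continuous on $[0,1]$, and the endpoint check $P_1(0)=P_1(1)$ follows from $g(1)=0$ and $Q(2)=2^\rho Q(1)-g(2)$, so $P_1$ is continuous and $1$-periodic on $\mathbb{R}$. Using $h(x)=g(x)$ on $[1,2]$, the functions $h+Q$ and $x\mapsto x^\rho P_1(\log_2x)$ agree on $[1,2]$ and both satisfy $F(2x)=2^\rho F(x)$; propagating this relation over the dyadic blocks $[2^j,2^{j+1}]$, $j\in\mathbb{Z}$, forces them to agree on $(0,\infty)$, which is \eqref{b14}. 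Dividing by $x^\rho$ and using $Q(x)=o(x^\rho)$ gives \ref{P2a}. Finally \eqref{b13} comes from expanding $P_1(t)=2^{-\rho t}h(2^t)+2^{-\rho t}Q(2^t)$ and recognising the first piece as $\sum_{m\le0}2^{-\rho(m+t)}g(2^{m+t})$ and the second as $\sum_{m\ge1}2^{-\rho(m+t)}g(2^{m+t})$; and if \ref{P2a} or \ref{P2b} holds with some a priori only periodic $P_1$, comparing it with the continuous one just built along $x=2^{t+N}$, $N\to\infty$, shows the two coincide.

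The step I expect to need the most care is \ref{P2c}$\Rightarrow$\ref{P2a}: one must extend $Q$ from $[1,2]$ to all of $(0,\infty)$ through its functional equation while verifying the relevant series still converge there, check the endpoint identity $P_1(0)=P_1(1)$ that makes the periodic extension continuous, and confirm that the merely-periodic $P_1$ permitted in \ref{P2a}--\ref{P2b} is pinned down to the continuous function given by \eqref{b13}. None of this is deep, but these are the places where an ``obvious'' argument could quietly fail.
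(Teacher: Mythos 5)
Your proof is correct and follows essentially the same route as the paper's: both rest on the reindexing identity $2^{-\rho N}h(2^Nx)=h(x)+\sum_{1\le k\le N}2^{-\rho k}g(2^kx)$ (equivalently, the paper's $h(2^my)=2^{m\rho}G_m(y)$ for $y\in[1,2]$), prove \ref{P2b}$\Rightarrow$\ref{P2c} by substituting $x=2^Ny$, and prove \ref{P2c}$\Rightarrow$\ref{P2a} by building $P_1$ from the uniform limit on a fundamental domain and reading off \eqref{b13}--\eqref{b14}. The only cosmetic difference is that you define $P_1$ on $[0,1]$ and extend via the scaling relation $F(2x)=2^\rho F(x)$, while the paper defines $P_1$ directly by the bi-infinite sum \eqref{b13}; your extra remark pinning down the merely-periodic $P_1$ of \ref{P2b} is a sound addition.
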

\begin{proof}
Again, the proof differs mainly notationally from the proofs of the
corresponding implications in \cite[Theorem 2]{Hwang2017}, and we
omit some details. Let $G_m(x):=\sum_{k=0}^m2^{-\rho k}g(2^{k}x)$,
and note that \ref{P2c} is equivalent to the property that $G_m(x)$
converges uniformly on $[1,2]$ to $G(x) :=Q(x)+g(x)$.

\ref{P2a}$\implies$\ref{P2b}. Trivial.

\ref{P2b}$\implies$\ref{P2c}.
Suppose that $y\in[1,2]$ and $m\ge0$. Then \eqref{b12} yields
\begin{align}\label{bb1}
    h\bigpar{2^my}
    =\sum_{0\le k\le m} 2^{k\rho}g\bigpar{2^{-k+m}y}
    =\sum_{0\le j\le m} 2^{(m-j)\rho}g\bigpar{2^{j}y}
    =2^{m\rho}G_m(y).
\end{align}
Hence, taking $x=2^my$, \ref{P2b} implies, as $m\to\infty$, uniformly
for $y\in[1,2]$,
\begin{align}\label{bb2}
    y^{-\rho}G_m(y)
    =(2^my)^{-\rho}  h\bigpar{2^my}
    =P_1\bigpar{m+\log_2y}+o(1)
    =P_1\bigpar{\log_2y}+o(1).
\end{align}
Hence $y^{-\rho}G_m(y)$ converges uniformly on $[1,2]$, and thus
$G_m(y)$ converges uniformly on $[1,2]$.

\ref{P2c}$\implies$\ref{P2a}.
Conversely, \eqref{bb1} now yields
\begin{align}\label{bb3}
    (2^my)^{-\rho}  h\bigpar{2^my}
    =y^{-\rho}G_m(y)
    =y^{-\rho}G(y)+o(1)
\end{align}
as $m\to\infty$, uniformly for $y\in[1,2]$. Next, we show that
\begin{align}\label{bb4}
    P_1(t)
    :=\sum_{m\in\mathbb{Z}} 2^{-\rho(m+t)}g\bigpar{2^{m+t}}.
\end{align}
is a well-defined $1$-periodic function. First, if $t\in[0,1]$, then
all terms with $m<0$ vanish and thus
\begin{align}\label{bb5}
    P_1(t)
    :=\sum_{m\ge0} 2^{-\rho(m+t)}g\bigpar{2^{m+t}}
    =2^{-\rho t} G(2^t),
\end{align}
where the sum converges uniformly for $t\in\oi$ by assumption. This
shows that the sum in \eqref{bb4} converges for $t\in[0,1]$, and 
that $P_1(t)$ is continuous there. Furthermore, the sum in 
\eqref{bb4} is $1$-periodic in $t$, and thus the sum converges for 
all real $t$ and defines a $1$-periodic continuous function 
$P_1(t)$.

Taking $m=L_x$ and $y=2^{\gth_x}$ in \eqref{bb3} yields as
$x\to\infty$, using \eqref{bb5} and the periodicity of $P_1$,
\begin{align}
    x^{-\rho}h(x)
    =2^{-\rho\gth_x}G\bigpar{2^{\gth_x}}+o(1)
    =P_1\bigpar{\gth_x}+o(1)
    =P_1\bigpar{\log_2 x}+o(1).
\end{align}
Hence \ref{P2a} holds.

Finally, \eqref{b13} is \eqref{bb4}, which also shows that the sum
defining $Q(x)$ converges for all $x>0$, and that
\begin{align}
    h(x)+Q(x)
    &=\sum_{k\ge 0}2^{k\rho }g(2^{-k}x)
    +\sum_{k\ge 1}2^{-\rho k}g(2^{k}x) \notag\\
    &=\sum_{k\in\mathbb Z}2^{-k\rho }g(2^{k}x)\notag\\
    &=x^\rho P_1\bigpar{\log_2x}.
\end{align}
Moreover, %for $x\in[1,2]$,
\begin{align}
  (2^mx)^{-\rho}Q(2^mx) =x^{-\rho} \sum_{k\ge1} 2^{-\rho(k+m)}g(2^{k+m}x) 
=x^{-\rho} \sum_{k\ge m+1} 2^{-\rho k}g(2^{k}x), 
\end{align}
which by \ref{P2c} converges to 0 as $m\to\infty$, uniformly for
$x\in[1,2]$. Hence, $x^{-\rho}Q(x)\to0$ as $x\to\infty$.

This completes the proof of the proposition. 
\end{proof}

\begin{proof}[Proof of \refT{Theorem1}]
First, \ref{Theorem1a}$\iff$\ref{Theorem1ab} follows from  
\refP{Proposition1}, with $h(x)=f(x)$.

Next, define $h(x):=f(x)-x^\rho f(1)P_0\bigpar{\log_2x}$, and note
that \eqref{b15} implies \eqref{b12}. Thus
\ref{Theorem1ab}$\iff$\ref{Theorem1b}$\iff$\ref{Theorem1c} and the
last sentence of the statement follow from \refP{Proposition2}, with
$P(t):=f(1)P_0(t)+P_1(t)$.
\end{proof}

A more practical condition than uniform convergence is the 
following. 
\begin{cor}\label{C1}
Define
\begin{equation}
    A_{m}
    :=\sup_{2^{m}\le n\le 2^{m+1}}|g(n)|.
\end{equation}
If\/ $\,\sum_{m}2^{-m\rho }A_{m}<\infty $ then \eqref{t1b} and
\eqref{t1z} hold, where $P$ is continuous, periodic and given by
\eqref{t1y}. \qed
\end{cor}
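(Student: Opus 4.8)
The plan is to reduce the whole statement to condition \ref{Theorem1c} of \refT{Theorem1} and then verify that condition by a Weierstrass $M$-test. Recall that $\rho=\log_2(\ga+\gb)$, so $2^{-\rho m}=(\ga+\gb)^{-m}$, and that \refT{Theorem1} asserts that, as soon as the series
\begin{equation*}
    Q(x)=\sum_{m\ge1}2^{-\rho m}g(2^mx)
\end{equation*}
converges uniformly for $x\in[1,2]$ (this is \ref{Theorem1c}), the identity \eqref{t1z} and the asymptotic statement \eqref{t1b} both hold, with $P$ continuous, $1$-periodic, and given by \eqref{t1y}. Hence it suffices to deduce this uniform convergence from the hypothesis $\sum_m 2^{-m\rho}A_m<\infty$.

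To do that I would bound the general term of the series uniformly in $x\in[1,2]$. Fix $m\ge1$ and $x\in[1,2]$; then $2^mx\in[2^m,2^{m+1}]$, so both $\floor{2^mx}$ and $\ceil{2^mx}$ are integers lying in $[2^m,2^{m+1}]$. Since the extension \eqref{b2} writes $g(2^mx)$ as a convex combination of $g(\floor{2^mx})$ and $g(\ceil{2^mx})$ with weights $1-\gf(\cdot)$ and $\gf(\cdot)$ in $[0,1]$, we get
\begin{equation*}
    |g(2^mx)|\le\max\bigl(|g(\floor{2^mx})|,\,|g(\ceil{2^mx})|\bigr)\le A_m,
\end{equation*}
uniformly for $x\in[1,2]$. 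Therefore $\sup_{x\in[1,2]}\bigl|2^{-\rho m}g(2^mx)\bigr|\le 2^{-\rho m}A_m$, and since $\sum_{m\ge1}2^{-\rho m}A_m\le\sum_m 2^{-m\rho}A_m<\infty$ by assumption, the Weierstrass $M$-test gives absolute and uniform convergence of $Q(x)$ on $[1,2]$. This is exactly \ref{Theorem1c}, and invoking \refT{Theorem1} completes the proof; the explicit formula \eqref{t1y} for $P$ and the bound $Q(x)=o(x^\rho)$ come for free from that theorem.

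The only point that needs any attention is the middle step: passing from a bound on $|g(n)|$ at the integers $n\in[2^m,2^{m+1}]$ (where $A_m$ lives) to a bound on the interpolated values $g(2^mx)$ at the real arguments that actually appear in $Q$. This is handled by the elementary observation that \eqref{b2} produces a convex combination and $\gf$ takes values in $[0,1]$, so it is genuinely routine; everything else is a direct citation of \refT{Theorem1}. (Alternatively one could phrase the estimate using \refP{Proposition2} together with the continuity properties of $g(x)$, but the self-contained two-line bound above is cleaner.)
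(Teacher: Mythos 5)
Your proof is correct and is exactly the argument the paper intends: the corollary is stated with a \qed because it follows immediately from \refT{Theorem1}\ref{Theorem1c} via the Weierstrass $M$-test, and your observation that the interpolation \eqref{b2} makes $g(2^mx)$ a convex combination of values of $g$ at integers in $[2^m,2^{m+1}]$ (so $|g(2^mx)|\le A_m$ uniformly for $x\in[1,2]$) is precisely the one routine step that needs checking. Nothing is missing.
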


\begin{cor}\label{C2}
Suppose that $g(n)=O\bigpar{n^{\rho-\eps}}$ for some $\eps>0$. Then
\eqref{t1b} and \eqref{t1z} hold with $P$ continuous, periodic and
given by \eqref{t1y}, and $Q(x)=O(x^{\rho-\eps})$. \qed
\end{cor}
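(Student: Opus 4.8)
The plan is to obtain Corollary~\ref{C2} directly from Corollary~\ref{C1} together with a one-line estimate of the tail sum \eqref{t1Q} defining $Q$; no new machinery is needed.

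First I would verify the hypothesis of Corollary~\ref{C1}. From $g(n)=O\bigpar{n^{\rho-\eps}}$ we get, for the quantity $A_m=\sup_{2^m\le n\le 2^{m+1}}|g(n)|$ introduced there, that $A_m=O\bigpar{2^{m(\rho-\eps)}}$, and hence $\sum_m 2^{-m\rho}A_m=O\bigpar{\sum_m 2^{-m\eps}}<\infty$; note this holds whatever the sign of $\rho$. Corollary~\ref{C1} then already yields \eqref{t1b} and \eqref{t1z} with $P$ continuous, $1$-periodic, and given by \eqref{t1y}. So the only thing left to establish is the sharpened error estimate $Q(x)=O(x^{\rho-\eps})$.

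For that, I would first transfer the bound from the integer sequence to the interpolated function: since the interpolation \eqref{b2} applied to $g$ writes $g(x)$ as a convex combination of $g(\floor x)$ and $g(\ceil x)$ (because $0\le\gf\le1$), the hypothesis gives $g(x)=O\bigpar{x^{\rho-\eps}}$ for all real $x\ge1$. Then for $x\ge1$ every argument $2^mx$ with $m\ge1$ is $\ge 2$, so \eqref{t1Q} gives
\begin{align*}
  \bigabs{Q(x)}&\le\sum_{m\ge1}2^{-\rho m}\bigabs{g(2^mx)}
  =O\Bigpar{\sum_{m\ge1}2^{-\rho m}(2^mx)^{\rho-\eps}}\\
  &=O\Bigpar{x^{\rho-\eps}\sum_{m\ge1}2^{-m\eps}}
  =O\bigpar{x^{\rho-\eps}},
\end{align*}
which is the claimed bound (and is consistent with the weaker $Q(x)=o(x^\rho)$ already provided by \refT{Theorem1}). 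The same computation, discarding the terms with $2^mx\le1$ where $g$ vanishes, gives $Q(x)=O\bigpar{x^{\rho-\eps}}$ for every $x>0$.

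I do not expect a genuine obstacle here: the routine-looking points that still require a moment's attention are exactly the two used above — that the $O$-estimate on integers propagates to the real-variable function $g(x)$ via the convexity of the interpolation, and that the geometric series $\sum_{m\ge1}2^{-m\eps}$ converges for every $\eps>0$ regardless of whether $\rho$ is positive, negative, or zero, which is what makes the argument (and the appeal to Corollary~\ref{C1}) go through uniformly in the parameters.
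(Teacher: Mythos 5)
Your proof is correct and is exactly the argument the paper leaves implicit (the corollary is stated with an immediate \qed): deduce the summability hypothesis of Corollary~\ref{C1} from $A_m=O(2^{m(\rho-\eps)})$, and then bound $Q$ termwise via the geometric series after noting that the interpolation \eqref{b2} is a convex combination (since $0\le\gf\le1$), so the $O(x^{\rho-\eps})$ bound transfers from integers to all real $x\ge1$. Nothing further is needed.
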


\begin{example}\label{Eg=0}
One simple but important case is when $g(n)=0$, $n\ge2$, i.e., the
recurrence $\Lambda_{\ga,\gb}[f]=0$. By suitable normalisations, we
may assume $f(1)=1$. As the sequence $f$ satisfying
$\Lambda_{\alpha,\beta}[f]=0$ with $f(1)=1$ plays a fundamental role
in most of our applications, \emph{we denote the solution by
$S_{\alpha,\beta}(n)$ throughout this paper}.

\refT{Theorem1} applies trivially, with $Q(x)=0$ and $P(t)=P_0(t)$, 
and thus   
\begin{align}\label{Sgagb}
	S_{\ga,\gb}(n)
	= f(n)
	= n^{\log_2(\ga+\gb)}P_0(\log_2 n) 
,\qquad n\ge1,
\end{align}
where $P_0(t)$ is given by \eqref{b16}; thus also, more explicitly,
\begin{align}\label{Sgagb2}
	S_{\ga,\gb}(n)&
    = \bigpar{ 1+( \alpha +\beta-1) 
    \varphi\lpa{2^{\para{\log_2n}}-1}}
    (\alpha +\beta)^{\floor{\log_2 n}}.
\end{align}

(In the case $\ga=\gb$, when $\gf(x)=x$ by \refR{Rlinear}, 
\eqref{Sgagb} was found, in an equivalent form, by \cite{Ellul2005}.)
\end{example}

\begin{example}\label{Eodd}
Another simple but important case is when $g(2m)=0$ for all $m\ge1$,
i.e., $g(n)$ is non-zero only for odd $n$. Suppose also, for example,
that $g(n)=O(n^{\rho-\eps})$ so that \refC{C2} applies. Then,
\eqref{t1Q} shows that $Q(n)=0$ for every integer $n$, and thus
\eqref{t1z} yields the identity $f(n)=n^\rho P(\log_2 n)$, $n\ge1$.
\end{example}

\begin{remark}\label{RO}
We have here studied the case with $o$-estimates and convergence in
e.g.\ \refT{Theorem1}. Analogously, we note that it follows easily
from \refL{Lemma3} and \eqref{bb1} that
\begin{equation}
	 \begin{split}
		f(n)=O(n^\rho)
		&\iff
		f(x)=O(x^\rho) \quad\text{for }x\ge1 \\
		&\iff
		% G_m(x):=
		\sum_{0\le k\le m}2^{-\rho k}g(2^{k}x)=O(1)
		\quad\text{for $x\in[1,2]$ and $m\ge0$}.
	 \end{split}
\end{equation}
\end{remark}

\begin{remark}\label{Rfast}
We concentrate in this paper on the case when $g(n)$ grows more
slowly than $n^\rho$ (and \eqref{t1Q} converges), and thus the sums
in \eqref{b10} and \eqref{b15} are dominated by terms with $k$ large
(more precisely, $k=L_x+O(1)$ in \eqref{b15}). One might also
consider the opposite case, when $g(n)$ grows more rapidly that
$n^\rho$. In this case, the sums in \eqref{b10} and \eqref{b15} are
dominated by their first terms, which shows that (typically, at
least) $f(n)$ grows at the same rate as $g(n)$, and has the same
smoothness properties. (In particular, there is no smoothening effect
as we can see in \eqref{t1z}.) We consider one example as part of
\refE{EMoser}, but we otherwise leave this case to a future study. 
(Except for several examples where we reduce to a slower growing $g$ 
by subtracting a polynomial.)
\end{remark}

\section{Smoothness properties of the periodic function $P$}\label{S:smooth}

We prove in this section that under certain conditions on $g(n)$
stronger than those in \refT{Theorem1}, the periodic function $P$
is \Holder{} continuous, and has an absolutely convergent Fourier
series expansion. We also show that the interpolating function
$\gf$ is \Holder{} continuous, and thus the interpolated function
$f(x)$ is not only continuous but always locally \Holder{} continuous.

In this section, starting from the recursion \eqref{a1}, we tacitly 
assume that $\alpha,\beta>0$ and $g(1)=0$; also $P_0$, $P_1$ and $P$ 
are functions defined as in \refS{S:recurrence}; furthermore, we 
define
\begin{align}\label{E:lambda}
    \lambda
    :=\log_{2}\frac{\alpha +\beta}
    {\alpha \vee \beta}\in (0,1].
\end{align}
Note that $\gl=1$ if and only if $\ga=\gb$. 

\paragraph{Bounded variation, Lipschitz continuity and H\"older 
continuity.}
We recall some standard definitions. A function $\phi $ is 
\emph{Lipschitz continuous} on an interval $[a,b]$ if there exists 
a positive number $C$ such that
\begin{equation}
    |\phi(x)-\phi(y)| 
    \le C|x-y| \quad \text{(}x,y\in [a,b]\text{).}
\end{equation}
This definition extends to \emph{H\"older continuity} by replacing 
the last inequality by
\begin{equation}
    | \phi(x)-\phi(y)| 
    \le C| x-y|^{\gamma}\quad \text{(}x,y\in [a,b]\text{)},
\end{equation}
for some $0<\gamma \le 1$. Let $\mathrm{H}_{\gamma}[a,b]$ be the 
space of functions $\phi $ on $[a,b]$ such that the seminorm
\begin{equation}\label{c1}
    \| \phi\|'_{\mathrm{H}_{\gamma}[a,b]}
    :=\sup_{a\le x<y\le b}\frac{|\phi(x)-\phi(y)|}
    {|x-y|^{\gamma}}  
\end{equation}
is finite; this is a Banach space with the norm
\begin{equation}\label{c2}
    \|\phi\|_{\mathrm{H}_{\gamma}[a,b]}
    :=\|\phi\|'_{\mathrm{H}_{\gamma}[a,b]}
    +\sup_{x\in [a,b]}| \phi(x)| .  
\end{equation}

A function $\phi $ is of \emph{bounded variation} on $[a,b]$ if its
total variation is bounded. Such a function is differentiable almost
everywhere. Let $\mathrm{BV}[a,b]$ be the space of functions on
$[a,b]$ of bounded variation, with the norm
\begin{equation}
    \|\phi\|_{\mathrm{BV}[a,b]}
    :=V(\phi;a,b)+\sup_{[a,b]}|\phi|,
\end{equation}
where $V$ denotes the total variation.

Note that both $\mathrm{BV}[a,b]$ and $\mathrm{H}_{\gamma}[a,b]$ are
Banach algebras. Also $\mathrm{H}_{1}\subset \mathrm{BV}$, namely,
Lipschitz continuity implies bounded variation.

\paragraph{Smoothness of $\gf$ and $P$.}
We prove first that the interpolating function $\gf$ is H\"older
continuous. Note that $\gf$ is trivially of bounded variation since
it is monotone.

\begin{lemma}\label{LHolder}
    $\varphi \in\mathrm{H}_{\lambda}[0,1]$.
\end{lemma}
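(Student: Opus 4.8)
The plan is to prove the H\"older bound $|\gf(x)-\gf(y)|\le C|x-y|^{\gl}$ directly from the explicit formula \eqref{bb10} for the increments of $\gf$ across dyadic intervals, combined with the monotonicity of $\gf$. Recall that $\gl=\log_2\frac{\ga+\gb}{\ga\vee\gb}$, so that $\lpa{\frac{\ga\vee\gb}{\ga+\gb}}=2^{-\gl}$. First I would observe that for any dyadic interval $I=[j2^{-N},(j+1)2^{-N}]$ of length $2^{-N}$, the formula \eqref{bb10} gives $\gf\bigpar{(j+1)2^{-N}}-\gf\bigpar{j2^{-N}}=\frac{\ga^s\gb^{N-s}}{(\ga+\gb)^N}$, where $s$ is the number of $1$'s among the first $N$ binary digits of $j2^{-N}$. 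Since $\ga^s\gb^{N-s}\le(\ga\vee\gb)^N$, this increment is at most $\lpa{\frac{\ga\vee\gb}{\ga+\gb}}^N=2^{-\gl N}=\lpa{2^{-N}}^{\gl}=|I|^{\gl}$.

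Next I would upgrade this from a bound on dyadic intervals of the canonical scale to a bound on arbitrary pairs $x<y$ in $[0,1]$. Choose $N\ge0$ so that $2^{-N-1}\le y-x<2^{-N}$. Then the interval $[x,y]$ meets at most two consecutive dyadic intervals of length $2^{-N}$ of the form $[j2^{-N},(j+1)2^{-N}]$; say it is contained in $[j2^{-N},(j+2)2^{-N}]$. By monotonicity of $\gf$ (Lemma~\ref{Lemma2}),
\begin{equation*}
    |\gf(y)-\gf(x)|
    \le \gf\bigpar{(j+2)2^{-N}}-\gf\bigpar{j2^{-N}}
    \le 2\cdot 2^{-\gl N},
\end{equation*}
using the single-interval bound from the previous step applied to each of the two dyadic subintervals. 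Since $2^{-N}=2\cdot 2^{-N-1}\le 2(y-x)$, we get $|\gf(y)-\gf(x)|\le 2\cdot(2(y-x))^{\gl}=2^{1+\gl}(y-x)^{\gl}$, so $\|\gf\|'_{\HH_{\gl}[0,1]}\le 2^{1+\gl}\le 4$, which gives $\gf\in\HH_{\gl}[0,1]$ since $\gf$ is bounded by $1$.

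I expect the main obstacle to be purely bookkeeping: making sure the interval $[x,y]$ really is covered by (at most) two adjacent dyadic intervals of length $2^{-N}$ once $2^{-N-1}\le y-x<2^{-N}$, and handling the boundary case $y=1$ where one must use the infinite binary representation. An alternative, perhaps cleaner, route avoids \eqref{bb10} entirely and instead iterates the self-similarity relation \eqref{b4}: the two affine pieces in \eqref{b4} have slopes $\frac{\gb}{\ga+\gb},\frac{\ga}{\ga+\gb}\le\frac{\ga\vee\gb}{\ga+\gb}=2^{-\gl}$ when passing from scale $1$ to scale $\tfrac12$, so $n$ iterations of \eqref{b4} express $\gf$ on any dyadic interval of length $2^{-n}$ as an affine image of $\gf$ on $[0,1]$ with multiplicative factor at most $2^{-\gl n}=\lpa{2^{-n}}^{\gl}$; since $\gf$ has oscillation $1$ on $[0,1]$, its oscillation on such an interval is at most $\lpa{2^{-n}}^{\gl}$, and the same two-interval covering argument finishes the proof. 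Either way the key inequality is $\frac{\ga\vee\gb}{\ga+\gb}=2^{-\gl}$, which converts the geometric decay rate of the increments into the H\"older exponent $\gl$.
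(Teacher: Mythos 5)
Your proof is correct and follows essentially the same route as the paper's: both derive the dyadic increment bound $\gf\bigpar{(j+1)2^{-N}}-\gf\bigpar{j2^{-N}}\le 2^{-N\gl}$ from \eqref{bb10} and then cover a general interval $[x,y]$ by a bounded number of adjacent dyadic intervals at the scale of $y-x$, using monotonicity. The only difference is the rounding convention for $N$ (the paper takes $N=\ceil{-\log_2(y-x)}$ and uses three dyadic intervals, giving constant $3$, while you use two and get $2^{1+\gl}$), which is immaterial.
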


\begin{proof}
First, let $x=m2^{-N}$ and $y=(m+1)2^{-N}$ for some integers 
$N\ge0$ and $m<2^N$. Then \eqref{bb10} implies
\begin{align}\label{lh1}
    \gf(y)-\gf(x) 
    \le \Bigpar{\frac{\ga\vee \gb}{\ga+\gb}}^{N}
    =\bigpar{2^{-\gl}}^{N}
    =2^{-N\gl}.
\end{align}

For general $x$ and $y$ with $0\le x< y\le 1$, let $N:=\ceil{-\log_2
\left( y-x \right)}$ and $k:=\floor{x2^N}$. Then $2^{-N}\le y-x \le
2^{1-N}$ and $k2^{-N} \le x<y\le (k+3)2^{-N}$. Hence, using
\eqref{lh1} thrice,
\begin{equation}
    \varphi(y)-\varphi(x)
    \le \gf\bigpar{(k+3)2^{-N}} - \gf\bigpar{k2^{-N}}
    \le  3\cdot 2^{-N\gl}
    \le 3|y-x|^\gl.
\end{equation}
This proves that $\varphi \in\mathrm{H}_{\lambda}[0,1]$.
\end{proof}

\begin{lemma}\label{Lemma5}
    $P_0\in \HH_\gl[0,1]\cap \BV[0,1]$.
\end{lemma}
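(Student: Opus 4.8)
The plan is to show $P_0 \in \HH_\gl[0,1]$; the fact that $P_0 \in \BV[0,1]$ then follows, since $P_0$ is manifestly piecewise of bounded variation (and on $[0,1]$ Hölder continuity does not give BV automatically, so I must argue BV separately — but in fact $P_0$ will be seen to be a product/composition of monotone-type pieces, so BV is easy).

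\medskip

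\textbf{Hölder continuity of $P_0$.} Recall from \eqref{b16} that for $t \in [0,1)$,
\begin{equation*}
    P_0(t) = \bigpar{1 + (\ga+\gb-1)\,\gf(2^t-1)}(\ga+\gb)^{-t}.
\end{equation*}
The plan is to write $P_0$ on $[0,1]$ as a product of two factors: the function $t \mapsto (\ga+\gb)^{-t}$, which is smooth hence Lipschitz hence in $\HH_1[0,1] \subseteq \HH_\gl[0,1]$; and the function $\psi(t) := 1 + (\ga+\gb-1)\gf(2^t-1)$. Since $\HH_\gl[0,1]$ is a Banach algebra (as noted in the excerpt), it suffices to show $\psi \in \HH_\gl[0,1]$. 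For $\psi$, write $\psi = 1 + (\ga+\gb-1)\,(\gf \circ \eta)$ where $\eta(t) := 2^t - 1$ maps $[0,1]$ onto $[0,1]$. The map $\eta$ is smooth with bounded derivative on $[0,1]$, hence Lipschitz: $|\eta(s)-\eta(t)| \le C|s-t|$. By \refL{LHolder}, $\gf \in \HH_\gl[0,1]$, so
\begin{equation*}
    |\gf(\eta(s)) - \gf(\eta(t))| \le \|\gf\|'_{\HH_\gl[0,1]}\,|\eta(s)-\eta(t)|^\gl \le \|\gf\|'_{\HH_\gl[0,1]}\,C^\gl\,|s-t|^\gl.
\end{equation*}
Thus $\gf \circ \eta \in \HH_\gl[0,1]$, hence $\psi \in \HH_\gl[0,1]$, hence $P_0 = \psi \cdot (\ga+\gb)^{-(\cdot)} \in \HH_\gl[0,1]$. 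Finally, the $1$-periodicity of $P_0$ together with $P_0(0)=P_0(1)=1$ (from \refL{Lemma3}) guarantees there is no jump in the Hölder estimate across integer points, so $P_0 \in \HH_\gl[\bbR]$ locally, which in particular gives the claim on $[0,1]$.

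\medskip

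\textbf{Bounded variation.} For BV: $(\ga+\gb)^{-t}$ is monotone on $[0,1]$ hence BV; $\gf$ is monotone hence BV, and $\eta$ is monotone, so $\gf \circ \eta$ is monotone hence BV; products of BV functions are BV (BV$[0,1]$ is a Banach algebra, also noted in the excerpt), so $P_0 \in \BV[0,1]$. Alternatively one just notes that any $\HH_1$ function is BV, but since our Hölder exponent $\gl$ may be $<1$ this does not apply directly — hence the separate monotonicity argument is the clean route.

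\medskip

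\textbf{Main obstacle.} The only genuine point requiring care is the composition step: one must be sure that $\eta(t) = 2^t-1$ is Lipschitz (not merely continuous) on $[0,1]$ so that precomposing preserves the Hölder class with the \emph{same} exponent $\gl$ — this is where \refL{LHolder} is actually used, and it works precisely because $\eta$ has a bounded derivative on the compact interval $[0,1]$. Everything else is bookkeeping with the Banach-algebra structure of $\HH_\gl[0,1]$ and $\BV[0,1]$.
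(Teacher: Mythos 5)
Your proof is correct and follows essentially the same route as the paper: Hölder continuity of the factor involving $\gf$ via \refL{LHolder} composed with the Lipschitz map $t\mapsto 2^t-1$, Lipschitz (bounded derivative) for the factor $(\ga+\gb)^{-t}$, the Banach-algebra property of $\HH_\gl[0,1]$ for the product, and monotonicity of both factors for the BV claim. The closing remark about periodicity across integer points is unnecessary for the statement on $[0,1]$ but harmless.
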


\begin{proof}
For $t\in[0,1]$, we may replace $\{t\}$ by $t$ in \eqref{b16}. It
then follows from \refL{LHolder} that the first factor in
\eqref{b16} belongs to $\mathrm{H}_{\lambda}\oi$, and so does the
second factor since it has a bounded derivative. Hence, $P_0\in
\mathrm{H}_{\lambda}\oi$.

Similarly, both factors in \eqref{b16} then are monotone on
$[0,1)$, and therefore of bounded variation. Hence $P_0\in
\mathrm{BV}[0,1]$.
\end{proof}

To treat the function $P$, we need a smoothness assumption on the 
sequence $g(n)$. Let
\begin{align}\label{AB}
    A_{m}
    :=\max_{2^{m}\le n\le 2^{m+1}}| g(n)| 
	\eqtext{and}
	B_{m}:=\max_{2^{m}\le n<2^{m+1}}| g(n+1)-g(n)| .
\end{align}

\begin{lemma}\label{Lemma4} 
If\/ $\rho>0$ and 
\begin{equation}\label{c3}
    \sum_{m\ge0}2^{(1-\rho )m}B_{m}
    <\infty,
\end{equation}
then  
\begin{align}\label{c3A}
    \sum_{m\ge0}2^{-\rho m }A_{m}
    <\infty .
\end{align}
\end{lemma}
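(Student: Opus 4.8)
The plan is to bound each $A_m$ by a partial sum of the weighted differences $2^{\ell}B_\ell$ via telescoping from the base value $g(1)=0$, and then to interchange the order of summation, using $\rho>0$ to sum the resulting geometric series.

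First I would note that for any integer $n$ with $2\le n\le 2^{m+1}$, since $g(1)=0$,
\[
    |g(n)|
    = \Bigl|\sum_{1\le j\le n-1}\bigl(g(j+1)-g(j)\bigr)\Bigr|
    \le \sum_{1\le j\le n-1}|g(j+1)-g(j)|.
\]
Splitting the range of $j$ into the dyadic blocks $\{2^{\ell},\dots,2^{\ell+1}-1\}$ for $0\le \ell\le m$, each of which contains at most $2^{\ell}$ integers and on which every difference is bounded by $B_\ell$, gives $|g(n)|\le \sum_{0\le \ell\le m}2^{\ell}B_\ell$. Taking the maximum over $2^{m}\le n\le 2^{m+1}$ yields
\[
    A_m\le \sum_{0\le \ell\le m}2^{\ell}B_\ell .
\]
(For $m=0$ this reads $A_0=|g(2)|=B_0$, consistently.) Then I would multiply by $2^{-\rho m}$, sum over $m\ge0$, and interchange the two sums of nonnegative terms:
\[
    \sum_{m\ge0}2^{-\rho m}A_m
    \le \sum_{\ell\ge0}2^{\ell}B_\ell\sum_{m\ge \ell}2^{-\rho m}
    = \frac{1}{1-2^{-\rho}}\sum_{\ell\ge0}2^{(1-\rho)\ell}B_\ell,
\]
the inner geometric series being summable precisely because $\rho>0$. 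The right-hand side is finite by the hypothesis \eqref{c3}, so \eqref{c3A} follows.

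There is essentially no serious obstacle here; the only points needing a little care are the bookkeeping in the dyadic decomposition (block $\ell=0$ contains only $j=1$, and $B_0=|g(2)|$ because $g(1)=0$) and the observation that the block length $2^{\ell}$ combines with the weight $2^{-\rho\ell}$ to reproduce exactly the factor $2^{(1-\rho)\ell}$ appearing in the assumption. The assumption $\rho>0$ enters only through the bound $\sum_{m\ge\ell}2^{-\rho m}=2^{-\rho\ell}/(1-2^{-\rho})$, i.e.\ a uniform geometric tail.
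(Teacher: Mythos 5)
Your proof is correct and follows essentially the same route as the paper's: telescoping $g(n)$ from the base value via the dyadic blocks to get $A_m\le\sum_{\ell\le m}2^{\ell}B_\ell$, then interchanging the order of summation and summing the geometric tail using $\rho>0$. The only (immaterial) difference is that the paper telescopes from $g(2)$ and carries $|g(2)|$ as a separate term, whereas you start from $g(1)=0$ and absorb it into the $\ell=0$ block via $B_0=|g(2)|$.
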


Note that $\rho>0$ is equivalent to $\ga+\gb>1$.

\begin{proof}
Let $m\ge1$. For every $n$ lying in the interval $2^{m}\le n\le 
2^{m+1}$,
\begin{align}\label{cj1}
    |g(n)|
	&\le |g(2)|+\sum_{2\le j<n}|g(j+1)-g(j)|\notag\\
    &\le |g(2)|+\sum_{1\le k\le m}2^{k}B_{k},
\end{align}
implying that
\begin{equation}\label{cj2}
    A_{m}
    \le |g(2)|+\sum_{1\le k\le m}2^{k}B_{k}.
\end{equation}
Thus,
\begin{align}\label{cj3}
    \sum_{m\ge1}2^{-\rho m}A_{m}
    &\le \sum_{m\ge1}2^{-\rho m}
    \biggpar{|g(2)|+\sum_{1\le k\le m}2^{k}B_{k}}\notag\\
    &=|g(2)|\frac{2^{-\rho}}{1-2^{-\rho}}+\frac{1}{
    1-2^{-\rho}}\sum_{k\ge1}2^{(1-\rho)k}B_{k}.
\end{align}
This proves the lemma.
\end{proof}

\begin{lemma}  \label{LProp1}
If\/ $\rho>0$ and \eqref{c3} holds, then $P\in \HH_\gl[0,1]\cap 
\BV[0,1]$.
\end{lemma}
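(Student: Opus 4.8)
The plan is to decompose $P$ as $P(t) = f(1)P_0(t) + P_1(t)$, exactly as in the proof of \refT{Theorem1}, and treat the two summands separately. For the first summand, \refL{Lemma5} already gives $P_0\in\HH_\gl[0,1]\cap\BV[0,1]$, and since $\HH_\gl[0,1]$ and $\BV[0,1]$ are vector spaces, $f(1)P_0$ lies in the same space. So the entire burden is to show $P_1\in\HH_\gl[0,1]\cap\BV[0,1]$, where by \eqref{b13} (equivalently \eqref{bb5}) we have, for $t\in[0,1]$,
\begin{equation*}
    P_1(t) = 2^{-\rho t}G(2^t) = \sum_{k\ge0} 2^{-\rho(k+t)} g\bigpar{2^{k+t}}.
\end{equation*}
The idea is to write $P_1$ as a uniformly convergent sum of terms $\psi_k(t) := 2^{-\rho(k+t)}g(2^{k+t})$ and to bound each $\psi_k$ in both the $\HH_\gl[0,1]$-seminorm and the $\BV[0,1]$-norm, then sum the bounds; since both spaces are Banach spaces (and $\HH_\gl$ is even a Banach algebra), a summable bound will give $P_1$ in the intersection.

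First I would handle $\BV$. On $[0,1]$, the map $s\mapsto g(s)$ with $g$ defined by the interpolation \eqref{b2} is piecewise linear between consecutive integers, so on the interval $2^{k+t}$, $t\in[0,1]$, i.e.\ $s\in[2^k,2^{k+1}]$, the variation of $g$ is at most $\sum_{2^k\le n<2^{k+1}}|g(n+1)-g(n)| \le 2^k B_k$ (there are at most $2^k$ such integer gaps). Pulling this back through the substitution $s=2^{k+t}$ and multiplying by the smooth, bounded-derivative factor $2^{-\rho(k+t)}$ (which on $[0,1]$ is bounded by $2^{-\rho k}$ up to a constant and also has variation $O(2^{-\rho k})$), one gets $\|\psi_k\|_{\BV[0,1]} \le \CC\,2^{(1-\rho)k}B_k + \CC\,2^{-\rho k}A_k$ for absolute constants. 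By \eqref{c3} the first part is summable, and by \refL{Lemma4} (which applies since $\rho>0$ and \eqref{c3} holds) so is $\sum 2^{-\rho k}A_k$; hence $\sum_k\|\psi_k\|_{\BV[0,1]}<\infty$ and $P_1\in\BV[0,1]$.

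Next I would handle $\HH_\gl[0,1]$ for the same terms. Here the key is that $g$ restricted to a dyadic-scaled block of length $2^k$ is Lipschitz with constant $B_k$ (the interpolation \eqref{b2} between integers is linear with slope bounded by the largest successive difference $B_k$), hence, after composing with $s = 2^{k+t}$ whose derivative is $O(2^k)$ on $t\in[0,1]$ and multiplying by the factor $2^{-\rho(k+t)}$ which is Lipschitz with constant $O(2^{-\rho k})$ and bounded by $O(2^{-\rho k})$, one obtains that $\psi_k$ is Lipschitz on $[0,1]$ with constant $\CC\,2^{(1-\rho)k}B_k + \CC\,2^{-\rho k}A_k$. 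Since $[0,1]$ is bounded, a Lipschitz bound $C_k$ gives a $\gl$-Hölder bound $\le C_k$ as well (because $|x-y|\le|x-y|^\gl$ on $[0,1]$ for $\gl\le1$), so $\|\psi_k\|'_{\HH_\gl[0,1]}\le\CC\,2^{(1-\rho)k}B_k+\CC\,2^{-\rho k}A_k$, which is summable by the same two facts. Combined with the already-established uniform bound $\sup_{[0,1]}|\psi_k|\le\CC\,2^{-\rho k}A_k$, we get $\sum_k\|\psi_k\|_{\HH_\gl[0,1]}<\infty$, so $P_1\in\HH_\gl[0,1]$, and therefore $P=f(1)P_0+P_1\in\HH_\gl[0,1]\cap\BV[0,1]$.

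The main obstacle I anticipate is purely technical: carefully quantifying the Lipschitz/variation constant of the interpolated $g(x)$ on a dyadic block in terms of $B_k$ and $A_k$, and tracking how the change of variables $s=2^{k+t}$ (with its exponentially large derivative $2^{k+t}\ln 2$) interacts with the exponentially small prefactor $2^{-\rho(k+t)}$ so that the net growth is the benign $2^{(1-\rho)k}$ rather than something that fails to be summable. Once one observes that the $2^k$ from the derivative of the substitution exactly cancels against $2^{-\rho k}\cdot 2^{\rho}\cdot(\text{number of integer gaps})$ to leave $2^{(1-\rho)k}$ — the same exponent appearing in hypothesis \eqref{c3} — the estimates fall into place, and \refL{Lemma4} supplies the companion bound on $\sum 2^{-\rho k}A_k$ needed for the sup-norm and lower-order contributions.
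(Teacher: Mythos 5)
Your overall strategy---split $P=f(1)P_0+P_1$, bound each term $\psi_k(t)=2^{-\rho(k+t)}g(2^{k+t})$ in the $\BV[0,1]$ and $\HH_\gl[0,1]$ norms, and sum the bounds using \eqref{c3} together with \refL{Lemma4}---is exactly the paper's. The $\BV$ half survives with one correction: the interpolated $g$ is \emph{not} piecewise linear between integers; by \eqref{b2} it equals $g(n)+\varphi(t)\bigl(g(n+1)-g(n)\bigr)$ with $\varphi=\varphi_{\ga,\gb}$. Since $\varphi$ is monotone, the variation of $g$ over $[n,n+1]$ is still exactly $|g(n+1)-g(n)|$, so your bound $\|\psi_k\|_{\BV[0,1]}\le C\bigl(2^{(1-\rho)k}B_k+2^{-\rho k}A_k\bigr)$ stands once "linear" is replaced by "monotone."

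The H\"older half has a genuine gap. You assert that $g$ is Lipschitz on each unit interval with constant $B_k$ because "the interpolation between integers is linear with slope bounded by $B_k$," deduce that $\psi_k$ is Lipschitz, and then pass to $\HH_\gl$ via $|x-y|\le|x-y|^\gl$. But when $\ga\neq\gb$ the interpolating function $\varphi$ is only $\gl$-H\"older and is \emph{not} Lipschitz (\refL{LHolder} and \refR{RHolder}); consequently $g$ is not Lipschitz on any unit interval where $\Delta g(n)\neq0$, $\psi_k$ is not Lipschitz, and the step collapses. Indeed, the whole reason the lemma is stated with the exponent $\gl=\log_2\frac{\ga+\gb}{\ga\vee\gb}<1$ is that Lipschitz continuity genuinely fails here. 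The repair requires estimating $\|g\|'_{\HH_\gl[2^k,2^{k+1}]}$ directly in two regimes: for $x,y$ in the same unit interval one has $g(y)-g(x)=\Delta g(n)\bigl(\varphi(y)-\varphi(x)\bigr)$ and invokes $\varphi\in\HH_\gl$ to get a bound $C B_k|y-x|^\gl$; for integer endpoints at distance $\eta\le 2^k$ one uses $|g(y)-g(x)|\le\eta B_k\le B_k2^{k(1-\gl)}\eta^\gl$. Combining gives $\|g\|'_{\HH_\gl[2^k,2^{k+1}]}\le C B_k2^{(1-\gl)k}$, and after the rescaling factor $2^{\gl k}$ coming from the substitution $x=2^{k+t}$ one recovers the same summable bound $C\bigl(2^{(1-\rho)k}B_k+2^{-\rho k}A_k\bigr)$ that you wrote down---but by an argument valid for $\ga\neq\gb$. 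As written, your proof of the $\HH_\gl$ part is correct only in the special case $\ga=\gb$, where $\varphi(t)=t$ and $\gl=1$.
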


\begin{proof}
Let $g_{m}(x):=g(2^{m}x)$. Then, by (\ref{t1y}) and $g(x)=0$ for 
$x\le 1$, we have
\begin{equation}\label{c5}
    P(t)
    =2^{-\rho t}\sum_{m\ge0}
    2^{-\rho m}g_{m}(2^{t})+f(1)P_0(t)
	\qquad(t\in [0,1]).
\end{equation}
Since the function $2^{-\rho t}$ belongs to $\mathrm{BV}[0,1]$, 
which is a Banach algebra, we see that
\begin{equation}\label{c6}
    \Vert P(t)\Vert_{\mathrm{BV}[0,1]}
    \le \CC\sum_{m\ge0}2^{-\rho m}
    \Vert g_{m}(2^{t})\Vert_{\mathrm{BV}[0,1]}
    +|f(1)|\,\Vert P_0(t)\Vert_{\mathrm{BV}[0,1]},  
\end{equation}
for some constant $\CCx>0$. Furthermore, by the monotonicity of
the interpolating function $\varphi$, we obtain
\begin{align}\label{c7}
    \Vert g_{m}(2^{t})\Vert_{\mathrm{BV}[0,1]}
	&=\Vert g_{m}(x)\Vert_{\mathrm{BV}[1,2]}\notag \\
    &=\Vert g(x)\Vert_{\mathrm{BV}[2^{m},2^{m+1}]}\notag\\
    &=\sup_{2^m \le x\le 2^{m+1}}|g(x)| +
	\sum_{2^{m}\le n<2^{m+1}}|\Delta g(n)|\notag\\
	&\le A_m+ 2^{m}B_{m},
\end{align}
where $\Delta g(n):=g(n+1)-g(n)$. It follows from
(\ref{c6})--(\ref{c7}), \refL{Lemma5}, (\ref{c3}) and \refL{Lemma4}
that $P(t)\in \mathrm{BV}[0,1]$.

Similarly, for the H\"older norm, we have
\begin{align}\label{c8}
    \Vert P_1(t)\Vert_{\mathrm{H}_{\lambda}[0,1]}
    &\le \CC\sum_{m\ge 0}2^{-\rho m}
    \Vert g_{m}(2^{t})\Vert_{\mathrm{H}_{\lambda}[0,1]} \notag\\
    &\le \CC\sum_{m\ge 0}2^{-\rho m}
    \Vert g_{m}(x)\Vert_{\mathrm{H}_{\lambda}[1,2]},
\end{align}
where, furthermore, by the definitions \eqref{c1}--\eqref{c2} and 
\eqref{AB}, 
\begin{align}\label{c8a}
    \Vert g_{m}(x)\Vert_{\mathrm{H}_{\lambda}[1,2]} 
    &= \Vert g_{m}(x)\Vert'_{\mathrm{H}_{\lambda}[1,2]} 
    +\sup_{x\in[1,2]}|g_m(x)| \notag\\
    &= 2^{\lambda m} 
	\Vert g(x)\Vert'_{\mathrm{H}_\lambda[2^{m},2^{m+1}]}
    +\sup_{x\in[2^{m},2^{m+1}]}|g(x)| \notag\\
    &= 2^{\lambda m} 
	\Vert g(x)\Vert'_{\mathrm{H}_\lambda[2^{m},2^{m+1}]}
    +A_m.
\end{align}
In order to bound $\Vert g\Vert'_{\mathrm{H}_\lambda[2^{m},
2^{m+1}]}$, we estimate $|g(y)-g(x)|$ for $2^{m}\le x\le y\le 2^{m+1}
$. By splitting the interval $[x,y]$ into $[x,\left\lceil
x\right\rceil ]$, $[\left\lceil x\right\rceil ,\left\lfloor
y\right\rfloor ]$ and $[\left\lfloor y\right\rfloor ,y]$, it suffices
(up to a constant factor in the norm) to consider the two cases $n\le
x\le y\le n+1$ and $x=n$, $y=n+\eta$, where $n$ and $\eta$ are
integers.

In the first case, $n\le x\le y\le n+1$ with $2^{m}\le n<2^{m+1}$, we
have $g(y)-g(x)=\Delta g(n)\left( \varphi (y)-\varphi(x)\right)$.
Since $\varphi \in \mathrm{H}_{\lambda}[0,1]$ by \refL{LHolder},
\begin{equation}\label{c9}
    |g(y)-g(x)|
    \le \CC|\Delta g(n)|\,|y-x|^{\lambda}
    \le \CCx B_{m}|y-x|^{\lambda}.  
\end{equation}
In the second case,
\begin{equation}\label{c10}
    |g(y)-g(x)|
    =|g(n+\eta)-g(n)|
    \le \sum_{0\le i<\eta}|\Delta g(n+i)|
    \le \eta B_{m}
    \le B_{m}2^{m(1-\lambda )}\eta^{\lambda}.
\end{equation}
Combining the two cases, we obtain from (\ref{c9}) and (\ref{c10})
\begin{equation}\label{c11a}
    \Vert g\Vert_{\mathrm{H}_{\lambda}[2^{m},2^{m+1}]}^{\prime }
    \le \CC B_{m}2^{(1-\lambda )m}.  
\end{equation}
Consequently, \eqref{c8a} yields
\begin{equation}
    \Vert g_m\Vert_{\mathrm{H}_{\lambda}[1,2]}
    \le \CCx2^{m} B_{m} +A_m. \label{c11b}
\end{equation}

It follows from (\ref{c8}), (\ref{c11b}), (\ref{c3}) and
\refL{Lemma4} that $P_1\in \mathrm{H}_{\lambda}[0,1]$, and then $P\in
\mathrm{H}_{\lambda}[0,1]$ by \refL{Lemma5}.
\end{proof}

\begin{rem}\label{RHolder}
\refL{LHolder} is best possible: $\gf\notin \mathrm{H}_{\gam }\oi$
for $\gam>\gl$. In particular, $\gf$ is not Lipschitz continuous
(and not differentiable) unless $\ga=\gb$. To see this, it suffices
to note that \eqref{bb10} yields $\gf(2^{-j})-\gf(0)=
\bigpar{\frac{\gb}{\ga+\gb}}^j$ and $\gf(1)-\gf(1-2^{-j})=
\bigpar{\frac{\ga}{\ga+\gb}}^j$ for all $j\ge1$, and one of these
equals $\bigpar{\frac{\ga\vee\gb}{\ga+\gb}}^j
=\bigpar{2^{-j}}^{\gl}$.

Hence, \eqref{b16} shows that if $\ga+\gb\neq1$, then also
\refL{Lemma5} is best possible: $P_0\notin \mathrm{H}_{\gam }\oi$ for
$\gam>\gl$. Furthermore, \eqref{t1y} shows that typically also
$P\notin \mathrm{H}_{\gam }\oi$ for $\gam>\gl$. (Also in the case
$f(1)=0$, since $g(x)$ has the same smoothness as $\gf$.) However,
note that $P$ may be more smooth in special cases (which means that
there is cancellation of non-smoothness in \eqref{t1y}); for example,
we may take any continuously differentiable periodic function $P(t)$
and let $f(n):=n^\rho P(\log_2n)$ and then define $g(n)$ by
\eqref{b1}.
\end{rem}

\paragraph{Fourier series.} 
The periodic function $P(t)$ may be described by its Fourier
coefficients; these are given by the following formula, where we use
the notation
\begin{align}\label{chik}
	\chi_{k}:=\frac{2k\pi }{\log 2}\ii,
	\qquad  k\in\bbZ.
\end{align}

\begin{thm}\label{Theorem4}
\begin{thmenumerate}
\item \label{T4a}
If \refT{Theorem1}\ref{Theorem1c} holds (and thus all statements in
\refT{Theorem1}), then the Fourier coefficients
$\hP(k):=\int_{0}^{1}P(t)e^{-2k\pi it}\dd t$ of $P(t)$ are given by
\begin{equation}\label{c4}
    \widehat{P}(k)
    =\frac{1}{\log 2}\int_{1}^{\infty }
    \frac{g(u)}{u^{\rho + \chi_{k}+1}}\dd u
    +\frac{f(1)}{\log 2}\int_{0}^{1}
    \frac{1+(\alpha +\beta -1)\varphi(u)}
    {(1+u)^{\rho +\chi_{k}+1}}\dd u,
%\qquad k\in\bbZ,
\end{equation}
where the first integral is (at least) conditionally convergent.

\item\label{T4b} 
If \eqref{c3A} holds, then \eqref{c4} holds, with absolutely 
convergent integrals. 

\item \label{T4c}
If\/ $\ga+\gb>1$ and \eqref{c3} holds, then $P(t)$ has an absolutely
convergent Fourier series $P(t)=\sum_{k\in
\mathbb{Z}}\widehat{P}(k)e^{2\pi ikt}$, for $t\in \mathbb{R}$, where
the coefficients are given by \eqref{c4}, with absolutely convergent
integrals.
\end{thmenumerate}
\end{thm}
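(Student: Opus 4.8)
\textbf{Proof proposal for Theorem~\ref{Theorem4}.}
The plan is to compute $\hP(k)$ directly from the series representation \eqref{t1y} of $P(t)$, interchanging summation and integration, and then to recognize the resulting sum of integrals over unit-length intervals as a single integral over $[1,\infty)$ (respectively over $[0,1]$ for the $f(1)$-term, using the structure of $P_0$). For part \ref{T4a}, I would start from
\begin{equation*}
    P(t)=\sum_{m\in\bbZ}2^{-\rho(m+t)}g(2^{m+t})+f(1)P_0(t),
\end{equation*}
multiply by $e^{-2k\pi it}$ and integrate over $t\in[0,1]$. For the $g$-part, the $m$-th term contributes $\int_0^1 2^{-\rho(m+t)}g(2^{m+t})e^{-2k\pi it}\dd t$; the substitution $u=2^{m+t}$ (so $t=\log_2 u-m$, $\dd t=\dd u/(u\log 2)$, and $e^{-2k\pi it}=e^{-2k\pi i\log_2 u}=u^{-\chi_k}$ since $\chi_k=2k\pi i/\log 2$) turns this into $\frac{1}{\log2}\int_{2^m}^{2^{m+1}}g(u)u^{-\rho-\chi_k-1}\dd u$. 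Summing over $m\in\bbZ$ and using $g(u)=0$ for $u\le 1$ collapses the sum to $\frac{1}{\log 2}\int_1^\infty g(u)u^{-\rho-\chi_k-1}\dd u$, which is the first term in \eqref{c4}. For the $f(1)P_0$-term, I would use the explicit form \eqref{b16}, $P_0(t)=(1+(\ga+\gb-1)\gf(2^{t}-1))(\ga+\gb)^{-t}$ on $[0,1]$; the substitution $u=2^{t}-1$ on $[0,1]$ gives $t=\log_2(1+u)$, $\dd t=\dd u/((1+u)\log 2)$, $(\ga+\gb)^{-t}=2^{-\rho t}=(1+u)^{-\rho}$, and $e^{-2k\pi it}=(1+u)^{-\chi_k}$, producing exactly the second term of \eqref{c4}.

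For the convergence claims I would argue as follows. In part \ref{T4a} we only know \refT{Theorem1}\ref{Theorem1c}, i.e., that $Q(x)=\sum_{m\ge1}2^{-\rho m}g(2^m x)$ converges uniformly on $[1,2]$; by the usual dressed-up Abel/Dirichlet argument this gives that the series $\sum_m 2^{-\rho(m+t)}g(2^{m+t})$ converges uniformly in $t$, hence term-by-term integration against the bounded function $e^{-2k\pi it}$ is justified, and the partial sums of $\int_1^{2^M}g(u)u^{-\rho-\chi_k-1}\dd u$ (taken over dyadic endpoints) converge — this is the ``at least conditionally convergent'' statement; I would note the integral need not be absolutely convergent here. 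For part \ref{T4b}, the hypothesis \eqref{c3A} says $\sum_m 2^{-\rho m}A_m<\infty$ with $A_m=\sup_{2^m\le u\le 2^{m+1}}|g(u)|$, which bounds $\int_{2^m}^{2^{m+1}}|g(u)|u^{-\rho-1}\dd u\le A_m\int_{2^m}^{2^{m+1}}u^{-\rho-1}\dd u\le \CC\, 2^{-\rho m}A_m$ (using $\rho>0$; note \eqref{c3A} together with convergence of $Q$ forces $\rho\ge0$, and the $\rho=0$ degenerate case can be handled separately or is excluded), so the integral in \eqref{c4} is absolutely convergent and the interchange is immediate by dominated convergence.

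For part \ref{T4c}, the additional input is that under $\ga+\gb>1$ (i.e.\ $\rho>0$) and \eqref{c3}, \refL{Lemma4} gives \eqref{c3A}, and \refL{LProp1} gives $P\in\HH_\gl[0,1]$. The cleanest route to absolute convergence of $\sum_k|\hP(k)|$ is to bound $|\hP(k)|$ from \eqref{c4} — specifically the oscillatory integral $\int_1^\infty g(u)u^{-\rho-1}u^{-\chi_k}\dd u$ — by integration by parts (a $\BV$/Hölder estimate in the variable $s=\log_2 u$), exploiting that $u^{-\chi_k}=e^{-2k\pi i s}$ oscillates with frequency $k$: this yields $|\hP(k)|=O(|k|^{-\gl})$ for $\gl<1$ and $O(|k|^{-1})$ for $\gl=1$, which is not by itself summable. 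To actually get absolute summability I would instead invoke the standard Bernstein-type theorem: a function in $\HH_\gl$ of bounded variation — and $P\in\HH_\gl[0,1]\cap\BV[0,1]$ by \refL{LProp1} — has an absolutely convergent Fourier series (Zygmund's theorem: $\HH_\gamma\cap\BV$ on the circle, or even $\HH_\gamma$ with $\gamma>1/2$, has $\sum|\hat f(k)|<\infty$; the $\BV$ hypothesis covers all $\gl\in(0,1]$). Hence $P(t)=\sum_k\hP(k)e^{2\pi ikt}$ pointwise and the coefficients are those of \eqref{c4} by parts \ref{T4a}--\ref{T4b}. I expect the main obstacle to be bookkeeping the conditional-vs-absolute distinction in part \ref{T4a} — making the interchange of $\sum_m$ and $\int_0^1$ rigorous using only uniform convergence of $Q$ rather than absolute convergence — and citing the correct form of the Bernstein--Zygmund theorem for part \ref{T4c}; the substitutions themselves are routine.
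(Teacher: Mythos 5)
Your proposal is correct and follows essentially the same route as the paper: term-by-term integration of the series \eqref{t1y} justified by its uniform convergence, the substitutions $u=2^{m+t}$ and $u=2^t-1$ yielding the two integrals in \eqref{c4}, absolute convergence from \eqref{c3A} in part \ref{T4b}, and for part \ref{T4c} the combination of \refL{Lemma4}, \refL{LProp1} and Zygmund's theorem for $\HH_\gl\cap\BV$ (correctly preferred over Bernstein's $\gamma>\tfrac12$ criterion, which would not cover all $\gl\in(0,1]$). The only inessential deviation is your worry about $\rho>0$ in part \ref{T4b}: the bound $\int_{2^m}^{2^{m+1}}u^{-\rho-1}\dd u\le C_\rho 2^{-\rho m}$ holds for every real $\rho$, so no case split is needed there.
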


\begin{proof}
\pfitemref{T4a}
The Fourier coefficients of $P(t)$ are given by, using the
definitions (\ref{t1y}), \eqref{rho} and (\ref{b16}),
\begin{align}\label{c41}
    \widehat{P}(k)
    &:=\int_{0}^{1}P(t)e^{-2k\pi it}\dd t
    \notag\\
    &=\sum_{m\in \mathbb{Z}}\int_{0}^{1}g\left( 2^{m+t}\right)
    2^{-\rho(m+t)}e^{-2k\pi it}\dd t
    \notag\\
    &\qquad+f(1)\int_{0}^{1}
    \left( 1+(\alpha +\beta -1)\varphi\left(2^{t}-1\right)
    \right) (\alpha +\beta )^{-t}e^{-2k\pi it}\dd t
    \notag\\
    &=\int_{-\infty}^{\infty}g\xpar{ 2^{t}}
    2^{-\rho t}e^{-2k\pi it}\dd t
    \notag\\
    &\qquad+f(1)\int_{0}^{1}
    \left( 1+(\alpha +\beta -1)\varphi\left( 2^{t}-1\right)
    \right) 2^{-\rho t}e^{-2k\pi it}\dd t,
\end{align}
where the sum and the integral over $(-\infty,\infty)$ are
(conditionally) convergent because the sum in \eqref{t1y} converges
uniformly, and $g(x)=0$ for $x\le1$. (The integrals over $\oi$ are,
trivially, absolutely convergent.) Finally, \eqref{c4} follows by the
changes of variables $u=2^t$ and $u=2^t-1$, respectively, in the two
integrals.

\pfitemref{T4b}
\refC{C1} shows that all statements in \refT{Theorem4} hold, and thus
part \ref{T4a} applies, which gives \eqref{c4}. The absolute
convergence of the first integral follows by \eqref{c3A}.

\pfitemref{T4c}
By \refL{Lemma4}, we have \eqref{c3A}, and thus \ref{T4b} applies.
Since $P$ is a continuous $1$-periodic function, the absolute
convergence of the Fourier series follows directly from Lemma
\ref{LProp1} by a theorem of Zygmund (see \cite[p.~241,
VI.(3.6)]{Zygmund1959} or \cite[p.~35]{Katznelson1968}).
\end{proof}

The formula \eqref{c4} shows a connection with Mellin transforms;
this is explored further in \refApp{AMellin}.

The conditions of \refT{Theorem4}\ref{T4c} are, of course, not
necessary for absolute convergence of the Fourier series of $P(t)$.
Another simple case is the following.

\begin{example}\label{Eg=0b} 
If $\ga,\gb>0$ and $g(n)=0$, $n\ge2$, we may normalise by $f(1)=1$ as
in \refE{Eg=0}; thus $f(n)=S_{\ga,\gb}(n)$ and \eqref{Sgagb} holds.
$P(t)=P_0(t)$ has an absolutely convergent Fourier series, given by
\eqref{c4} with $g(x)=0$. This follows as in the proof of
\refT{Theorem4}, now using \refL{Lemma5}.
\end{example}

Many other examples of $P(t)$ with absolutely convergent Fourier
series are given in Sections \ref{S:app1}--\ref{S:app2}. An example
where the Fourier series is not absolutely convergent is part of
\refE{EMoser} (for some $\ga$).

In the case $\ga=\gb$, the integrals in \eqref{c4} can easily be
evaluated explicitly, and we obtain the following extension of the
case $\ga=\gb=1$ in \cite[Theorem 3]{Hwang2017}. Let, as in
\cite{Hwang2017},
\begin{align}\label{D1}
	D(s)
	:=\sum_{n\ge2} g(n)\Bigpar{(n-1)^{-s}-2n^{-s}+(n+1)^{-s}},
\end{align}
for all complex $s$ such the sum is convergent; note that if
$g(n)=O(n^{\rho-\eps})$ for some $\eps\ge0$ (this holds at least with
$\eps=0$ whenever \refT{Theorem1} applies), then the sum in
\eqref{D1} is absolutely convergent in the half-space $\Re s
>\rho-1-\eps$, and thus $D(s)$ is analytic there. Note also (as in
\cite{Hwang2017}) that if $\Re s$ is large enough, \eqref{D1} can be
rearranged as a Dirichlet series (using $g(0)=g(1)=0$)
\begin{align}\label{D2}
	D(s)
	=\sum_{n\ge1} \Bigpar{g(n+1)-2g(n)+g(n-1)}n^{-s}.
\end{align}

\begin{cor}\label{ChPa=b}
If $\ga=\gb>0$ and \refT{Theorem1}\ref{Theorem1c} holds (and thus all
statements in \refT{Theorem1}), then, assuming $\rchi_k\neq0,1$,
\begin{align}\label{c4+}
	\hP(k) &= \frac{1}{(\rho+\chi_k)(\rho-1+\chi_k)\log 2}
    \Bigpar{D(\rho-1+\chi_k) 
    + \frac{(2\ga-1)(\ga-1)}{\ga}f(1)}.
\end{align}
\qed
\end{cor}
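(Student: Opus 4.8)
The plan is to start from the general Fourier-coefficient formula \eqref{c4} of \refT{Theorem4}\ref{T4a}, which is available since \refT{Theorem1}\ref{Theorem1c} is assumed, and specialise it to $\ga=\gb$. In that case $\ga+\gb=2\ga$, hence $2^{\rho}=2\ga$, and $\gf(u)=u$ by \refR{Rlinear}, so \eqref{b2} is ordinary linear interpolation; writing $s:=\rho+\chi_k$, formula \eqref{c4} collapses to
\begin{equation*}
	\hP(k)
	=\frac1{\log 2}\int_1^\infty\frac{g(u)}{u^{s+1}}\dd u
	+\frac{f(1)}{\log 2}\int_0^1\frac{1+(2\ga-1)u}{(1+u)^{s+1}}\dd u .
\end{equation*}
Since $\chi_k$ is purely imaginary, the excluded values $s=0$ and $s=1$ occur only for $k=0$ with $\rho\in\{0,1\}$; in all remaining cases both $s$ and $s-1=\rho-1+\chi_k$ are nonzero, which is all that is needed below to divide.

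For the first integral I would use the identity
\begin{equation*}
	D(s')=s'(s'+1)\int_1^\infty g(u)\,u^{-s'-2}\dd u ,
\end{equation*}
and then take $s'=s-1$, so that $-s'-2=-s-1$ and $s'(s'+1)=(\rho-1+\chi_k)(\rho+\chi_k)$. To prove this identity, rewrite the bracketed second difference in \eqref{D1} via $(n-1)^{-s'}-2n^{-s'}+(n+1)^{-s'}=s'(s'+1)\int_{-1}^{1}(1-\lvert v\rvert)(n+v)^{-s'-2}\dd v$, substitute $u=n+v$, sum over $n\ge2$, and interchange summation and integration (legitimate for $\Re s'>\rho-1$, where everything is absolutely convergent because $g(n)=O(n^{\rho})$). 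For every $u\ge1$ the inner sum $\sum_{n\ge2}g(n)(1-\lvert u-n\rvert)\mathbf{1}_{\lvert u-n\rvert\le1}$ is exactly the piecewise-linear interpolant of $g$, i.e.\ the function $g(u)$ of \eqref{b2} in the case $\ga=\gb$; here one uses the standing assumption $g(1)=0$ (and $g(0)=0$) so that the terminal interval $[1,2]$ is handled correctly. Hence the first term of $\hP(k)$ equals $D(\rho-1+\chi_k)\big/\bigpar{(\rho+\chi_k)(\rho-1+\chi_k)\log2}$.

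For the second integral I would split $1+(2\ga-1)u=(2\ga-1)(1+u)-2(\ga-1)$ and integrate the two elementary pieces; using $2^{-\chi_k}=e^{-2k\pi i}=1$ and $2^{\rho}=2\ga$ one finds $\int_0^1(1+u)^{-s}\dd u=\frac{\ga-1}{\ga(s-1)}$ and $\int_0^1(1+u)^{-s-1}\dd u=\frac{2\ga-1}{2\ga s}$, so the whole integral equals $\frac{(2\ga-1)(\ga-1)}{\ga}\bigpar{\frac1{s-1}-\frac1s}=\frac{(2\ga-1)(\ga-1)}{\ga\,s(s-1)}$. Since $s(s-1)=(\rho+\chi_k)(\rho-1+\chi_k)$, the second term of $\hP(k)$ is $(2\ga-1)(\ga-1)f(1)\big/\bigpar{\ga(\rho+\chi_k)(\rho-1+\chi_k)\log2}$, and adding the two contributions yields \eqref{c4+}.

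The only step that is not a routine computation is the validity of the identity $D(s')=s'(s'+1)\int_1^\infty g(u)u^{-s'-2}\dd u$ at the critical value $s'=\rho-1+\chi_k$, which lies on the boundary line $\Re s'=\rho-1$; in the absence of a bound $g(n)=O(n^{\rho-\eps})$ with $\eps>0$, neither side is absolutely convergent there. I expect to handle this exactly as in the corresponding argument of \cite{Hwang2017}: the identity holds for $\Re s'>\rho-1$ by Fubini, on the line $\Re s'=\rho-1$ the integral coincides with the first integral in \eqref{c4} (whose conditional convergence is already granted by \refT{Theorem4}\ref{T4a}), and a partial-summation estimate, using $g(n)=O(n^{\rho})$, shows that the Dirichlet series \eqref{D1} converges at $s'=\rho-1+\chi_k$, so one passes to the boundary by a limiting argument.
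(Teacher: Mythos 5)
Your proposal is correct and follows essentially the same route as the paper's proof: both specialise \eqref{c4} using $\gf(u)=u$, evaluate the second integral by the identical split $1+(2\ga-1)u=(2\ga-1)(1+u)-2(\ga-1)$, and reduce the first integral to $D(\rho-1+\chi_k)\big/\bigpar{(\rho+\chi_k)(\rho-1+\chi_k)}$. The only difference is that you re-derive this last identity (via the Peano-kernel representation of the second difference and a boundary limiting argument), whereas the paper simply imports it as formula (2.23) from \cite{Hwang2017}, together with the conditional convergence of \eqref{D1} on the critical line.
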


The formula \eqref{c4+} is used (often tacitly) in numerous examples 
below.

\begin{remark}\label{Rexc}
The two exceptional cases are $k=0$ and either $\ga=\gb=\frac12$ or
$\ga=\gb=1$; in these cases, \eqref{c4+} is of the form $0/0$ and is
replaced by a suitable limit form. The case $\ga=1$ is included in
\cite[Theorem 3]{Hwang2017}; the case $\ga=\frac12$ is similar, but
we omit the details.
\end{remark}

\begin{proof}
\refT{Theorem4} applies and yields \eqref{c4}; we treat the two
integrals in \eqref{c4} separately. For the first integral,
\cite[(2.23) in the proof of Theorem 3]{Hwang2017} holds whenever
$\ga=\gb$, and shows that
\begin{align}\label{maja}
    \int_{1}^{\infty } \frac{g(u)}{u^{\rho + \chi_{k}+1}}\dd u
    = \frac{D(\rho-1+\chi_k) }{(\rho+\chi_k)(\rho-1+\chi_k)},
\end{align}
with the sum \eqref{D1} converging at least conditionally.

Since $\ga=\gb>0$, we have $\gf(u)=u$ and thus the second integral in
\eqref{c4} is, by a simple calculation using
$2^{\rchi_k}=2^\rho=2\ga$,
\begin{align}\label{maj0}
	\int_{0}^{1}  \frac{1+(2\alpha -1)u}
	{(1+u)^{\rho +\chi_{k}+1}}\dd u
	&=\int_{0}^{1}\lrpar{\frac{2\alpha -1} {(1+u)^{\rho +\chi_{k}}}
	+    \frac{2-2\alpha} {(1+u)^{\rho +\chi_{k}+1}}}\dd u
	\notag\\&
	=\frac{(2\ga-1)(\ga-1)}{\ga}
	\cdot\frac{1}{(\rchi_k)(\rho-1+\chi_k)}.
\end{align}
Using \eqref{maja}--\eqref{maj0} in \eqref{c4} yields \eqref{c4+}.
\end{proof}

\begin{remark}\label{RMellin}
For later use we note that, still assuming $\ga=\gb$, if the
condition \refT{Theorem1}\ref{Theorem1c} holds, then
\cite[(2.23)]{Hwang2017} more generally yields the Mellin transform
\begin{align}\label{majs}
    \int_{1}^{\infty } \frac{g(u)}{u^{s+1}}\dd u
    = \frac{D(s-1) }{s(s-1)}
\end{align}
with the integral converging absolutely at least for every complex 
$s$ with $\Re s>\rho$.   
\end{remark}

\begin{example}\label{EhPa=b}
Let $\ga=\gb>0$, and consider $P_0(t)$ given by \eqref{b16=} in
\refR{Ra=b}. By \refC{ChPa=b}, with $g(n)=0$ and $f(1)=1$ (and thus
$D(s)=0$), we have the Fourier coefficients, assuming
$\rchi_k\neq0,1$:
\begin{align}\label{maj1}
	\hP_0(k)
	&=\frac{(2\ga-1)(\ga-1)}{\ga\log2}\cdot
	\frac{1}{(\rchi_k)(\rho-1+\chi_k)},
	\qquad k\in\bbZ.
\end{align}
In particular, \eqref{maj1} verifies that $P_0(t)$ has an absolutely
convergent Fourier series, as shown more generally in \refE{Eg=0b}.
If $\ga\in\para{\frac12,1}$ (so $\rho=0$ or 1), \eqref{maj1} shows
that $\hP_0(k)$ vanishes for every $k\neq0$, which is obvious since
then $P_0(t)\equiv1$ by \refR{Ra=b}. In all other cases, \eqref{maj1}
yields $\bigabs{\hP_0(k)}=\Theta\bigpar{k^{-2}}$ as $k\to\pm\infty$,
which agrees well with the fact that $P_0(t)$ is Lipschitz but not
$C^1$; see again \refR{Ra=b}.
\end{example}

\begin{remark}\label{R:int-phi}
Again, by the recursive relations in \eqref{b4}, we can express the 
second integral in \eqref{c4} in the series form 
\begin{align}\label{E:int-phi}
	&\frac{(\rho+\chi_k)(\alpha+\beta)}{\alpha+\beta-1}
	\int_0^1 \frac{1+(\alpha+\beta-1)\varphi(t)}
	{(1+t)^{\rho+\chi_k+1}}\,\dd t \nonumber\\
	&\quad =  1+ \sum_{m\ge0}\sum_{0\le j<2^m}
	\alpha^{\nu(j)}\beta^{m+1-\nu(j)}
	\Lpa{\frac1{(2^m+j+\tfrac12)^{\rho+\chi_k}}
	-\frac1{(2^m+j+1)^{\rho+\chi_k}}},
\end{align}
which is more useful for numerical purposes, where $\nu(j)$ denotes 
the number of $1$s in the binary expansion of $j$. See 
Appendix~\ref{AHat-Pk} for a proof. 
\end{remark}

\section{Applications, I. $\alpha\ne\beta$}
\label{S:app1}

We discuss applications of our results in this section, grouping them
according to the growth order of $g$. Most examples are taken from
OEIS, sometimes with a shift of the index (which for simplicity of
presentation is not explicitly specified in this paper). For example,
if
\begin{align}
    f(n) = \alpha f\lpa{\ltr{\tfrac {n+d}2}}
    +\beta f\lpa{\lcl{\tfrac {n+d}2}}+ g(n)
    \qquad(n\ge n_0\ge1),
\end{align}
then $\bar{f}(n) := f(n+d)$ satisfies 
\begin{align}
    \bar{f}(n) 
    =\alpha \bar{f}\lpa{\ltr{\tfrac{n}2}}
    +\beta \bar{f}\lpa{\lcl{\tfrac{n}2}}+ g(n+d)
    \qquad(n\ge n_0-d).
\end{align}
Note also that if \eqref{a1} holds only for $n\ge n_0$, we can make
it hold for all $n\ge2$ by redefining $g(n)$ for $2\le n<n_0$. See
A294456 (contained in \refE{E11}) for an example.

Some of the examples are defined in OEIS by a recursion of the form
\eqref{a1}; in other examples, such a recursion is stated as a
property; in yet other examples below, no such recursion is given
explicitly in OEIS, but can be concluded from other properties given
there. Of course, every sequence $f(n)$ satisfies
$\gL_{\ga,\gb}[f]=g$ for some sequence $g(n)$; we are only interested
in cases when $g(n)$ has a simple explicit form, and in particular
does not grow too fast. We regard polynomial terms in $f(n)$ as
essentially trivial, so we also include examples when they dominate
$f$ and the periodic fluctuations constitute a lower-order term. (In
such cases our theorems apply only after subtracting a suitable
polynomial.)

To avoid trivialities, we discard in our discussions sequences from
OEIS whose generating functions are rational with all singularities
on the unit circle. (For example, polynomials.) Such sequences are in
the thousands in OEIS.

For notational convenience, we insert the subscript to $\varphi$ by
writing $\varphi^{}_{\alpha,\beta}(t)$ whenever necessary. The
symbols $f(n), g(n)$ and $P(t)$ are all generic and may differ from
one instance to the other; we also specify explicitly them as
$f_{\text{A006046}}(n)$, $g_{\text{A006046}}(n)$, and
$P_{\text{A006046}}(t)$ if needed. Note that our indexing of the
sequence $f$ may differ from that on OEIS by a shift; for example,
$f_{\text{A006581}}(n)=\text{A006581}(n+1)$ for $n\ge1$. Also the
format $g(n) = \begin{cases}\cdots\\ \cdots \end{cases}$ in the
tables without explicit mention always means the values of $g(n)$ in
the even and odd cases, respectively.

We continue to assume $\ga,\gb>0$, and consider in this section cases
with $\ga\neq\gb$. On the other hand, examples in the special cases
when $\alpha=\beta$ exhibit more structural properties and explicit
expressions, and will be discussed in Section~\ref{S:app2}. The
properties are very similar to the case when $\alpha=\beta=1$ that we
already examined in detail in \cite{Hwang2017}, although there are
also subtle differences on the smoothness of the periodic functions.

\subsection{Periodic equivalence}\label{sec-pe}

We introduce a simple notion here, very useful in identifying the
relation between sequences. The main case is when two sequences $f_1$
and $f_2$ both satisfy \refT{Theorem1} (with the same $\rho$), and
the corresponding periodic functions $P_1$ and $P_2$ are the same, or
more generally proportional. It will be convenient to be a bit more
general, and regard polynomial terms as trivial. We thus define:
\begin{definition}
Two sequences $f_1$ and $f_2$ are said to be \emph{periodically
equivalent} if, for some $\rho$, $f_j(n)=n^\rho P_j(\log_2n)+p_j(n)$,
$j=1,2$, where $p_j(n)$ are polynomials and $P_j$ are periodic
functions such that ${P}_1(k)=c{P}_2(k)$ for some constant $c\neq0$.
For simplicity, we write $f_1\Bumpeq f_2$.
\end{definition}
It is possible to extend the definition to \emph{asymptotically
periodic equivalence} by allowing $f_j(n)=n^\rho
P_j(\log_2n)+p_j(n)+o(n^\rho)$ for $j=1,2$, but the above definition
without $o(1)$-term is sufficient for our use in this paper.

We begin with the simplest cases when $\gL_{\ga,\gb}[f](n)=g(n)=0$.
Such cases cover also the situation when $g(n)=c$ for $n\ge2$ because
normalising $f$ by $\bar{f}(n) := f(n) +\frac{c}{\alpha+\beta-1}$
yields the recurrence $\Lambda_{\alpha,\beta}[\bar{f}]=0$ with
$\bar{f}(1) = f(1) + \frac{c}{\alpha+\beta-1}$. For convenience of
reference, we state this observation, in a somewhat generalised form, 
as a lemma.

\begin{lemma} \label{lmm-g-const}
If\/ $\alpha+\beta>1$, then two sequences defined by
$\Lambda_{\alpha,\beta}[f_1]=0$ with $f_1(1)\ne0$ and
$\Lambda_{\alpha,\beta}[f_2]=c$ are connected by
\begin{align}\label{l4.1}
    f_2(n) 
	= \left(f_2(1)+\frac{c}{\alpha+\beta-1}\right)
    \frac{f_1(n)}{f_1(1)} 
    -\frac{c}{\alpha+\beta-1}.
\end{align}
Hence, if $f_2(1) +\frac{c}{\alpha+\beta-1}\ne0$, the sequences
$f_1(n)$ and $f_2(n)$ are periodically equivalent with the underlying
periodic function $P$ satisfying $P\in \HH_\gl[0,1]\cap \BV[0,1]$, 
where $\lambda$ is defined in \eqref{E:lambda}.
\end{lemma}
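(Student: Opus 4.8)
The plan is to verify the identity \eqref{l4.1} directly and then read off periodic equivalence from Theorem~\ref{Theorem1}. First I would reduce to the homogeneous case: since $\alpha+\beta>1$, the constant sequence $c/(\alpha+\beta-1)$ solves $\Lambda_{\alpha,\beta}[\,\cdot\,]=c$ (because $\Lambda_{\alpha,\beta}$ applied to a constant $a$ gives $a-\alpha a-\beta a=(1-\alpha-\beta)a$). Hence $\bar f_2(n):=f_2(n)+\frac{c}{\alpha+\beta-1}$ satisfies $\Lambda_{\alpha,\beta}[\bar f_2]=0$. The operator $\Lambda_{\alpha,\beta}$ is linear, and the recurrence $\Lambda_{\alpha,\beta}[f]=0$ together with the value $f(1)$ determines $f$ uniquely on the positive integers; therefore any solution of $\Lambda_{\alpha,\beta}[f]=0$ is the constant multiple $\frac{f(1)}{f_1(1)}f_1$ of the reference solution $f_1$ (using $f_1(1)\neq0$). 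Applying this to $\bar f_2$ gives $\bar f_2(n)=\frac{\bar f_2(1)}{f_1(1)}f_1(n)=\bigl(f_2(1)+\frac{c}{\alpha+\beta-1}\bigr)\frac{f_1(n)}{f_1(1)}$, and subtracting $\frac{c}{\alpha+\beta-1}$ yields \eqref{l4.1}.

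For the periodic equivalence, I would invoke Example~\ref{Eg=0}: the reference solution is $f_1(n)=f_1(1)\,S_{\alpha,\beta}(n)=f_1(1)\,n^{\rho}P_0(\log_2 n)$ with $\rho=\log_2(\alpha+\beta)$ and $P_0$ as in \eqref{b16}. Substituting into \eqref{l4.1} gives
\begin{equation*}
    f_2(n)=\Bigl(f_2(1)+\tfrac{c}{\alpha+\beta-1}\Bigr)n^{\rho}P_0(\log_2 n)-\tfrac{c}{\alpha+\beta-1},
\end{equation*}
which is exactly of the form $n^{\rho}P(\log_2 n)+p(n)$ with $P=\bigl(f_2(1)+\frac{c}{\alpha+\beta-1}\bigr)P_0$ a nonzero constant multiple of $P_0$ (here $f_2(1)+\frac{c}{\alpha+\beta-1}\neq0$ is exactly the hypothesis), and constant polynomial $p(n)\equiv-\frac{c}{\alpha+\beta-1}$. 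The analogous representation $f_1(n)=f_1(1)\,n^{\rho}P_0(\log_2 n)+0$ has underlying periodic function $f_1(1)P_0$, also a nonzero multiple of $P_0$. Hence the Fourier coefficients of the two periodic functions are proportional, so $f_1\Bumpeq f_2$, with common underlying periodic function (up to the irrelevant scalar) equal to $P_0$. Finally, $P_0\in\HH_\gl[0,1]\cap\BV[0,1]$ is precisely Lemma~\ref{Lemma5} (note $\alpha+\beta>1$ is assumed, so this applies), which gives the stated smoothness.

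There is essentially no hard part here; the statement is a bookkeeping corollary of linearity plus the already-established closed form for $S_{\alpha,\beta}$ in Example~\ref{Eg=0} and the smoothness Lemma~\ref{Lemma5}. The only point requiring a word of care is the uniqueness-of-solution argument used to conclude that every solution of $\Lambda_{\alpha,\beta}[f]=0$ is a scalar multiple of $f_1$: this uses that the recurrence \eqref{b1} expresses $f(2n)$ and $f(2n+1)$ in terms of $f(n)$ and $f(n+1)$, so by induction on $n$ the entire sequence is pinned down by $f(1)$ (and $f(2)=(\alpha+\beta)f(1)$), together with the hypothesis $f_1(1)\neq0$ so that the normalising constant $\frac{f_1(1)}{f_1(1)}$ is well defined. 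One should also note that \eqref{l4.1} already covers the degenerate case $f_2(1)+\frac{c}{\alpha+\beta-1}=0$, in which $f_2$ is identically the constant $-\frac{c}{\alpha+\beta-1}$ and carries no oscillation — which is why the periodic-equivalence conclusion is stated under the extra nonvanishing hypothesis.
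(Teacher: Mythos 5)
Your proposal is correct and follows essentially the same route as the paper: both arguments rest on verifying that the right-hand side of \eqref{l4.1} satisfies the recurrence with the correct initial value and invoking uniqueness (induction), your reduction to the homogeneous case via linearity being only a cosmetic repackaging of that check. The one small divergence is that you obtain the smoothness of $P$ from \refL{Lemma5} (via $P$ being a scalar multiple of $P_0$), whereas the paper cites \refL{LProp1}; both are valid here.
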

\begin{proof}
It is easy to verify that \eqref{l4.1} satisfies the recursion
$\Lambda_{\alpha,\beta}[f_2]=c$. and thus \eqref{l4.1} holds by
induction. Thus $f_1\Bumpeq f_2$ if $f_2(1)
+\frac{c}{\alpha+\beta-1}\ne0$. The H\"older continuity and bounded
variation of $P$ follow from Lemma~\ref{LProp1}.
\end{proof}

\begin{lemma} \label{lmm-g-linear}
Write $f\sim\XLambda_{\alpha,\beta}[c;d,e]$ if\/  
$\Lambda_{\alpha,\beta}[f]=cn+\begin{cases} 
	d, & \text{$n$ even}\\ 
	e, & \text{$n$ odd} \end{cases}$ for $n\ge2$.
Suppose that $\alpha+\beta>2$, and that
\begin{align}\label{L-31a}
    f_1\sim \XLambda_{\alpha,\beta}[c;d,e]
    \quad\text{and}\quad 
    f_2\sim\XLambda_{\alpha,\beta}
    \left[0;0,
    \frac{(\alpha-\beta)c}{\alpha+\beta-2}-d+e\right],
\end{align}
where $f_2(1)=f_1(1)+\frac{2c}{\alpha+\beta-2}
+\frac{d}{\alpha+\beta-1} \neq0$. Then $f_1\Bumpeq f_2$. Furthermore,
\begin{align}\label{L-31b}
    f_1(n) = n^\rho P(\log_2n)-
    \frac{2cn}{\alpha+\beta-2}
    -\frac{d}{\alpha+\beta-1}\qquad(n\ge1)
\end{align}
for a periodic function $P\in \HH_\gl[0,1]\cap \BV[0,1]$.
\end{lemma}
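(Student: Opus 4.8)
The plan is to absorb the inhomogeneity of $f_1$ into an affine polynomial and thereby reduce the problem to $f_2$. Set $p(n):=-\frac{2cn}{\alpha+\beta-2}-\frac{d}{\alpha+\beta-1}$, which is well defined since $\alpha+\beta>2$ forces both $\alpha+\beta\neq2$ and $\alpha+\beta\neq1$. First I would compute $\Lambda_{\alpha,\beta}[p](n)$ straight from the definition, separating $n=2m$ from $n=2m+1$ and using $\lfloor(2m+1)/2\rfloor=m$, $\lceil(2m+1)/2\rceil=m+1$. A short calculation (the coefficients of $p$ are precisely rigged for this) gives $\Lambda_{\alpha,\beta}[p](n)=cn+d$ when $n$ is even and $\Lambda_{\alpha,\beta}[p](n)=cn+d-\tfrac{(\alpha-\beta)c}{\alpha+\beta-2}$ when $n$ is odd, the extra term being exactly what the floor/ceiling asymmetry produces. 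By linearity of $\Lambda_{\alpha,\beta}$ together with $f_1\sim\XLambda_{\alpha,\beta}[c;d,e]$, it follows that $\Lambda_{\alpha,\beta}[f_1-p](n)$ equals $0$ for $n$ even and $e-d+\tfrac{(\alpha-\beta)c}{\alpha+\beta-2}$ for $n$ odd, i.e.\ $f_1-p\sim\XLambda_{\alpha,\beta}\bigl[0;0,\tfrac{(\alpha-\beta)c}{\alpha+\beta-2}-d+e\bigr]$, the same recurrence $f_2$ satisfies.

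Next I would note that for a fixed right-hand side the recursion \eqref{b1} determines the entire sequence from its value at $1$, so a sequence with $\Lambda_{\alpha,\beta}[\,\cdot\,]=g$ is unique once its value at $1$ is prescribed. Since $(f_1-p)(1)=f_1(1)+\tfrac{2c}{\alpha+\beta-2}+\tfrac{d}{\alpha+\beta-1}=f_2(1)$ by hypothesis, we conclude $f_1-p=f_2$, that is, $f_1(n)=f_2(n)-\tfrac{2cn}{\alpha+\beta-2}-\tfrac{d}{\alpha+\beta-1}$ for all $n\ge1$.

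It then remains to analyse $f_2$, whose inhomogeneous term $g$ is bounded and supported on the odd integers. Here $\rho=\log_2(\alpha+\beta)>1>0$, so $\sum_m2^{-\rho m}A_m<\infty$ trivially; \refC{C2} (or \refC{C1}) applies and \refE{Eodd} gives $Q(n)=0$ at every integer $n$ because $g$ vanishes on the even integers. Hence $f_2(n)=n^\rho P(\log_2n)$ with $P$ continuous and $1$-periodic, and substituting into the relation $f_1(n)=f_2(n)-\tfrac{2cn}{\alpha+\beta-2}-\tfrac{d}{\alpha+\beta-1}$ yields \eqref{L-31b}. Since $f_1$ and $f_2$ now both have the shape $n^\rho P(\log_2n)$ plus a polynomial with the \emph{same} $P$, we get $\hP_1(k)=\hP_2(k)$ and hence $f_1\Bumpeq f_2$. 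Finally, the first-difference sequence of $g$ is identically $\bigl|\tfrac{(\alpha-\beta)c}{\alpha+\beta-2}-d+e\bigr|$, so $B_m$ is a constant and $\sum_m2^{(1-\rho)m}B_m<\infty$ precisely because $\rho>1$, i.e.\ $\alpha+\beta>2$; \refL{LProp1} therefore gives $P\in\HH_\gl[0,1]\cap\BV[0,1]$.

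I expect the only delicate point to be bookkeeping the constants in $\Lambda_{\alpha,\beta}[p]$ on odd arguments and checking the boundary indices $n=1,2$ against the convention $g(1)=0$; once the affine correction $p$ is identified, the rest is a routine appeal to \refT{Theorem1}, \refC{C1}, \refE{Eodd} and \refL{LProp1}. A convenient sanity check is the case $\alpha=\beta$ (admissible here, since only $\alpha+\beta>2$ is assumed), where the correction term collapses to $e-d$ and the statement specialises transparently.
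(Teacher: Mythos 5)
Your proposal is correct and is essentially the paper's own proof: you define the same affine correction (the paper's normalised sequence $\bar f=f_1-p$), verify it satisfies the $f_2$-recurrence with the $f_2$ initial value, and then invoke \refE{Eodd} and \refL{LProp1} exactly as the paper does. The only difference is that you carry out the computation of $\Lambda_{\alpha,\beta}[p]$ and the convergence check for \eqref{c3} explicitly, which the paper leaves implicit.
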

\begin{proof}
The normalised sequence 
\begin{align}\label{L-31c}
    \bar{f}(n) := f_1(n) + \frac{2cn}{\alpha+\beta-2}
    +\frac{d}{\alpha+\beta-1}
\end{align}
satisfies 
\begin{align}\label{L-31d}
    \Lambda_{\alpha,\beta}[\bar{f}] 
    = \left(\frac{(\alpha-\beta)c}{\alpha+\beta-2}-d+e\right)
    \cdot \mathbf{1}_{n\text{ is odd}},
\end{align}
with the initial condition $\bar{f}(1) =
f_1(1)+\frac{2c}{\alpha+\beta-2} +\frac{d}{\alpha+\beta-1}$. Thus,
$\bar{f}=f_2$. Furthermore, $f_2(n)=n^\rho P(\log_2n)$ for a periodic
function $P$ by \refE{Eodd}. This yields \eqref{L-31b}. Again, the
H\"older continuity and bounded variation of $P$ follow from
Lemma~\ref{LProp1}.
\end{proof}

\subsection{$\Lambda_{\alpha, \beta}[f]=0$}
\begin{example}[Generating polynomial of the sum-of-digits 
	function]\label{Enu}
As an immediate application of \refT{Theorem4}, we consider the
following partial sum
\begin{equation}\label{Sn-theta}
    f(n)
    :=\sum_{0\le k<n}\alpha^{\nu(k)}\qquad (n\ge 0),  
\end{equation}
where $\alpha>0$ and $\nu (n)$ denotes the number of $1$s in the
binary expansion of $n$. Such sums with various $\alpha$ have been
encountered and studied in a large number of different contexts;
see the recent survey \cite{Chen2014} and the references therein
for more information. Then by \eqref{Sn-theta} and the recurrence
relation 
\begin{align}\label{nu-rec}
    \nu(n)
    =\nu\left(\left\lfloor\tfrac{n}{2}\right\rfloor\right) 
    +\mathbf{1}_{n\text{ is odd}}\qquad (n\ge 1),  
\end{align}
we see that $f$ satisfies $\Lambda_{\alpha ,1}[f]=0$ with $f(1)=1$,
or (see \refE{Eg=0})
\begin{equation}\label{eq:Sa1}
	f(n) = S_{\alpha,1}(n).
\end{equation}
Thus $g(n)\equiv 0$ for all $n$, so \eqref{c3} is trivial and
\refT{Theorem4}\ref{T4c} applies. Furthermore, \eqref{t1y} yields
$P(t)=P_0(t)$ given by \eqref{b16}, and \refT{Theorem1} or
\refE{Eg=0} shows that
\begin{align}
    f(n)
	=n^{\log_2(\alpha+1)}P_0(\log_2 n) 
	\qquad(n\ge1), 
\end{align}    
where, by \refT{Theorem4}\ref{T4c} and \eqref{b161},  
\begin{align}
    P_0(t) 
	:= (\alpha+1)^{-\{t\}}
	\lpa{1+\alpha\varphi^{}_{\alpha,1}
    \lpa{2^{\{t\}}-1}}
    = (\alpha+1)^{1-\{t\}}
	\varphi^{}_{\alpha,1}\lpa{2^{\{t\}-1}}
\end{align}
has an absolutely convergent Fourier series for $\alpha>0$. This
extends and improves the result established in \cite{Grabner2005} for
$\alpha$ in the range $(\sqrt{2}-1,\sqrt{2}+1)$, where a completely
different approach was employed: instead of Zygmund's theorem applied
above (see the proof of \refT{Theorem4}), the proof in
\cite{Grabner2005} used a theorem of Bernstein saying that a periodic
function $P\in \mathrm{H}_{\lambda}$ with $\lambda >\frac{1}{2}$ has
an absolutely convergent Fourier series. Note that $\lambda
>\frac{1}{2}$ when $\alpha \in (\sqrt{2}-1,\sqrt{2}+1)$.

The Fourier coefficients $\widehat{P_0}(k)$ are by \refT{Theorem4}
given by, with $\chi_{k}:=\frac{2k\pi i}{\log 2}$,
\begin{equation}\label{c20}
    \widehat{P_0}(k)
    =\frac{1}{\log 2}\int_{0}^{1}
    \frac{1+\alpha \varphi^{}_{\alpha,1}(u)}
    {(1+u)^{\log_{2}(\alpha+1 )+\chi_{k}+1}}\dd u
    \qquad(k\in\mathbb{Z}).  
\end{equation}

The same type of results also hold for the recurrence 
$\Lambda_{1,\alpha}[f]=0$, the only difference being replacing the 
underlying interpolation function $\varphi^{}_{\alpha,1}$ by 
$\varphi^{}_{1,\alpha}$.
\end{example}

\begin{example}[Recurrence with minimisation or maximisation]
\label{Ex-min}
A class of sequences satisfying recurrences of the form
\begin{align}\label{E4.4}
    \mu(n)
    =\min_{1\le k\le \tr{\frac n2}}\{\alpha
    \mu(k)+\beta \mu(n-k)\}\qquad (n\ge 2)
\end{align}
with $\mu(1)=1$ was studied in \cite{Chang2000} to solve the AND-OR
Problem. It is proved there that if $\beta \ge \alpha$ are positive
integers, then the minimum in \eqref{E4.4} is reached at $k=\lfloor
\frac{n}{2}\rfloor $, so that $\mu(n)=f(n)=S_{\alpha,\beta}(n)$. In
this case, \refT{Theorem1} or \refE{Eg=0} and \refE{Eg=0b} imply that
$\mu(n) = n^{\log_2(\alpha+\beta)}P_0(\log_2n)$, where $P_0$ is a
periodic function with an absolutely convergent Fourier series, given
by \eqref{b16}. We can extend this to the cases (\emph{i})
$\beta\ge\alpha$, $\beta\ge1$ and (\emph{ii}) $\alpha\ge\beta$,
$\alpha+\beta\le 1$; see Appendix~\ref{Aminmix}.

On the other hand, it is not difficult to see that if $\beta\le 1$,
$\ga+\gb\ge1$, and $\ga\ge\gb^2$, then the minimum in \eqref{E4.4} is
attained at $k=1$, and we get a simple geometric expression for
$f(n)$. (Apart from the trivial case $\ga+\gb=1$, when $\mu(n)=1$,
these are the only $(\ga,\gb)$ for which the minimum always is
attained at $k=1$.) The behaviour of the recursion \eqref{E4.4} for
the remaining $(\ga,\gb)$ seems to be more complicated. (For example,
the case $1<\gb<\ga$.)

In \refApp{Aminmix}, we show in a similar manner that the solution to
the corresponding recurrence with maximisation
\begin{align}\label{rr-max}
    \mu(n)
    =\max_{1\le k\le \tr{\frac n2}}\{\alpha
    \mu(k)+\beta \mu(n-k)\}\qquad (n\ge 2)
\end{align}
with $\mu(1)=1$ is given by $\mu(n)=f(n)=S_{\alpha,\beta}(n)$
whenever $\alpha\ge \beta$, $\beta\le 1$ and $\alpha+\beta\ge1$. We
again obtain, by \refEs{Eg=0} and \ref{Eg=0b}, that $f(n) =
n^{\log_2(\alpha+\beta)} P_0(\log_2n)$, where $P_0$ is the periodic
function in \eqref{b16}, with an absolutely convergent Fourier
series. The case $\ga>\gb=1$ of \eqref{rr-max} was solved (in an
equivalent form) in \cite{AMM-2009}.

For more recurrences with minimisation or maximisation, see
\cite{Fredman1974,Hwang2003} and the references therein.
\end{example}

\begin{example}[OEIS sequences satisfying 
\protect{$\Lambda_{\alpha,\beta}[f]=0$ with $f(1)=1$} and thus
$f(n)=S_{\alpha,\beta}(n)$]\label{ex4-3}
We collect OEIS sequences of this category in \refTab{tb-ex4-3},
where $f$ and $P$ are both generic symbols, not necessarily the same
in each occurrence; some of the ``popular" sequences will be discussed
in detail below.

\begin{center}
\begin{longtable}{llll}
\multicolumn{4}{c}{{}} \\
\multicolumn{1}{c}{OEIS id.} &
\multicolumn{1}{c}{$(\alpha,\beta)$} &
\multicolumn{1}{c}{Description} &
\multicolumn{1}{c}{$f(n)$} \\ \hline    
A000027 & $(1,1)$ & Natural numbers & $n$\\
A064194 & $(1,2)$ & Gates in AND/OR problem \cite{Chang2000}
& $n^{\log_2(3)}P(\log_2n)$ \\
A006046 & $(2,1)$ & Odd entries in Pascal's triangle 
& $n^{\log_2(3)}P(\log_2n)$ \\
A268524 & $(1,3)$ & $\Lambda_{1,3}[f]=0$ with $f(1)=1$
& $n^2P(\log_2n)$ \\
A130665 & $(3,1)$ & $S_{3,1}(n)$ (see \eqref{Sn-theta});
also \eqref{rr-max} %"Formula", line 1
& $n^2P(\log_2n)$ \\
A073121 & $(2,2)$ & appeared in \cite{Ellul2005}
& $n^2P(\log_2n)$ \\
A268527 & $(1,4)$ & $\Lambda_{1,4}[f]=0$ with $f(1)=1$
& $n^{\log_2(5)}P(\log_2n)$ \\
A116520 & $(4,1)$ & \eqref{rr-max} with $(\alpha,\beta)=(4,1)$
& $n^{\log_2(5)}P(\log_2n)$ \\
A268526 & $(2,3)$ & $\Lambda_{2,3}[f]=0$ with $f(1)=1$
& $n^{\log_2(5)}P(\log_2n)$ \\
A268525 & $(3,2)$ & $\Lambda_{3,2}[f]=0$ with $f(1)=1$
& $n^{\log_2(5)}P(\log_2n)$ \\
A130667 & $(5,1)$ & \eqref{rr-max} with $(\alpha,\beta)=(5,1)$
& $n^{\log_2(6)}P(\log_2n)$ \\
A116522 & $(6,1)$ & Limit of the power of a matrix
& $n^{\log_2(7)}P(\log_2n)$ \\
A161342 & $(7,1)$ & 3-D cellular automaton
& $n^3P(\log_2n)$ \\
A116526 & $(8,1)$ & Limit of the power of a matrix
& $n^{\log_2(9)}P(\log_2n)$ \\
A116525 & $(10,1)$ & Limit of the power of a matrix
& $n^{\log_2(11)}P(\log_2n)$ \\
A116524 & $(12,1)$ & $\Lambda_{12,1}[f]=0$ with $f(1)=1$
& $n^{\log_2(13)}P(\log_2n)$ \\
A116523 & $(16,1)$ & $\Lambda_{16,1}[f]=0$ with $f(1)=1$
& $n^{\log_2(17)}P(\log_2n)$ \\ \hline
\caption{OEIS sequences of the form $S_{\ga,\gb}(n)$ (\refE{ex4-3}).}
\label{tb-ex4-3}
\end{longtable}    
\end{center}
We will see that these sequences play to some extent a prototypical 
role for more general recurrences with nonzero $g$. 

On the other hand, the only periodic function in this table that 
admits a closed-form expression in terms of elementary functions is 
when $(\alpha,\beta)=(2,2)$ for which we have $P(t) = 
2^{-\{t\}}\lpa{3-2^{1-\{t\}}}$, by \eqref{b16=}. 
\end{example}

We note also the following example, where $\gL_{\ga,\gb}[f]$ is not 
zero, but a constant.

\begin{example}\label{Enu0}
We can generalise \refE{Enu} by considering the partial sum, for 
$\ga,\gb>0$,
\begin{equation}\label{Sum-ab}
    f(n)
    :=\sum_{0\le k<n}\alpha^{\nu(k)}\gb^{\nu_0(k)}\qquad (n\ge 0),  
\end{equation}
where $\nu(n)$ is as above, and similarly $\nu_0 (n)$ denotes the
number of $0$s in the binary expansion of $n$ (with $\nu_0(0):=0$,
A080791).

In analogy with \eqref{nu-rec}, we have the recurrence relation
\begin{align}\label{nu0-rec}
    \nu_0(n)
    =\nu_0\left(\left\lfloor\tfrac{n}{2}\right\rfloor\right) 
    +\mathbf{1}_{n\text{ is even}}\qquad (n\ge 1),  
\end{align}
and it follows easily that $f$ satisfies  $f(1)=1$ and
\begin{align}\label{gab}
	\Lambda_{\alpha ,\beta}[f](n)
	=g(n)=1-\gb
	\qquad (n\ge2).
\end{align}
Assume  $\ga+\gb>1$. Then, by \eqref{gab} and \refL{lmm-g-const}, 
\begin{equation}\label{fab}
	f(n) 
	= \frac{\ga}{\ga+\gb-1}S_{\alpha,\gb}(n)
	+\frac{\gb-1}{\ga+\gb-1}\qquad (n\ge 1).
\end{equation}
Equivalently, 
\begin{equation}\label{Sn-ab}
	S_{\ga,\gb}(n)
	=\frac{\ga+\gb-1}{\ga}\sum_{0\le k<n}
	\alpha^{\nu(k)}\gb^{\nu_0(k)}-\frac{\gb-1}{\ga}
	\qquad (n\ge 1).
\end{equation}
By \eqref{fab} and \refE{Eg=0}, we have
\begin{align}
	f(n)
	=n^{\log_2(\alpha+\beta)}P(\log_2 n) 
	+\frac{\gb-1}{\ga+\gb-1}
	\qquad(n\ge1), 
\end{align}
with $P(t)=P_0(t)$ given by \eqref{b16}. Furthermore, by
\refT{Theorem4} or \refE{Eg=0b}, $P(t)$ has an absolutely convergent
Fourier series.
\end{example}

\subsection{$(\alpha,\beta)=(1,2)$ and $(\alpha,\beta)=(2,1)$}

The large number of concrete sequences discussed here and in the 
following sections show the usefulness and power of the notion of 
``periodic equivalence''.

\begin{example}[\protect{$\Lambda_{1,2}[f]=g$}] \label{E12} We begin
with A064194 in \refE{ex4-3}, which satisfies $\Lambda_{1,2}[f]=0$
and thus equals $S_{1,2}(n)$. This sequence enumerates the number of
gates in the AND/OR problem (\refE{E4.4}) \cite{Chang2000}. It also
counts the number of multiplications needed to multiply two degree
$n$ polynomials using Karatsuba's algorithm 
\cite[Exercise 4.3.3-17]{Knuth1997}, 
as well as the total number of odd entries in the
$(n+1)\times (n+1)$ matrix $\bigl[\binom{i+j}{i}\bigr]_{i,j=0}^n$.
Another interpretation in terms of Sierpi\'nski-like arrays can be
found on Peter Karpov's webpage at
%\SJm{FIXA hyperref}%\href{http://inversed.ru/index.htm}{inversed.ru}. %%%SJ
%\href{http://inversed.ru/index.htm}{inversed.ru}.

In this case, $f(n) = n^{\log_23}P(\log_2n)$ for $n\ge1$, where, by 
Lemma~\ref{Lemma3} or Theorem~\ref{Theorem1},
\begin{align}
    P(t) = 3^{-\{t\}}
    \lpa{1+2\varphi^{}_{1,2}\lpa{2^{\{t\}}-1}}.
\end{align}
Here the interpolation function $\varphi^{}_{1,2}$ has by \eqref{b8} 
the form (see Figure~\ref{Fig1})
\begin{align}
    \varphi^{}_{1,2}\biggl(\sum_{k\ge 1}2^{-e_{k}}\biggr) 
    =\sum_{k\ge1}2^{e_{k}-k+1}3^{-e_{k}}
    \qquad (1\le e_{1}<e_{2}<\cdots ).
\end{align}

Some other sequences periodically equivalent to $S_{1,2}(n)
=f_{\text{A064194}}(n)$ are given in \refTab{tb-E12}, where
\textsf{AND} denotes the bitwise logic AND operator; we also denote
by $b_j(n)$ the $(j+1)$st bit (from right to left) in the binary
expansion of $n$: $b_j(n) =
\tr{\frac{n}{2^j}}-2\tr{\frac{n}{2^{j+1}}}$ for $j=0,1,\dots,L_n$,
and $\nu_0(n)$ is as in \refE{Enu0} the number of $0$s in the binary
expansion of $n$.

\begin{table}[!ht]
\begin{center}
\begin{small}
\begin{tabular}{cccc}
%\begin{tabular}
\multicolumn{4}{c}{{}} \\
\multicolumn{1}{c}{OEIS id.} &
\multicolumn{1}{c}{Description} & 
\multicolumn{1}{c}{$g(n)$} &
\multicolumn{1}{c}{$f(n)$} \\ \hline   
A080572 
& $\sum\limits_{0\le k,j<n}\mathbf{1}_{k\textsf{ AND } j\ne0}$
& $\tr{\frac n2}^2-\mathbf{1}_{n \text{ odd}}$
& $n^2-S_{1,2}(n)$ \\
A268514 
& $\sum\limits_{1\le j<n}2^{\nu_0(j)}$
& $1$ & $\frac12(S_{1,2}(n)-1)$ \\
A325102 
& \makecell{\#(pairs $(k,m)$, $1\le k,m\le n$ such that\\
$(b_j(k),b_j(m))\ne(1,1)$, $0\le j\le \min\{L_k,L_m\}$)}
& $2\cl{\frac n2}-2$ & $S_{1,2}(n)-2n+1$ \\
A325103 & \makecell{\#(pairs $(k,m)$, $1\le k<m\le n$ such that\\
$(b_j(k),b_j(m))\ne(1,1)$, $0\le j\le L_k$)} 
& $\cl{\frac n2}-1$ & $\frac12	S_{1,2}(n)-n+\frac12$ \\
A325104 & \makecell{\#(pairs $(k,m)$, $1\le k<m\le n$ such that\\
$(b_j(k),b_j(m))=(1,1)$ for some $0\le j\le L_k$)}
& $\binom{\tr{\frac n2}+1}{2}-1$
& $\frac{n^2-n+1}2-\frac12S_{1,2}(n)$ \\ \hline
%\end{tabular}
\end{tabular}
\caption{Sequences periodically equivalent to $S_{1,2}(n)$
(\refE{E12}).}
\label{tb-E12}
\end{small}
\end{center}
\end{table}

Observe that 
\begin{align}
	f_{\text{A325103}}(n)- f_{\text{A325103}}(n-1)
	= \text{A115378}(n) 
	= 2^{\nu_0(n)}-1,
\end{align}
and this provides a proof for the recurrence $\Lambda_{1,2}[f]
=\cl{\frac n2}-1$ satisfied by $f_{\text{A325103}}(n)$. The
consideration of the other two sequences A325102 and A325104 is
similar. On the other hand, the sequence $2^{\nu_0(n)}$ corresponds
to A080100 whose partial sum satisfies $\Lambda_{1,2}[f]=-1$ with
$f(1) = 1$. Then $f(n) = \frac12(S_{1,2}(n)+1)$.

Another example with $(\ga,\gb)=(1,2)$ is the sequence A086845, which
counts the number of comparators used in Bose and Nelson's sorting
networks \cite{Bose1962}, and satisfies $\Lambda_{1,2}[f]=\tr{\frac
n2}$ with $f(1)=0$. \refL{lmm-g-linear} applies with
$(c,d,e)=(\frac12,0,-\frac12)$, giving $f(n) =
n^{\log_23}P(\log_2n)-n$; see \cite{Grabner2005}. In fact,
\refL{lmm-g-linear} and its proof show that the periodically
equivalent sequence $h(n):=f(n)+n$ satisfies
$\Lambda_{1,2}[h]=-\mathbf{1}_{n\text{ odd}}$ with $h(1)=1$, and thus
$h(n)=n^{\log_23}P(\log_2n)$ by \refE{Eodd}.
\end{example}

\begin{figure}[!ht]
\begin{center}
\includegraphics[height=3.3cm]{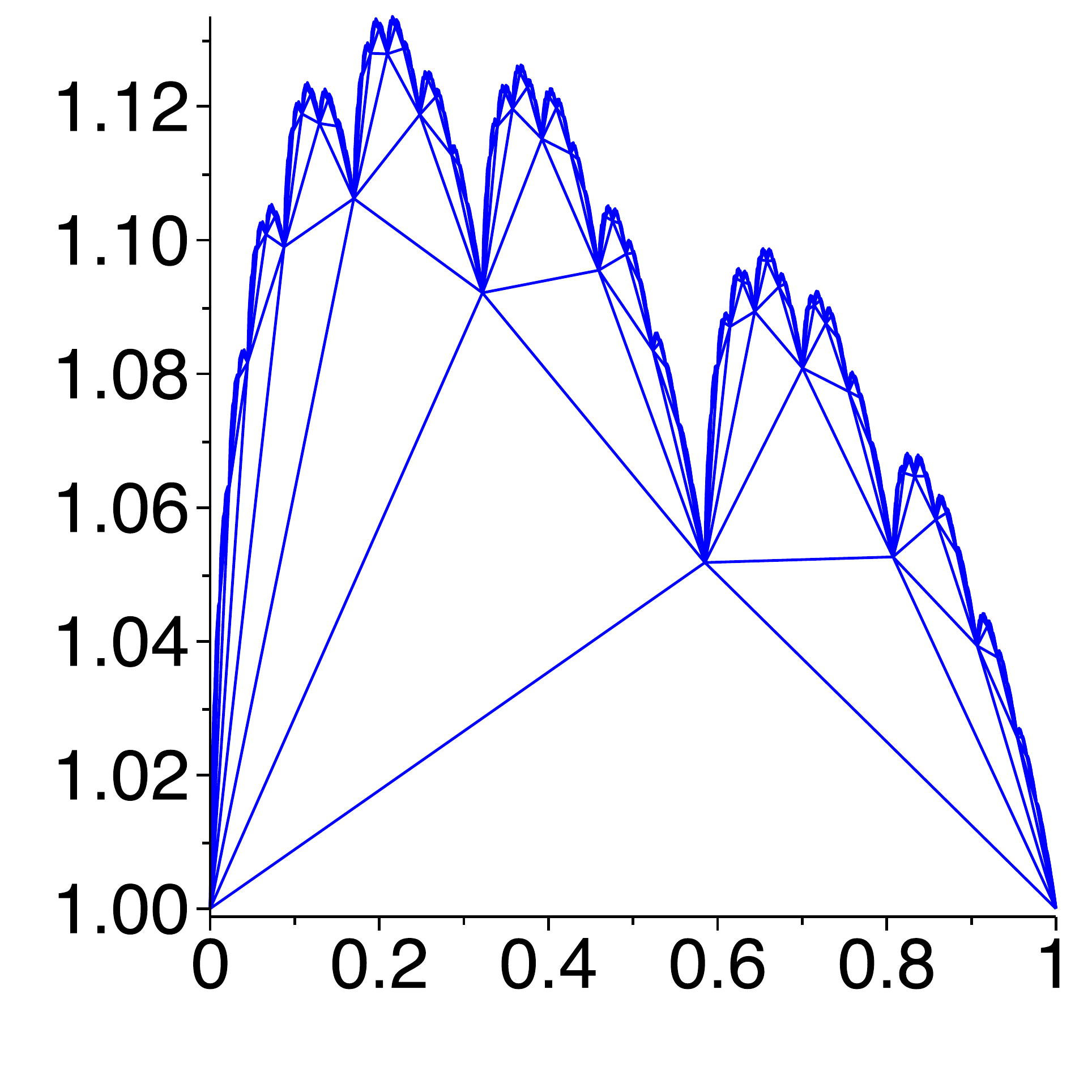}\,
\includegraphics[height=3.3cm]{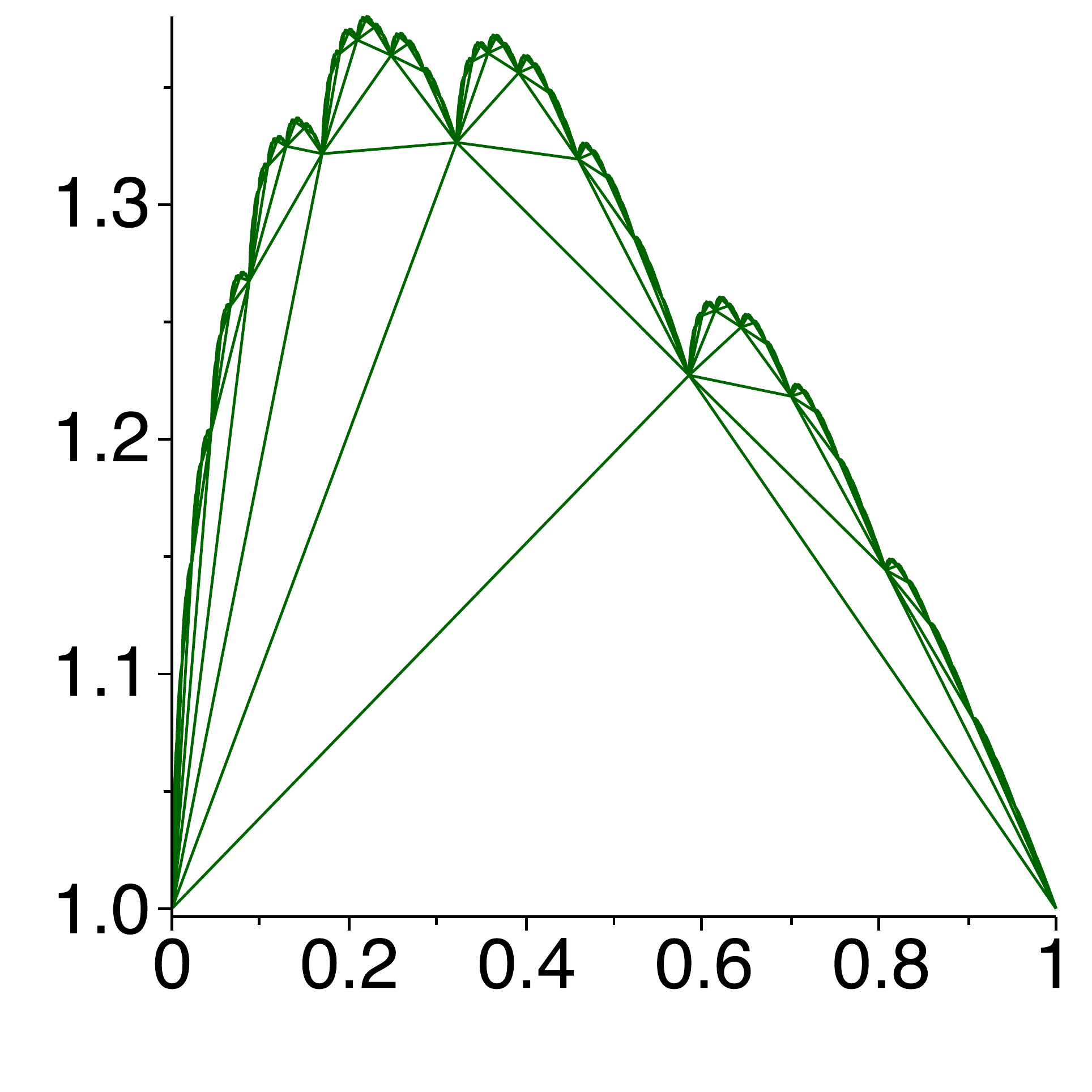}\,
\includegraphics[height=3.3cm]{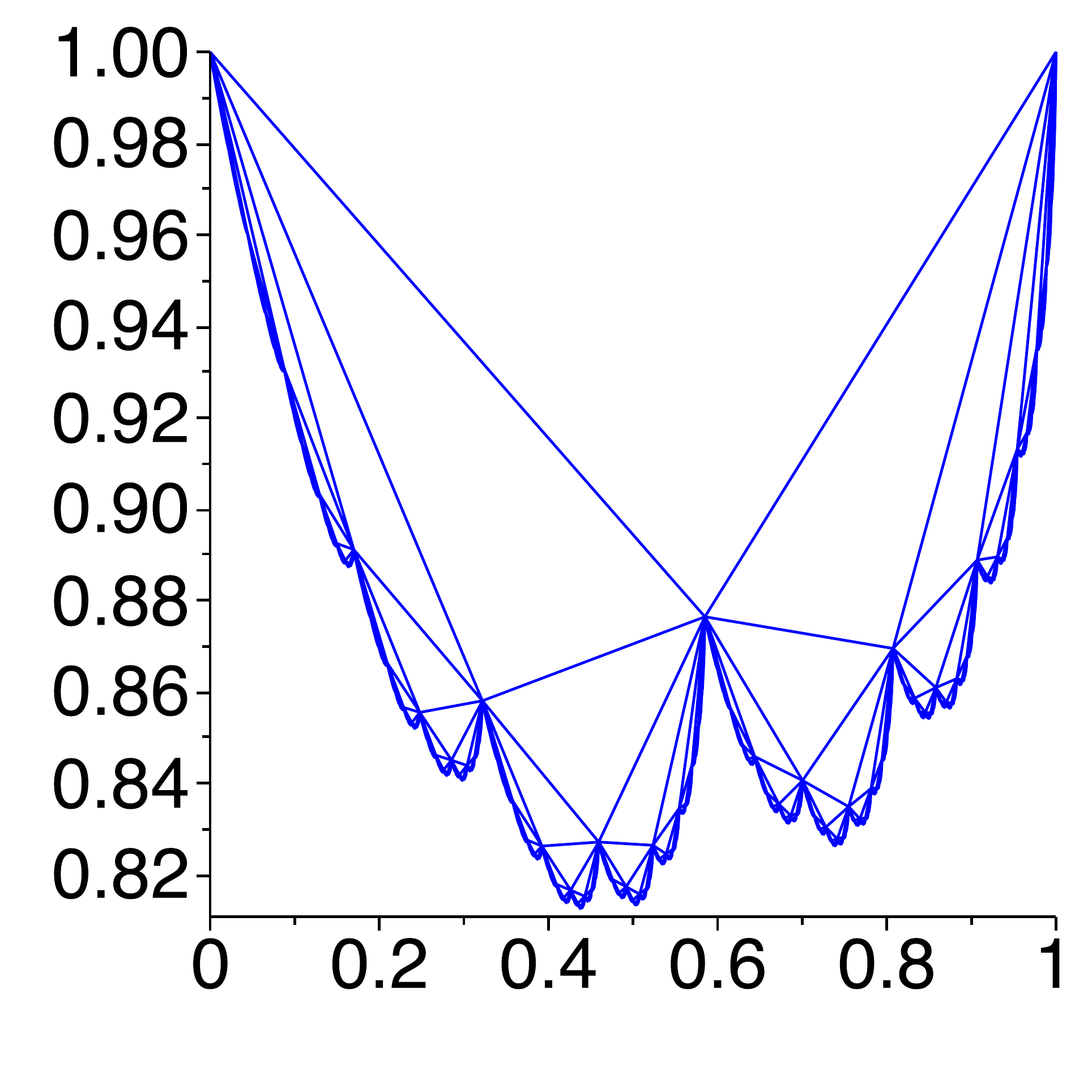}\,
\includegraphics[height=3.3cm]{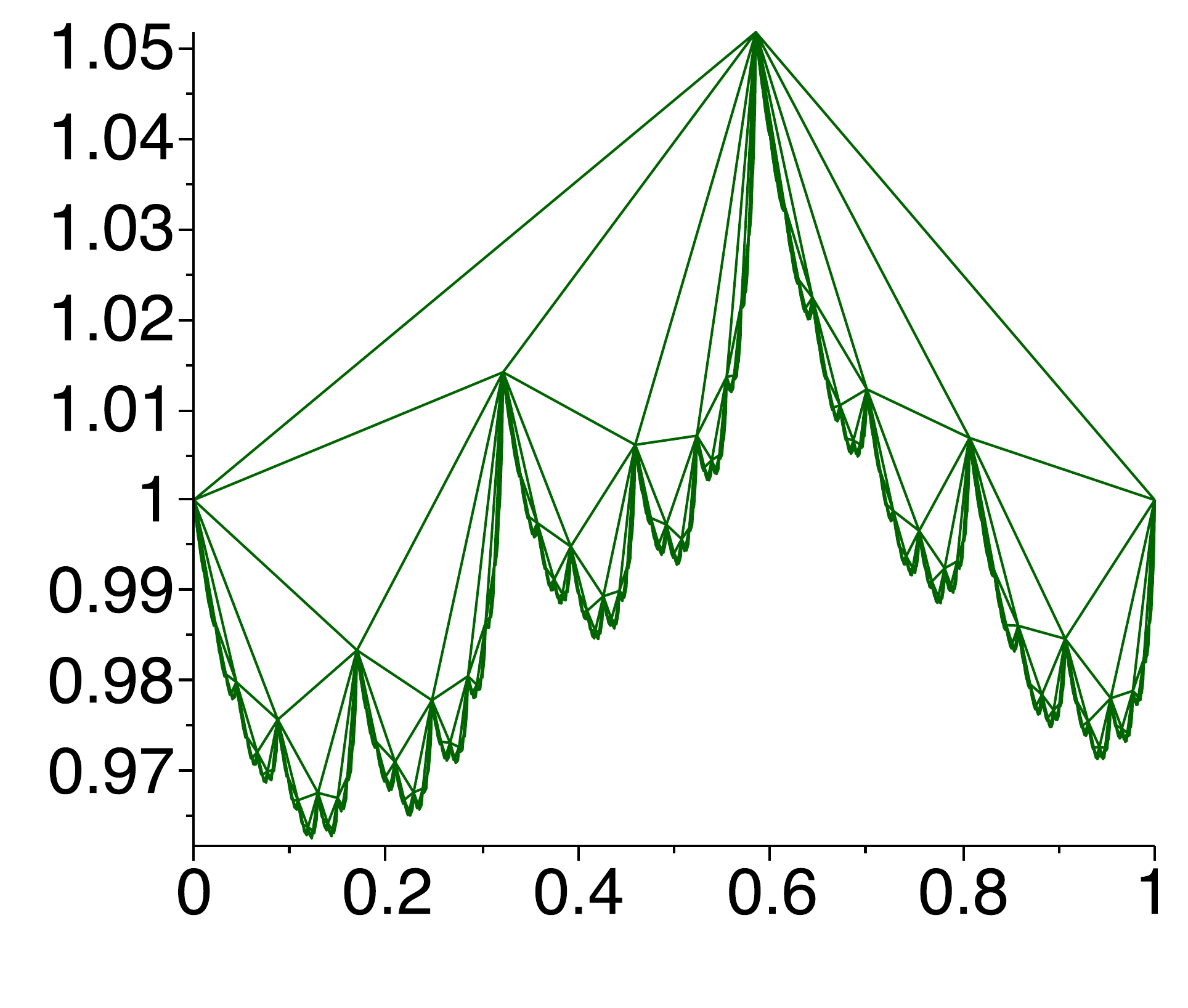}
\end{center}
\vspace*{-.3cm} 
\caption{{The periodic functions arising from the four sequences 
(in left to right order) $\Lambda_{1,2}[f]=\tr{\frac n2}$, 
$\Lambda_{1,2}[f]=\cl{\frac n2}$, $\Lambda_{2,1}[f]=\tr{\frac n2}$, 
$\Lambda_{2,1}[f]=\cl{\frac n2}$ all with $f(1)=0$, as approximated 
by the fractional part 
$\bigl\{\frac{f(2^k+j)+2^k+j}{(2^k+j)^{\log_23}}\bigr\}$ for 
$0\le j<2^k$ and $k=1,2,\dots,10$.}} \label{Fig-BN}
\end{figure}

\begin{example}[\protect{$\Lambda_{2,1}[f]=g$}]\label{Ex:pascal}
A006046$(n)$ is the total number of odd entries in first $n$ rows of
Pascal's triangle. This sequence $f(n)$ equals $S_{2,1}(n)$ (defined
in \refE{Eg=0}) and has a rich literature with different extensions
and connections; for example, it equals \eqref{Sn-theta} with
$\ga=2$; see below, the OEIS page, the survey papers
\cite{Stolarsky1977, Chen2014} and Finch's book \cite[\S
2.16]{Finch2003} for more information. By \refE{Eg=0}, we have
$f(n)=n^{\log_23}P_0(\log_2n)$, where
$P_0\in\mathrm{H}_{\log_23-1}[0,1]$ by \refL{Lemma5}; see
\cite{Flajolet1994,Grabner2005}. Some periodically equivalent
sequences (possibly with a shift) are given as follows.
\begin{itemize}

\item A051679: Total number of even entries in the first $n$ rows
of Pascal's triangle, namely, $f(n) = \binom{n+1}2-S_{2,1}(n)$.
Then $f(1) = 0$ and
\begin{equation}\label{E:A051679}
	\Lambda_{2,1}[f]= 
	\tfrac18\,{n}^{2}-
	\begin{cases}	
    \frac n4, & \text{if $n$ is even};\\ 
    \frac18, & \text{if $n$ is odd}.\\	
    \end{cases}
\end{equation}

\item A064406: The accumulation of the number of even entries
(A048967) over the number of odd entries (A001316) in row $n$ of
Pascal's triangle (A007318); in other words,
$\A051679(n)-\A006046(n)$. Thus $f(n) = \binom{n+1}{2} -2S_{2,1}(n)$.
Then $f$ satisfies the same recurrence \eqref{E:A051679} but with the
different initial condition $f(1) = -1$. The sequence is positive
except for the first 18 terms.

\item A074330: $S_{2,1}(n+1)-1$.

\item A080978: $f(n) = 2 S_{2,1}(n) + 1$. Then $\Lambda_{2,1}[f]= -2$ 
with $f(1) = 3$.

\item A151788: $f(n) := \frac12\,(3S_{2,1}(n)-1)= \frac32
S_{2,1}(n)-\frac12$ and satisfies $\Lambda_{2,1}[f]=1$ with $f(1) =
1$.

\item A159912: $f(n)=\sum_{j<n}\lpa{2^{\nu(2j+1)}-1}$ satisfies
$\Lambda_{2,1}[f]=\tr{\frac n2}$ and the relation $f(n) =
2S_{2,1}(n)-n$.

\item A160720: Number of ``ON" cells in a certain $2$-dimensional 
cellular automaton: $f(1)=1$ and 
\begin{equation}
    \Lambda_{2,1}[f]= 2n-2-2\times \mathbf{1}_{n\text{ odd}}.
\end{equation}
One has $f(n) = 4(S_{2,1}(n)-n)+1$. 

\item A160722: Number of ``ON" cells in a certain $2$-dimensional
cellular automaton based on Sierpi\'nski triangles. Then
$\Lambda_{2,1}[f] =2\tr{\frac n2}$ with $f(1)=1$ and $f(n) =
3S_{2,1}(n)-2n$.

\item A171378: $f(n)=n^2-S_{2,1}(n)$. Then
$\Lambda_{2,1}[f]=\cl{\frac n2}^2-\mathbf{1}_{n\text{ odd}}$ with
$f(1) = 0$.

\item A193494: Worst case of an unbalanced recursive algorithm over
all $n$-node binary trees; $2A(n-1)+1$ satisfies the max-recursion
\eqref{rr-max} with $(\ga,\gb)=(2,1)$. Thus the sequence $f(n) := 
A(n-1)= \frac12(S_{2,1}(n)-1)$ satisfies $\Lambda_{2,1}[f]=1$ with 
$f(1)=0$; see \refE{Ex-min} and \refP{PAmax}.

\item A256256: Number of ``ON" cells in a cellular automaton on
triangular grid, which is $6S_{2,1}(n)$ and satisfies the recurrence
$\Lambda_{2,1}[f]=0$ with $f(1)=6$.

\item A262867: Total number of ``ON" cells in a cellular automaton.
$f(n)=n^2-S_{2,1}(n)+1 =f_{\A171378}(n)+1$, which satisfies
$\Lambda_{2,1}[f] = \cl{\frac n2}^2-2- \mathbf{1}_{n\text{ odd}}$.

\item A266532: Number of $Y$-toothpicks in a cellular automaton. We
then get the recurrence $\Lambda_{2,1}[f]=3\tr{\frac n2}-2$ and
$f(n)=3(S_{2,1}(n)-n)+1$.

\item A267610: Accumulated number of ``OFF'' cells in a cellular
automaton. This is $f(n) = S_{2,1}(n)-2n+1$, and
$\Lambda_{2,1}[f]=2\tr{\frac n2}$ with $f(1)=0$.

\item A267700: ``Tree" sequence in a $90$ degree sector of some
cellular automaton. (Also the partial sum of A038573.) This sequence
satisfies $\Lambda_{2,1}[f]=\tr{\frac n2}$ with $f(1)=0$, so that
$f(n) = S_{2,1}(n)-n$.

\end{itemize}

A different example with $(\ga,\gb)=(2,1)$ is the sequence A137294,
which arises in a polynomial-time algorithm for a sowing game; see
\cite[p.\ 289]{Erickson1996} for more information. It satisfies
$\Lambda_{2,1}[f]=1 + \mathbf{1}_{n\text{ odd}}$ with $f(1)=0$. By
\refE{Eodd} applied to $f(n)+\frac12$, we have $f(n) =
n^{\log_23}P(\log_2n)-\frac12$. See Figure \ref{Fig-AA}.

\begin{figure}[!ht]
\begin{center}
\includegraphics[height=3cm]{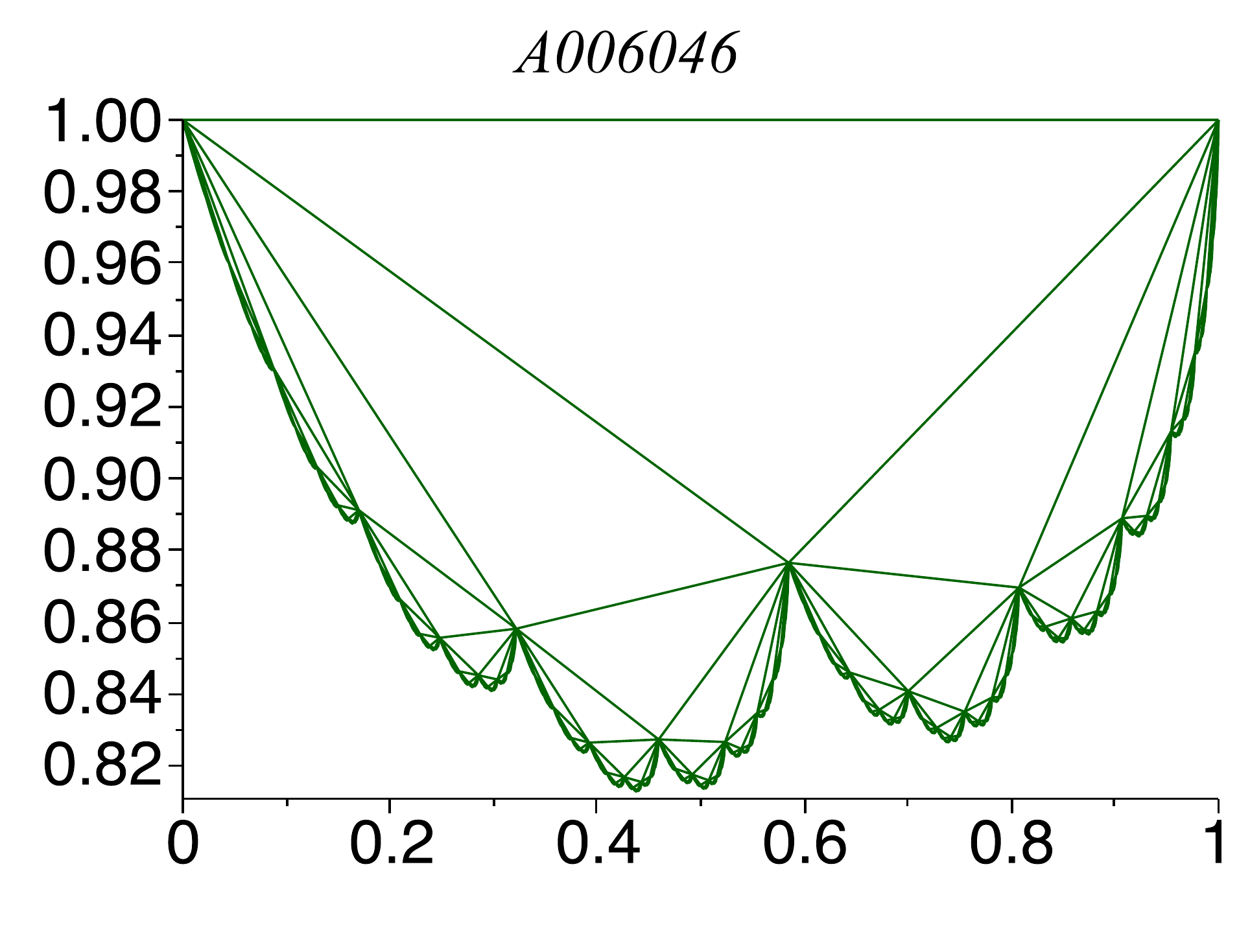}\;
\includegraphics[height=3cm]{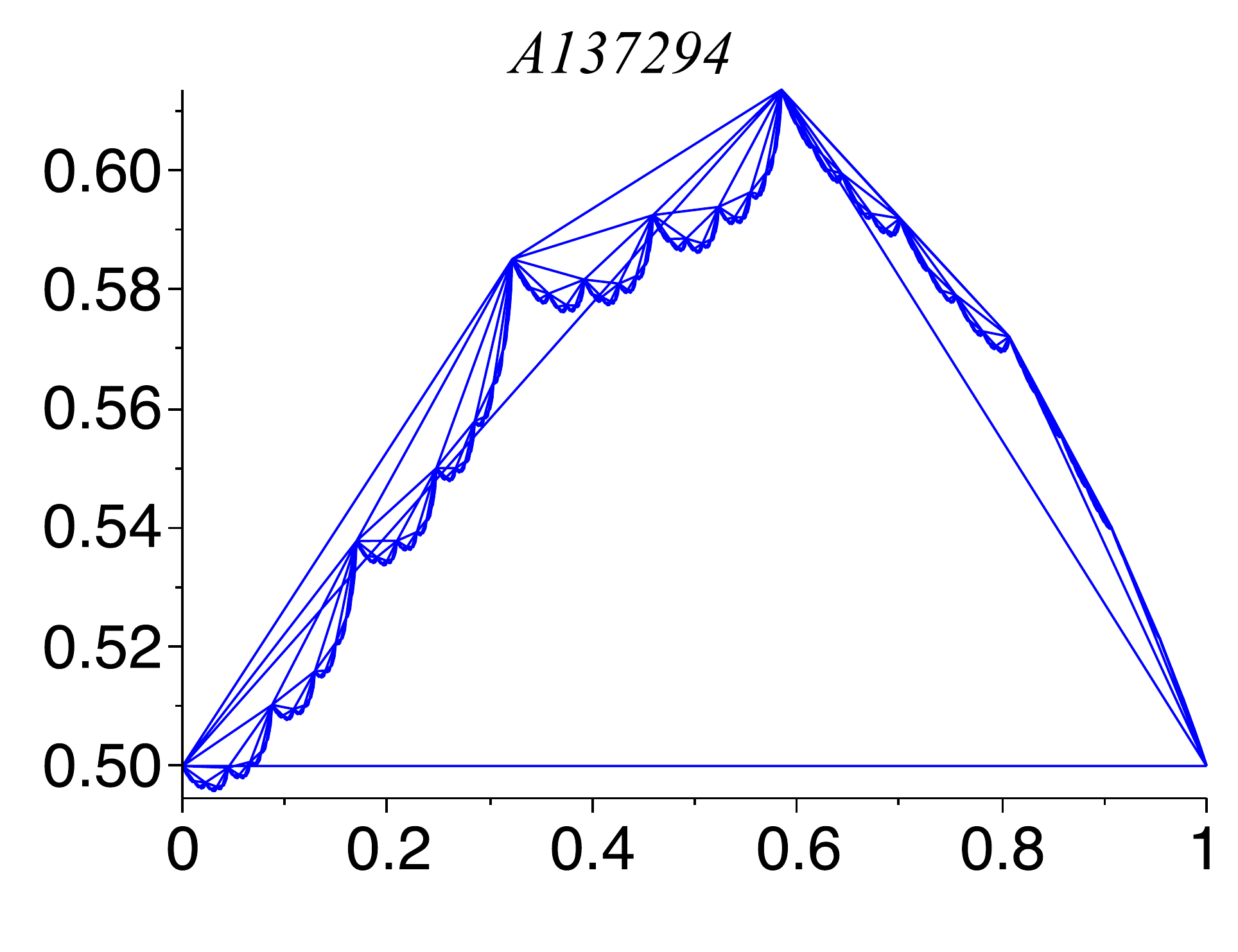}
\end{center}
%\vspace*{-.3cm} 
\caption{The periodic functions arising from A006046 and A137294 (see 
\refE{Ex:pascal}), respectively.}
\label{Fig-AA}
\end{figure}

\end{example}

\subsection{Sequences satisfying $\Lambda_{\alpha,\beta}[f]=g$ 
with $\alpha+\beta\ge4$}

\begin{example}[\protect{$\alpha+\beta=4$}]\label{E31}

The sequence A268524 satisfies $\Lambda_{1,3}[f]=0$ with $f(1)=1$ and
thus equals our $S_{1,3}(n)$, as listed in Example~\ref{ex4-3}.

When $(\alpha,\beta)=(3,1)$, the prototype sequence $S_{3,1}(n)$
corresponds to A130665, which satisfies $\gL_{3,1}[f]=0$ and
$f(1)=1$; see \refE{ex4-3}. Some variants of this sequence from OEIS,
all having $\Lambda_{1,3}[f]=g$ constant, are given in the following
table; they arise mostly from the combinatorics of the
Ulam--Warburton cellular automaton.

\begin{table}[!ht]
\begin{center}
\begin{tabular}{lcl|lcl}
\multicolumn{6}{c}{{}} \\
\multicolumn{1}{c}{OEIS id.} &
\multicolumn{1}{c}{$(f(1),g(n))$} &
\multicolumn{1}{c}{$f(n)$} &
\multicolumn{1}{c}{OEIS id.} &
\multicolumn{1}{c}{$(f(1),g(n))$} &
\multicolumn{1}{c}{$f(n)$} \\ \hline    
A147562 & $(1,1)$ &  $\frac13(4S_{3,1}(n)-1)$ &
A151914 & $(4,-4)$ & $\frac43(2S_{3,1}(n)+1)$ \\
A151917 & $(1,-1)$ & $\frac13(2S_{3,1}(n)+1)$ & 
A151920 & $(0,1)$ & $\frac13(S_{3,1}(n)-1)$ \\
A160410 & $(4,0)$ & $4S_{3,1}(n)$ &
A160412 & $(3,0)$ & $3S_{3,1}(n)$\\ \hline
\end{tabular}
\caption{Sequences periodically equivalent to $S_{3,1}(n)$
(\refE{E31}).}
\label{tb-E31}
\end{center}
\end{table}

Three other sequences satisfying $\Lambda_{3,1}[f]=g$ are given 
below. 
\begin{table}[!ht]
\begin{center}
\begin{tabular}{lllcl}
\multicolumn{5}{c}{{}} \\
\multicolumn{1}{c}{OEIS id.} &
\multicolumn{1}{c}{Context} &
\multicolumn{1}{c}{$g(n)$} &
\multicolumn{1}{c}{$f(1)$} &
\multicolumn{1}{c}{$f(n)$}\\ \hline        
A183060 & Cellular automaton & $-n+\begin{cases}
    2\\
    3
\end{cases}$ & $1$ & $n^2P(\log_2n)+n-\frac23$\\
A183126 & Toothpick sequence & $-4n+\begin{cases}
    3\\
    7
\end{cases}$ & $7$ & $n^2P(\log_2n)+4n-1$\\
A183148 & Toothpick sequence & $-3n+\begin{cases}
    3\\
    6
\end{cases}$ & $4$ & $n^2P(\log_2n)+3n-1$\\ \hline
\end{tabular}    
\caption{Further sequences with $\Lambda_{3,1}[f]=g$
(\refE{E31}).}
\label{tb-E31b}
\end{center}
\end{table}

Although very different in appearance, these sequences are all
periodically equivalent to $S_{3,1}(n)$ because, by
Lemma~\ref{lmm-g-linear}, the right-hand side of \eqref{L-31a} are
all of the form $\Lambda_{3,1}[f;0,0,0]$ when $(\alpha,\beta)=(3,1)$
and $(c,d,e) = (-1,2,3), (-4,3,7)$ and $(-3,3,6)$, respectively. More
precisely, we have the relations ($f_{\text{A130665}}(n)=S_{3,1}(n)$)
\begin{align}
    f_{\text{A183060}}(n) 
    &= \tfrac23 f_{\text{A130665}}(n)+n-\tfrac23,\\
    f_{\text{A183126}}(n) 
    &= 4 f_{\text{A130665}}(n)+4n-1,\\
    f_{\text{A183148}}(n) 
    &= 2 f_{\text{A130665}}(n)+3n-1.
\end{align}
\end{example}

\begin{example}[\protect{$\alpha+\beta=5$}]
The sequence A268527 in \refE{ex4-3} satisfies $\Lambda_{1,4}[f]=0$
with $f(1)=1$, and thus equals our $S_{1,4}(n)$.

Three other sequences were found with $(\alpha,\beta)=(4,1)$. The
first is A116520 which satisfies $\Lambda_{4,1}[f]=0$ and equals
$S_{4,1}(n)$; see \refE{ex4-3}. Another sequence A151790 equals
$\frac14(5S_{4,1}(n)-1)$. It satisfies $\Lambda_{4,1}[f]=1$ with
$f(1)=1$; as a check, Lemma~\ref{lmm-g-const} and \eqref{l4.1} yield
\begin{align}
    f_{\text{A151790}}(n) 
    = \tfrac54f_{\text{A116520}}(n)-\tfrac14
    = \tfrac54 S_{4,1}(n)-\tfrac14.
\end{align}
The last sequence with the pattern $(\alpha,\beta)=(4,1)$ we found is
A273578, which is the total number of ``ON" cells in a 2-D cellular
automaton. It satisfies $f(1) = 1$ and
\begin{align}
	\Lambda_{4,1}[f]=\begin{cases}	
        \frac{1}{2}\,{n}^{3}+\frac{n}{2}-4,
        & \text{if $n$ is even};\\
        \frac{1}{2}\,{n}^{3}+\frac{3}{2}\,{n}^{2}-2\,n-4,
        & \text{if $n$ is odd}.\\	
    \end{cases} 
\end{align}
This sequence is also periodically equivalent to
$f_{\text{A116520}}(n)$. To see this, we consider the difference
$\Delta(n) = \frac43n^3-\frac13n+1-f(n)$, which satisfies
$\Lambda_{4,1}[\Delta] = 0$ with $\Delta(1) = 1$; thus
$\gD(n)=S_{4,1}(n)= f_{\text{A116520}}(n)$. From this we deduce the
identity
\begin{align}
    f(n) = \tfrac43\,n^3 +n^{\log_25}P(\log_2n)
    -\tfrac13n+1\qquad(n\ge1), 
\end{align}
and the relation
\begin{align}
    f_{\text{A273578}}(n) 
    = \tfrac43\,n^3-\tfrac13n+1-
    f_{\text{A116520}}(n). 
\end{align}

Two other sequences with $\alpha+\beta=5$ are given in \refE{ex4-3}:
A268526, which satisfies $\Lambda_{2,3}[f]=0$, and A268525, which
satisfies $\Lambda_{3,2}[f]=0$, both with $f(1)=1$.
\end{example}

\begin{example}[\protect{$\alpha+\beta=6$}] \label{Ea+b=6}
Sequences in OEIS of this type have to do either with digital sums or
cellular automata. They include A130667 $(=S_{5,1}(n))$ from
\refE{ex4-3}.
\begin{table}[!ht]
\begin{center}
\begin{tabular}{lllll}
\multicolumn{5}{c}{{}} \\
\multicolumn{1}{c}{OEIS id.} &
\multicolumn{1}{c}{$(\alpha,\beta)$} &
\multicolumn{1}{c}{$g(n)$} &
\multicolumn{1}{c}{$f(1)$} &
\multicolumn{1}{c}{$f(n)$}\\ \hline        
A130667 & $(5,1)$ & $0$ & $1$ & $n^{\log_26}P(\log_2n)$\\
A151781 & $(5,1)$ & $1$ & $1$ & $n^{\log_26}P(\log_2n)-\frac15$\\
A186410 & $(5,1)$ & $-4\tr{\frac n2}+4$ & $1$ & 
$n^{\log_26}P(\log_2n)+n-\frac45$ \\
A270106 & $(4,2)$ & $-1$ & $1$ & $n^{\log_26}P(\log_2n)+\frac15$\\
A273500 & $(4,2)$ & $\begin{cases}	
        \frac13\,{n}^{3}+\frac23\,n-4,\\
        \frac13\,{n}^{3}+{n}^{2}-\frac73\,n-4,\\
    \end{cases}$ & $1$ & 
    $\begin{matrix} 
        \frac43n^3+n^{\log_26}P(\log_2n)\\
        \hfill\qquad-\frac13n+\frac45
    \end{matrix}$\\
A273562 & $(4,2)$ & $\begin{cases}
        \frac16n^3-\frac23n+1,\\
        \frac16n^3+\frac12n^2-\frac{13}6n+\frac32,\\
    \end{cases}$
& $0$ & $\begin{matrix}
    \frac23n^3+n^{\log_26}P(\log_2n)\\
\hfill\qquad +\frac13n-\frac15
\end{matrix} $\\ \hline
\end{tabular}    
\caption{Sequences with $\gL_{\ga,\gb}[f]=g$ with $\ga+\gb=6$
(\refE{Ea+b=6}).}
\label{tb-Ea+b=6}
\end{center}
\end{table}

The sequence $S_{4,2}(n)$ is not in OEIS, but the sequence A270106
(which comes from a cellular automaton) equals the sum \eqref{Sum-ab}
with $(\ga,\gb)=(4,2)$, as follows from the discussion in OEIS of its
increments A189007. Hence, \refE{Enu0} shows that
\begin{align}\label{S42}
	f_{A270106}(n) = \tfrac45S_{4,2}(n)+\tfrac15.
\end{align}

The six sequences above lead only to two periodically distinct ones
(say, A130667 and A270106) because, by Lemma~\ref{lmm-g-const} and
Lemma~\ref{lmm-g-linear}, we have (with
$f_{\text{A130667}}(n)=S_{5,1}(n)$)
\begin{align}
    f_{\text{A151781}}(n) 
    &= \tfrac65f_{\text{A130667}}(n)-\tfrac15\\
    f_{\text{A186410}}(n)
    &= \tfrac45 f_{\text{A130667}}(n)+n-\tfrac45,
\end{align}
and, recall also \eqref{S42}, % ($S_{4,2}(n)$ not in OEIS)
\begin{align}
    f_{\text{A273500}}(n) 
    &= \tfrac43n^3-\tfrac13n+1-f_{\text{A270106}}(n)\\
    f_{\text{A273562}}(n)
    &= \tfrac23n^3+\tfrac13n-f_{\text{A270106}}(n),
\end{align}
so that 
\begin{align}
    f_{\text{A273500}}(n) 
    = \tfrac23n^3-\tfrac23n+1+f_{\text{A27356}}(n).
\end{align}
\end{example}

\begin{example}[\protect{$\alpha+\beta\ge7$}]\label{Ea+b>=7}
For $\ga+\gb\ge7$, we found the following examples with $\gb=1$. (The
ones with $g(n)=0$ and $f(1)=1$ appear also in \refE{ex4-3}.)
\begin{table}[!ht]
\begin{center}
\begin{tabular}{llllll}
\multicolumn{6}{c}{{}} \\
\multicolumn{1}{c}{OEIS id.} &
\multicolumn{1}{c}{$(\alpha,\beta)$} &
\multicolumn{1}{c}{$g(n)$} &
\multicolumn{1}{c}{$f(1)$} &
\multicolumn{1}{c}{$f(n)$} &
\multicolumn{1}{c}{$xS_{\alpha,1}(n)+y$}\\ \hline  
A116522 & $(6,1)$ & $0$ & $1$ 
& $n^{\log_27}P(\log_2n)$ & $S_{6,1}(n)$\\
A151792 & $(6,1)$ & $1$ & $1$ 
& $n^{\log_27}P(\log_2n)-\frac16$
& $\frac{7}{6}S_{6,1}(n)-\frac16$\\
A151793 & $(7,1)$ & $1$ & $1$ 
& $n^{3}P(\log_2n)-\frac17$
& $\frac{8}{7}S_{7,1}(n)-\frac1{7}$\\
A160428 & $(7,1)$ & $0$ & $8$ 
& $n^{3}P(\log_2n)$ & $8S_{7,1}(n)$\\
A161342 & $(7,1)$ & $0$ & $1$ 
& $n^{3}P(\log_2n)$ & $S_{7,1}(n)$\\
A116526 & $(8,1)$ & $0$ & $1$ 
& $n^{\log_29}P(\log_2n)$ & $S_{8,1}(n)$\\
A255764 & $(8,1)$ & $1$ & $1$ 
& $n^{\log_29}P(\log_2n)-\frac18$
& $\frac{9}{8}S_{8,1}(n)-\frac1{8}$\\
A255765 & $(9,1)$ & $1$ & $1$ 
& $n^{\log_210}P(\log_2n)-\frac19$
& $\frac{10}{9}S_{9,1}(n)-\frac1{9}$\\
A116525 & $(10,1)$ & $0$ & $1$ 
& $n^{\log_211}P(\log_2n)$ & $S_{10,1}(n)$\\
A255766 & $(10,1)$ & $1$ & $1$ 
& $n^{\log_211}P(\log_2n)-\frac1{10}$ 
& $\frac{11}{10}S_{10,1}(n)-\frac1{10}$\\
A116524 & $(12,1)$ & $0$ & $1$ 
& $n^{\log_2 13}P(\log_2n)$ & $S_{12,1}(n)$\\
A116523 & $(16,1)$ & $0$ & $1$ 
& $n^{\log_217}P(\log_2n)$ & $S_{16,1}(n)$\\ \hline
\end{tabular}
\caption{Sequences with $\gL_{\ga,\gb}[f]=g$ for $\ga+\gb\ge7$
(\refE{Ea+b>=7}).}
\label{tb-Ea+b>=7}
\end{center}
\end{table}

The only  example we found in OEIS with $\beta\ne1$ is A269589
with $(\ga,\gb)=(3,4)$:
\begin{align}
  	\Lambda_{3,4}[f] 
	= -6\cdot 2^{\nu_0(n)}\mathbf{1}_{n\text{ odd}},
\end{align}
with $f(1) =1$; this sequence enumerates the number of triples
$(i,j,k)\in[0,n-1]^3$ such that their bitwise \textsf{AND} is zero.
By \refE{Eodd}, we have $f(n) = n^{\log_27}P(\log_2n)$ for some
periodic function $P$.
\end{example}

\section{Applications II. $\alpha=\beta$}
\label{S:app2}

We group in this section examples satisfying the recurrence
$\Lambda_{\alpha,\alpha}[f]=g$. Since the interpolating function
$\varphi^{}_{\alpha,\alpha}(t) = t$ for every $\ga>0$, this is
similar to the case $\ga=\gb=1$ treated in \cite{Hwang2017}. In
particular, $\varphi_{\ga,\ga}$ is linear on $\oi$, and therefore we
can derive in many cases a closed-form solution in terms of
elementary functions. In cases when the periodic functions do not
have simple explicit forms, we can often derive explicit Fourier
expansions in terms of known functions such as Riemann's or Hurwitz's
zeta functions. As the situations and analysis are very similar to
the case when $(\alpha,\beta)=(1,1)$, we omit most of the details,
which can be found in \cite{Hwang2017}. Note, however, one difference
between the cases $\ga>1$ and $\ga=1$: as discussed in \refR{Ra=b},
the periodic function $P_0(t)$ in \refL{Lemma3} is not continuously
differentiable when $\ga>1$; hence, typically, the periodic function
$P(t)$ in \refT{Theorem1} also is not continuously differentiable.

We begin with two examples for a general $\ga$.
\begin{example}\label{EIIa} 
Consider the sum $f(n):=\sum_{1\le k<n} \ga^{L_k}$. The case $\ga=2$
is A063915. Since $L_k=\nu(k)+\nu_0(k)-1$ for $k\ge1$, we have by
\refE{Enu0} and \eqref{fab}
\begin{align}\label{eiia}
	f(n)
	=\sum_{1\le k<n}\ga^{L_k}
	=\frac{1}{\ga}\llpa{\sum_{0\le k<n}
	\ga^{\nu(k)+\nu_0(k)}-1}
	=\frac{1}{2\ga-1}\bigpar{S_{\ga,\ga}(n)-1},
\end{align}
at least provided $\ga>\frac12$; the result holds indeed for any
$\ga\neq\frac12$ since both sides are polynomials in $\ga$ for fixed
$n$. It follows easily (cf.\ \eqref{gab}), that $\gL_{\ga,\ga}[f](n)
=g(n)=1$.
\end{example}

\begin{example}\label{EIIb}
A more general pattern that we found in several OEIS sequences (all
with $\ga=\gb=2$; see Table \ref{tb-22-1} below) is of the form in
Lemma~\ref{lmm-g-linear}, i.e.,
\begin{align}\label{maj2}
    \Lambda_{\ga,\ga}[f]= g(n) = c n+ \begin{cases}
        d,&\text{ if $n$ is even}; \\
        e,&\text{ if $n$ is odd}.
    \end{cases}
\end{align}
Assume for simplicity $\ga>1$, so \refL{lmm-g-linear} applies. Then 
\eqref{L-31b} yields the solution 
\begin{equation}
    f(n) = n^\rho P(\log_2n)-\frac{cn}{\alpha-1}
	-\frac{d}{2\alpha-1}\qquad(n\ge1).
\end{equation} 
Furthermore, it follows, from the proof of \refL{lmm-g-linear}, 
that the Fourier coefficients of the periodic function $P$ are given 
by (assuming first $\ga>2$ so that $\rho>2$),
\begin{align}\label{maj3}
	\hP(k)
	&=\frac{1}{(\rchi_k)(\rchi_k-1)\log2}
	\biggl(\frac{(2\ga-1)(\ga-1)}{\ga}\bar{f}(1)\notag\\
	&\hskip4em +
	(e-d)\sum_{j\ge1} \Bigpar{(2j)^{-(\rchi_k-1)}-2(2j+1)^{
	-(\rchi_k-1)}+(2j+2)^{-(\rchi_k-1)}}\biggr)\notag\\
	&=\frac{(2\ga-1)(\ga-1)f(1) + (2\ga-1)(c+e) -\ga d
	+2(e-d)(2-\ga)\zeta(\richi_k)}{\ga(\rchi_k)(\richi_k)\log 2}.
\end{align}
By analytic continuation (temporarily allowing complex $\ga$ and
$\rho$), \eqref{maj3} holds for all $\ga$ with $\ga>1$ (so $\rho>1$),
but we have to be careful when $\ga=2$ and thus $\rho=2$, since then
$\zeta$ has a pole at $\richi_0=1$, and we have to interpret (by
continuity) $(2-\ga)\zeta(\richi_0)=-2\log 2$. The formula simplifies
when $\ga=2$, since then all other terms with $\zeta$ disappears. See
further \refE{E52}, where an explicit formula for $P(t)$ is given for
$\ga=2$.

From a generating function viewpoint, the fact that $\alpha=2$ is
special may be due to the identity
\begin{equation}
    \sum_{k\ge0}\frac{\alpha^k z^{2^k}}
	{1+z^{2^k}} = \frac{z}{1-z}
	\eqtext{iff}\alpha=2.
\end{equation}
\end{example}

\subsection{$(\alpha,\beta)=(1,1)$}
\begin{example}[Sequences not in \cite{Hwang2017}]\label{E11}
As this case has already been discussed in detail in
\cite{Hwang2017}, we only list sequences (together with their
closed-form expressions) that are not included in \cite{Hwang2017}.
We use the pattern $f(n) = nP(\log_2n)-Q(n)$. (Some recursions start
at some $n>2$.)
\begin{small}
\begin{center}
\begin{longtable}{ccccc}
\hline
\textbf{OEIS} & \textbf{$g(n)$} & \textbf{Initials} 
& $P(t)$ & $-Q(n)$ \\
\hline
\endfirsthead
\multicolumn{5}{c}%
{\tablename\ \thetable\ -- \textit{Continued from previous page}} \\
\hline
\textbf{OEIS} & \textbf{$g(n)$} & \textbf{Initials}
& $P(t)$ & $-Q(n)$  \\
\hline
\endhead
\hline \multicolumn{5}{r}{\textit{Continued on next page}} \\
\endfoot
\hline
\endlastfoot

\makecell[l]{A277267\\(binary trees)}
& $ 1 $ & $ \begin{array}{l}
\{f(j)\}_{2\le j\le 3} \\
=\{0,0\}\end{array}$
& $\max\left\{\begin{array}{l}
    2^{-1-\{t\}}\\1-2^{-\{t\}}\end{array}\right\}$
& $-1$ \\
\makecell[l]{A279521\\(binary trees)} & $ 1 $ 
& $ \begin{array}{l}
\{f(j)\}_{2\le j\le 3} \\
=\{0,1\}\end{array}$ 
& $\min\left\{\begin{array}{l}
    1-2^{-1-\{t\}}\\ 2^{-\{t\}}\end{array}\right\}$
    & $-1$\\ %\hline 
\makecell[l]{A294456\\(recursion)} & $ 2 $ & $\begin{array}{l}
\{f(j)\}_{1\le j\le 2} \\
=\{0,1\}\end{array}$ 
& $\min\left\{\begin{array}{l}
  2-2^{-1-\{t\}}\\ 1+2^{-\{t\}}\end{array}\right\}$ 
  & $-2$ \\ %\hline 
\makecell[l]{A295513\\(binary length)} & $ n $ & $ f(1) = -1 $ 
& $1-\{t\}-2^{1-\{t\}}$ & $n\log_2n$ \\ %\hline 
\makecell[l]{A296062\\(binary trees)} & $\mathbf{1}_{n\text{ odd}}$ 
& $ f(1) = 0 $ & see below & $0$ \\ %\hline 
\makecell[l]{A296349\\(digital sum)} & $ n-2 $ & $ f(1) = 1 $ &
$1-\{t\}-2^{1-\{t\}}$ & $n\log_2n+2$\\ %\hline 
\makecell[l]{A297531\\(subword\\complexity)}
& $ 0 $ & $\begin{array}{l}
\{f(j)\}_{4\le j\le 7} \\   %a(n+1)
=\{13,17,21,24\}
\end{array}$ &
$\min\left\{\begin{array}{l}
	4-3\cdot 2^{-2-\{t\}}\\
	3+3\cdot 2^{-2-\{t\}}\\
	2+5\cdot 2^{-1-\{t\}}
\end{array}\right\}$ & $0$ \\ %\hline 
\makecell[l]{A301336\\(digital sum)} 
& $2-\mathbf{1}_{n\text{ odd}}$ 
& $ f(1) = -1 $ & see below & $-2$ \\ %\hline 
\makecell[l]{A303548\\(Hamming\\weight)} 
& $\mathbf{1}_{n\equiv 3\bmod 4}$ & $ f(1) = 0 $ 
& see below & $0$ \\ %\hline 
\makecell[l]{A316936\\(word\\complexity)} & 
$\begin{array}{l}
    \frac{1}{2}{n}^{2}-2n\\
    +\begin{cases} 
	    1\\
    	\frac{1}{2} 
    \end{cases}
\end{array}$ & $ f(1) = 3 $ &
$-1+2\{t\}+2^{2-\{t\}}$ & $\begin{array}{l}
    n^2-2n\log_2n\\
    -1 \end{array}$\\ %\hline
\caption{Some sequences with $\gL_{1,1}[f]=g$
(\refE{E11}).}
\label{tb-E11}
\end{longtable}
\end{center}
\end{small}
In particular, the sequence A297531 gives the maximum possible
subword complexity over all binary overlap-free words of a given
length, A301336 counts the difference between the total number of
$1$s and the total number of $0$s in the binary expansions of
$\{0,\dots,n-1\}$, and A303548 equals the sum of the distances from
$n$ to the nearest number with a given Hamming weight.

Note that some of these sequences are discussed in \cite{Hwang2017} 
and were subsequently added to OEIS:
\begin{center}
\begin{tabular}{c|cccccc}
OEIS & A294456 & %A295513 & 
A296062 & %A296349 & %A301336 & 
A303548\\ \hline
Example\ in \cite{Hwang2017} & 3.2 & %5.1 & 
3.6 & %5.1 & %3.6 & 
3.7
\end{tabular}
\end{center}

Here $\A295513(n)= \A001855(n)-1$, $\A296349(n)=\A083652(n-1)$, and 
\begin{align}
    f_{\text{A296349}}(n) = f_{\text{A295513}}(n)+2,
    \quad
    f_{\text{A301336}}(n) =n-2-f_{\text{A296062}}(n).
\end{align}
On the other hand, with $\chi_{k}:=\frac{2k\pi i}{\log 2}$ as above, 
using \cite[Theorem 3 and Example 3.6]{Hwang2017}, 
\begin{align}\label{E:A296062-P}
    P_{\text{A296062}}(t) 
    &= 2-\log_2\pi+\frac1{\log2}
    \sum_{k\ne0}\frac{1+2\zeta(\chi_k)}{\chi_k(1+\chi_k)}
    \, e^{2k\pi i t}\\
    P_{\text{A301336}}(t) 
    &= 1-P_{\text{A296062}}(t) \\
    P_{\text{A303548}}(t) \label{A303548}
    &= \tfrac12\log_2\pi-2\log_2\Gamma\lpa{\tfrac34}
    +\frac1{\log2}
    \sum_{k\ne0}\frac{\zeta(\chi_k)-2\zeta(\chi_k,\frac34)}
    {\chi_k(1+\chi_k)}\, e^{2k\pi i t}.
\end{align}
Finally, the sequence A330038, satisfying $\Lambda_{1,1}[f]=\tr{\frac 
n2}$ with $f(1)=1$, is nothing but A000788 (best case of mergesort or 
partial sum of $\nu(n)$) plus $n$; see \cite[Example 5.2]{Hwang2017}. 
\end{example}

\subsection{$(\alpha,\beta)=(2,2)$}

While most analysis here parallels that in \cite{Hwang2017}, we will
see that there are subtle differences in the genesis of periodic
oscillations. In particular, as remarked above, the periodic function
$P(t)$ is generally not continuously differentiable.

\begin{example}[Periodic functions differentiable except at integers] 
	\label{E52}
Consider the simple case when $\Lambda_{2,2}[f]=c$ for $n\ge2$.
Then, by \refL{lmm-g-const}, \refE{Eg=0} and \eqref{b16} (or directly
by \eqref{b15} and \eqref{b16}), we deduce that
\begin{align}
	f(n)
	&=\lpa{f(1)+\tfrac13\,c}S_{2,2}(n)-\tfrac13\,c
	= \lpa{f(1)+\tfrac13\,c}2^{L_n}\lpa{3n-2^{L_n+1}}
	-\tfrac13\,c\qquad(n\ge1),
\end{align}
so that $f(n) = n^2P(\log_2n)-\frac13c$, where
\begin{equation}
	P(t) =  \lpa{f(1)+\tfrac13\,c}P_0(t)
	= \lpa{f(1)+\tfrac13\,c} 2^{-\{t\}}(3-2^{1-\{t\}}),
\end{equation}
is continuously differentiable on $\oi$, but not at integer $t$, 
where the derivative $P'(t)$ has a jump.

A more general pattern that we identified from the OEIS sequences 
is of the form (see Lemma~\ref{lmm-g-linear} and \refE{EIIb})
\begin{align}\label{L22-linear}
    \Lambda_{2,2}[f]= g(n) = c n+ \begin{cases}
        d,&\text{ if $n$ is even}; \\
        e,&\text{ if $n$ is odd}.
    \end{cases}
\end{align}
Then the solution is given by $f(n) = n^2P(\log_2n)-cn-\frac13\,d$ 
for $n\ge1$, where
\begin{equation}\label{maj4}
    P(t) = d-e+\lpa{f(1)+c-\tfrac23\,d+e}
    2^{-\{t\}}\lpa{3-2^{1-\{t\}}},
\end{equation}
which easily is verified using \refL{lmm-g-linear},
\eqref{L-31c}--\eqref{L-31d}, and the special case
$\gL_{2,2}[n^2]=-\Iodd$.

This can be compared with the recursion $\Lambda_{\alpha,\alpha}[f] =
g(n)$ for a general $\ga$ (with the same $g(n)$) studied in
\refE{EIIb}, where it is readily checked that the Fourier
coefficients \eqref{maj3} (with $\ga=2$) agree with \eqref{maj4} and
\eqref{maj1}.

\medskip
\noindent
\begin{minipage}{0.7\textwidth}
The prototype sequence in this category is A073121:
\begin{equation}
    f_{\text{A073121}}(n)
    =S_{2,2}(n)
    =n^2P_{\text{A073121}}(\log_2n),
\end{equation}
where $P_{\text{A073121}}(t)=P_0(t):=2^{-\{t\}}\lpa{3-2^{1-\{t\}}}$,
which arises as an upper bound for the cardinality of some regular
expressions; see \cite{Ellul2005} for the more general form
$\Lambda_{\alpha,\alpha}[f]=0$. The Fourier coefficients of $P_0(t)$
are, by \eqref{maj1}, of the form
\end{minipage}
\begin{minipage}{0.25\textwidth}
    \hfill
    \includegraphics[width=0.8\textwidth]{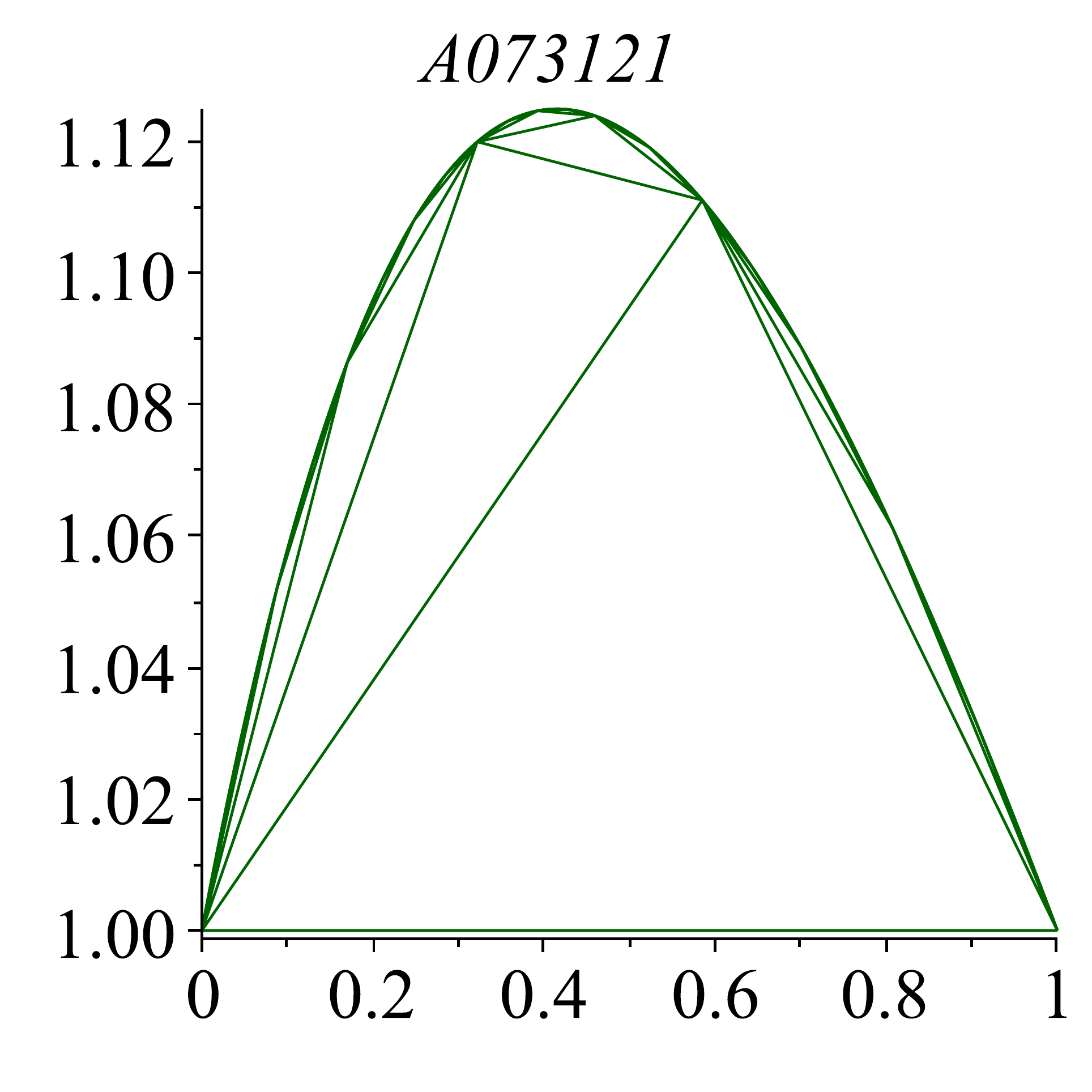}
\end{minipage}

\begin{align}\label{hp022}
	\hP_0(k)=\frac{3}{2\log 2\cdot(1+\chi_k)(2+\chi_k)}.
\end{align}

With $S_{2,2}(n)$, the solution to \eqref{L22-linear} can 
alternatively be written as
\begin{equation}
    f(n) 
    = (d-e)n^2+\lpa{f(1)+c-\tfrac23\,d+e}S_{2,2}(n)
    -cn-\tfrac13d.
\end{equation}
Table \ref{tb-22-1} gives more examples from OEIS that are 
periodically equivalent to $S_{2,2}(n)$.
\end{example}
\begin{longtable}{llllll}
\multicolumn{5}{c}{{}} \\
\multicolumn{1}{c}{OEIS id.} &
\multicolumn{1}{c}{Context} &
\multicolumn{1}{c}{$g(n)$} &
\multicolumn{1}{c}{$f(1)$} &
\multicolumn{1}{c}{$f(n)$} \\ \hline    
A063915 & \makecell[l]{$\sum_{1\le k<n}2^{L_k}$}
& $1$ & $0$ & $\frac13(S_{2,2}(n)-1)$ \\
A073121 & \makecell[l]{Regular\\expressions}
& $0$ & $1$ & $S_{2,2}(n)$ \\
A181497 & \makecell[l]{Combinatorial\\sequence}
& $-1$ & $1$ & $\frac13(2S_{2,2}(n)+1)$ \\
A236305 & \makecell[l]{Nim\\game} & $\begin{cases}
    0 \\ -3
\end{cases}$ & $1$ & 
$3n^2-2S_{2,2}(n)$ \\
A255748 & \makecell[l]{Cellular\\automaton}
& $\begin{cases}
    -\frac12n+2 \\ -\frac12n+\frac52
\end{cases}$ & $0$
& $-\frac12n(n-1)+\frac23(S_{2,2}(n)-1)$ \\
A256249 & \makecell[l]{Josephus\\problem}
& $\begin{cases}
    1 \\ 0
\end{cases}$ & $0$ 
& $n^2-\frac13(2S_{2,2}(n)+1)$ \\
A256250 & \makecell[l]{Cellular\\automaton}
& $\begin{cases}
    1 \\ -3
\end{cases}$ & $1$
& $4n^2-\frac13(8S_{2,2}(n)+1)$ \\
A256266 & \makecell[l]{Cellular\\automaton}
& $\begin{cases}
    -3n+12 \\ -3n+15
\end{cases}$ & $0$
& $-3n(n-1)+4(S_{2,2}(n)-1)$ \\ 
A256534 & \makecell[l]{Cellular\\automaton} & 
$\begin{cases}
    0 \\ -12
\end{cases}$ & $4$
& $4(3n^2-2S_{2,2}(n))$ \\
A261692 & \makecell[l]{Cellular\\automaton} 
& $\begin{cases}
    1 \\ 2
\end{cases}$ & $0$
& $-n^2+\frac13(4S_{2,2}(n)-1)$ \\
A262620 & \makecell[l]{Cellular\\automaton} 
& $\begin{cases}
    1 \\ 5
\end{cases}$ & $1$
& $-4n^2+\frac13(16S_{2,2}(n)-1)$ \\
A266538 & \makecell[l]{Josephus\\problem}
& $\begin{cases}
    2 \\ 0
\end{cases}$ & $0$
& $2n^2-\frac23(2S_{2,2}(n)+1)$ \\ \hline
\caption{Sequences satisfying $\Lambda_{2,2}[f]=g$ and their 
relations to $S_{2,2}(n)$ (\refE{E52}).}
\label{tb-22-1}
\end{longtable}    
%\end{center}
\vspace*{-.5cm}

\begin{example}[Piecewise differentiable periodic functions]
\label{E22-piece} 
When the recurrence $\Lambda_{2,2}[f]=g$ is satisfied only for $n\ge
n_0>1$, the resulting periodic function is specified according to the
initial conditions $f(n)$ with $n\le n_0$. For simplicity, we
consider the sequence A080075 (Proth numbers), which denotes the
numbers of the form $(2r+1)2^k+1$ for $k\ge1$ and $2r+1<2^k$ and has
many variants. The sequence $f(n)=\text{A080075}(n-1)$ then satisfies
$\Lambda_{2,2}[f]=-3$ with the initial conditions $f(2)=3$ and
$f(3)=5$. The solution is then given by $f(n) = n^2P(\log_2n)+1$ for
$n\ge2$ with
\begin{equation}
    P(t) = \begin{cases}
        2^{-\{t\}}\lpa{1-2^{-1-\{t\}}},
        &\text{if }t\in[0,\log_2\frac32];\\
        2^{1-\{t\}}\lpa{1-2^{-\{t\}}}, 
        &\text{if }t\in[\log_2\frac32,1];
    \end{cases}
\end{equation}
see \cite[Examples 3.1 and 3.2]{Hwang2017} for similar behaviour.
\begin{center}
\begin{minipage}{0.65\textwidth}
\begin{tabular}{lrcl}
\multicolumn{4}{c}{{}} \\
\multicolumn{1}{c}{OEIS id.} &
\multicolumn{1}{c}{$g(n)$} &
\multicolumn{1}{c}{$(f(2),f(3))$} &
\multicolumn{1}{c}{$f(n)$} \\ \hline    
A080075 & $-3$ & $(3,5)$ &  \\
A082662 & $0$ & $(1,2)$ & $\frac12(f_{\text{A080075}}(n)-1)$ \\
A112714 & $3$ & $(1,3)$ & $f_{\text{A080075}}(n)-2$ \\
A116882 & $0$ & $(2,4)$ & $f_{\text{A080075}}(n)-1$ \\
A260711 & $0$ & $(8,16)$ & $4(f_{\text{A080075}}(n)-1) $ \\ \hline
\end{tabular}
\end{minipage}
\begin{minipage}{0.25\textwidth}
    \vspace*{.4cm}
    \includegraphics[width=0.8\textwidth]{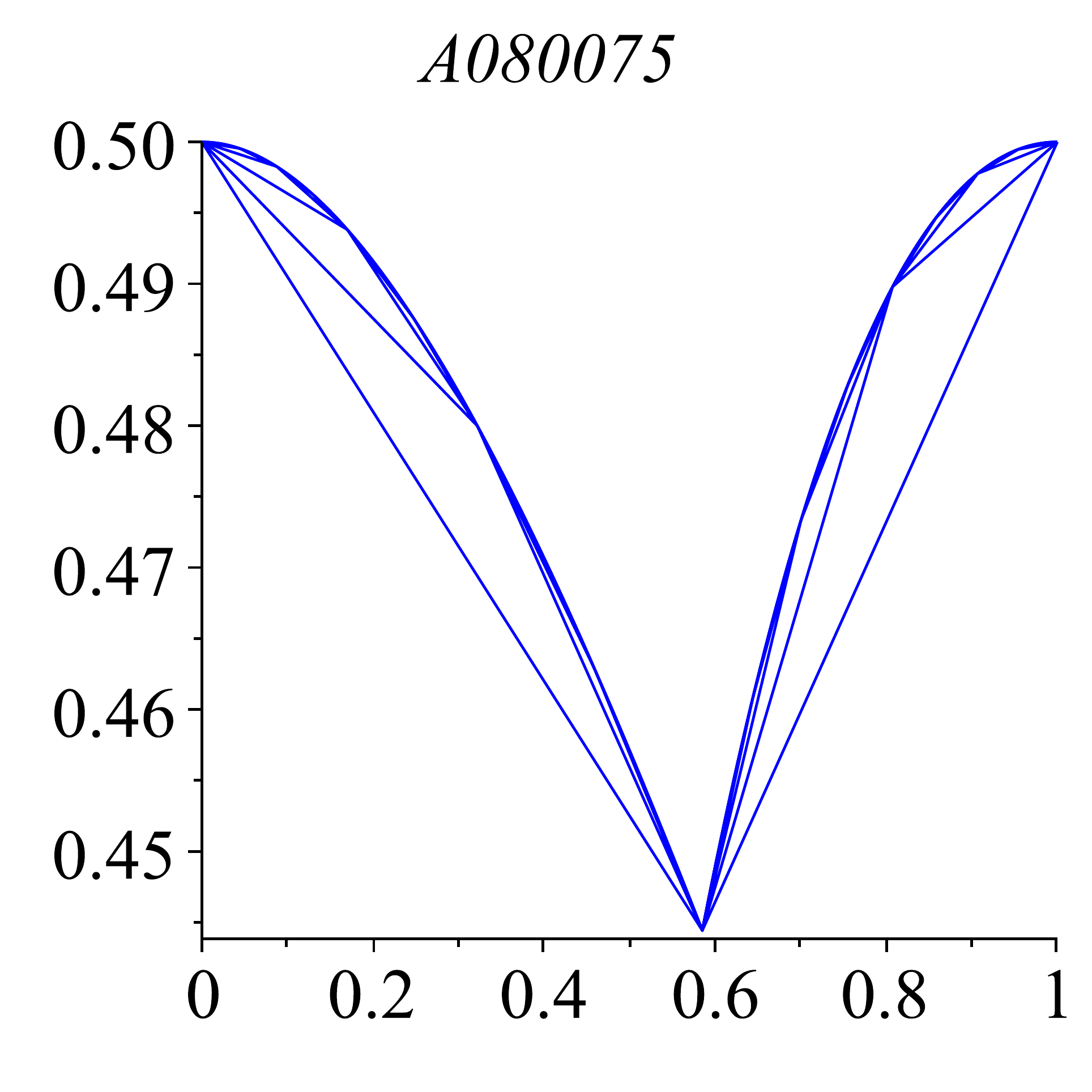}
\end{minipage}
\end{center}
\end{example}

\begin{example}[Non-differentiable periodic functions]\label{E22-ns}
A few sequences defined as the partial sum of the bitwise operator
between $j$ and $n-j$ or their complements satisfy
$\Lambda_{2,2}[f]=g$ with different $g$; see \refTab{tb-22-ns} for a
summary; in all cases $f(1)=0$. Note that the \textsf{NOT} operator
$\bar{j}$ uses the full number of bits $L_n+1$ for each $0\le j<n$
and equals
\begin{equation}
    (\underbrace{1\dots 1}_{L_n-s}
    0\bar{b}_{s-1}\cdots\bar{b_0})_2 
    \text{ if }j=(1b_{s-1}\cdots b_0)_2. 
\end{equation}

\begin{center}
\begin{longtable}{clll}
\multicolumn{4}{c}{{}} \\
\multicolumn{1}{c}{OEIS id.} &
\multicolumn{1}{c}{Description} &
\multicolumn{1}{c}{$g(n)$} & 
\multicolumn{1}{c}{$f(n)$} \\ \hline
A006581 & $\sum\limits_{1\le j\le n-2} 
    \bigl(j \textsf{ AND } (n-1-j) \bigr)$
& $\begin{cases}
    0\\
    \frac {n-1}2 
\end{cases}$ & $n^2P(\log_2n)$ \\ 
A006582 & $\sum\limits_{1\le j\le n-2} 
    \bigl(j \textsf{ XOR } (n-1-j) \bigr)$
& $\begin{cases} 
    3n-6\\
    2n-6 
\end{cases}$ & $n^2P(\log_2n)-3n+2$ \\ 
A006583 & $\sum\limits_{1\le j\le n-2} 
    \bigl(j \textsf{ OR } (n-1-j) \bigr)$
& $\begin{cases} 
    3n-6 \\
    \frac 52n-\frac{13}2 
\end{cases}$ & $n^2P(\log_2n)-3n+2$ \\ 
A090889 & $\sum\limits_{1\le j<n} jv_2(j)(n-j)$ 
& $\begin{cases}
        \frac1{12}\,{n}^{3}-\frac 13\,n\\
        \frac1{12}\,{n}^{3}-\frac1{12}\,n
    \end{cases}$ & $\frac16n^3+n^2P(\log_2n)+\frac n3$\\
A099027 & $\sum\limits_{0\le j<n} 
    \bigl(\bar{j}\textsf{ AND } (n-1-j) \bigr)$
& $\begin{cases} 
    \frac{n}{2} \\
    0  
\end{cases}$ & $n^2P(\log_2n)-\frac n2$ \\ \hline
\caption{Sequences satisfying $\Lambda_{2,2}[f]=g$ with
non-smooth periodic functions (\refE{E22-ns}).}\label{tb-22-ns}
\end{longtable}
\end{center}
Here $v_2(n)$ denotes the dyadic valuation of $n$ (exponent of the
highest power of $2$ dividing $n$). Note that the recurrence
provided on OEIS for A090889 is incorrect (and the generating
function misses a factor of $2$). 

These apparently different sequences are all periodically
equivalent. Indeed, by the recurrences and induction, we can prove
the relations
\begin{equation}
	\begin{split}
    f_{\text{A006582}}(n) 
    &= (n-1)(n-2)-2f_{\text{A006581}}(n) \\
    f_{\text{A006583}}(n) 
    &= (n-1)(n-2)-f_{\text{A006581}}(n) \\
    f_{\text{A090889}}(n) 
    &= \tfrac16n(n-1)(n-2)+f_{\text{A006581}}(n)\\
    f_{\text{A099027}}(n) 
    &= \tfrac12n(n-1)-f_{\text{A006581}}(n).
	\end{split}
\end{equation}
The Fourier expansion of $P_{\text{A006581}}(t)$ is, by \eqref{c4+} 
and standard calculations, given by
\begin{equation}\label{E:P-A006581}
    P_{\text{A006581}}(t)
    = \frac12-\frac1{4\log 2}
    +\frac1{\log 2}\sum_{k\ne0}
    \frac{\zeta(\chi_k)}{(1+\chi_k)(2+\chi_k)}
    \, e^{2k\pi i t}.
\end{equation}
We prove in Appendix~\ref{AE22-ns} that the continuous function 
$P_{\textrm{A006581}}$ is nowhere differentiable, and that it is not 
Lipschitz.

Another sequence $\A048641$ is defined as the sum
$\sum_{k<n}\gamma(k)$, where $\gamma(k)= k\textsf{ XOR
}\tr{\frac12k}$ denotes the numerical value of the binary reflected
Gray code of $k$ (A003188), and satisfies the recurrence
$\Lambda_{2,2}[f]= \frac12\lpa{n-\sin\frac12n\pi}$ with $f(1)=0$. We
then obtain $f(n) = n^2P(\log_2n)-\frac12n$ with
\begin{align}\label{E:A048641}
    P(t) 
    = \frac{\pi}{8\log 2}+ \frac1{4\log 2}
    \sum_{k\ne0}\frac{\zeta\lpa{1+\chi_k,\frac14}
    -\zeta\lpa{1+\chi_k,\frac34}}
    {(1+\chi_k)(2+\chi_k)}\, e^{2k\pi i t}.
\end{align}
We leave the question whether this function is nowhere differentiable 
as an open problem.

Yet another related example is A022560; this is the sum $f(n) :=
\sum\limits_{1\le k<n}2^{v_2(k)}(n-k)$, which satisfies the
recurrence $\Lambda_{2,2}[f]= \tr{\frac14n^2}$. (Note that one of the
recursions given in OEIS is incorrect.) In this case $g(n)$ grows too
rapidly for \refT{Theorem1} to be directly applicable, since the
series \eqref{t1Q} diverges. A modification of our arguments, see
\refE{EB} for details, shows that we have
\begin{align}\label{f022560}
	f_{\A022560}(n)=\tfrac14n^2\log_2n+n^2P_{\A022560}(\log_2n)
\end{align}
with a logarithmic factor in the leading term, and periodic
fluctuations given by
\begin{equation}\label{P022560}
    P_{\A022560}(t) 
    = \frac38+\frac{2\gamma-3}{8\log 2} 
    + \frac1{2\log 2}
    \sum_{k\ne0}\frac{\zeta\lpa{1+\chi_k}}
    {(1+\chi_k)(2+\chi_k)}\, e^{2k\pi i t}.
\end{equation}
See Figure \ref{fig-robust}.
\end{example}
\begin{figure}[!ht]
\centerline{\footnotesize
\begin{tabular}{c c c}
\includegraphics[height=2.6cm]{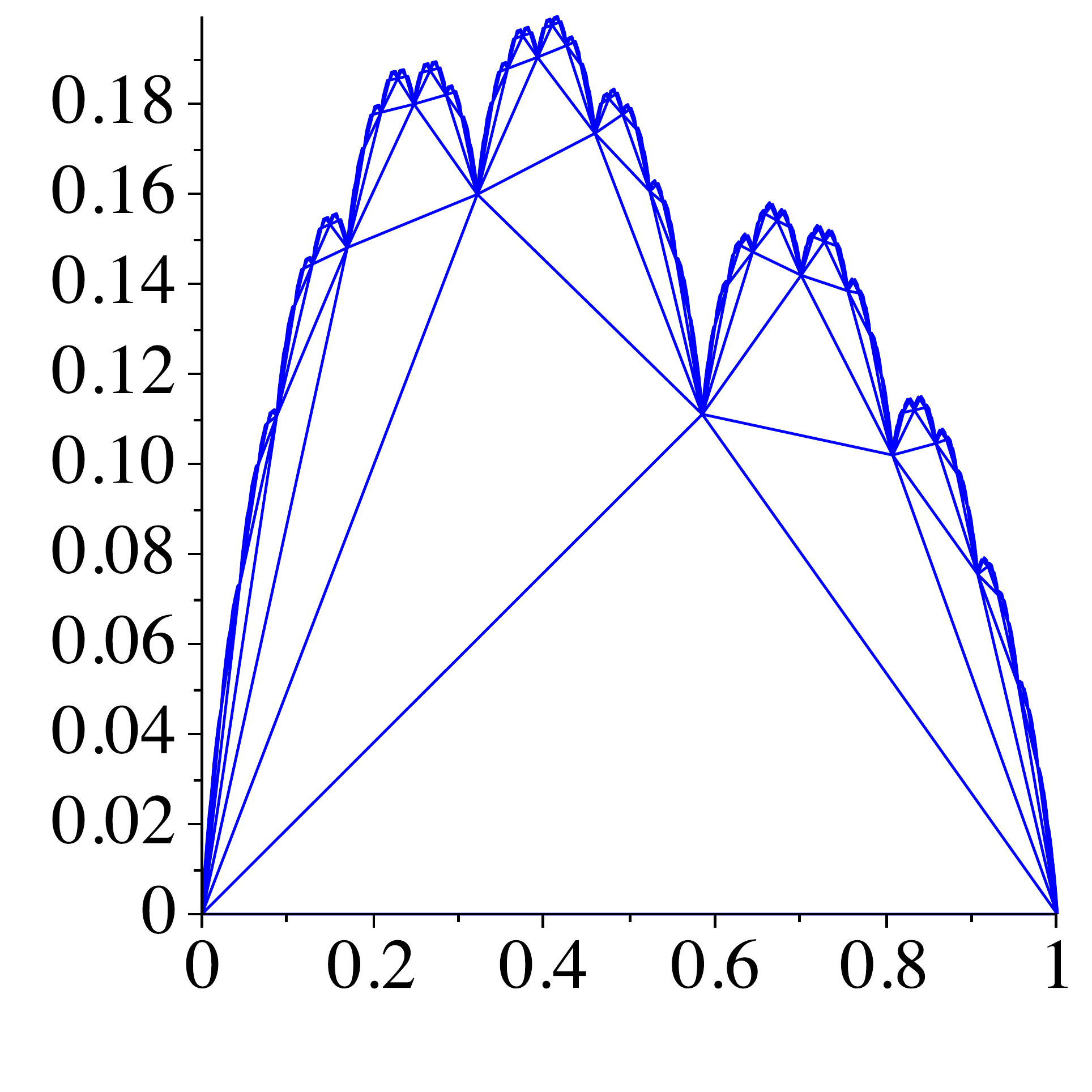} &
\includegraphics[height=2.6cm]{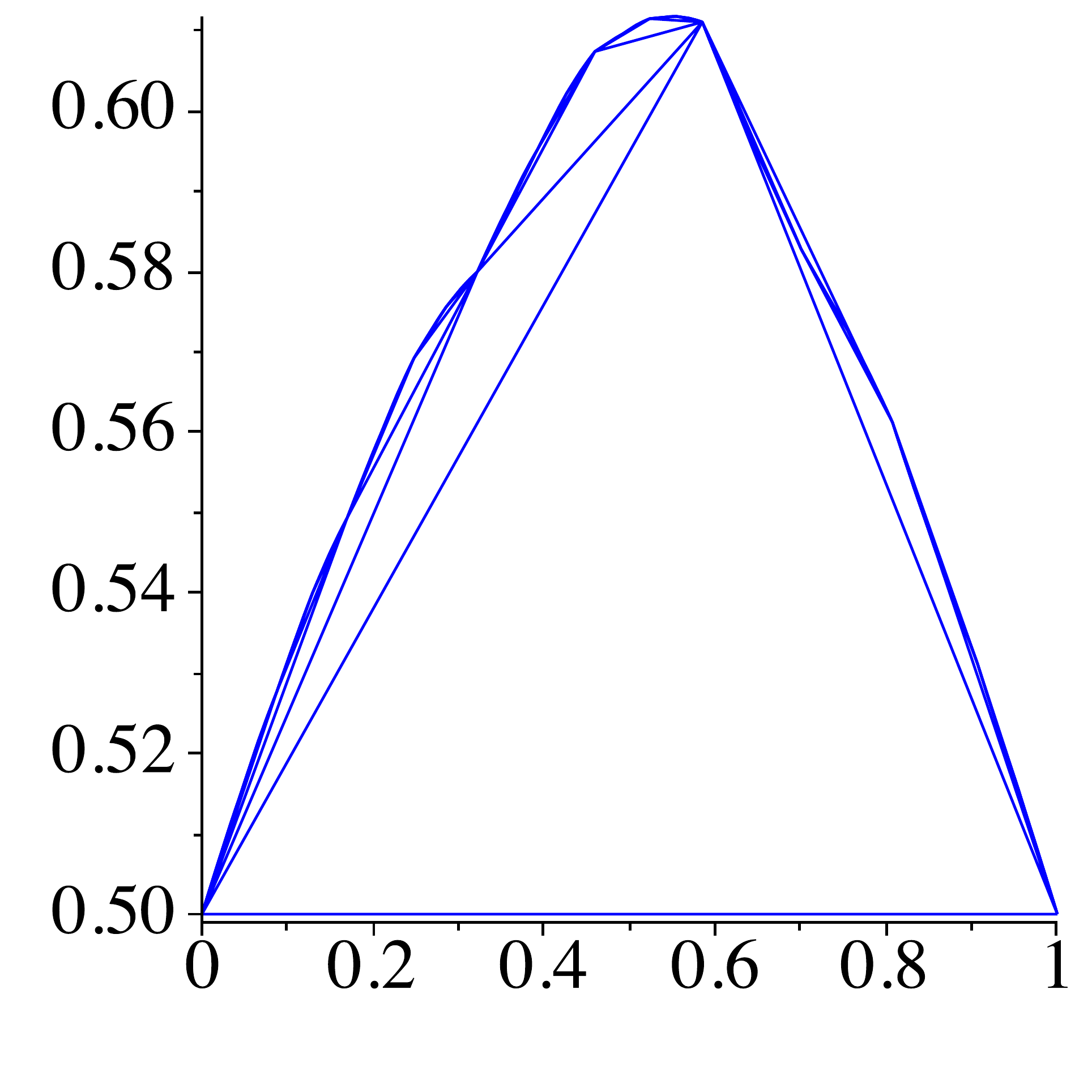} & 
\includegraphics[height=2.6cm]{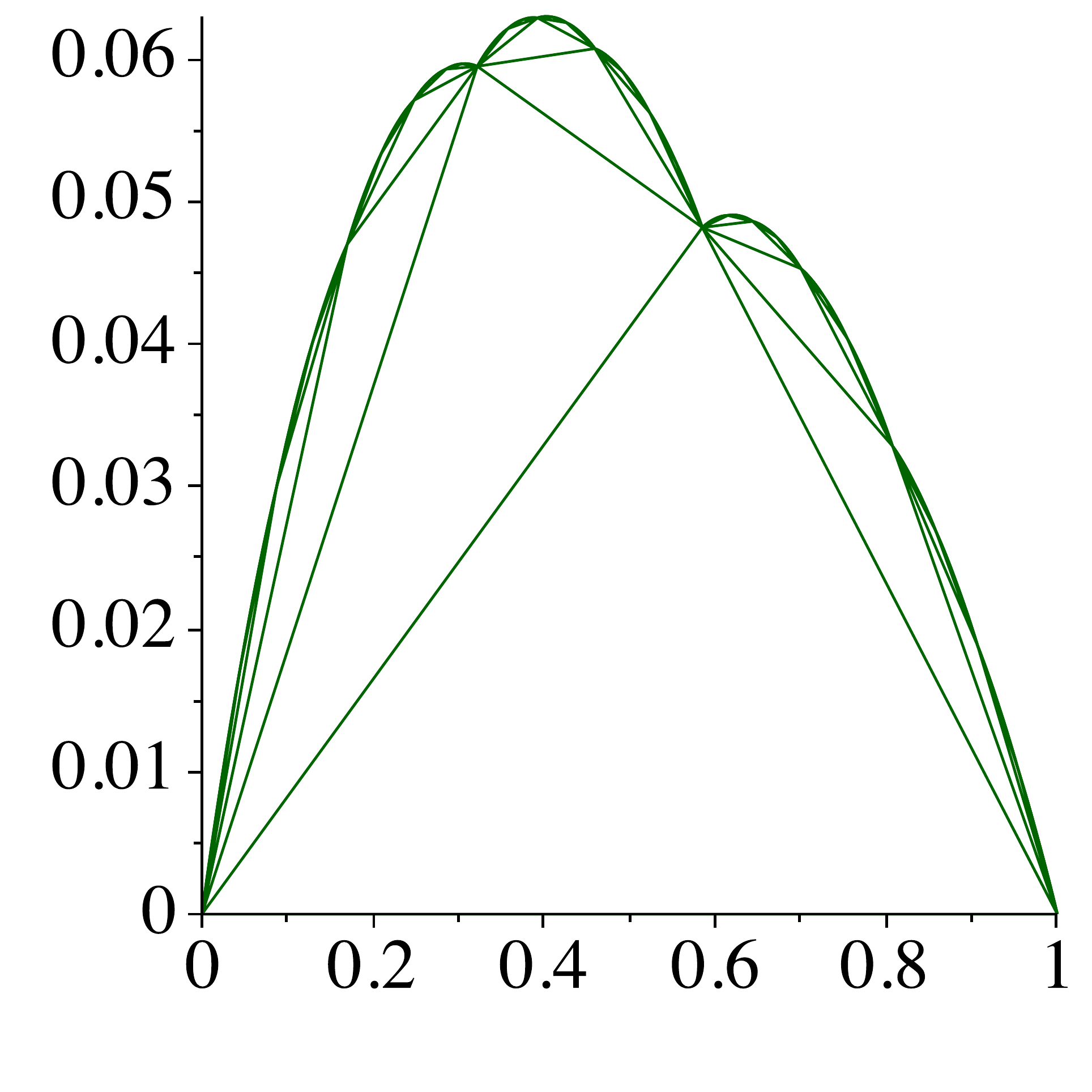} \\
A006581 & A048641 & A022560 
\end{tabular}}
\caption{Fluctuations (properly normalised) in the three periodically
distinct cases in \refE{E22-ns}. At least the first is nowhere
differentiable.} \label{fig-robust}
\end{figure}

\begin{example}[Sensitivity test: small variations inducing big
differences] \label{E5.7} While the previous examples show that many
different toll functions lead to periodically equivalent
oscillations, the same recurrence also exhibits the opposite
sensitivity property, as we now examine. We begin with the difference
$f(n) = f_{\text{A048641}}(n)-\binom{n}2$, which gives A048644 and
satisfies the recurrence $\Lambda_{2,2}[f]= \mathbf{1}_{n \bmod
4\equiv 3}$ with the solution $f(n) = n^2P(\log_2n)$, where
$P(t)=P_{\A048641}(t)-\frac12$ is given by \eqref{A048644} below.

It is interesting to see that changing $3$ to other remainders
results in drastically different periodic functions; compare also
\cite[Example 3.7]{Hwang2017}. Let $\Lambda_{2,2}[f_j] =\mathbf{1}_{n
\bmod 4\equiv j}$ for $j=0,1,2,3$ with $f_j(1)=0$. Then
\begin{align}
    f_j(n) = \begin{cases}
        n^2P_0(\log_2n) -\frac13
        +\frac14\cdot 
        \mathbf{1}_{n \text{ odd}}, & 
        \text{ if }j=0,\\
        n^2P_j(\log_2n) , & 
        \text{ if }j=1,3,\\
        n^2P_2(\log_2n) 
        -\frac14\cdot
        \mathbf{1}_{n \text{ odd}}, & 
        \text{ if }j=2,\\
    \end{cases}
\end{align}
where $P_0(t)$ is continuously differentiable on $\oi$ (not the same
as $P_0(t)$ from Section \ref{S:recurrence}), $P_2(t)=\frac14$ is
\emph{a constant}, and $P_1(t)$ has many visible cusps in the unit
interval (see Figure~\ref{fig-sensitivity}). More explicit
expressions are given by
\begin{align}
    P_0(t) &= \tfrac34-2^{1-\{t\}}+\tfrac134^{1-\{t\}}, \\
    P_1(t) &= \frac{3}{4\log 2}
    -\frac{\pi}{8\log 2}-\frac12
    -\frac1{4\log 2}\sum_{k\ne0}
    \frac{\zeta(1+\chi_k,\frac14)-\zeta(1+\chi_k,\frac34)-6}
    {(1+\chi_k)(2+\chi_k)}\, e^{2k\pi i t},\\
    P_2(t) &=\frac14,\\
    P_3(t) &
    = \frac{\pi}{8\log 2}-\frac12+ \frac1{4\log 2}
    \sum_{k\ne0}\frac{\zeta\lpa{1+\chi_k,\frac14}
    -\zeta\lpa{1+\chi_k,\frac34}}
    {(1+\chi_k)(2+\chi_k)}\, e^{2k\pi i t}; \label{A048644}
\end{align} 
see Figure~\ref{fig-sensitivity} for an illustration. The difference
$\zeta(s,\frac14)-\zeta(s,\frac34)$ is also known as the Dirichlet
beta function; it appears also in several other formulas above and
below. Note that $f_2(n) = \ltr{\frac14n^2}$, which equals the 
quarter-squares A002620, and the sum $f_1+f_2+f_3+f_4$ equals A063915
in \refE{EIIa} and \refTab{tb-22-1}. Also the mean values of these
periodic functions are given by 
\begin{align}
	\lpa{-\tfrac1{2\log 2}+\tfrac34, 
	\tfrac{3}{4\log 2}-\tfrac{\pi}{8\log 2}-\tfrac12,
	\tfrac14,
	\tfrac{\pi}{8\log 2}-\tfrac12}
	\approx\lpa{0.02865, 0.01548, 0.25, 0.06655},  
\end{align}
respectively, and we see that the mean value of $P_2$ is much larger 
than those of other three. 

Similarly, by comparing with \refTab{tb-22-1}, the sequence A256249
for a sum for the Josephus problem has $f_{\A256249}=f_0+f_2$. The
sequence A266540 for another sum in the Josephus problem has $f(1)=0$
and $\Lambda_{2,2}[f]=1+\cos\lpa{\frac12n\pi}$ (which has the
periodic pattern $(2,1,0,1)$ for $n\ge0$); hence
$f_{\A266540}=2f_0+f_1+f_3$. It is written on the OEIS page that ``It
appears that this sequence has a fractal (or like-fractal)
behaviour.'' This is untrue because from the generating function given
there
\begin{equation}
    \sum_{n\ge1}f(n) z^n
    = \frac{z^3(1+z^2)}{(1-z^2)(1-z)^2}
    -\frac{z}{(1-z)^2}\sum_{k\ge2}2^{k-1}z^{2^k},
\end{equation}
or from our approach, we can derive the identity 
$f(n) = n^2P(\log_2n)-\frac23+\frac12\mathbf{1}_{n\text{ is odd}}$
with $P(t) = \frac12-2^{-\para{t}}+\frac234^{-\para{t}}$. 

Changing the recurrence to
$\Lambda_{2,2}[f] =1-\cos\lpa{\frac12n\pi}$ does not alter the
smooth nature of the periodic function because $f(n) =
n^2P(\log_2n)-\frac12\mathbf{1}_{n\text{ is odd}}$, where $P(t) =
-\frac12+3\cdot 2^{-t}-2\cdot 4^{-t}$. However, switching cosine
function to sine function does change the nature of the periodic
oscillation because we then have the solution $f(n) =
n^2P(\log_2n)-\tfrac13$ when (\emph{i})
$g(n)=1+\sin\lpa{\frac12n\pi}$, where
\begin{equation}
    P(t) = \frac1{\log 2}-\frac{\pi}{4\log 2}
    -\frac1{2\log 2}\sum_{k\ne0}
    \frac{\zeta\lpa{1+\chi_k,\frac14}
    -\zeta\lpa{1+\chi_k,\frac34}-4}{(1+\chi_k)(2+\chi_k)}
    \, e^{2k\pi i t},
\end{equation}
and (\emph{ii}) $g(n)=1-\sin\lpa{\frac12n\pi}$, where
\begin{equation}
    P(t) = \frac{\pi}{4\log 2}-\frac1{2\log 2}
    +\frac1{2\log 2}\sum_{k\ne0}
    \frac{\zeta\lpa{1+\chi_k,\frac14}
    -\zeta\lpa{1+\chi_k,\frac34}-2}{(1+\chi_k)(2+\chi_k)}
    \, e^{2k\pi i t}.
\end{equation}
\end{example}

\begin{figure}[!ht]
\centerline{\footnotesize
\begin{tabular}{c c c c}
\includegraphics[height=2.6cm]{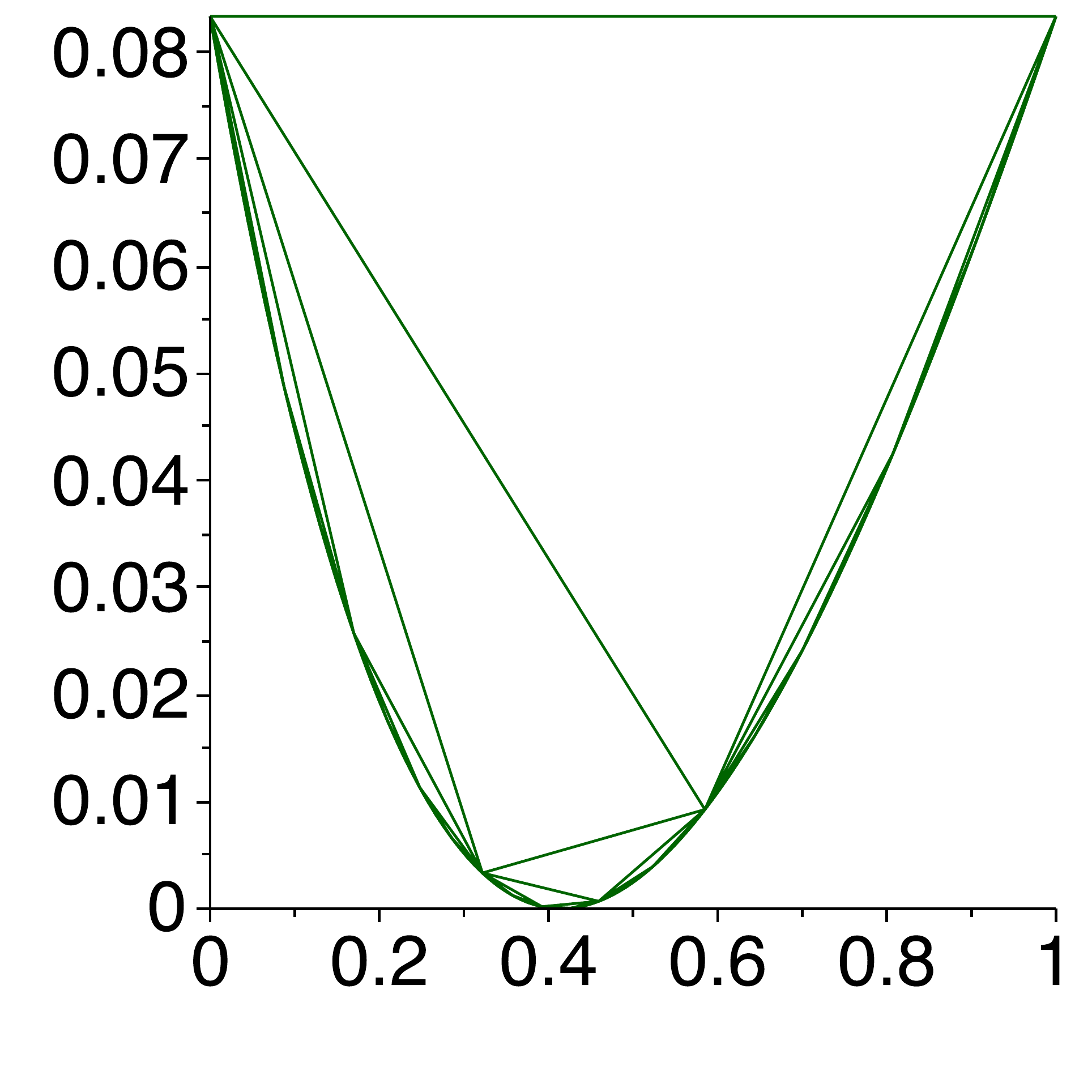} &
\includegraphics[height=2.6cm]{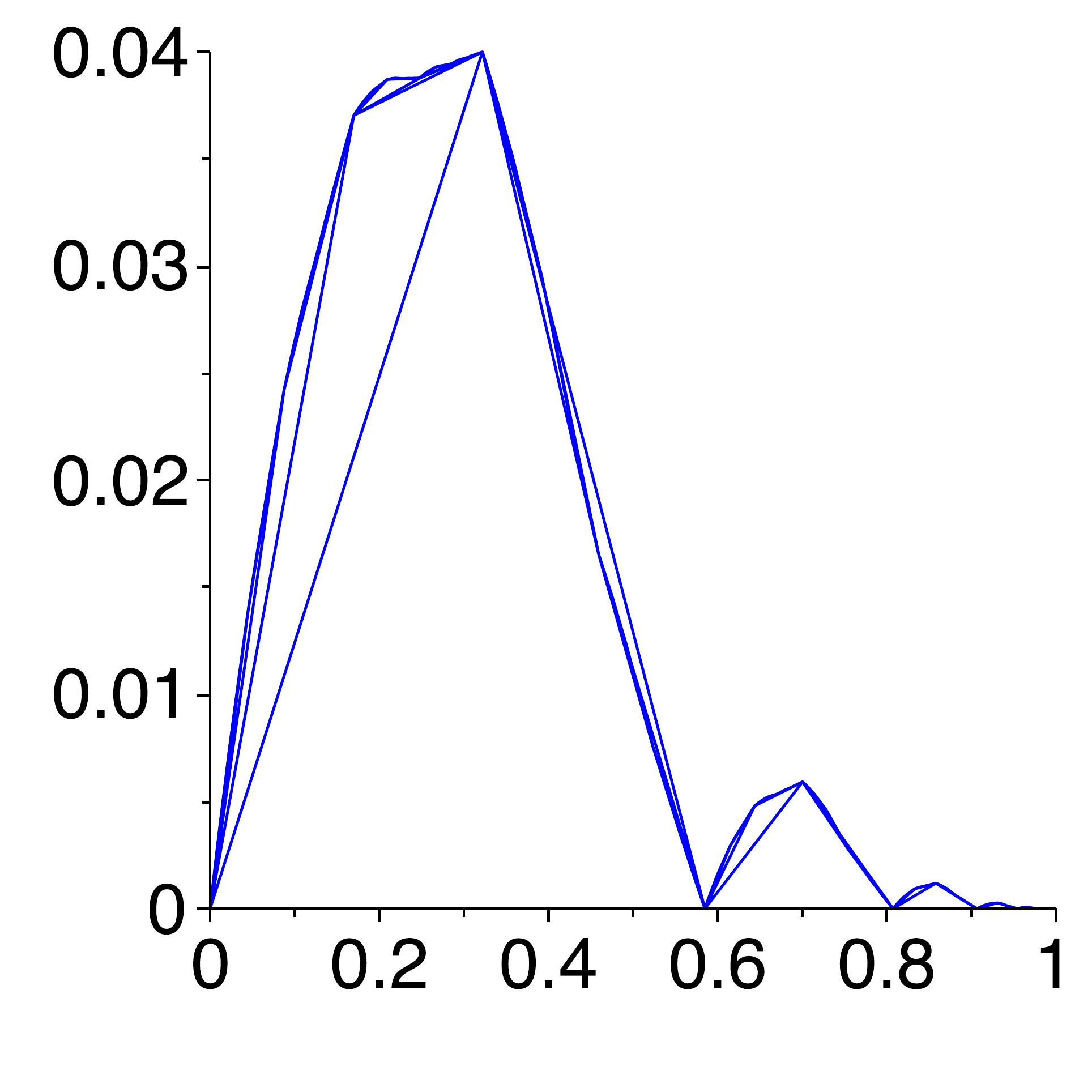} &
\includegraphics[height=2.6cm]{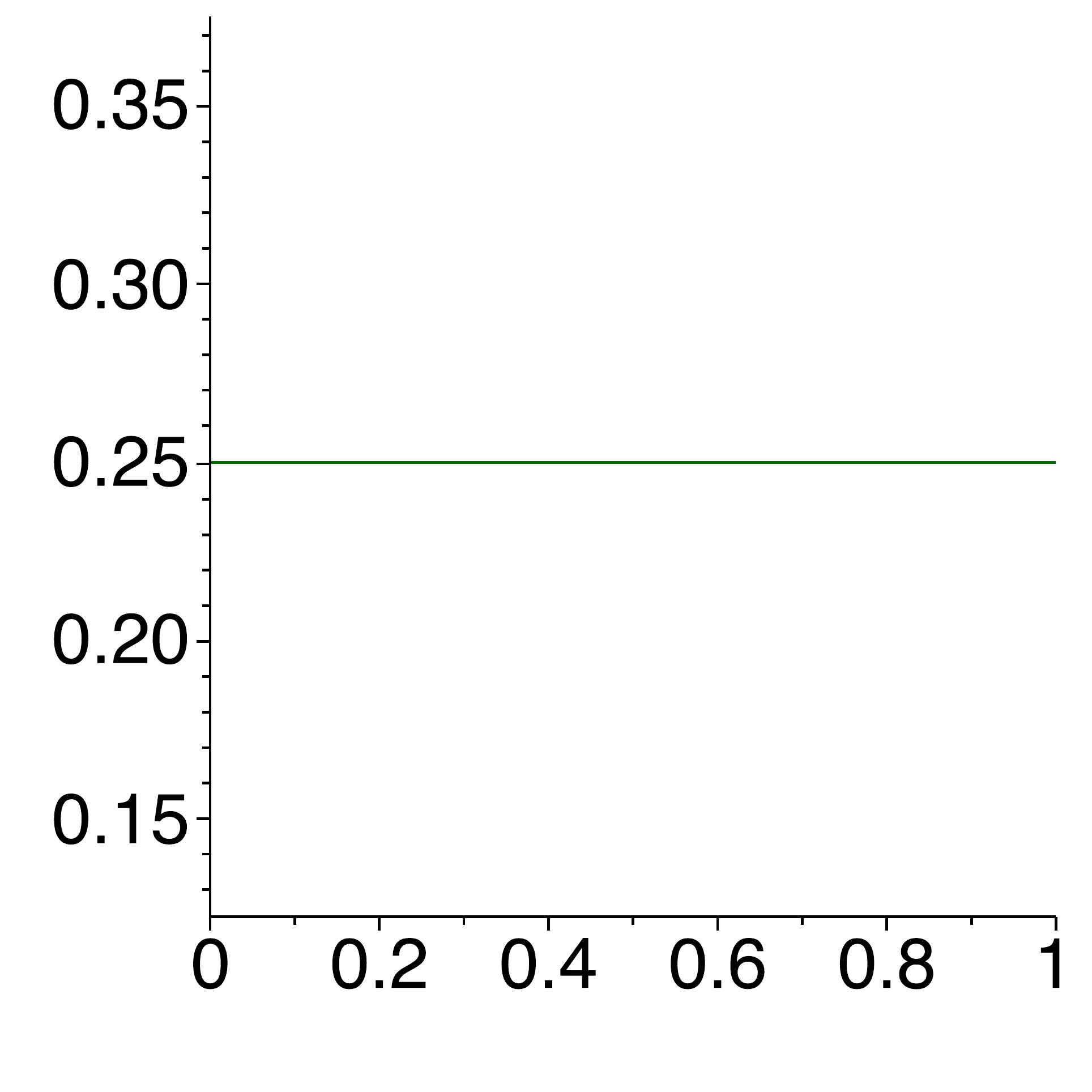} &
\includegraphics[height=2.6cm]{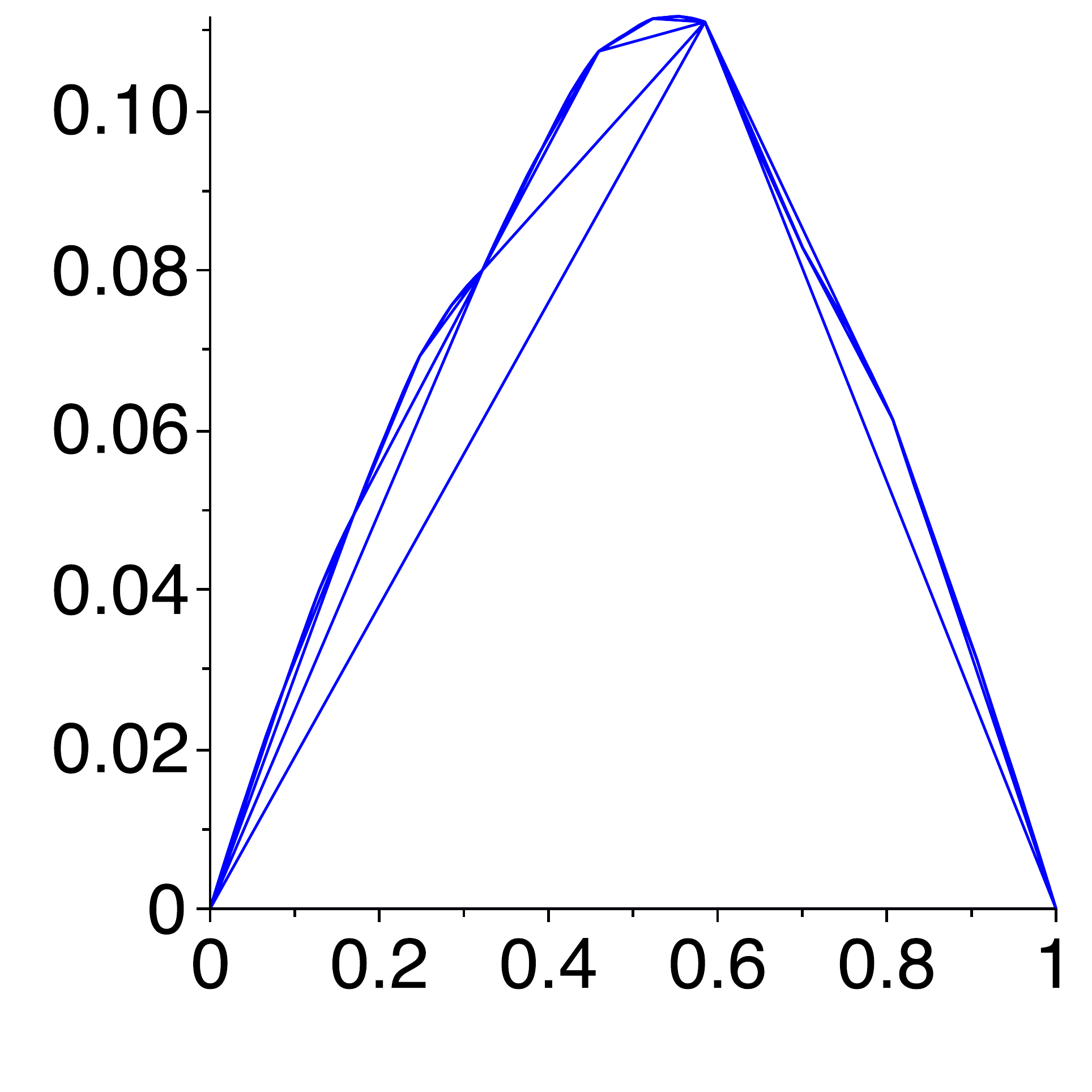} \\
$g(n)=\mathbf{1}_{n\bmod 4\equiv 0}$ &
$g(n)=\mathbf{1}_{n\bmod 4\equiv 1}$ &
$g(n)=\mathbf{1}_{n\bmod 4\equiv 2}$ &
$g(n)=\mathbf{1}_{n\bmod 4\equiv 3}$ \\
\includegraphics[height=2.6cm]{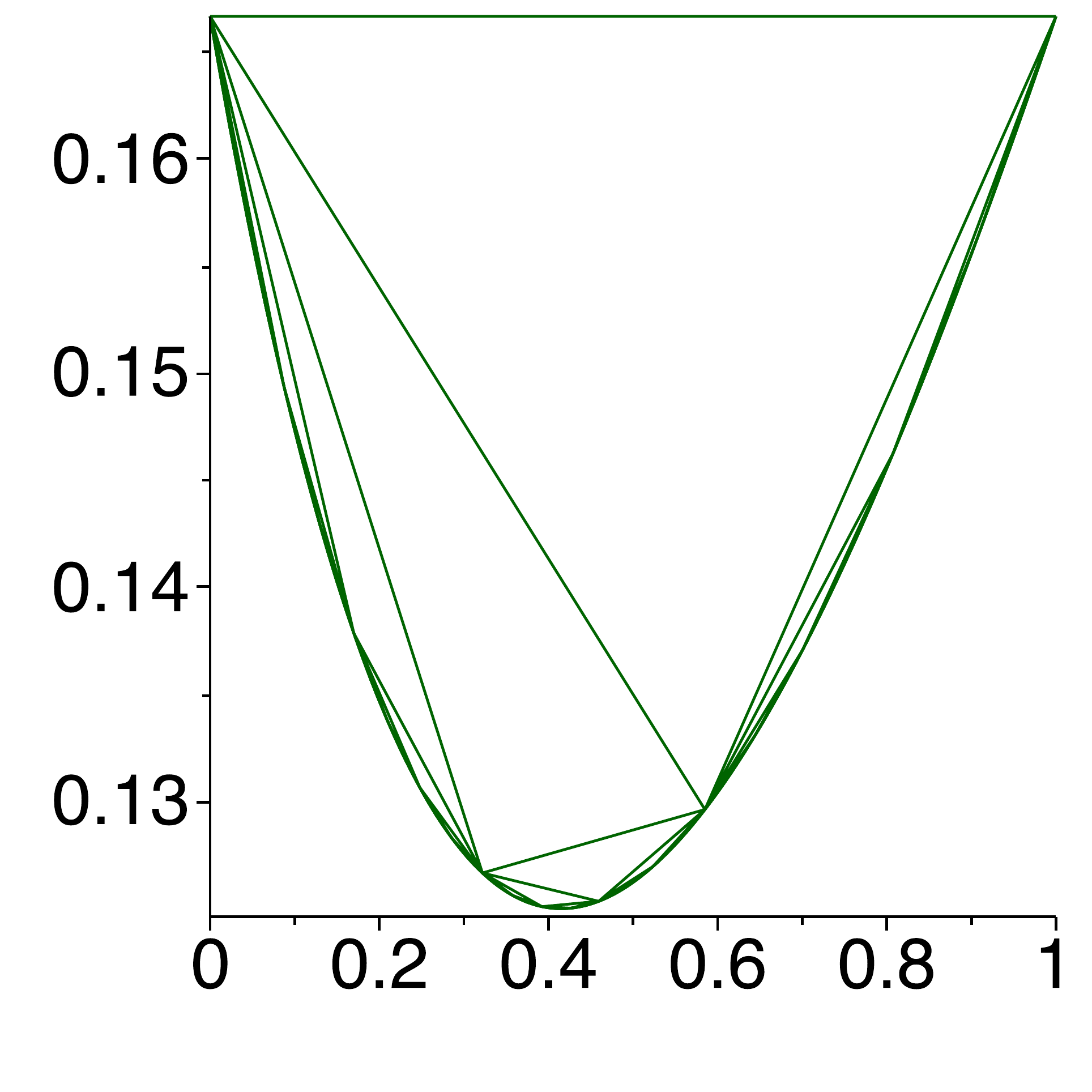} &
\includegraphics[height=2.6cm]{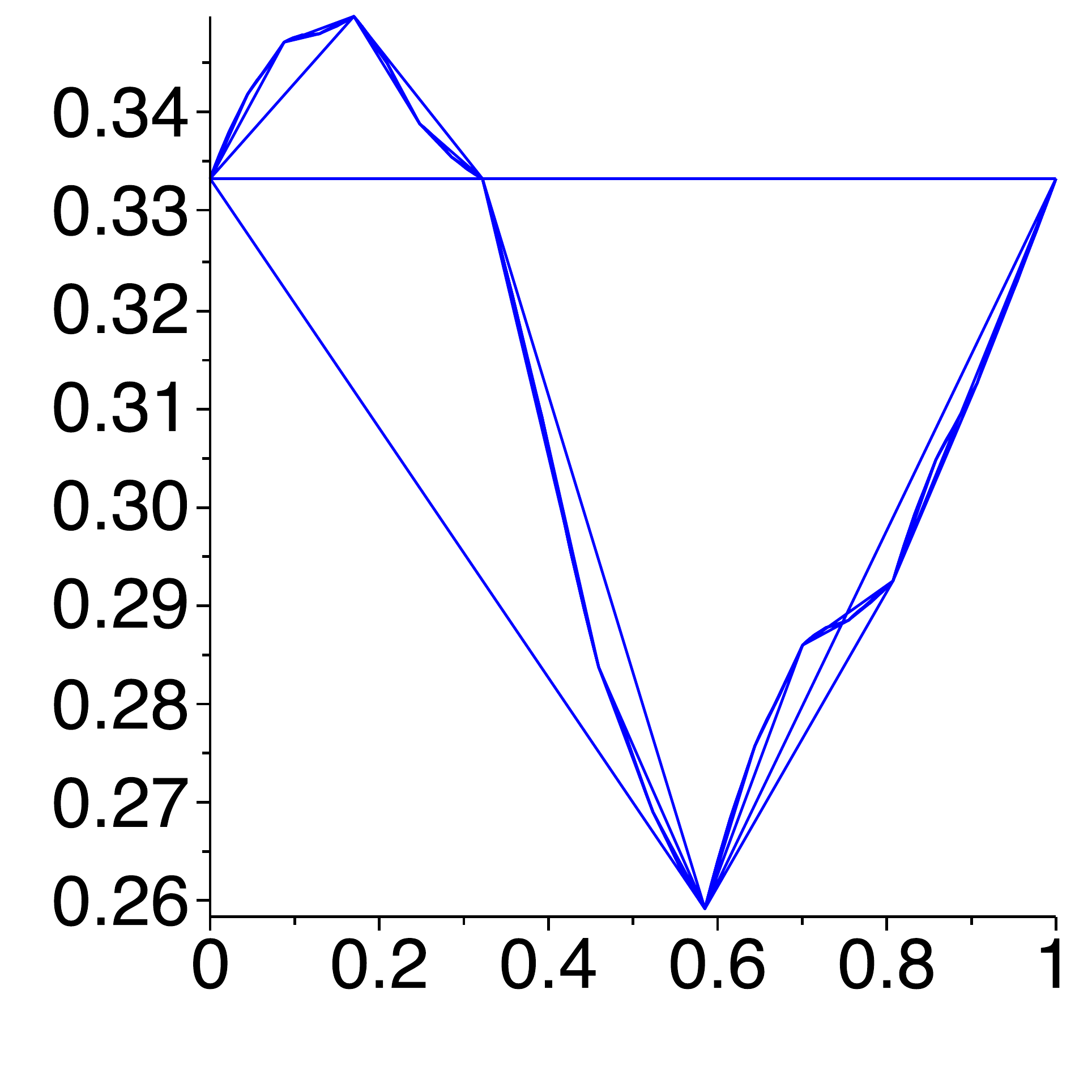} &
\includegraphics[height=2.6cm]{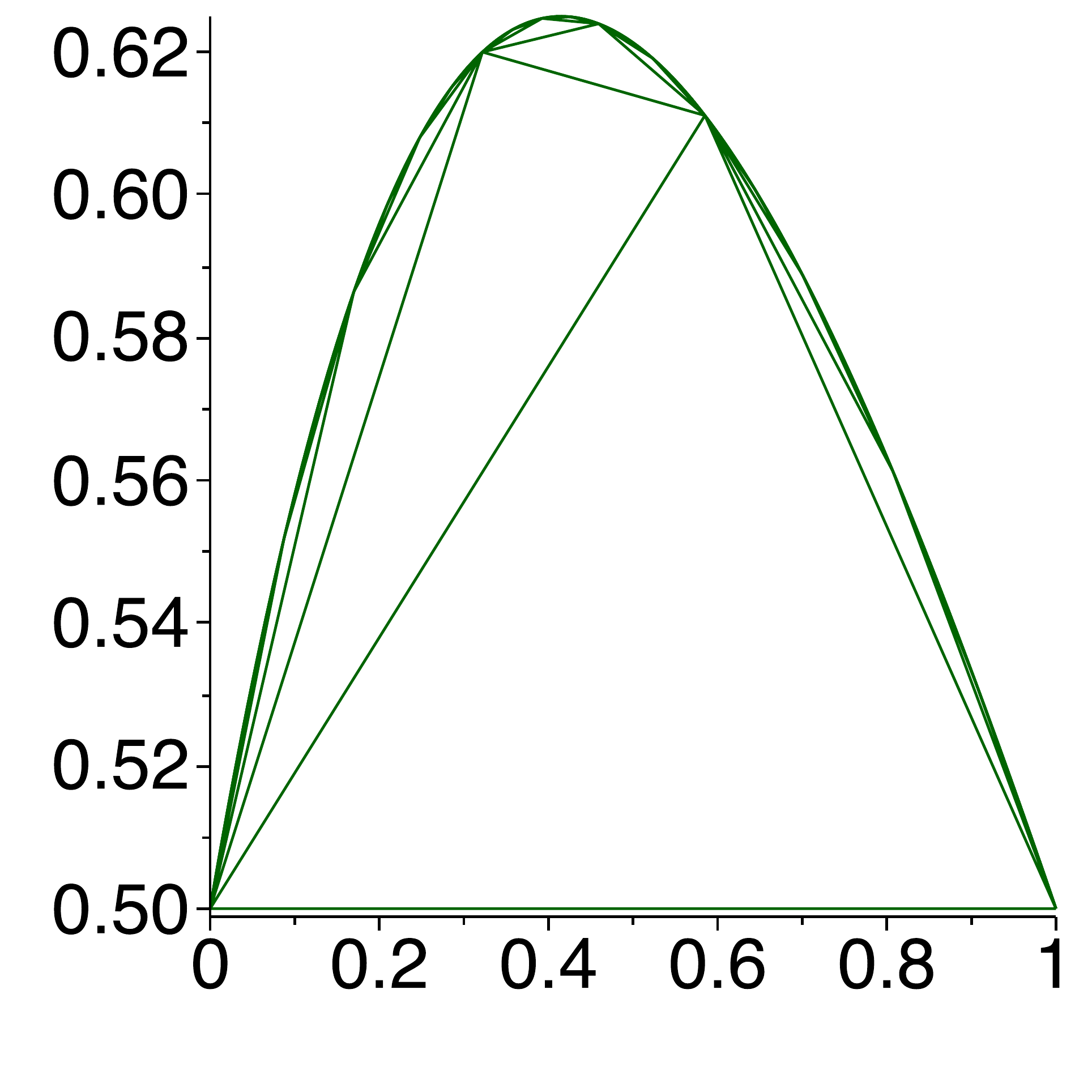} &
\includegraphics[height=2.6cm]{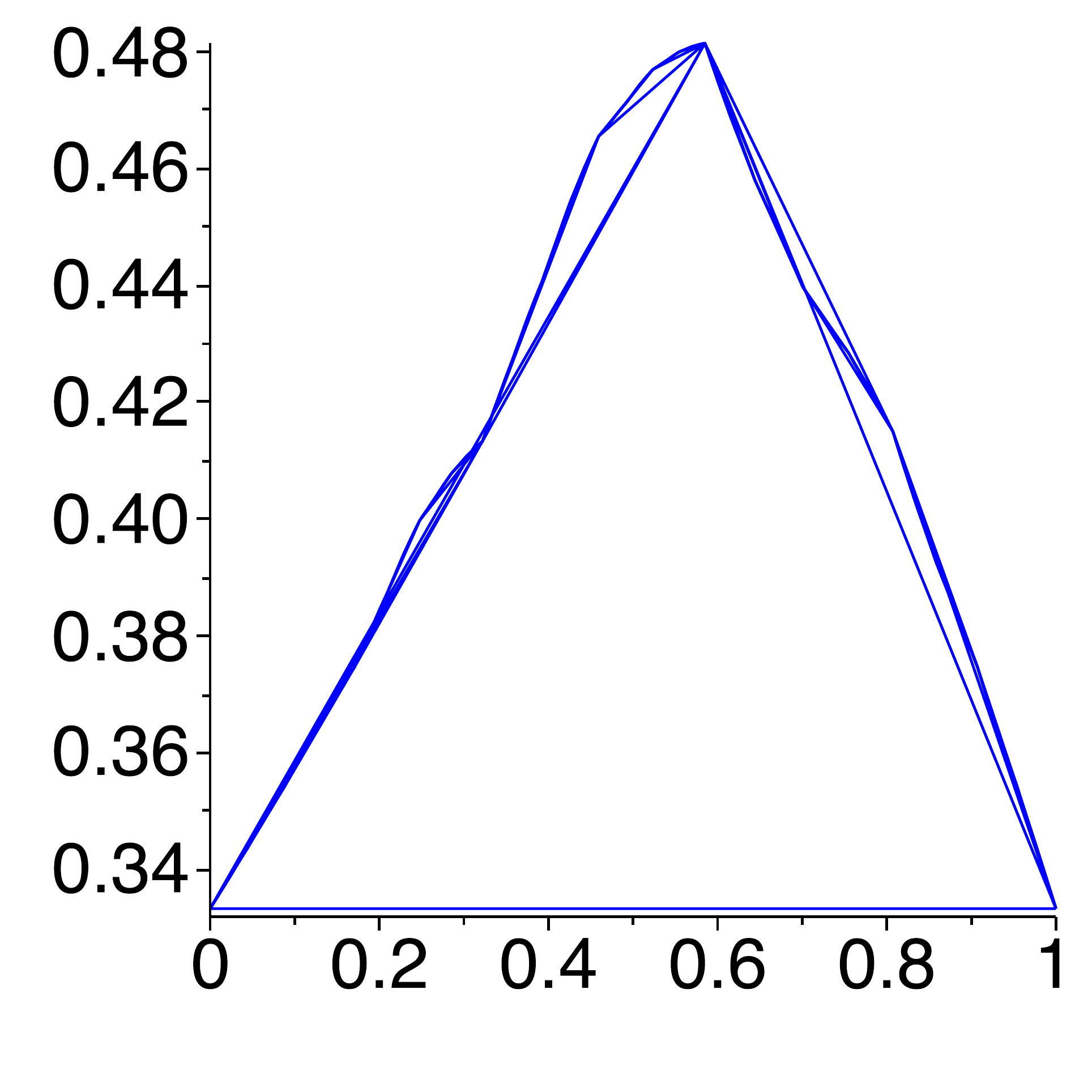} \\
$g(n)=1+\cos\lpa{\frac12n\pi}$ &
$g(n)=1+\sin\lpa{\frac12n\pi}$ &
$g(n)=1-\cos\lpa{\frac12n\pi}$ &
$g(n)=1-\sin\lpa{\frac12n\pi}$ \\
\end{tabular}}
\caption{{Sensitive dependence of the periodic functions on $g(n)$.}}
\label{fig-sensitivity}
\end{figure}

Consider finally the partial sum of A229762:
\begin{equation}
	f(n) 
	:= \sum_{0\le k<n}
	\lpa{\lpa{k \;\textsf{XOR}\; \tr{\tfrac k2}}\;
	\textsf{AND}\;\tr{\tfrac k2}}.
\end{equation}
Then $\Lambda_{2,2}[f]= \tr{\frac{n+1}4}$. Let $\bar{f}(n) := 
f(n)+\frac n4-\frac18\mathbf{1}_{n\text{ odd}}$. Then $\bar{f}(1) = 
\frac18$ and for $n\ge2$
\begin{equation}
    \Lambda_{2,2}[\bar{f}]
	=\frac{((n\bmod 4)-1)^2-1}8
	=\begin{cases}
    -\frac18, &\text{if }n \bmod 4 \equiv 1,\\
	\frac38, &\text{if }n \bmod 4 \equiv 3,\\
	0, &\text{otherwise}.
    \end{cases}
\end{equation}
We then deduce that $f(n) = n^2P(\log_2n) -\frac n4
+\frac18{\mathbf{1}_{n\text{ odd}}}$, where we have, using
\eqref{c4+} again, $P(t) = \frac12P_3(t)+\frac18$ 
with $P_3(t)$ given in \eqref{A048644}. 
%%\begin{equation}\label{asb}
%%	f(n) = n^2P(\log_2n)
%%	-\frac n4 +\frac{\mathbf{1}_{n\text{ odd}}}8,
%%\end{equation}
%%where we have,  using \eqref{c4+} again,
%%\begin{equation}\label{asa}
%%	P(t) = \frac{\pi}{16\log 2}-\frac18
%%	+\frac1{8\log 2}\sum_{k\ne0}
%%	\frac{\zeta(1+\chi_k,\frac14)-
%%	\zeta(1+\chi_k,\frac34)}{(1+\chi_k)(2+\chi_k)}
%%	\,e^{2k\pi i t}.
%%\end{equation}

Similarly, the partial sum of the sequence A229763 satisfies 
$\Lambda_{2,2}[f]= \frac12(1-(-1)^{\tr{\frac12n}})$ and can be dealt 
with by the same procedure.

\subsection{$(\alpha,\beta)=(4,4)$}
Similarly to the $(2,2)$ case, sequences in OEIS of this category
include digital sums connected to 2-D arrays and double sums
involving bitwise logic operators. Note that a degenerate case occurs
when $f(1)=0$ and $g(n) = 3n$ when $n$ is even and $g(n)=0$
otherwise; in this case, $f(n) = n^3-n$.

\begin{example}[Different sequences, same periodic 
	oscillations]\label{E44}
We consider the following eight OEIS sequences in which the first 
four have $g(n)$ involving $\sin\frac12 n\pi$ (of period $4$) while 
the last four have $g(n)$ depending simply on the parity of $n$. 

\begin{table}[!ht]
\begin{center}
\begin{tabular}{lllll}
\multicolumn{5}{c}{{}} \\
\multicolumn{1}{c}{OEIS id.} &
\multicolumn{1}{c}{Context} &
\multicolumn{1}{c}{$g(n)$} &
\multicolumn{1}{c}{$f(1)$} &
\multicolumn{1}{c}{$f(n)$}\\ 
\hline \Tstrut
A163242 & 2-D Gray code
& $\frac32\lpa{n-\sin\frac12n\pi}$
& $0$ & $n^3P(\log_2n)-\frac12n$\\
A163365 & Hilbert curve
& \makecell[l]{$\frac12\lpa{3+\cos n\pi}n$\\
  \;\;$-\sin\frac12n\pi$} 
& $0$ & $n^3P(\log_2n)-\frac23n$\\  
A163477 & Hilbert curve
& \makecell[l]{$\frac18\lpa{3+\cos n\pi}n$\\
  \;\;$-\frac14\sin\frac12n\pi$} 
& $0$ & $n^3P(\log_2n)-\frac16n$\\ 
A163478 & 2-D Gray code
& $\frac12\lpa{n-\sin\frac12n\pi}$ 
& $0$ & $n^3P(\log_2n)-\frac16n$
\Bstrut
\\
\hline\Tstrut          
A224923 & $\sum\limits_{1\le i,j<n}(i\textsf{ XOR }j)$
& $\tr{\frac12n^2}$ 
& $0$ & $n^{3}P(\log_2n)-\frac12n^2$ \\
A224924 & $\sum\limits_{1\le i,j<n}(i\textsf{ AND }j)$
& $\begin{cases}	
        \frac14\,{n}^{2}\\
        \frac14(n-1)(n-5) 
    \end{cases}$ 
& $0$ & $n^3P(\log_2n)-\frac14n^2$ \\
A241522 & game of Nim
& $\begin{cases}	
        0\\
        -3(2n-1)
    \end{cases}$ & $1$ & $n^3P(\log_2n)$ \\
A258438 & $\sum\limits_{1\le i,j<n}(i\textsf{ OR }j)$
& $\begin{cases}	
        \frac14n(7n-12)\\ 
        \frac14(n-1)(7n-11)
    \end{cases}$ 
& $0$ & $n^3P(\log_2n)-\frac74n^2+n$
\\ \hline
\end{tabular}
\caption{Sequences satisfying $\Lambda_{4,4}[f]=g$ 
(\refE{E44}).}
\label{tb-44}
\end{center}
\end{table}

\medskip

While it is visible that
$f_{\text{A163242}}(n)=3f_{\text{A163478}}(n)$ and
$f_{\text{A163365}}(n)=4f_{\text{A163477}}(n)$, it is less
transparent but can be proved by induction that
\begin{equation}
	\begin{split}
    f_{\text{A163477}}(n)
    &= \tfrac1{12}n(n^2-1)+\tfrac16f_{\text{A163242}}(n)\\
    f_{\text{A224924}}(n)
    &= \tfrac12n^2(n-1)-\tfrac12f_{\text{A224923}}(n)\\
    f_{\text{A241522}}(n)
    &= n^2(2n-1)-2f_{\text{A224923}}(n)\\
    f_{\text{A258348}}(n)
    &=\tfrac12n(n-1)(n-2)+\tfrac12f_{\text{A224923}}(n),
	\end{split}
\end{equation}
implying that these eight sequences lead indeed to only two 
periodically different ones: the first four and the last four. 

Also the Fourier expansions are given by 
\begin{align}\label{A163242}
    P_{\text{A163242}}(t)
    &= \frac{\zeta\lpa{2,\frac14}-
    \zeta\lpa{2,\frac34}}{32\log 2}
    +\frac{3}{16\log 2}\sum_{k\ne0}
    \frac{\zeta\lpa{2+\chi_k,\frac14}-
    \zeta\lpa{2+\chi_k,\frac34)}}
    {(2+\chi_k)(3+\chi_k)}
    \,e^{2k\pi i t}\\
    P_{\text{A224923}}(t)
    &= \frac{\pi^2}{24\log 2}
    +\frac{3}{2\log 2}\sum_{k\ne0}
    \frac{\zeta(2+\chi_k)}{(2+\chi_k)(3+\chi_k)}
    \,e^{2k\pi i t},
\end{align}
respectively. The two series are absolutely convergent. 
\end{example}
\begin{figure}[!ht]
\centerline{\footnotesize
\begin{tabular}{c c c}
\includegraphics[height = 3cm]{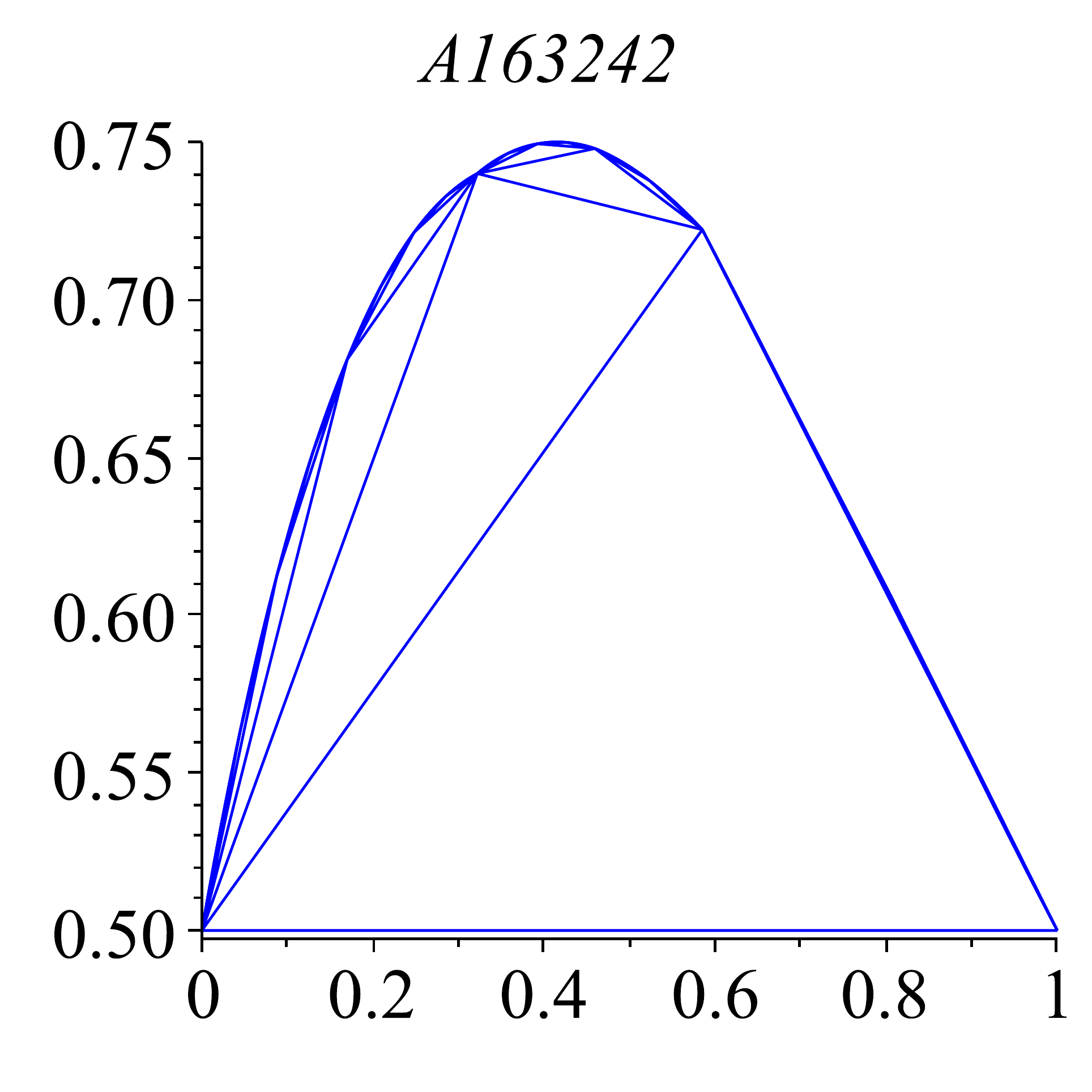} &
\includegraphics[height = 3cm]{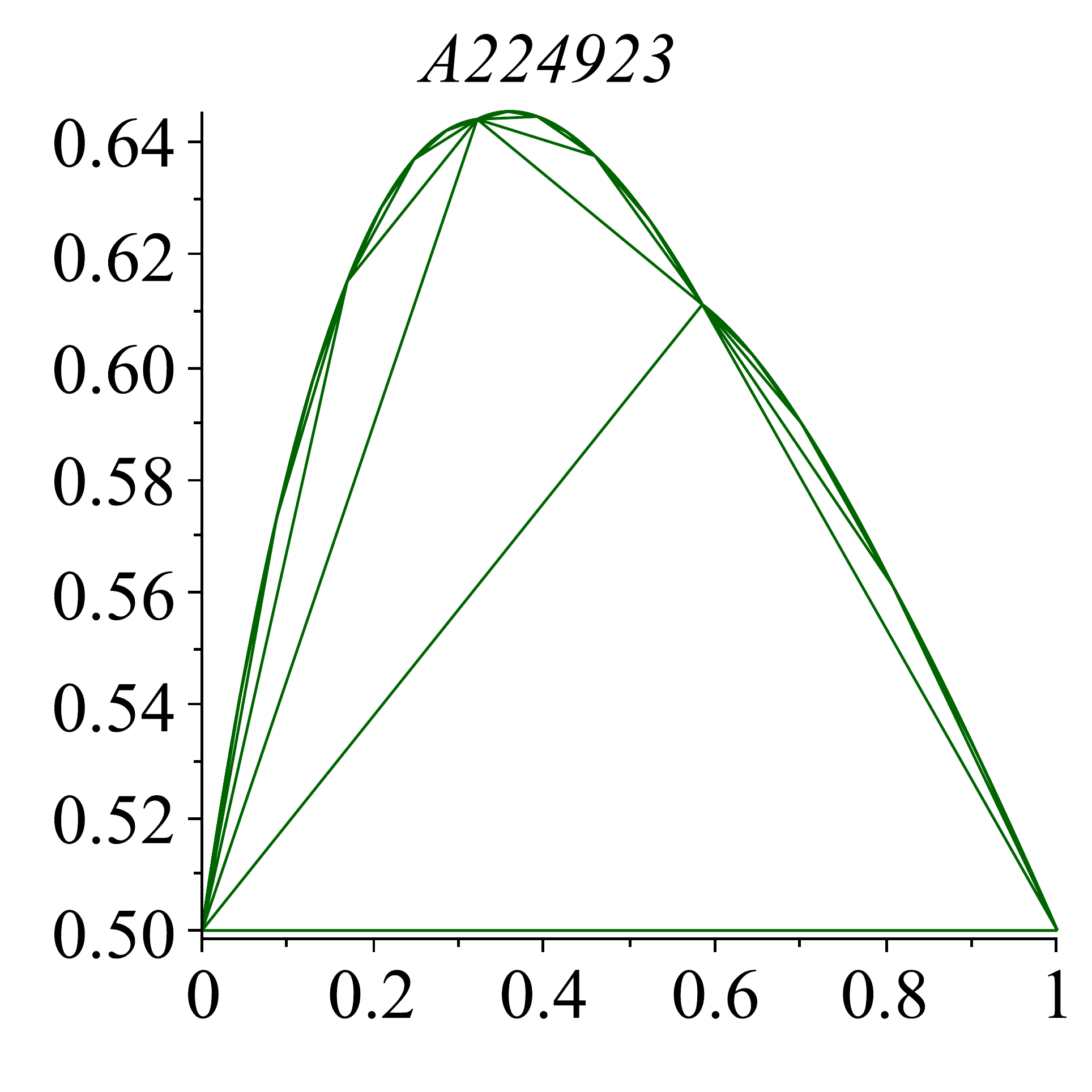} &
\includegraphics[height = 3cm]{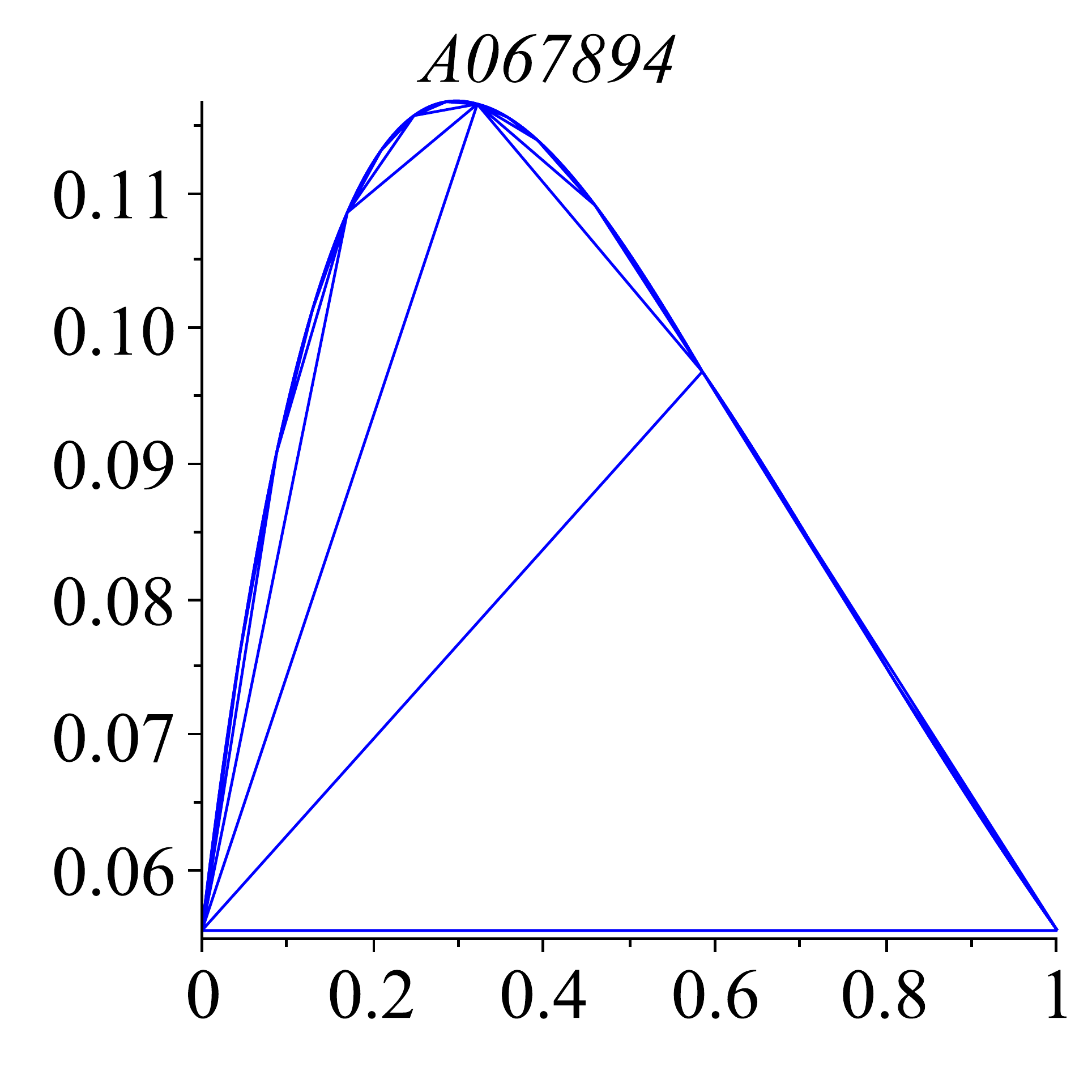} \\
A163242 & A224923 & A067894 \\
\end{tabular}}
\caption{{The periodic functions in the two representative $(4,4)$ 
cases (\refE{E44}) and A067894 (\refE{A067894}) as approximated by 
$n^{-\rho}(f(n)+Q(n))$ for $n=2^k+j$ with $k=0,1,\dots$ and $0\le 
j\le 2^k$.}} \label{fig-dab-4-4}
\end{figure}

\subsection{$(\alpha,\beta)=(10,10)$}

\begin{example}[A067894] \label{A067894}
Write $0, \dots, n-1$ in binary and add as if they were decimal
numbers. Then $f(1) = 0$ and
\begin{equation}\label{no0}
    \Lambda_{10,10}[f] = \tr{\tfrac n2}.
\end{equation}
The solution is
\begin{equation}
    f(n) = n^{\log_2(20)}P(\log_2n)-\tfrac1{18}n,
\end{equation}
In particular, we can derive the Fourier series expansion (as in
our previous paper) for $P_{\text{A067894}}(t)$:
\begin{equation}
    P_{\text{A067894}}(t) 
    = \frac4{5\log 2}\sum_{k\in\mathbb{Z}}
    \frac{\zeta(\rho-1+\chi_k)}{(\rho-1+\chi_k)(\rho+\chi_k)}
    \, e^{2k\pi i t},
\end{equation}
where $\rho=\log_220$. 

An extension by replacing $10$ by other values is discussed in 
\refE{EMoser}. 
\end{example}

\subsection{Partial sums of $\Lambda_{\alpha,0}[f]=g$}

One simple way to generate sequences satisfying
$\Lambda_{\alpha,\alpha}[f]=g$ with $\alpha=\beta$ is to consider the
partial sum $f(n) := \sum_{1\le k<n}h(k)$, where
$\Lambda_{\alpha,0}[h]=\xi$; then
\begin{equation}\label{erik}
    \Lambda_{\alpha,\alpha}[f] 
    = h(1)+\sum_{2\le k<n}\xi(k),
\end{equation}
for $n\ge2$. Such a sum (after normalising by $n$) gives the average
order and more smooth asymptotics than the original sequence (which
leads almost always to functions with discontinuities; see
\refS{Salphabeta=0} and \cite{Hwang2017} for more details). For
example, the partial sum of the sequence A006520 (satisfying
$\Lambda_{2,0}[h]=\lcl{\frac12n}$) gives A022560 discussed in
\refE{E22-ns} with $g(n)=\ltr{\frac14n^2}$.

In a similar manner, if we define $f(n) := \sum_{1\le k\le n}
h(k)$, where $\Lambda_{0,\beta}[h]=\xi$; then
\begin{equation}\label{er2}
    \Lambda_{\beta,\beta}[f] 
    = (1-\beta)h(1)+\sum_{2\le k\le n}\xi(k), 
\end{equation}
for $n\ge2$. Note that the equations $\Lambda_{\alpha,0}[h]=\xi$ and
$\Lambda_{0,\alpha}[h]=\xi$ are equivalent up to a shift of both $h$
and $\xi$; thus it suffices to consider only one of them.

The number of such sequences on OEIS exceeds several hundred after
removing sequences whose generating functions are rational (with all
singularities on the unit circle). So far we discussed only OEIS
sequences leading to $(1,1)$, $(2,2)$, $(4,4)$ and $(10,10)$, but in
such partial sum constructions, sequences with different values of
$\alpha$ are frequently found; see \refTab{tb-EMoser}.

\begin{example}[Partial sum of Moser-de Bruijn sequences:  $\alpha\ge0$] \label{EMoser}

The Moser-de Bruijn sequence A000695 consists of the integers whose
digits in base 4 are in $\para{0,1}$; equivalently, $\A000695(n)$ is
obtained by reading the binary representation of $n$ in base 4. This
sequence $h(n)$ satisfies $\Lambda_{4,0}[h]=\mathbf{1}_{n\text{
odd}}$. Hence, \eqref{erik} shows that the partial sum
$f(n):=\sum_{k<n} h(n)$ satisfies
\begin{equation}\label{er3}
    \Lambda_{4,4}[f](n) = \tr{\tfrac n2},
\qquad n\ge2.
\end{equation}

We can here replace 4 by any base $\ga>1$ (possibly non-integer);
more generally,  we can take any real $\ga$ and define, for any 
integer $n=\sum_{j\ge0}b_j 2^j$ with $b_j\in\para{0,1}$,
\begin{align}\label{moser}
	h\Bigpar{\sum_{j\ge0}b_j 2^j}:=\sum_{j\ge0}b_j\ga^j.
\end{align}
(One might also take complex $\ga$, but we leave that case to the
adventurous reader.) We then have the generating function
\begin{equation}
    \sum_{n\ge1}h(n)z^{n}
    = \frac1{1-z}\sum_{k\ge0}\frac{\ga^k z^{2^k}}
    {1+z^{2^k}}.
\end{equation}
\refTab{tb-EMoser} lists many of such ``$\ga$-Moser-de Bruijn
sequences'' that we found on OEIS. Note that $\ga=1$ gives
$\A000120(n)= \nu(n)$, the number of $1$s in the binary expansion of
$n$ (see \refE{Enu} and \eqref{nu-rec}), and that $\ga=2$ gives the
trivial case $\A000027(n)=n$.

Taking partial sums as above gives us a sequence $f(n):=\sum_{k<n} 
h(n)$ with
\begin{equation}\label{er4}
    \Lambda_{\ga,\ga}[f](n) 
	= \tr{\tfrac n2}, \qquad n\ge2.
\end{equation}
The resulting sequence $f$ is found in OEIS in the cases $\alpha=1$,
$2$, $10$, and $-1$, which give A000788, A000217 ($\binom{n}2$),
A067894 (Example~\ref{A067894}), and A005536 \cite[Example
7.1]{Hwang2017}, respectively.

\begin{table}[!ht]
\begin{center}
\begin{tabular}{ccccccc}\\   
$\alpha$ & $1$ & $2$ & $3$ & $4$ & $5$ & $6$ \\ \hline
OEIS & A000120 & A000027 & A005836 & A000695 & A033042 & A033043\\
$\alpha$ & $7$ & $8$ & $9$ & $10$ & $11$ & $12$ \\ \hline
OEIS & A033044 & A033045 & A033046 & A007088 & A033047& A033048	\\
$\alpha$  & $13$ & $14$ & $15$ & $16$ & $17$ & $18$\\ \hline
OEIS  & A033049 & A033050 & A033051 & A033052 & A197351 & A197352\\
$\alpha$  & $19$ & $20$ & $64$ & $100$ & $-1$ & $-2$\\ \hline
OEIS  & A197353 & A063012 & A135124 & A063010 & A065359 & A053985\\
\end{tabular}
\end{center}
\caption{$\ga$-Moser-de Bruijn sequences in OEIS (\refE{EMoser}).}
\label{tb-EMoser}
\end{table}

\paragraph{$\mathbf{\boldsymbol\alpha\boldsymbol>\frac12}$.}
For any $\ga>1$, we have, by \eqref{er4}, \eqref{t1z} and 
\eqref{t1Q}, 
\begin{align}\label{er5}
	f(n) 
	= n^{1+\log_2 \ga}P(\log_2n)-\frac{n}{2(\ga-1)}.
\end{align}
The same result holds for $\frac12<\ga<1$ by considering 
$f_1(n):=f(n)+\frac{1}{2(\ga-1)}n$, which satisfies
$\gL_{\ga,\ga}[f_1](n)=g_1(n):=-\{\frac n2\}$. 

\paragraph{$\mathbf{\boldsymbol\alpha\boldsymbol\le\frac12}$.}
If $0<\ga\le\frac12$, then $\rho\le0$, and even if we consider
$f_1(n)$ as above, the sum \eqref{t1Q} does not converge uniformly
since the individual terms do not converge uniformly to 0. Hence,
\refT{Theorem1} shows that \eqref{er5} cannot hold with a continuous
$P(t)$. In fact, if $s(x)$ is the sawtooth function that is defined
by $s(n):=\Iodd$ for $n\in\bbZ$, with linear interpolation between
the integers, then $g(x)=\frac12x-\frac12s(x)$ for all $x\ge0$, and
thus \eqref{b15} implies that, in view of $f(1)=0$,
\begin{align}\label{nn1}
	f(x)
	= \frac{x}{2(1-\ga)}
	-\frac12\sum_{k\ge0} (2\ga)^k s(2^{-k}x),
	\qquad x\ge0.
\end{align}
(This actually holds as soon as $|\ga|<1$.) If $0<\ga<\frac12$, then
the sum in \eqref{nn1} converges uniformly for all $x\ge0$ to a
bounded continuous function on $[0,\infty)$. Note that this function
is dominated by the first few terms in the sum, which are periodic
with small periods $(2,4,\dots)$. Hence $f(n)-\frac{n}{2(1-\ga)}$ can
be approximated arbitrarily well by a periodic function, without any
scaling; such behaviour is very different from the case $\ga>\frac12$
when we instead have a periodic function of $\log_2n$, scaled by
$n^\rho$; see also \refR{Rfast}.

If $\ga=\frac12$, the sum in \eqref{nn1} is $O(\log x)$ (for
$x\ge2$), but again, there is no smooth asymptotic behaviour. For
example, with $f_1(n):=f(n)-n$, we have
$f_1(\frac13(2^{2\ell}-2^{2k})) =-\frac23(\ell-k)+O(1)$ for $0\le
k<\ell$; taking $0\le k\le \frac12 \ell$, say, shows a rather large
variation on a relatively small interval of length $O(n\qq)$. Note
also $f_1(2^k)=-1$ for all $k\ge0$.

The case $\ga=0$ is trivial, with $f(n)=g(n)=\floor{\frac{n}2}$. The
case $\ga<0$ will be discussed in \refE{EMoser-}.

\paragraph{Fourier expansion.}
We have, by \eqref{D2} with $g(n) = \floor{\frac n2}$,
when $\Re s$ is large enough,
\begin{align}\label{per1}
	D(s)
	=\sum_{n\ge1}(\Iodd-\Ieven)n^{-s}
	=(1-2^{1-s})\zeta(s).
\end{align}
It follows by \refC{ChPa=b} that if $\ga>\frac12$ with
$\alpha\ne1,2$, then the periodic function $P(t)$ in \eqref{er5} has
the Fourier expansion 
\begin{equation}\label{per2}
    P(t) 
    = \frac{\alpha-2}{\alpha\log 2}
	\sum_{k\in\mathbb{Z}}
	\frac{\zeta(\rho-1+\chi_k)}
	{(\rho-1+\chi_k)(\rho+\chi_k)}\,e^{2k\pi i t}. 
\end{equation}
Alternatively, by \eqref{mel4}, $f(1)=0$, \eqref{majs} and 
\eqref{per1}, we derive the Mellin transform of $f(x)$:
\begin{align}\label{per3}
	f^*(s)
	=\frac{(1-2^{s+2})\zeta(-s-1)}{(1-\ga2^{s+1})s(s+1)},
	\qquad \Re s<\min(-1, -\rho).
\end{align}
This extends to all real $\ga$, with $\rho:=\log(2|\ga|)$ as in
\refE{EMoser-} when $\ga\le0$. Hence, Mellin inversion yields (after
a change of variable) the integral formula
\begin{equation}\label{per4}
    f(n) = \frac1{2\pi i}
    \int_{\gs-i\infty}^{\gs+i\infty}
    \frac{(1-2^{1-s})\zeta(s)n^{s+1}}
    {s(s+1)(1-\alpha 2^{-s})}\dd s,
\end{equation}
for any $\ga$ and any $\gs>\max(\rho-1,0)$. This provides by standard
methods an alternative proof of several of the results above. 

In particular, when $\ga=2$, we have $f(n)=\binom n2$ and 
$P(t)=\frac12$. When $\ga=1$, we can use \refT{TB1} and conclude that 
\begin{align}
	f(n)=\tfrac12n\log_2n+nP(\log_2n), 
\end{align}
where $P(t)$ is the Trollope--Delange fractal function (see 
\cite{Chen2014,Delange1975}):
\begin{equation}\label{per21}
    P(t) = \frac12\log_2\pi-\frac14-\frac1{2\log 2}
	-\frac1{\log 2}\sum_{l\ne0}\frac{\zeta(\chi_k)}
	{\chi_k(\chi_k+1)}\,e^{2k\pi i t}.
\end{equation}

\paragraph{Convergence of Fourier series.}
For a fixed real $\gs$, and all real $t$ with $|t|\ge2$, it is known
that
\begin{equation}\label{zeta}
|\zeta(\gs+it)|=
  \begin{cases}
\Theta(|t|^{\frac12-\gs}), & \gs<0,\\
O(|t|^{\frac12}\log|t|), & \gs=0,\\
O\bigpar{|t|^{\frac12-\frac\gs2}},&0<\gs<1,\\
O(\log|t|), & \gs=1,\\
\Theta(1),&\gs>1,    
  \end{cases}
\end{equation}
see, e.g., 
\cite[Chapter 5]{Titchmarsh1951}
or \cite[Section 1.5]{Ivic1985}. 
(Stronger results are known for $0\le\gs\le1$; the best
exponents are still not known for $0<\gs<1$.)
Hence, the Fourier series in \eqref{per2} and
\eqref{per21} are absolutely convergent if and only if
$\rho>\frac12$, i.e., if $\ga>\sqrthalf$. Thus, if
$\frac12<\ga<\sqrthalf$, then the Fourier series is \emph{not}
absolutely convergent. (As a check, we note that \eqref{zeta}
verifies that the Fourier coefficients are in $\ell^2$ for every
$\ga>\frac12$ ($\rho>0$), which is obvious from Parseval's formula.
On the other hand, for $0<\ga\le\frac12$, \eqref{per2} is not the
Fourier series of any $L^2$ function.)
\end{example}

\section{Extension to nonpositive $\ga$ or $\gb$}
\label{S:nonpositive}

We have so far considered the recurrence \eqref{a1}, or, equivalently
\eqref{b1}, with $\ga,\gb>0$. Here we discuss rather briefly $\ga$
and $\gb$ with other sign combinations. The case $\ga=\gb=0$ is
trivial, with \eqref{a1} reduced to $f(n)=g(n)$, and is ignored in
the sequel.

\subsection{Recurrences with $\alpha=0$ or $\beta=0$}
\label{Salphabeta=0}

The situation when $\alpha$ or $\beta$ equals zero is very similar to
the special cases $(\ga,\gb)=(2,0)$ and $(0,2)$ discussed in our
previous paper \cite{Hwang2017}, and we give only some brief
comments. Such sequences abound in OEIS; see \refTab{tb-EMoser} for
a few examples of one kind.

For any $\ga\neq0$, \eqref{b4} is solved by $\gf_{\ga,0}(t)=0$,
$t\in[0,1)$, and thus \eqref{b2} yields $f(x)=f(\floor x)$.
Similarly, for $\gb\neq0$, $\gf_{0,\gb}(t)=1$, $t\in(0,1]$, and thus
$f(x)=f(\ceil x)$. The main difference from the case $\ga,\gb>0$ is
that now $\gf$ is discontinuous (at an endpoint), and thus $f(x)$ is
discontinuous except in trivial cases, In the cases $(\ga,0)$ with
$\ga>0$ and $(0,\gb)$ with $\gb>0$, it is easily verified that
\refT{Theorem1} holds with modifications similar to the special cases
in \cite[Theorems 4 and 5]{Hwang2017}; note that the periodic
function $P(t)$ now is discontinuous except in trivial cases. If
$\ga<0$ or $\gb<0$, this holds with further modifications as in
\refS{SSga<0gb<0} below. We omit the details.

\subsection{Recurrences with both $\alpha$ and $\beta$ 
negative}\label{SSga<0gb<0}

In this section, we consider the case when $\ga$ and $\gb$ both are
negative; see \refTab{tb-E<0<0} for a few examples from OEIS
(discussed below). We thus assume the recurrence \eqref{a1}, or,
equivalently, \eqref{b1} with $\ga,\gb<0$. In this case we define
\begin{align}\label{rho-}
	\rho:=\log_2(|\ga|+|\gb|).  
\end{align}
Thus $\ga+\gb=-2^\rho$. As above, we define $g(1):=0$. We also extend
the definition of $\gf_{\ga,\gb}(t)$ to negative values of $\ga, \gb$
by $\gf_{\ga,\gb}(t):=\gf_{|\ga|,|\gb|}(t)$, and note that then
\eqref{b4} holds. The proof of \refL{Lemma1} applies, \emph{mutatis
mutandis}, and shows that \eqref{b3} holds in this case too. It
follows that the theory developed in Sections
\ref{S:recurrence}--\ref{S:smooth} extends to this case, but with an
important modification.

We say that a function $P(t)$ on $\bbR$ is \emph{$1$-antiperiodic} if
$P(t+1)=-P(t)$ for all $t\in\bbR$. In other words, $P(t)$ is
$1$-antiperiodic if and only if $e^{\pi it}P(t)$ is $1$-periodic.

Note that every $1$-antiperiodic function is $2$-periodic. Moreover, a
$1$-antiperiodic function that is integrable on $\oi$ (and thus on
any compact interval) has a Fourier series that can be written
\begin{align}\label{fou-}
	P(t)\sim  \sum_{k\in\bbZ} \hP(k+\tfrac12)e^{(2k+1)\pi\ii t}.
\end{align}

We collect the most important results in the following theorem,
leaving the others in Sections \ref{S:recurrence}--\ref{S:smooth} to
the reader.

\begin{thm}\label{T-}
Suppose that $\ga,\gb<0$, and let $\rho:=\log_2(|\ga|+\gb|)$. Then
\refT{Theorem1} holds with the modifications that $1$-periodic is
replaced by $1$-antiperiodic, and that \eqref{t1Q} and \eqref{t1y}
are replaced by
\begin{align}\label{t1Q-}
	Q(x):=%\sum_{m\ge 1}(-1)^m2^{-\rho m}g(2^{m}x)  
	\sum_{m\ge 1}(-2^{-\rho})^{m}g(2^{m}x)  
	=\sum_{m\ge 1}(\ga+\gb)^{-m}g(2^{m}x),
\end{align}
and
\begin{equation}\label{t1y-}
    P(t)
    =\sum_{m\in \mathbb{Z}}(-1)^m2^{-\rho (m+t)}g(2^{m+t})
    +f(1)P_0(t), \qquad t\in\mathbb{R},
\end{equation}
where now
\begin{equation}\label{b16-}
	P_0\left( t\right)
	:= (-1)^{\floor{t}}
	\bigpar{ 1+( \alpha +\beta-1) 
	\varphi\lpa{2^{\{t\}}-1}}
	2^{-\rho\{t\}}.
\end{equation}
Moreover, \refT{Theorem4} also holds, with the condition
$|\ga|+|\gb|>1$ in part \ref{T4c}, and with \eqref{c4} replaced by
\begin{equation}\label{c4-}
    \widehat{P}(k+\tfrac12)
    =\frac{1}{\log 2}\int_{1}^{\infty }
    \frac{g(u)}{u^{\rho + \chi'_{k}+1}}\dd u
    +\frac{f(1)}{\log 2}\int_{0}^{1}
    \frac{1+(\alpha +\beta -1)\varphi(u)}
    {(1+u)^{\rho +\chi'_{k}+1}}\dd u,
\end{equation}
for $k\in\bbZ$, where
\begin{align}\label{chi'k}
	\chi'_{k}
	:=\frac{(2k+1)\pi }{\log 2}\ii,
	\qquad  k\in\bbZ.
\end{align}
If $\ga=\gb$, then,  in analogy with \refC{ChPa=b}, \eqref{c4-} 
simplifies to
\begin{align}\label{c4+-}
	\hP(k+\tfrac12) 
	&= \frac{1}{(\rho+\chi'_k)(\rho-1+\chi'_k)\log 2}
	\Bigpar{D(\rho-1+\chi'_k) 
	+ \frac{(2\ga-1)(\ga-1)}{\ga}f(1)},
\end{align}
where $D(s)$ is defined in \eqref{D1}.
\end{thm}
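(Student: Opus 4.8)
The plan is to reduce Theorem~\ref{T-} to the already-established results for the positive case by a suitable change of variable, so that essentially no new work is required beyond bookkeeping. The key observation is that when $\ga,\gb<0$ the recurrence \eqref{b1} reads
\begin{align*}
	f(2n)&=(\ga+\gb)f(n)+g(2n)=-2^\rho f(n)+g(2n),\\
	f(2n+1)&=\ga f(n)+\gb f(n+1)+g(2n+1),
\end{align*}
so the sign alternation is governed purely by the factor $(\ga+\gb)=-2^\rho$. First I would verify, exactly as claimed before the theorem, that the proof of \refL{Lemma1} goes through \emph{mutatis mutandis} with $\gf_{\ga,\gb}:=\gf_{|\ga|,|\gb|}$: in the computation \eqref{b5}--\eqref{b6} the constants $\ga,\gb,\ga+\gb$ are carried symbolically, so the identity $f(x)=(\ga+\gb)f(x/2)+g(x)$ holds verbatim for $x\ge2$. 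Iterating this gives the analogue of \eqref{b10},
\[
	f(x)=\sum_{0\le k<m}(\ga+\gb)^k g(2^{-k}x)+(\ga+\gb)^m f(2^{-m}x),\qquad 0\le m\le L_x,
\]
and since $(\ga+\gb)^k=(-1)^k 2^{\rho k}$, the extra sign $(-1)^k$ is the only new feature.

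Next I would re-run the proof of \refL{Lemma3}. The computation of \eqref{b16a} now produces the extra factor $(-1)^{L_x}=(-1)^{\floor{\log_2 x}}$ when we substitute $m=L_x$, which is exactly the source of the $(-1)^{\floor t}$ in \eqref{b16-}; since this factor is $2$-periodic and changes sign under $t\mapsto t+1$, the resulting $P_0$ is $1$-antiperiodic rather than $1$-periodic. Then I would redo \refP{Proposition1} and \refP{Proposition2} with $\rho$ as in \eqref{rho-} and with all occurrences of $2^{-\rho k}$ replaced by $(-2^{-\rho})^k=(\ga+\gb)^{-k}$; the crucial point is that the doubly-infinite sum $\sum_{m\in\bbZ}(-1)^m 2^{-\rho(m+t)}g(2^{m+t})$, which defines $P$ in \eqref{t1y-}, satisfies $P(t+1)=-P(t)$ because shifting $m$ by $1$ flips the sign. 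The uniform-convergence equivalence \refT{Theorem1}\ref{Theorem1c} is unaffected since $|(\ga+\gb)^{-m}|=2^{-\rho m}$, so the $o$-estimates and the identity $f(x)=x^\rho P(\log_2 x)-Q(x)$ survive unchanged in absolute value.

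For the Fourier part, I would observe that a $1$-antiperiodic integrable $P$ is exactly a function of the form $e^{\pi i t}\tilde P(t)$ with $\tilde P$ $1$-periodic, so its natural Fourier data are the half-integer coefficients $\hP(k+\tfrac12)=\int_0^1 P(t)e^{-(2k+1)\pi i t}\dd t$; this is \eqref{fou-}. Repeating the calculation \eqref{c41} in the proof of \refT{Theorem4}, but integrating $P(t)e^{-(2k+1)\pi i t}$ against the representation \eqref{t1y-}, the sign $(-1)^m$ combines with $e^{-(2k+1)\pi i m}=(-1)^m$ to give $1$, so the telescoping of the sum over $m$ into a single integral over $(-\infty,\infty)$ goes exactly as before, now with exponent $\chi'_k$ in place of $\chi_k$; the change of variables $u=2^t$ and $u=2^t-1$ then yields \eqref{c4-}. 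Parts \ref{T4b} and \ref{T4c} transfer verbatim using \refL{Lemma4}, \refL{LProp1} and Zygmund's theorem (which is insensitive to whether the index set is $\bbZ$ or $\bbZ+\tfrac12$), giving the absolutely convergent antiperiodic Fourier series under $|\ga|+|\gb|>1$ and \eqref{c3}. Finally, when $\ga=\gb$ one has $\gf(u)=u$, $2^{\rho+\chi'_k}=2^\rho=2\ga$ just as in the proof of \refC{ChPa=b} (the identity $2^{\chi'_k}=-1$ does not disturb $2^{\rho+\chi'_k}=-2\ga\cdot(-1)$... more carefully, $2^{\chi'_k}=e^{(2k+1)\pi i}=-1$, so $2^{\rho+\chi'_k}=-2\ga$), and the two explicit integrals \eqref{maja}--\eqref{maj0} are evaluated with $\chi'_k$ replacing $\chi_k$, producing \eqref{c4+-}. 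I expect the only genuinely delicate point to be tracking the sign $2^{\chi'_k}=-1$ through the elementary integrals in the $\ga=\gb$ case and confirming that the partial-fraction manipulation still gives the stated closed form; everything else is a routine substitution of $\chi'_k$ for $\chi_k$ and of $(-1)^m 2^{-\rho m}$ for $2^{-\rho m}$.
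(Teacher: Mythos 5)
Your proposal is correct and follows essentially the same route as the paper, which likewise just re-runs \refL{Lemma3} and Propositions \ref{Proposition1}--\ref{Proposition2} with the extra factors $(-1)^m$ (so that \eqref{bb1} becomes $h(2^my)=(\ga+\gb)^m G_m(y)$) and then substitutes $\chi'_k$ for $\chi_k$ in the Fourier computation. The only slip is in your final parenthetical on the $\ga=\gb$ case: since $2^\rho=2|\ga|=-2\ga$ and $2^{\chi'_k}=-1$, one gets $2^{\rho+\chi'_k}=+2\ga$ (not $-2\ga$), which is exactly the same value as in the positive case and is precisely why \eqref{maj0} carries over verbatim with $\chi_k$ replaced by $\chi'_k$, yielding \eqref{c4+-}.
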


\begin{proof}
The proof follows by the same arguments used in Sections
\ref{S:recurrence}--\ref{S:smooth}, with minor modifications. In
particular, \refL{Lemma3} and Propositions
\ref{Proposition1}--\ref{Proposition2} hold with $P_0$ and $Q$ as
above, $1$-periodic replaced by $1$-antiperiodic, and extra factors
$(-1)^k$ or $(-1)^m$ in the sums in \eqref{b15}, \eqref{b12},
\eqref{b13} and in the definition of $G_m(x)$. Note that \eqref{bb1}
now becomes
\begin{align}\label{bb1-}
    h\bigpar{2^my}
    =(-1)^m2^{m\rho}G_m(y)
	=(\ga+\gb)^mG_m(y).
\end{align}

For the proof of \eqref{c4+-}, note that \eqref{maj0} holds with
$\chi_k$ replaced by $\chi'_k$, since now $2^{\rho+\chi'_k} 
=-2^\rho=2\ga$.
\end{proof}

\begin{remark}\label{Rlog4}
Thus $P(\log_2n)$ in our formulas now is an $1$-antiperiodic function
of $\log_2n$, which implies that it is a $2$-periodic function of
$\log_2n$, or, equivalently, a $1$-periodic function of
$\frac12\log_2n=\log_4n$.
\end{remark}

\begin{remark}
Most of the formulas in Sections \ref{S:recurrence}--\ref{S:smooth}
hold verbatim for $\ga,\gb<0$ if we instead replace $\rho$ by the
complex logarithm $\rho_{\mathsf c}:=\log_2(\ga+\gb)=\rho+\pi\ii/\log
2$. However, this seems less convenient for applications.
\end{remark}

\begin{example}\label{Eg=0-}
Consider the basic case $g(n)=0$, $f(1)=1$ as in \refE{Eg=0}; we
again denote the solution $f(n)$ by $S_{\ga,\gb}(n)$. \refT{T-} shows
that
\begin{align}\label{Sgagb-}
	S_{\ga,\gb}(n) 
	= n^{\rho} P_0(\log_2 n),\qquad n\ge1,
\end{align}
where now $P_0(t)$ is given by \eqref{b16-}. It follows that
\eqref{Sgagb2} still holds, which also follows because for fixed $n$,
$S_{\ga,\gb}(n)$ is a polynomial in $\ga,\gb\in\bbR$, as is
$\varphi\lpa{2^{\para{\log_2n}-1}}(\alpha +\beta)^{\floor{\log_2 n}}$
by \eqref{b8}.

In the special case $\ga=\gb<0$, \eqref{c4+-} shows that, in analogy
to \eqref{maj1},
\begin{align}\label{maj1-}
	\hP_0(k+\tfrac12)&
	=\frac{(2\ga-1)(\ga-1)}{\ga\log2}
	\cdot\frac{1}{(\rho+\chi'_k)(\rho-1+\chi'_k)},
	\qquad k\in\bbZ,
\end{align}
and, thus (see \eqref{fou-}), $P_0(t)$ has the absolutely convergent
Fourier series
\begin{align}\label{per-}
	P_0(t)=
	\frac{(2\ga-1)(\ga-1)}{\ga\log2}
	\sum_{k\in\bbZ} \frac{e^{(2k+1)\pi\ii t}}
	{(\rho+\chi'_k)(\rho-1+\chi'_k)}.
\end{align}
\end{example}

\begin{center}
\begin{longtable}{cccc}\hline
\textbf{OEIS} & \textbf{$\left(\alpha,\beta\right)$} & 
\textbf{$g(n)$} & \textbf{Initials} \\ \hline
\endfirsthead
\multicolumn{4}{c}%
{\tablename\ \thetable\ -- \textit{Continued from previous page}} \\
\hline
\textbf{OEIS} & \textbf{$\left(\alpha,\beta\right)$} & 
\textbf{$g(n)$} & \textbf{Initials} \\ \hline
\endhead
\hline \multicolumn{4}{r}{\textit{Continued on next page}} \\
\endfoot %\hline
\endlastfoot
A005536 & $(-1,-1)$ & $\tr{\frac12n}$ & $f(1)=0$ \\ %\hline
A079947 & $(-1,-1)$ & $ n-1 $ & $ f(1) = 0 $ \\ %\hline 
A079954 & $(-1,-1)$ & $ n-2 $ & $ f(1) = 0 $ \\ %\hline 
A094120 & $(-2,-2)$ & $\tr{\frac14n^2}$ & $ f(1) = 0 $ \\ \hline 
\caption{Sequences in OEIS satisfying $\Lambda_{\ga,\gb}[f]=g$ 
with $\ga,\gb<0$.}
%(\refE{E<0}).}
\label{tb-E<0<0}
\end{longtable}
\end{center}

\begin{example}[Partial sum of Moser-de Bruijn sequences: $\alpha<0$]
	\label{EMoser-}
Let $\ga<0$. Consider the digital sum $f(n)$ defined in
\refE{EMoser} as the partial sum of the $\ga$-Moser-de Bruijn
sequence. Then $f$ satisfies \eqref{er4} with $f(1)=0$. Recall that
$\ga=-1$ gives $\A005536$ treated in \cite[Example 7.1]{Hwang2017},
and that $\ga=-2$ gives the partial sums of A053985.

For $\ga<-\frac12$, 
we obtain by \refT{T-}, 
directly if $\ga<-1$ and otherwise
by considering 
$f_1(n):=f(n)+\frac{1}{2(\ga-1)}n$,
\begin{align}
  \label{er5-}
f(n) = n^{1+\log_2 |\ga|}P(\log_2n)-\frac{n}{2(\ga-1)},
\end{align}
for a continuous $1$-antiperiodic function $P(t)$ with the Fourier
expansion (using \eqref{per1}; cf.\ \eqref{per2})
\begin{equation}\label{per2-}
	P(t) 
	= \frac{\alpha-2}{\alpha\log 2}
	\sum_{k\in\mathbb{Z}}
	\frac{\zeta(\rho-1+\chi_k')}
	{(\rho-1+\chi_k')(\rho+\chi_k')}\,e^{(2k+1)\pi i t}.
\end{equation}
As in \refE{EMoser}, the Fourier series is absolutely convergent if
and only if $\rho>\frac12$, i.e., if and only if $|\ga|>\sqrthalf$.

The case $-\frac12\le\ga<0$ is similar to the case $0<\ga\le\frac12$
discussed in \refE{EMoser}. (For $\ga=-\frac12$, consider the example
$n=\frac{1}{15}(16^\ell-16^k)$.) Also note that \eqref{per4} holds 
for all $\ga<0$.
\end{example}

\begin{example}[$\ga=\gb=-1$ and thus $\rho=1$]\label{E-1-1}
The function $h(n):=(-1)^{L_n}$ obviously satisfies
$\gL_{-1,0}[h](n)=0$, with $h(1)=1$; thus $h(n)=S_{-1,0}(n)$. By
\eqref{erik}, the partial sums $f(n):=\sum_{1\le k<n}(-1)^{L_k}$
satisfy
\begin{align}\label{e-1-1a}
	\gL_{-1,-1}[f](n)=1, \qquad n\ge2, 
\end{align}
with $f(1)=0$. It follows that 
\begin{align}\label{e-1-1b}
	f(n)=\tfrac{1}{3}-\tfrac{1}{3}S_{-1,-1}(n),
\end{align}
which is \refE{EIIa} and \eqref{eiia} with $\ga=-1$; 
cf.\ also \refE{Eg=0-}. By induction, we have the explicit formula 
$f(n) = (-1)^{L_n} n -\frac43(-2)^{L_n}+\frac13$, which implies that 
$f(n)=nP(\log_2n)+\frac13$ with the $1$-antiperiodic function
\begin{align}\label{e-1-1p}
	P(t)=(-1)^{\floor{t}}\Bigpar{1-\frac{2^{2-\frax{t}}}3}
	= \frac{2}{\log2}
	\sum_{k\in\bbZ} \frac{e^{(2k+1)\pi\ii t}}
	{\chi'_k(1+\chi'_k)},
\end{align}
where the Fourier coefficients follow from \eqref{e-1-1b} and 
\eqref{per-}. This sequence is not in OEIS, but the following two 
periodically equivalent variants are.

The sequence A079947 is $f(n) := \frac12\sum_{1\le k<n}(1+(-1)^{L_k})$
(partial sums of A030300), which, by \eqref{erik} or \eqref{e-1-1a}, 
satisfies $\Lambda_{-1,-1}[f]= n-1$ with $f(1)=0$. We then have, by
\eqref{e-1-1b},
\begin{align}\label{a079947}
	f_{\A079947}(n)
	=\frac{n-1}2+\frac{1}{6}-\frac{1}{6}S_{-1,-1}(n)
	=\frac{n}2-\frac{1}{6}S_{-1,-1}(n)-\frac{1}{3}.
\end{align}
Explicitly, $f(n) = \frac12\lpa{1+(-1)^{L_n}}n
-\frac23(-2)^{L_n}-\frac13$, so that $f(n) = nP(\log_2n)-\frac13$,
where $P(t)$ is $2$-periodic with
\begin{align}
	P(t) 
	=\frac12+(-1)^{\floor{t}}
	\Bigpar{\frac12-\frac{2^{1-\frax{t}}}3}
	= \frac12+\frac1{\log 2}\sum_{k\in\mathbb{Z}}
	\frac{e^{(2k+1)\pi i t}}{\chi'_k(1+\chi'_k)}.
\end{align} 

Similarly, A079954 is $f(n) := \frac12\sum_{1\le k<n}(1-(-1)^{L_k})$
(partial sums of A030301), which satisfies $\Lambda_{-1,-1}[f]= n-2$
with $f(1)=0$. We have $f_{\text{A079954}}(n)
=n-1-f_{\text{A079947}}(n)$. Thus, by\eqref{e-1-1b},
\begin{align}\label{a079954}
	f_{\A079954}(n)
	=\frac{n-1}2-\frac{1}{6}+\frac{1}{6}S_{-1,-1}(n)
	=\frac{n}2+\frac{1}{6}S_{-1,-1}(n)-\frac{2}{3}.
\end{align}
Explicitly, $f(n) = \frac12\lpa{1-(-1)^{L_n}}n
+\frac23(-2)^{L_n}-\frac23$, so that $f(n) = nP(\log_2n)-\frac23$,
where $P(t)$ is $2$-periodic with
\begin{align}
	P(t) 
	=\frac12-(-1)^{\floor{t}}\Bigpar{\frac12-\frac{2^{1-\frax{t}}}3}
	= \frac12-\frac1{\log 2}\sum_{k\in\mathbb{Z}}
	\frac{e^{(2k+1)\pi i t}}{\chi'_k(1+\chi'_k)}.
\end{align} 
\end{example}

\begin{example}[$\ga=\gb=-2$ and thus $\rho=2$]\label{E094120}
Consider the sequence {A094120} given by $f(n):=\sum_{1\le
k<n}\sum_{1\le j\le k}(-2)^{v_2(j)} =\sum_{1\le j<
n}(-2)^{v_2(j)}(n-j)$, where $v_2(j)$ is the dyadic valuation of $j$; 
see \refE{E22-ns} for definition. Note that this is the analogue of
A022560 in \refE{E22-ns} with $2^{v_2(j)}$ replaced by
$(-2)^{v_2(j)}$.

By definition, $f(n)$ is the partial sum of $h(k):=\sum_{1\le j\le
k}(-2)^{v_2(j)}$; it is easily seen that $\gL_{-2,0}[h](n)
=\cl{\frac{n}2}$, and thus \eqref{erik} yields
$g(n):=\Lambda_{-2,-2}[f](n) =\tr{\tfrac14n^2}$. \refT{T-} does not
directly apply because $g(n)$ grows too rapidly, but we can use a
standard trick and subtract a multiple of $n^2$. We have
$\gL_{-2,-2}[n^2]=2n^2+\Iodd$, and thus $f_1(n):=f(n)-\frac18n^2$
yields $\gL_{-2,-2}[f_1](n)=-\frac38\Iodd$ with $f_1(1)=-\frac18$.
\refT{T-} applies to $f_1(n)$ and implies that $f_1(n) =
n^2P_1(\log_2n)$ with $P_1(t)$ $1$-antiperiodic, and consequently
$f(n) = n^2P(\log_2n)$, where $P(t)=P_1(t)+\frac18$ is $2$-periodic
with Fourier expansion given by, from \eqref{c4+-} applied to $f_1$,
\begin{align}
    P(t) = \frac18+\frac3{2\log 2}\sum_{k\in\mathbb{Z}}
    \frac{\zeta(1+\chi'_k)}{(1+\chi'_k)(2+\chi'_k)}\,
    e^{(2k+1)\pi i t}.
\end{align}
See Figure \ref{fig-A094120}.
Note the difference from A022560 in \refE{E22-ns} where a logarithmic 
term appears in \eqref{f022560}.
\begin{figure}[!ht]
\begin{center}
\includegraphics[height=3cm]{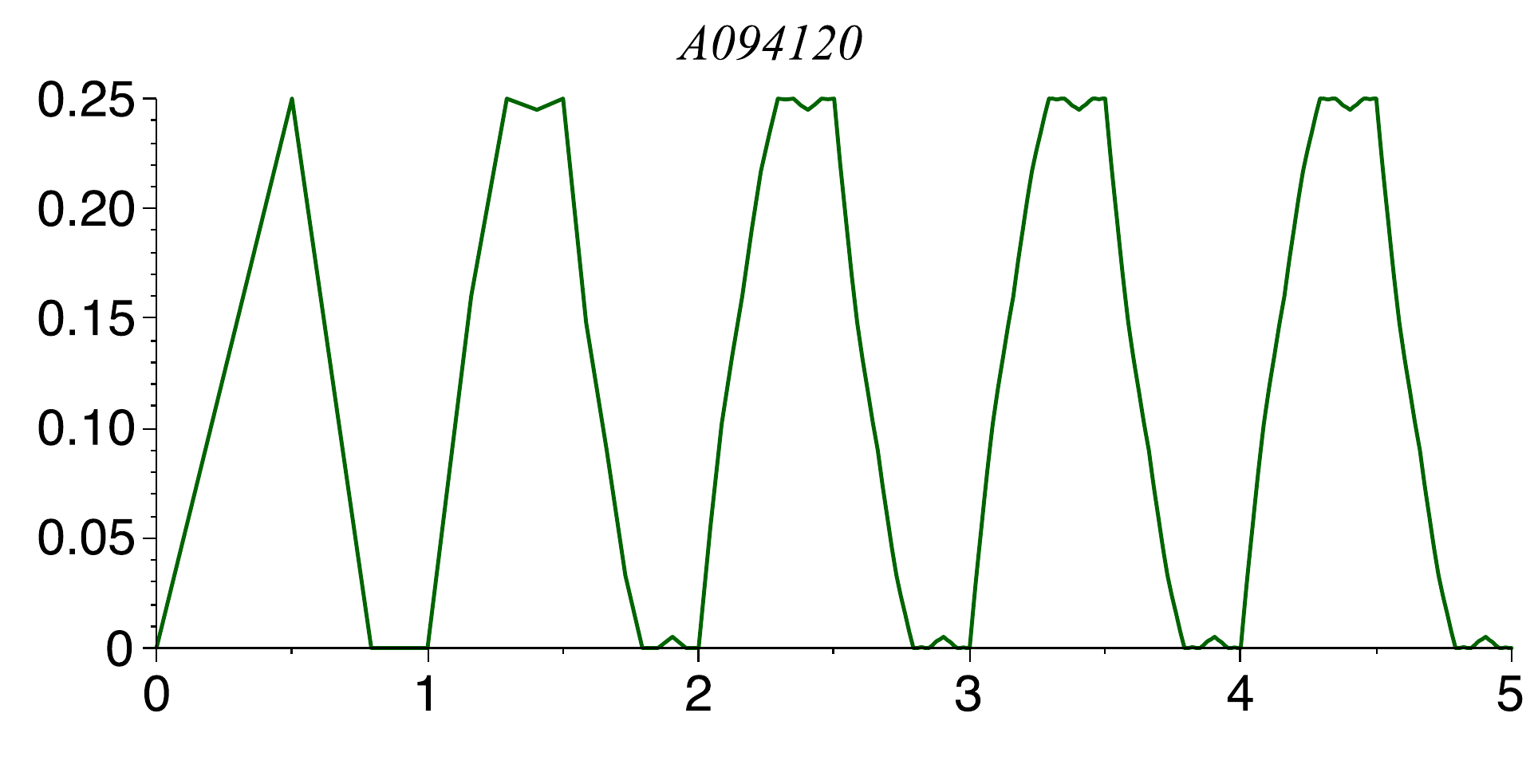}	\;
\includegraphics[height=3cm]{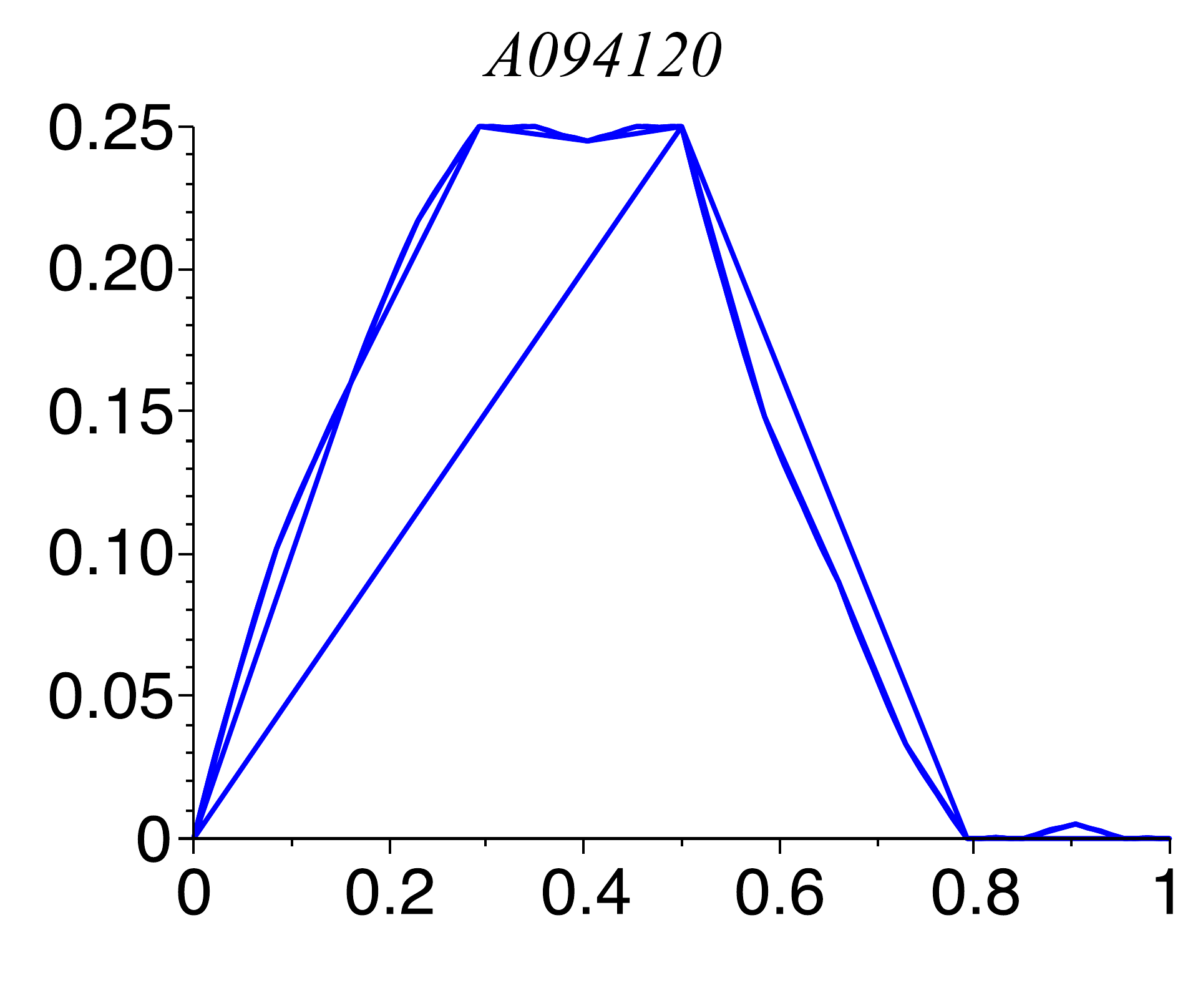}
\end{center}
\caption{Periodic fluctuations of A094120$(n)/n^2$ (\refE{E094120}).}
\label{fig-A094120}
\end{figure}
\end{example}

\subsection{Recurrences with either $\alpha$ or $\beta$ 
negative}\label{SSgagb<0} 

There are many examples in OEIS of sequences satisfying
$\gL_{\ga,\gb}[f]=g$ with $\ga\gb<0$, i.e., one of $\ga$ and $\gb$ is
positive and the other is negative. In this case, we can still define
$\gf(t)$ for dyadic rationals $t\in\oi$ by \eqref{b8} (and it has to
have this value), but since now $|\ga+\gb|<|\ga|,|\gb|$, this
function is unbounded in every interval, and thus cannot be extended
to a continuous function (or any other reasonable function) on $\oi$.
Hence our methods break down, and we have no general theorem in this
case. Moreover, we do not expect any simple asymptotics in general,
which is illustrated by the following example.

\begin{example}
Consider again $f(n)=S_{\ga,\gb}(n)$ defined by
$\gL_{\ga,\gb}[f](n)=0$ and $f(1)=1$. In particular, we have
\begin{align}\label{pm0}
    f(2n)=(\ga+\gb)f(n),
\qquad n\ge1.
\end{align}
Moreover, it is easily seen by induction, or by \eqref{fab} and
\eqref{Sum-ab}, that, for any $\ga$ and $\gb$,
\begin{align}\label{pm1}
	f(n+1)-f(n)=(\ga+\gb-1)\ga^{\nu(n)-1}\gb^{\nu_0(n)}. 
\end{align} 
By \eqref{pm0} and \eqref{pm1}, we have, for example,
\begin{align}
	f(2^k)&=(\ga+\gb)^k,  
	&& k\ge0, \label{pm3}\\
	f(2^k-1)&=(\ga+\gb)^k-(\ga+\gb-1)\ga^{k-1}, 
	&& k\ge1, \label{pm3-}\\
	f(2^k+1)&=(\ga+\gb)^k+(\ga+\gb-1)\gb^{k},  
	&& k\ge0 \label{pm3+}.
\end{align}

Consider, for definiteness, the case $\ga>0>\gb$ with
$|\ga|\ge|\gb|$. Then $0\le\ga+\gb<\ga$. Assume also $\ga+\gb\neq1$.
(Otherwise, $S_{\ga,\gb}(n)=1$ for all $n$.) We see from \eqref{pm3-}
that $|f(n)|$ may be of the order $\ga^{\log_2n}=n^{\log_2\ga}$,
although \eqref{pm3} shows that $|f(n)|$ also may be much smaller. In
fact, it is easily shown by induction using \eqref{pm0}--\eqref{pm1}
that $|f(n)|\le C\ga^{L_n}$ for some constant $C$ (depending on $\ga$
and $\gb$) and all $n$. Hence, $\xf(n):=n^{-\log_2\ga}f(n)$ is
bounded. However, $\xf(n)$ does not seem to have any simple 
asymptotic approximations, as the following arguments show.

First, \eqref{pm3} and \eqref{pm3-} show that $\xf(n)$ have
infinitely many jumps with size of order 1. More precisely, it can be
seen from \eqref{pm1} that for every $\eps>0$, there exists $\gd>0$
such that every interval $[N,(1+\eps)N]$ with $N\ge1$ contains some
$n$ such that the jump $|\xf(n+1)-\xf(n)|>\gd$.

Secondly, we cannot have $|f(n)|=n^{\log_2\ga}
\bigpar{P(\log_2n)+o(1)}$ for any $1$-periodic function $P(t)$,
because then substituting $2n$ would give $|f(2n)|=\ga
n^{\log_2\ga}\bigpar{P(\log_2n)+o(1)}$, while \eqref{pm0} would imply
$|f(2n)|=(\ga+\gb)n^{\log_2\ga}\bigpar{P(\log_2n)+o(1)}$, and
together these imply $P(\log_2 n)=o(1)$, so we would have $f(n) =
o(n^{\log_2 \ga})$, which contradicts \eqref{pm3-}.
\end{example}

There are many OEIS sequences in this category too, and most of them
have $\ga+\gb=0$. We do not discuss these examples further since we
have nothing new to add by our methods, but just mention a prototype
sequence A115384, the partial sum of Thue-Morse sequence (A010060,
the parity of the dyadic valuation), which satisfies
$\Lambda_{-1,1}[f]=\tr{\frac n2}$ with $f(1) = 0$. The exact solution
is given by $f(n) = \tr{\frac n2} +\frac14(1-(-1)^n)
(1+(-1)^{\nu(n-1)})$, where the last term indicates why there is no
simple smooth function providing good asymptotic approximation to
$f(n)-\frac n2$. See also other related sequences A076826, A159481, 
A173209 and A245710, which have a very similar behaviour.

\section{Extension from binary to $q$-ary}
\label{S:qary}

We consider  briefly the more general recurrence
\begin{align}\label{q1}
	f(n)
	= \sum_{0\le j<q}\alpha_{j}f\lpa{\ltr{\tfrac{n+j}q}}
	+g(n) \qquad (n\ge q),
\end{align}
for some integer $q\ge2$ and $q$ given constants $\ga_0,
\dots,\ga_{q-1}$; note that the case $q=2$ corresponds to \eqref{a1}. 
We now require $g(n)$ for $n\ge q$ and the initial values
$\{f(1),\dots,f(q-1)\}$.

Just as the recurrence \eqref{a1}, for example, occurs naturally in
many combinatorial and algorithmic contexts where a problem is split
into two halves, the generalisation \eqref{q1} occurs typically in
divide-and-conquer context or recursive structures where we instead 
divide the source problem into $q$ subproblems of sizes as evenly 
as possible.

The case $\ga_0=\dots=\ga_{q-1}=1$ was discussed in \cite{Hwang2017}
with several examples from the literature and OEIS. We can similarly
extend the method of \refS{S:recurrence} to treat the general
recursion \eqref{q1} under suitable conditions. We assume that
$\ga_j$ are real with
\begin{align}\label{q2}
	\max_{0\le j<q}|\ga_j| < A:=\sum_{0\le j<q}\ga_j.
\end{align}
Note that \eqref{q2} holds in the standard case when all $\ga_j>0$;
it also holds, more generally, if $\ga_j\ge0$ for all $j$ and at
least two $\ga_j$ are non-zero.

\begin{lemma}
Assume \eqref{q2}. Then there exists a unique continuous function
$\gf(t)$ on $\oi$ such that $\gf(0)=0$, $\gf(1)=1$, and for
$j=0,\dots,q-1$,
\begin{align}\label{q4}
	\gf(t) 
	= \frac{\ga_{q-1-j}}{A}\gf(qt-j)
	+ \frac{\sum_{q-j\le i<q}\ga_i}{A},   
	\qquad \text{if}\quad 
	\frac{j}{q}\le t\le \frac{j+1}{q}.
\end{align}
Moreover, if  $\ga_j\ge0$ for all $j$, then $\gf$ is strictly 
increasing.
\end{lemma}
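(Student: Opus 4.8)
The plan is to follow the three-step template used for \refL{Lemma2}. \textbf{Uniqueness.} First I would verify that \eqref{q4} is internally consistent at the interior breakpoints $t=j/q$, $1\le j\le q-1$: evaluating the formula for the subinterval indexed $j-1$ and the one indexed $j$ at $t=j/q$, and using $\gf(0)=0$, $\gf(1)=1$, gives the same value after the one-line identity $\sum_{q-j\le i<q}\ga_i-\sum_{q-j+1\le i<q}\ga_i=\ga_{q-j}$. Granting this, \eqref{q4} together with the boundary conditions determines $\gf(m/q^N)$ for all $N\ge0$ and $0\le m\le q^N$ by induction on $N$; since the $q$-adic rationals are dense in $\oi$, there is at most one continuous $\gf$ meeting all the requirements. (One also notes that \eqref{q4} alone forces $\gf(0)=0$ and $\gf(1)=1$, via its $j=0$ and $j=q-1$ pieces together with $|\ga_{q-1}|,|\ga_0|<A$ from \eqref{q2}, so the boundary data is not really an extra assumption.)

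\textbf{Existence.} I would set $\gf_0(t):=t$ and $\gf_{k+1}:=T\gf_k$, where $T$ sends a function $\psi$ on $\oi$ to the function assembled piecewise from the right-hand side of \eqref{q4} with $\gf$ replaced by $\psi$. The breakpoint computation above, redone with $\psi(0),\psi(1)$ in place of $0,1$, shows that $T$ maps the complete metric space $\mathcal C_0:=\{\psi\in C\oi:\psi(0)=0,\,\psi(1)=1\}$ into itself, and that it contracts: on the $j$-th piece $T\psi-T\chi$ is $(\ga_{q-1-j}/A)$ times a value of $\psi-\chi$, whence $\|T\psi-T\chi\|_\infty\le c\,\|\psi-\chi\|_\infty$ with $c:=\max_{0\le j<q}|\ga_j|/A<1$ by \eqref{q2}. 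Hence $(\gf_k)$ is uniformly Cauchy and converges to a continuous fixed point $\gf\in\mathcal C_0$ of $T$, i.e. a continuous solution of \eqref{q4} with $\gf(0)=0$, $\gf(1)=1$; equivalently this is the Banach fixed-point theorem applied to $T$ on $\mathcal C_0$.

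\textbf{Monotonicity.} Now suppose $\ga_j\ge0$ for all $j$. I would first show each $\gf_k$ is weakly increasing, by induction: $\gf_0$ is; and if $\gf_k\in\mathcal C_0$ is weakly increasing, then on $[j/q,(j+1)/q]$ the function $\gf_{k+1}$ is an increasing affine reparametrisation of $\gf_k$ (the factor $\ga_{q-1-j}/A$ is $\ge0$) with range $\bigl[\tfrac1A\sum_{q-j\le i<q}\ga_i,\,\tfrac1A\sum_{q-1-j\le i<q}\ga_i\bigr]$, and these ranges for $j=0,\dots,q-1$ tile $\oi$ in increasing order; so $\gf_{k+1}$ is weakly increasing on $\oi$. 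Passing to the uniform limit, $\gf$ is weakly increasing. To upgrade to strict monotonicity, suppose $\gf$ were constant on some $[a,b]$ with $a<b$; choosing $N$ and $m$ with $[m/q^N,(m+1)/q^N]\subseteq[a,b]$ and iterating \eqref{q4} $N$ times gives an affine relation $\gf(t)=\Lambda\,\gf(q^Nt-m)+M$ on that subinterval, where $\Lambda$ is a product of $N$ of the fractions $\ga_i/A$. Provided $\Lambda>0$ this forces $\gf$ constant on all of $\oi$, contradicting $\gf(0)=0\ne1=\gf(1)$; and $\Lambda>0$ certainly holds when all $\ga_j>0$.

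\textbf{Main obstacle.} The consistency and contraction bookkeeping is routine; the step I expect to need the most care is exactly the weak-to-strict upgrade, since it works only when the accumulated scaling factor $\Lambda$ along the chosen $q$-adic address is non-zero. If an interior $\ga_j$ vanishes the conclusion can genuinely fail — for $q=3$ and $(\ga_0,\ga_1,\ga_2)=(1,0,1)$ one finds $\gf\equiv\tfrac12$ on $[\tfrac13,\tfrac23]$ — so in the write-up I would either strengthen the hypothesis to $\ga_j>0$ for all $j$, or record the sharper statement that $\gf$ is strictly increasing across every $q$-adic subinterval whose digit string avoids the vanishing coefficients (and hence everywhere when all $\ga_j>0$).
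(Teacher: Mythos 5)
Your proposal is correct and follows essentially the same route as the paper: the proof given there simply invokes the third (recursive) construction from \refL{Lemma2}, i.e.\ the iterates $\gf_{k+1}=T\gf_k$ converging uniformly thanks to the contraction factor $\max_j|\ga_j|/A<1$ supplied by \eqref{q2}; your Banach fixed-point phrasing and your uniqueness argument via $q$-adic rationals are the same mechanism, just spelled out. One point in your write-up is worth highlighting rather than treating as a side remark: your observation about the weak-to-strict upgrade is not merely a caveat but exposes that the lemma's final sentence is too strong as stated. The hypothesis \eqref{q2} together with $\ga_j\ge0$ explicitly permits a vanishing coefficient (the paper notes after \eqref{q2} that it suffices to have at least two non-zero $\ga_j$), and then, as you compute, if $\ga_{q-1-j}=0$ the piece of \eqref{q4} on $[j/q,(j+1)/q]$ reduces to the constant $\frac{1}{A}\sum_{q-j\le i<q}\ga_i$; your example $q=3$, $(\ga_0,\ga_1,\ga_2)=(1,0,1)$ satisfies \eqref{q2} yet gives $\gf\equiv\frac12$ on $[\frac13,\frac23]$. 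So strict monotonicity genuinely requires $\ga_j>0$ for all $j$ (only weak monotonicity survives under $\ga_j\ge0$), and your suggested restatement is the right fix. The paper's one-line proof, which defers entirely to the binary case, does not address this.
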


\begin{proof}
This follows with only notational changes as in our third proof of
\refL{Lemma2} based on the recursive construction \eqref{b9}; we
obtain by an analogue of \eqref{b9} a sequence of continuous
functions $\gf_k$ that, using \eqref{q2}, converge uniformly to a
function $\gf(t)$ satisfying \eqref{q4}.
\end{proof}

We then extend $f(n)$ and $g(n)$ to functions of a real variable
$x\ge1$ by \eqref{b2} as before, and it is easily verified that
\eqref{b3} generalises to
\begin{align}\label{q5}
	f(x)=Af\lrpar{\frac{x}{q}}+g(x),
	\qquad x\ge q.  
\end{align}
We may now argue as in \refS{S:recurrence} and prove extensions of
\refT{Theorem1} and its corollaries for the recursion \eqref{q1}. We
now define
\begin{align}\label{q3}
	\rho:=\log_q A
	=\log\Bigpar{\sum_{0\le j<q}\ga_j}.
\end{align}
The simplest situation is when $g(n)=O(n^{\rho-\eps})$ for some 
$\eps>0$; then 
\begin{align}\label{q6}
    f(x)=x^\rho P(\log_q x)-Q(x),
    \qquad x\ge1,
\end{align}
where $P(t)$ is a continuous $1$-periodic function, and
\begin{align}\label{q7}
	Q(x):=\sum_{m\ge1} q^{-\rho m} g(q^mx) = o(x^\rho).   
\end{align}
We leave further details to the reader and content ourselves with the
discussion of two classes of examples. 

\subsection{Binomial coefficients not divisible by a prime $q$}
\label{SS:binomial-q}

Let $f(n)$ denote the number of binomial coefficients $\binom{m}{k}$,
$0\le k\le m<n$, that are not divisible by a given prime $q$. This
sequence has a long history, at least dating back to Fine's
\cite{Fine1947} observation that almost all binomial coefficients are
even; see, e.g., \cite{Chen2002} and the references therein. It
equals A006046 (see Example~\ref{Ex:pascal}) when $q=2$. The case
$q=3$ corresponds to A006048, while $q=5$ gives A194458. We then 
deduce the recurrence 
\begin{equation}\label{qa1}
	f(n) 
	= \sum_{0\le j<q}(q-j)f\lpa{\tr{\tfrac{n+j}q}},
\end{equation}
with $f(0)=0$ and $f(1)=1$. This is \eqref{q1} with $\ga_j=q-j$ and
$g(n)=0$; furthermore, in this example \eqref{qa1} holds for all
$n\ge0$. We have $A=\binom{q+1}2$ and $\rho := \log_q A$. Stein
\cite{Stein1989} proved that
\begin{equation}\label{qa2}
	\frac1A\le \frac{f(n)}{n^\rho}\le 1,
\end{equation}
and extended $f(n)$ to a continuous function $f(x)$; see
\cite{Franco1998,Volodin1999,Wilson1998} for finer lower bounds. Our
general approach yields the same continuous extension $f(x)$ as in 
\cite{Stein1989}; we obtain (by \eqref{q6})
\begin{equation}\label{qa3}
	\frac{f(n)}{n^{\rho}}=P(\log_qn)\qquad(n\ge1),
\end{equation}
where $P(t):=A^{-\frax t}f(q^{\frax t})$ is a continuous $1$-periodic
function. Moreover, since \eqref{qa1} holds for all $n\ge1$, it is
easily verified (using \eqref{q4}) that $f(x)=A\gf(x/q)$ for
$x\in[1,q]$, and thus
\begin{equation}
   P(t) = A^{1-\{t\}} \varphi(q^{\{t\}-1}).
\end{equation}
(Cf.\ \refR{Rb=1} for a similar simplification, for related reasons.)
Here $\varphi(t)$ satisfies $\varphi(0)=0$, $\varphi(1)=1$, and for 
$0\le j<q$,
\begin{equation}
	\varphi(t) 
	= \frac{j+1}{A}\varphi(\{qt\})
	+ \frac{\binom{j+1}{2}}{A},
	\qquad \text{if}\quad 
	\frac{j}{q}\le t\le \frac{j+1}{q}.
\end{equation}
By \eqref{qa2} or \eqref{qa3}, 
\emph{almost all binomial coefficients are divisible by any
given prime $q$} because 
\begin{equation}
	\rho = \log_q(q+1)-\log_q2+1 <2.
\end{equation}

\begin{figure}[!ht]
\begin{center}
\begin{tabular}{cccc}
	\includegraphics[height=3cm]{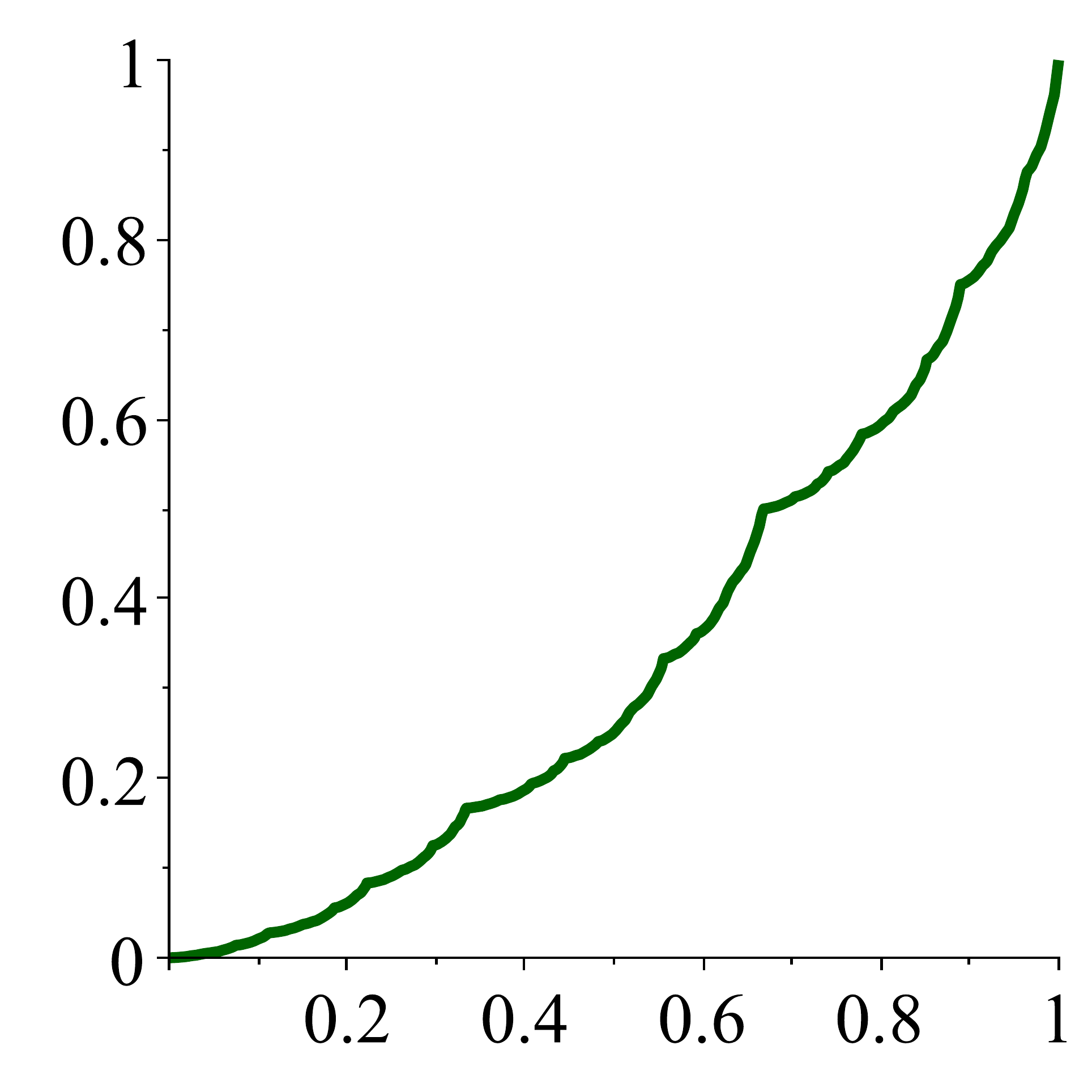}
	& \includegraphics[height=3cm]{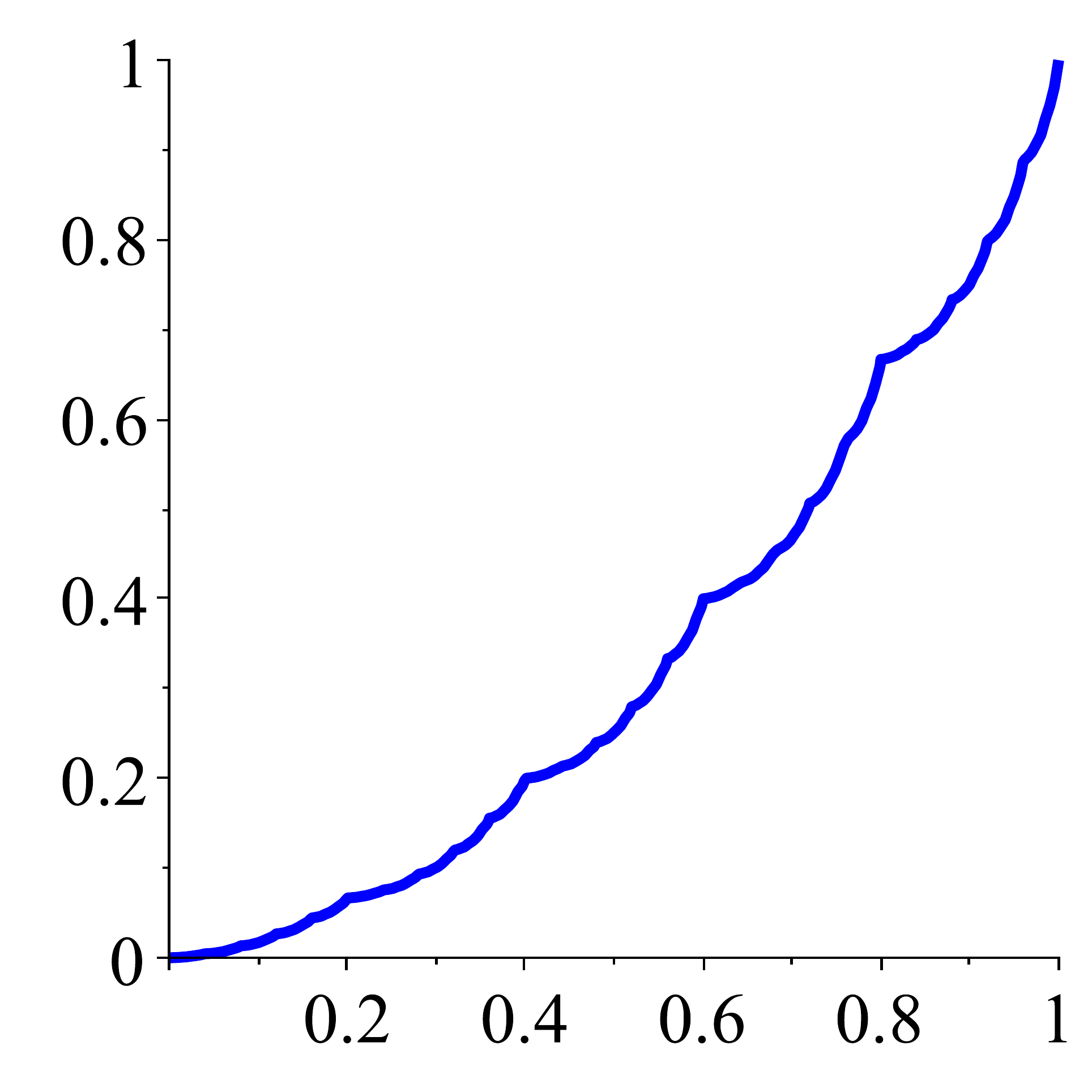}
	& \includegraphics[height=3cm]{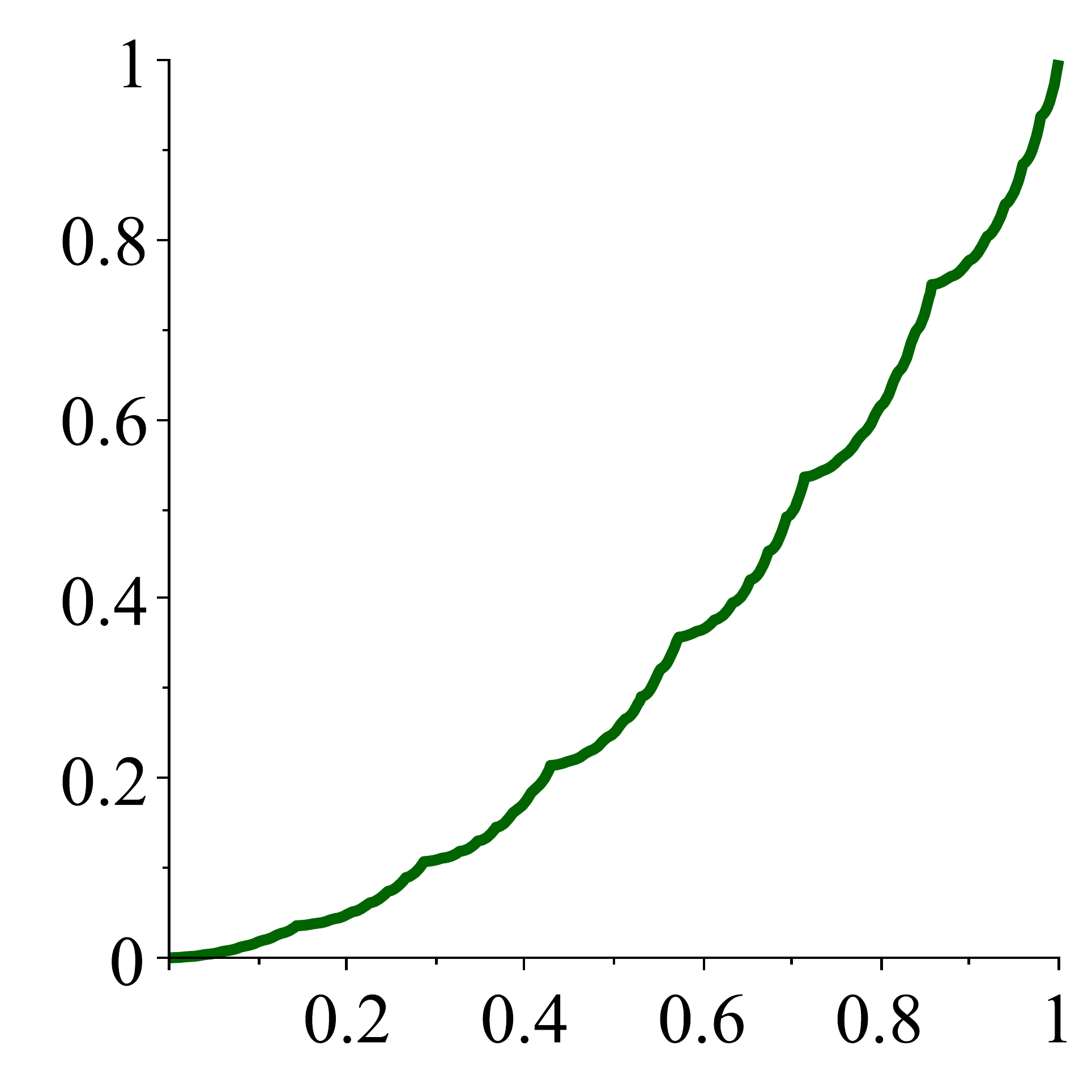}
	& \includegraphics[height=3cm]{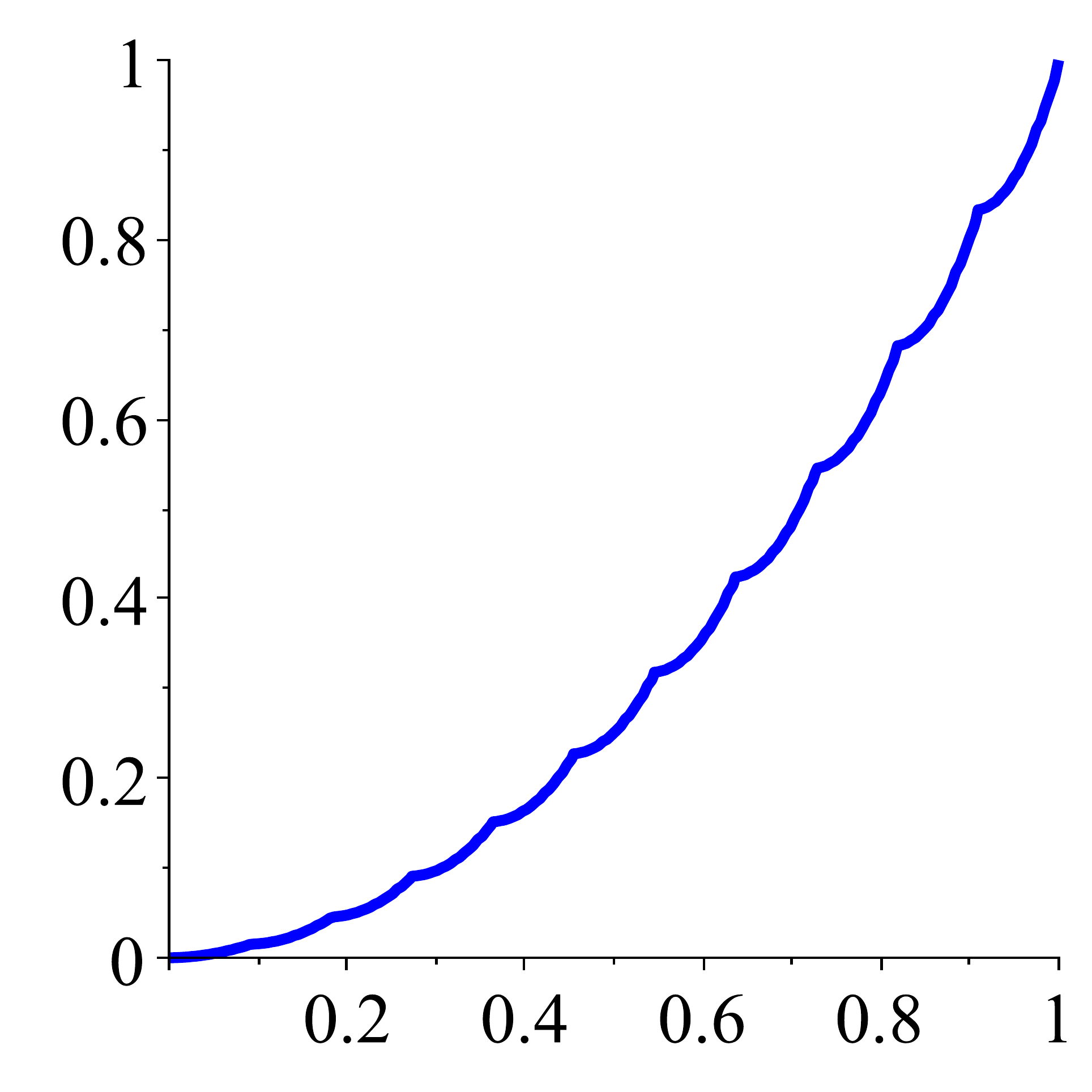}\\
	$\varphi_3$ & $\varphi_5$ & $\varphi_7$ & $\varphi_{11}$ \\
	\includegraphics[height=3cm]{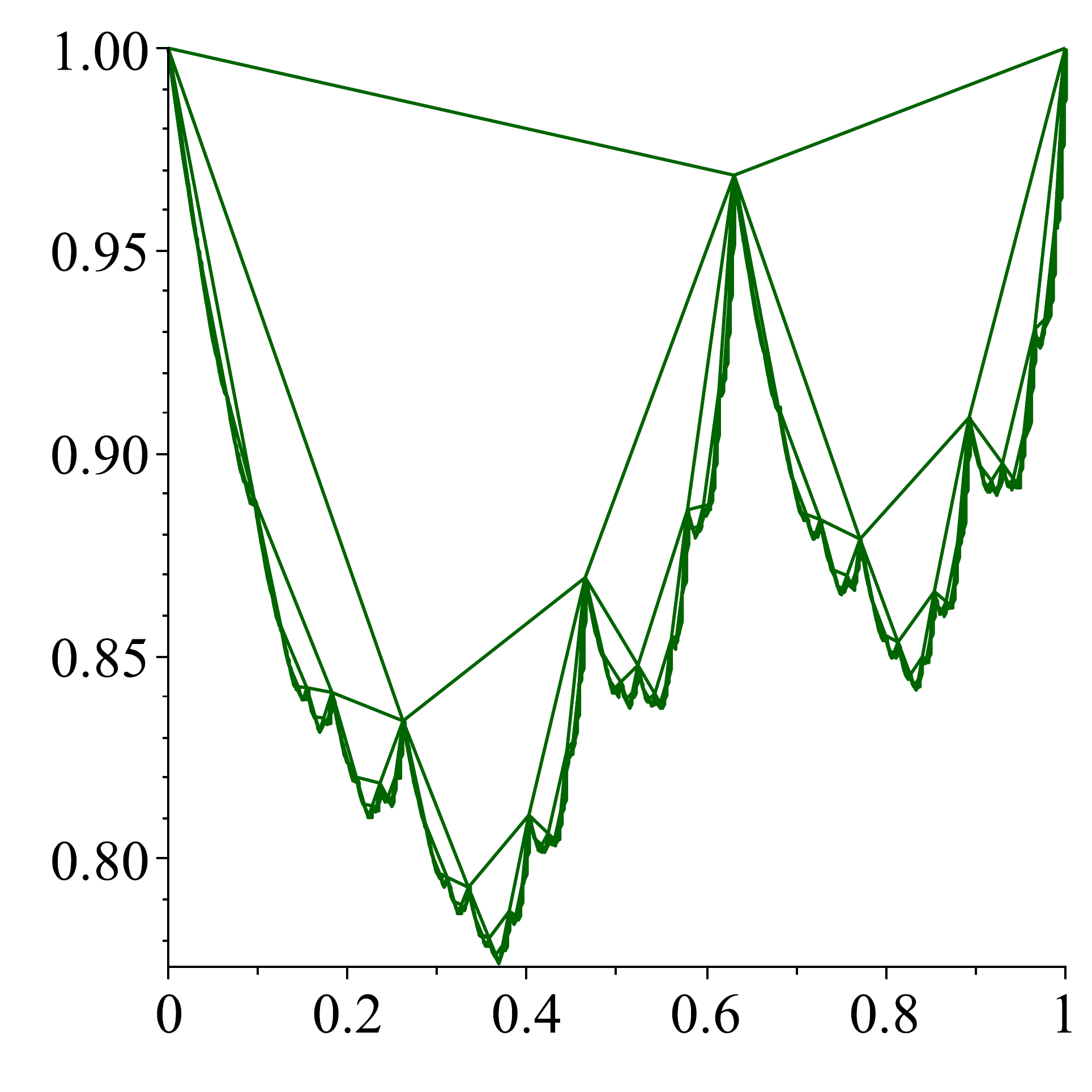}
	& \includegraphics[height=3cm]{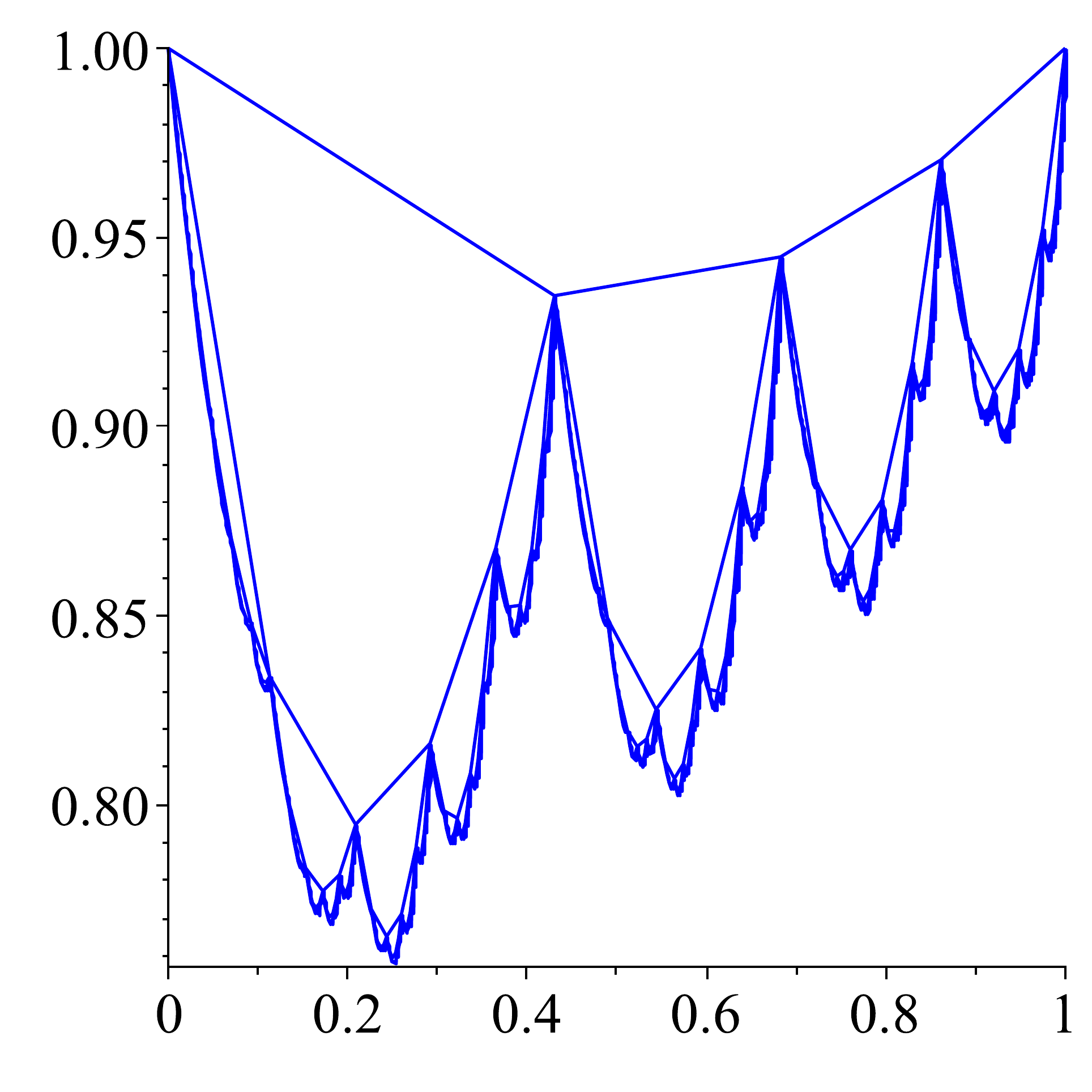}
	& \includegraphics[height=3cm]{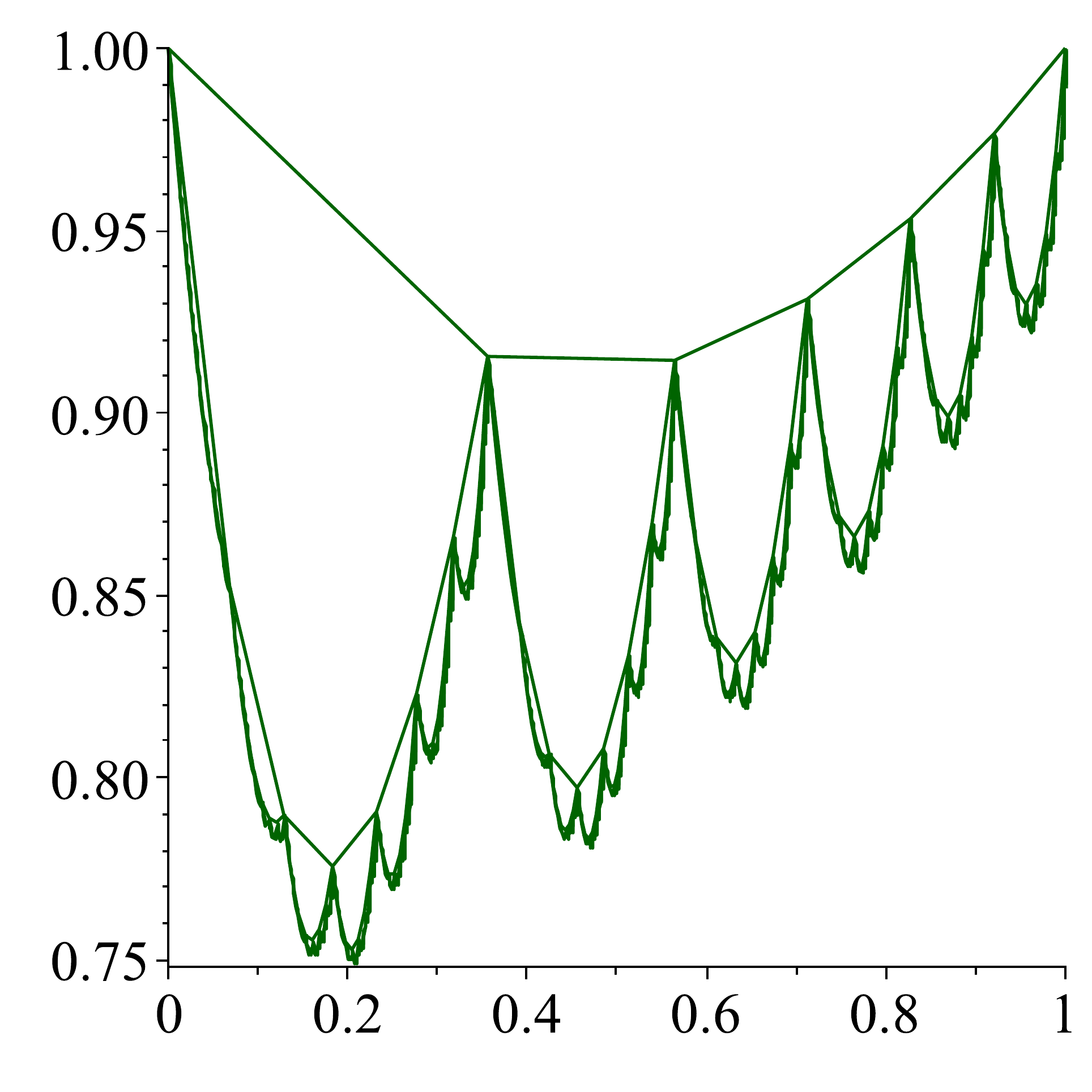}
	& \includegraphics[height=3cm]{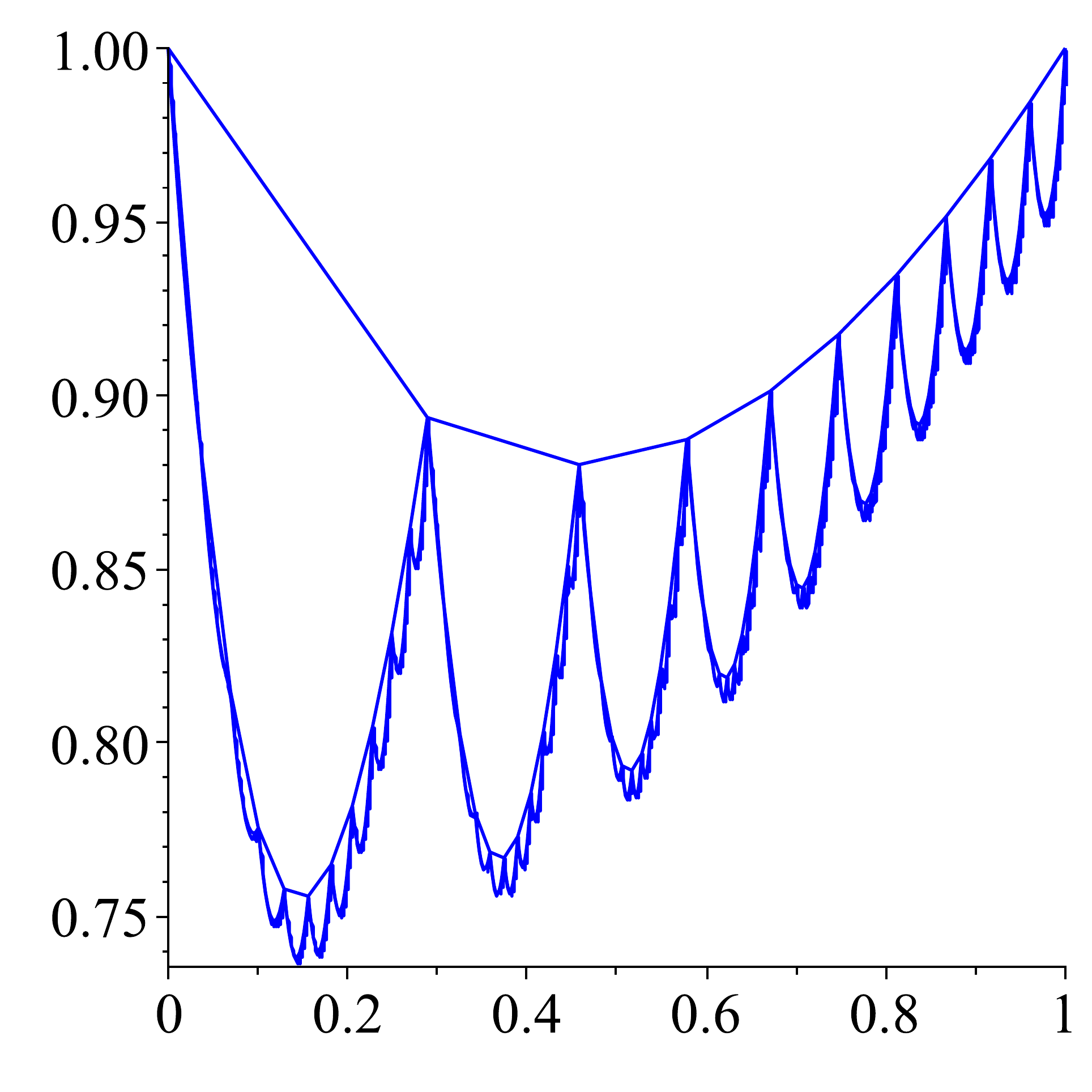}\\
	$P_3$ & $P_5$ & $P_7$ & $P_{11}$
\end{tabular}	
\end{center}
\caption{The function $\varphi$ (\refS{SS:binomial-q}) when $q=3, 5,
7, 11$ (upper half) and the periodic function $P$ for the same set of
values of $q$ (lower half).}\label{fig-binomial-q}
\end{figure}

Alternatively, it is known \cite{Chen2002} that 
\begin{equation}
	f(n) = \frac12\sum_{0\le j\le s}
	{\binom{p+1}2}^j b_j\prod_{j\le i\le s}(b_i+1),
\end{equation}
when $n = b_0 + b_1q+\cdots b_sq^s$ with $0\le b_j<q$ and 
$s=\tr{\log_q n}$, from which we can obtain an alternative 
representation for the periodic function $P$. Also the generating 
function \cite{Stein1989}
\begin{equation}
	\sum_{n\ge0}f(n)z^n
	= \frac{z}{1-z}\prod_{k\ge0}\sum_{0\le j<q}(j+1)z^{j\cdot q^k},
\end{equation}
is helpful in applying the Mellin transform approach; see 
\cite{Flajolet1994,Grabner1993,Grabner2005}. 

The more general problem of the number of multinomial coefficients
$\binom{m}{j_1,\dots,j_d}$ not divisible by a prime $q$, for $0\le
m<n$ and $j_1+\cdots+j_d=m$, $j_1,\dots,j_d\ge0$, where $d\ge1$ is
given (see \cite{Chen1998,Chen2002,Volodin1999}) can be similarly
dealt with. This number satisfies the recurrence
\begin{equation}
	f(n) 
	= \sum_{0\le j<q}\binom{q-j+d-2}{d-1}
	f\Lpa{\Bigl\lfloor\frac{n+j}q\Bigr\rfloor};
\end{equation}
we then deduce, by \eqref{q6}, the identity $f(n)= n^{\rho}
P(\log_qn)$, $n\ge1$, for some continuous periodic function $P$ with
$\rho = \log_q\binom{q+d-1}{d}$.

\subsection{Generating polynomial of Gray codes}
\label{SS:Gray}
Gray codes of integers are strings of binary words in which
neighboring code words differ by one bit only; we already discussed
some properties of the binary reflected Gray codes in
Examples~\ref{E22-ns} and \ref{E44}. Here, we consider a simple
extended version of binary Gray codes to $q$-ary ones
(non-reflected); see \cite{Cohn1963,Sharma1978}. The construction is
as follows. If
\begin{equation}
	n = \sum_{0\le j\le s}\kappa_j q^j,
	\qquad(0\le \kappa_j<q),
\end{equation}
then the Gray code of $n$ is given by $(\kappa_s',\dots,\kappa_0')$,
where $\kappa_s' = \kappa_s$ and 
\begin{equation}\label{E:kappa-p}
	\kappa_j' := (\kappa_{j}-\kappa_{j+1})\bmod q
	\qquad (0\le j<s).
\end{equation}
For simplicity, we consider the number of nonzero digits $\gamma(n):=
\sum_{0\le j\le s}\mathbf{1}_{\kappa_j'>0}$ in this Gray code
representation of $n$; other quantities such as the sum of digits
$\sum_{0\le j\le s}\kappa_j'$ can be considered similarly (a sketch
given below). Then by the recurrence
\begin{equation}
	\gamma(qk+j)
	= \gamma(k)+\begin{cases}
		0, & \text{if } k \bmod q \equiv j;\\
		1, & \text{otherwise},
	\end{cases}
\end{equation}
for $0\le j<q$, we deduce that the generating polynomial $f(n) := 
\sum_{0\le k<n} \alpha^{\gamma(k)}$ of $\gamma(n)$ satisfies
\begin{equation}
	f(n) 
	= f\lpa{\ltr{\tfrac{n}{q}}}+
	\alpha \sum_{1\le j<q} f\lpa{\ltr{\tfrac{n+j}{q}}}
	+g(n),
\end{equation}
where $g$ can be expressed as 
\begin{align}
	g(q^2k+qr+j) 
	= (1-\alpha)\alpha^{\gamma(qk+r)+1}
	\qquad(k\ge0),
\end{align}
for $0\le r\le q-2$ and $r+1\le j\le q-1$, and $g(n)=0$ for all other 
values of $n$. Alternatively, in terms of $q$-ary expansion, the 
nonzero $g(n)$, $0\le n<q^2$, occurs when $n=(\kappa_1,\kappa_0)_q$ 
is of the form:
\begin{center}
\begin{tabular}{c}
	$(0,1)_q,(0,2)_q,(0,3)_q,(0,4)_q,\dots, (0,q-1)_q$\\
	$(1,2)_q,(1,3)_q,(1,3)_q,\dots, (1,q-1)_q$\\
	$\vdots$\\
	$(q-3,q-2)_q,(q-3,q-1)_q$\\
	$(q-2,q-1)_q$
\end{tabular}
\end{center}

We then deduce from \eqref{q6} the exact and asymptotic expansion
(since $Q(n)=0$ by \eqref{q7})
\begin{equation}
	f(n) = P(\log_qn)n^{\log_q(1+(q-1)\alpha)}
	\qquad(n\ge1),
\end{equation}
whenever $\alpha>0$, for some continuous periodic function $P=P_\ga$;
note that then $g(n)=O\bigpar{(\ga\vee1)^{\log_q n}} =
O\bigpar{n^{0\vee\log_q\ga}}$ with $0\vee\log_q\ga<\rho
=\log_q(1+(q-1)\alpha)$.

\begin{figure}[!ht]
\begin{center}
	\begin{tabular}{cccc}
		\includegraphics[width=3cm]{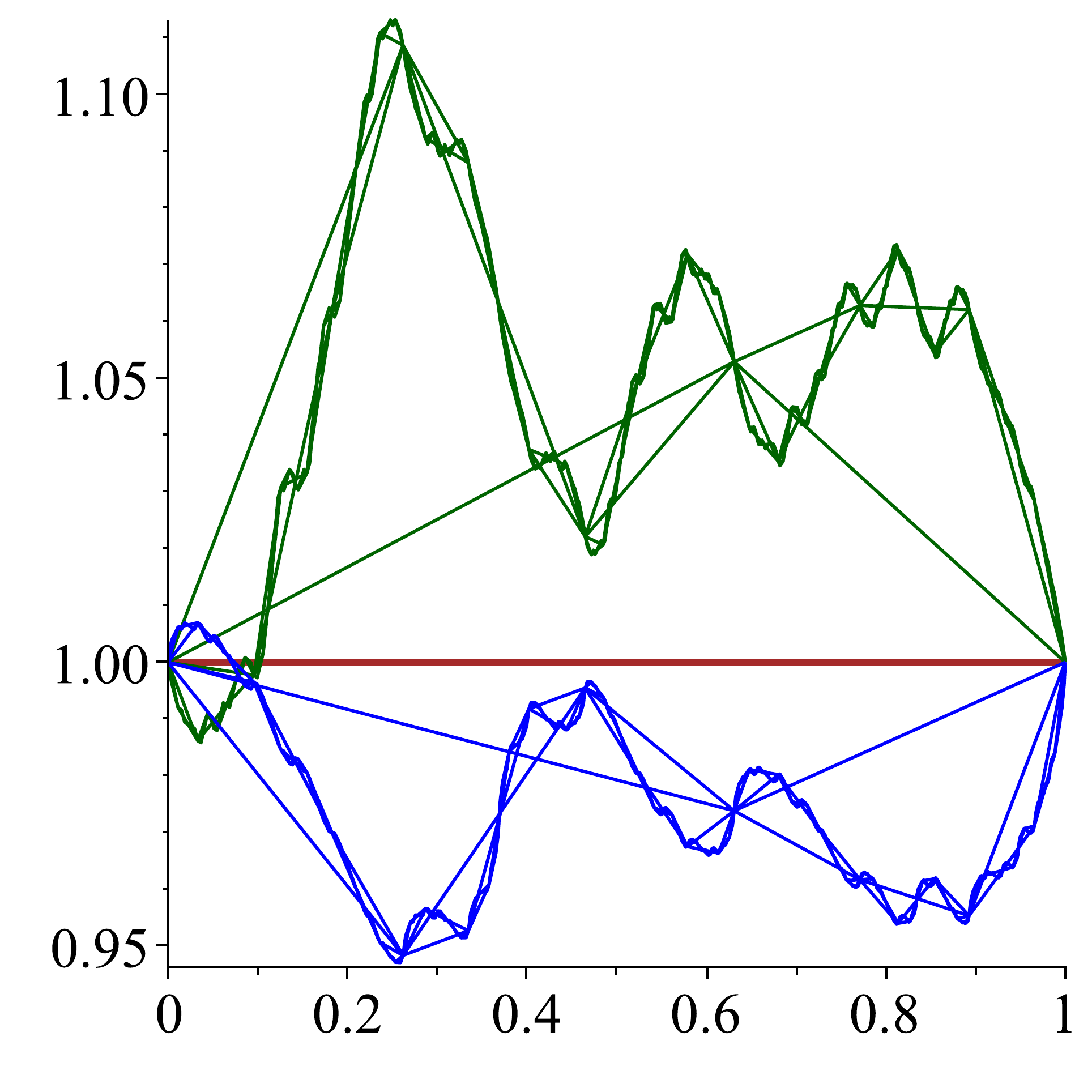} &
		\includegraphics[width=3cm]{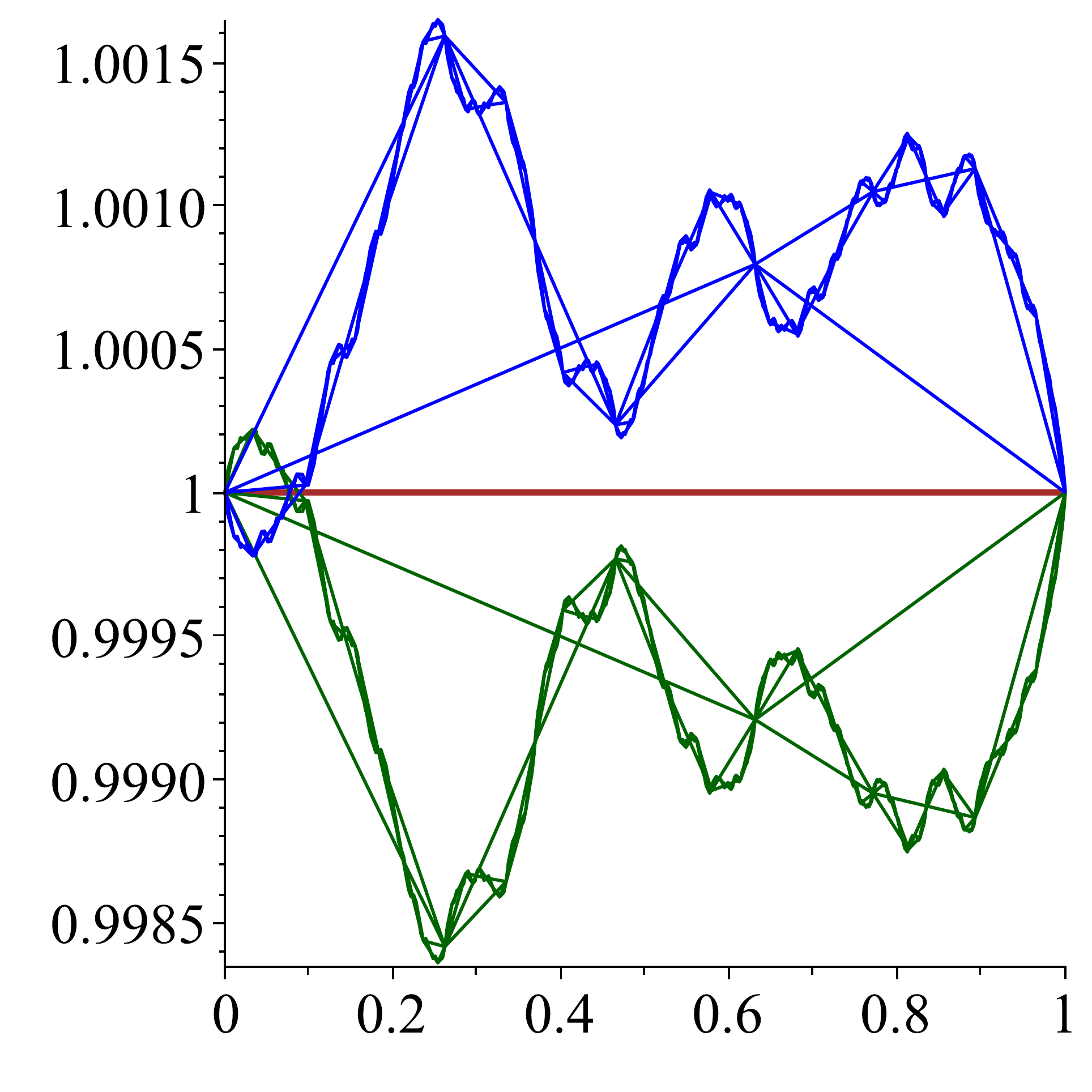} &
		\includegraphics[width=3cm]{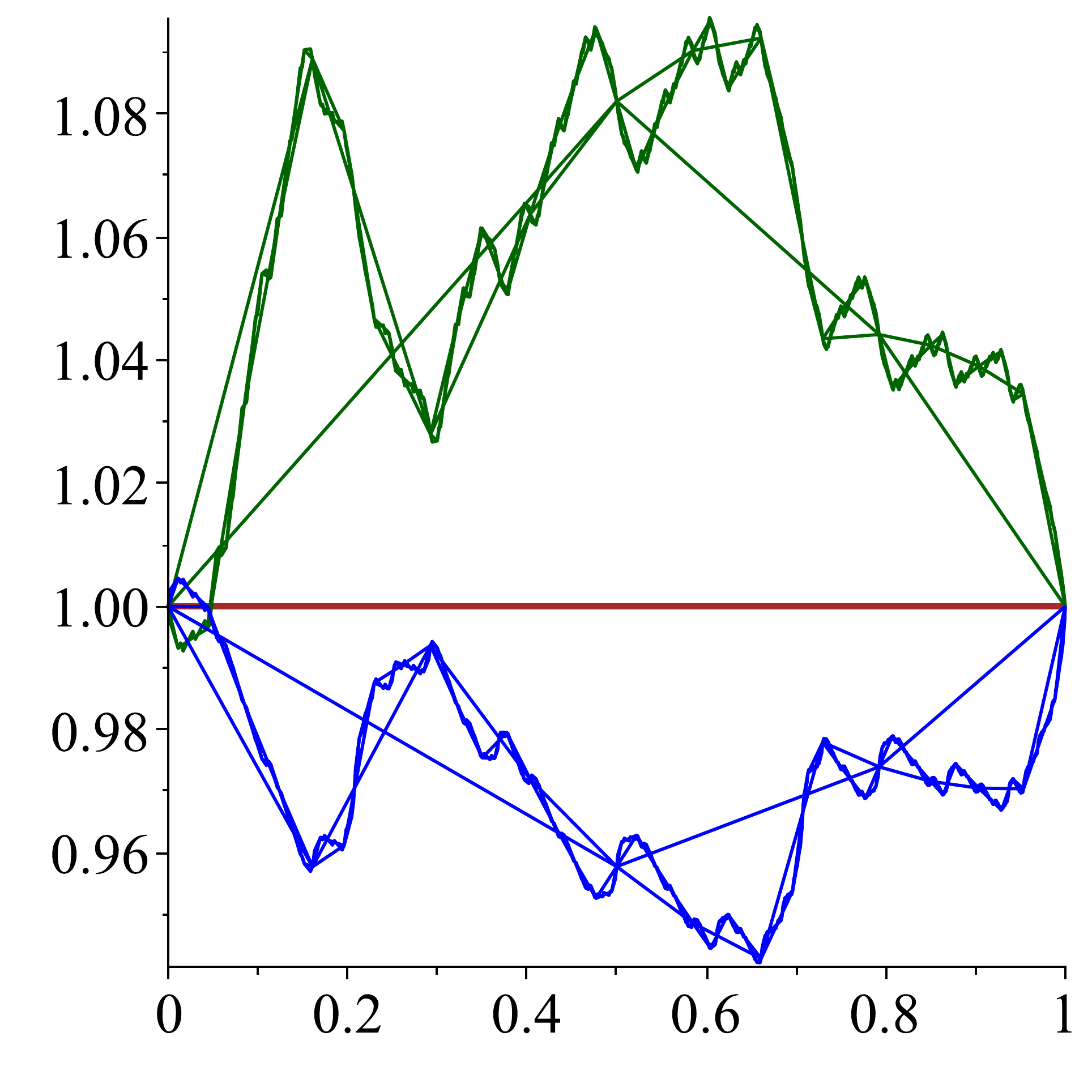} &
		\includegraphics[width=3cm]{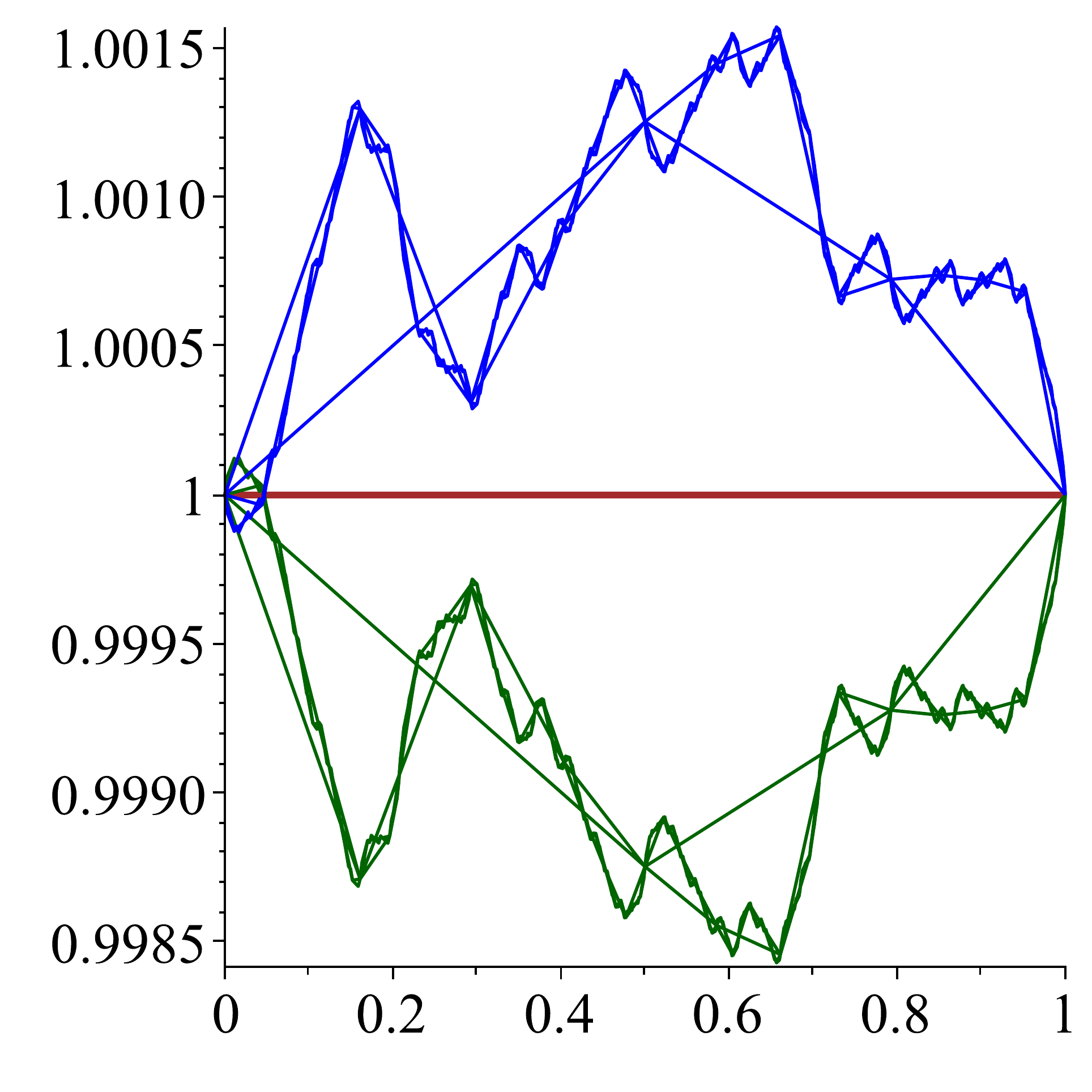}\\
		$q=3,\alpha=\frac{\sqrt{5}\pm1}2$ &
		$q=3,\alpha=1\pm0.01$ &
		$q=4,\alpha=\frac{\sqrt{5}\pm1}2$ &
		$q=4,\alpha=1\pm0.01$ 
	\end{tabular}
\end{center}
\caption{Periodic fluctuations of $f(n)n^{-\log_q(1+(q-1)\alpha)}$ in
the cases of nonzero digits of $q$-ary Gray codes for $q=3,4$ and 
different $\alpha$ (\refS{SS:Gray}).} \label{fig-Gray}
\end{figure}

Similarly, the generating polynomial of the sum-of-digits function in
such $q$-ary Gray codes $f(n) = \sum_{0\le k<n}\alpha^{\sigma(k)}$, 
$\sigma(k) := \sum_{0\le j\le s}\kappa_j'$ when $k=\sum_{0\le j\le
s}\kappa_jq^j$, satisfies the recurrence
\begin{equation}
	f(n) = \sum_{0\le j<q}\alpha^j f\lpa{\ltr{\tfrac{n+j}{q}}}
	+ g(n),
\end{equation}
where $g$ can be expressed as
\begin{align}
	g(q^2k+qr+j) 
	=\begin{cases}
	\ga^{\gs(qk+r)}\frac{(1-\ga^j)(\ga^{q-r}-\ga^{q-j})}{1-\ga} 
	& (0\le j<r),\\
	\ga^{\gs(qk+r)}\frac{(1-\ga^{j-r})(1-\ga^{q-j})}{1-\ga} 
	& (r\le j<q),
	\end{cases}
\end{align}
for $k\ge0$, $0\le r\le q-1$, and $0\le j\le q-1$. This is derived by 
the relation
\begin{align}
	\sigma(qk+j) = \sigma(k) + (j-k) \bmod q,
\end{align}
which in turn follows from \eqref{E:kappa-p}. We then deduce that 
\begin{equation}
	f(n) = P(\log_qn)n^{\log_q(1+\alpha+\cdots + \alpha^{q-1})},
\end{equation}
for $n\ge1$.

\appendix
\addcontentsline{toc}{section}{Appendices}

\section{Mellin transforms}\label{AMellin}

Mellin transforms are another useful techniques in analysing
divide-and-conquer recurrences; see \cite{Flajolet1994a,
Flajolet1995, Flajolet1994, Grabner2005, Hwang2003} and the
references therein for more information. Up to now most of the tools
we adopt to solve \eqref{a1} are direct and elementary in nature; it
is however possible to apply Mellin transforms for a more effective
characterisation of the underlying periodic oscillations, notably
calculations of the Fourier coefficients, as already observed before
in the literature (although an analytic approach often requires
stronger conditions).

Let again $\ga,\gb>0$, and $\rho:=\log_2(\ga+\gb)$, and assume that
$f(n)$ and $g(n)$ satisfy the recursion \eqref{a1}. Extend again $f$
and $g$ to $[1,\infty)$ by \eqref{b2}, with $g(1):=0$, and define
$f(x):=g(x):=0$ for $x\in[0,1)$. Denote the Mellin transform of
$f(x)$ by $f^*(s)$:
\begin{align}\label{mellin} 
	f^*(s):=\int_0^\infty f(x) x^{s-1}\dd x 
	= \int_1^\infty f(x) x^{s-1}\dd x, 
\end{align} 
for all complex $s$ such that the integral is absolutely convergent,
and similarly for $g^*(s)$. If $f(n)=O(n^c)$ for large $n$ and some
real $c$, then $f^*(s)$ exists at least in the half-plane $\Re s <
-c$, and is analytic there. If $f^*(s)$ or $g^*(s)$ extends
meromorphically to a larger domain, we use the same notation there.

Assume that $f^*(s)$ and $g^*(s)$ exist. Then \eqref{b3} implies that
\begin{align}\label{mel2}
	f^*(s)
	&=\int_1^2 f(x)x^{s-1}\dd x 
	+ \int_2^\infty \Bigpar{2^\rho  
	f\Bigpar{\frac{x}{2}}+g(x)}x^{s-1}\dd x\notag\\
	&=\int_1^2 \bigpar{f(x)-g(x)}x^{s-1}\dd x 
	+ 2^{\rho+s} f^*(s)+g^*(s).
\end{align}
Furthermore, \eqref{b15} or \eqref{b16a} shows that if $1\le x<2$,
then $f(x)-g(x)=f(1)P_0(\log_2x)x^\rho$. Hence, recalling the
definition \eqref{b16},
\begin{align}\label{mel3}
	\int_1^2 \bigpar{f(x)-g(x)}x^{s-1}\dd x &
	=f(1)\int_1^2 P_0(\log_2 x) x^{\rho+s-1}\dd x  
	\notag\\&
	=f(1)\int_1^2 \lrpar{1+(2^\rho-1)\gf\lrpar{x-1}} x^{s-1}\dd x.
\end{align}
Substituting this into \eqref{mel2} yields 
\begin{align}\label{mel4}
	\lrpar{1-2^{\rho+s}}f^*(s)
	=g^*(s) + f(1)\int_0^1\lrpar{1+(2^\rho-1)\gf(u)} 
	(1+u)^{s-1}\dd u.
\end{align}
Note that the right-hand side of \eqref{c4} equals, apart from a
factor $\frac1{\log 2}$, the right-hand side of \eqref{mel4} at
$s=-\rho-\chi_k$. On the other hand, for such $s$, the left-hand side
factor $1-2^{\rho+s}$ equals zero. Hence, combining \eqref{mel4} and
\refT{Theorem4}\ref{T4b} yields the following (see
\cite{Flajolet1995} for more information).

\begin{lemma}\label{LB1}
If\/ $g(n)=O(n^{\rho-\eps})$ for some $\eps>0$, then $g^*(s)$ is
analytic in (at least) the half-plane $\Re s < -\rho+\eps$, and
$f^*(s)$ is meromorphic in the same half-plane with only simple
poles, which are at\/ $-\rho-\chi_k=-\rho-\frac{2k\pi i}{\log 2}$ for
some $k\in\bbZ$. Furthermore, \refT{Theorem4}\ref{T4b} applies, and
\begin{align}\label{mel5}
	\hP(k)=-\Res\lrsqpar{f^*(s);s=-\rho-\chi_k}.
\end{align}
\end{lemma}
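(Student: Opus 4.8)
The plan is to deduce everything from the functional equation \eqref{mel4}. First I would transfer the hypothesis to the real variable: by the interpolation \eqref{b2}, $g(x)$ lies between consecutive values $g(\floor x)$ and $g(\ceil x)$, so $g(n)=O(n^{\rho-\eps})$ forces $g(x)=O(x^{\rho-\eps})$ for $x\ge1$ (and $g(x)=0$ for $x<1$); hence the integral \eqref{mellin} defining $g^*(s)$ converges absolutely exactly when $\rho-\eps+\Re s-1<-1$, i.e.\ for $\Re s<-\rho+\eps$, and $g^*$ is analytic there. By \refC{C2}, $f(x)=O(x^\rho)$, so $f^*(s)$ is defined and analytic (at least) for $\Re s<-\rho$, and on that half-plane the identity \eqref{mel4} holds. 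I would rewrite it as
\[
  f^*(s)=\frac{g^*(s)+f(1)E(s)}{1-2^{\rho+s}},\qquad
  E(s):=\int_0^1\bigpar{1+(2^\rho-1)\gf(u)}(1+u)^{s-1}\dd u .
\]

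Next I would record two elementary facts. By a standard argument $E(s)$ is entire (the integrand is continuous in $u\in[0,1]$, entire in $s$, and bounded on $[0,1]\times K$ for any compact $K$). And $1-2^{\rho+s}$ vanishes precisely when $(\rho+s)\log2\in2\pi\ii\,\bbZ$, i.e.\ exactly at the points $s=-\rho-\chi_k$, $k\in\bbZ$; these are simple zeros, since the derivative $-2^{\rho+s}\log2$ never vanishes, and they all have $\Re s=-\rho$, hence lie inside $\Re s<-\rho+\eps$. Consequently the right-hand side of the displayed identity is meromorphic on $\Re s<-\rho+\eps$ with at most simple poles, all located among the $s=-\rho-\chi_k$; since it agrees with $f^*$ on $\Re s<-\rho$, it furnishes the asserted meromorphic continuation of $f^*$ and pins down the poles.

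For the residue formula I would first check that \refT{Theorem4}\ref{T4b} applies: $A_m=\max_{2^m\le n\le2^{m+1}}|g(n)|=O\bigpar{2^{(\rho-\eps)m}}$, so \eqref{c3A} holds, and \eqref{c4} gives $\hP(k)$ with absolutely convergent integrals. I would then recognise the two integrals in \eqref{c4} as $g^*(-\rho-\chi_k)$ and $E(-\rho-\chi_k)$ respectively — legitimate because $\Re(-\rho-\chi_k)=-\rho<-\rho+\eps$ lies in the domain of absolute convergence of $g^*$, and $E$ is entire — so that $\hP(k)=\tfrac1{\log2}\bigpar{g^*(-\rho-\chi_k)+f(1)E(-\rho-\chi_k)}$. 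Putting $N(s):=g^*(s)+f(1)E(s)$, which is analytic at $s_0:=-\rho-\chi_k$, and using that the denominator has a simple zero there with derivative $-2^{\rho+s_0}\log2=-\log2$ (since $2^{\rho+s_0}=2^{-\chi_k}=e^{-2k\pi\ii}=1$), I obtain $\Res\lrsqpar{f^*(s);s=s_0}=N(s_0)/(-\log2)=-\hP(k)$, which is \eqref{mel5}.

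The computations are all routine; the two points that need a little care are (i) that the functional equation \eqref{mel4} is established on the initial half-plane $\Re s<-\rho$ of convergence, so that the manifestly meromorphic right-hand side of the displayed identity genuinely continues $f^*$, and (ii) the bookkeeping that the candidate poles $-\rho-\chi_k$ sit strictly inside $\Re s<-\rho+\eps$ — this is exactly what makes both the continuation and the identification of the integrals in \eqref{c4} with $g^*(s_0)$ and $E(s_0)$ valid.
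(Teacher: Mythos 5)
Your proposal is correct and follows exactly the route the paper intends (the paper only sketches the proof as ``combining \eqref{mel4} and \refT{Theorem4}\ref{T4b}''): you solve \eqref{mel4} for $f^*(s)$, observe that the numerator $g^*(s)+f(1)E(s)$ is analytic on $\Re s<-\rho+\eps$ while $1-2^{\rho+s}$ has simple zeros exactly at $s=-\rho-\chi_k$, and then match the two integrals in \eqref{c4} with $g^*(-\rho-\chi_k)$ and $E(-\rho-\chi_k)$ to get \eqref{mel5}. The details you supply (interpolated $g(x)$ bounded by neighbouring integer values, $f(x)=O(x^\rho)$ via \refC{C2}, and the evaluation of the derivative of the denominator at $s_0$) are all the right ones, and your observation that $-\rho-\chi_k$ is a pole iff $\hP(k)\neq0$ matches the remark following the lemma.
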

We do not claim that every $-\rho-\chi_k$ actually is a pole. In
fact, by \eqref{mel5}, $-\rho-\chi_k$ is a pole of $f^*(s)$ if
and only if $\hP(k)\neq0$.

For a better demonstration of the approach, we study the case when
$g(n)\sim n^\rho$, and thus \refT{Theorem1} does not directly apply
(although readily amenable) due to an extra logarithmic leading term
in the asymptotic approximation of $f(n)$. For a meromorphic function
$F(z)$, let $F(s)\FV$ denote \emph{finite value} (or the constant
term) in the Laurent series expansion at $z=s$; thus $F(s)\FV=F(s)$
when the latter is finite.

\begin{thm}\label{TB1}
Assume that $g(n)=n^\rho+g_0(n)$, where $g_0(n)=O(n^{\rho-\eps})$ for
some $\eps>0$. Then
\begin{equation}\label{tb1}
	f(n)=
	n^\rho\log_2 n +n^{\rho }P(\log_{2}n)-Q(n), \qquad n\ge1,
\end{equation}
where (as in \refT{Theorem1}) $P(t)$ is a continuous $1$-periodic
function and $Q(n)=o\lrpar{n^\rho}$ as \ntoo. The Fourier
coefficients of $P(t)$ are given by
\begin{equation}\label{c4B}
	\widehat{P}(k)
	=\frac{f(1)}{\log 2}\int_{0}^{1}
	\frac{1+(\alpha +\beta -1)\varphi(u)}
	{(1+u)^{\rho +\chi_{k}+1}}\dd u
	+\begin{cases}
		\frac{1}{\log 2}g^*(-\rho-\chi_k), & k\neq0,
		\\
		\frac{1}{\log 2}g^*(-\rho)\FV+\frac12, & k=0,
	\end{cases}
\end{equation}
where $g^*(s)$ is meromorphic in $\Re s<-\rho+(\eps\wedge1)$ with a
sole simple pole at $s=-\rho$. In particular, \eqref{c4} holds for
$k\neq0$.

If, moreover, $g(n)=n^\rho$ for even $n\ge2$ (i.e., $g_0(2m)=0$),
then $Q(n)=0$ for $n\ge1$.
\end{thm}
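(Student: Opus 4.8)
The plan is to reduce to \refT{Theorem1} by subtracting the \emph{resonant} term $\psi(n):=n^\rho\log_2 n$ (with $\psi(1)=0$). The point is that, since $\ga+\gb=2^\rho$, we have $\psi(2m)-(\ga+\gb)\psi(m)=2^\rho m^\rho(\log_2 m+1)-2^\rho m^\rho\log_2 m=(2m)^\rho$ \emph{exactly}, while for odd $n=2m+1$ a one-term Taylor expansion of $\psi$ at $2m$, $m$ and $m+1$ (using $\psi'(x)=x^{\rho-1}(\rho\log_2 x+1/\log 2)$, $\psi''(x)=O(x^{\rho-2}\log x)$) gives $\gL_{\ga,\gb}[\psi](n)=n^\rho+r(n)$ with $r(n)=O(n^{\rho-1}\log n)$; thus $r(n)=0$ for even $n$. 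Consequently $f_1(n):=f(n)-\psi(n)$ satisfies $\gL_{\ga,\gb}[f_1]=g_1$ with $g_1(n)=g_0(n)-r(n)$ (where we set $g_0(1):=0$), so $g_1(n)=O(n^{\rho-\eps_1})$ for some $\eps_1>0$, which can be taken arbitrarily close to $\eps\wedge1$, and $f_1(1)=f(1)$.

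Now I would simply invoke \refC{C2} and \refT{Theorem4} for $f_1$. This yields $f_1(x)=x^\rho P(\log_2 x)-Q(x)$ with $P$ continuous and $1$-periodic and $Q(x)=O(x^{\rho-\eps_1})=o(x^\rho)$, together with $\hP(k)=\tfrac1{\log2}\int_1^\infty g_1(u)u^{-\rho-\chi_k-1}\dd u+\tfrac{f(1)}{\log2}\int_0^1\tfrac{1+(\ga+\gb-1)\gf(u)}{(1+u)^{\rho+\chi_k+1}}\dd u$. Adding $\psi(n)$ back (at integer arguments, where the interpolation agrees with the sequence) gives exactly \eqref{tb1}. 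The final assertion is then immediate: if $g_0(2m)=0$ for all $m$, then since also $r(2m)=0$ we get $g_1(2m)=0$, so $g_1$ is supported on odd integers, and \refE{Eodd} applied to $f_1$ forces $Q(n)=0$ at every integer $n\ge1$.

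It remains to rewrite the Fourier coefficients through $g^*$. Decomposing $g$ (as an interpolated function) as the interpolant of $n^\rho\mathbf{1}_{n\ge2}$ plus that of $g_0$, and using that the interpolation always lies between consecutive integer values, one gets $g_{\mathrm{pow}}(x)-x^\rho=O(x^{\rho-1})$ for large $x$ (bounded near $1$); since $\int_1^\infty x^{\rho+s-1}\dd x=-\tfrac1{s+\rho}$ and $g_0^*$ is holomorphic in $\Re s<-\rho+\eps$, it follows that $g^*$ is meromorphic in $\Re s<-\rho+(\eps\wedge1)$ with a single simple pole at $s=-\rho$ of residue $-1$. To pin down the finite part I would feed the pair $(\psi,\gL_{\ga,\gb}[\psi])$ into \eqref{mel4}, which (using $\psi(1)=0$) gives $g^*(s)-g_1^*(s)=(1-2^{\rho+s})\psi^*(s)$, where $g_1^*$ and $\psi^*$ are the Mellin transforms of the interpolants of $g_1(n)$ and $\psi(n)$, defined as for $g^*$; since the interpolant of $n^\rho\log_2 n$ differs from $x^\rho\log_2 x$ by a lower-order function and $\int_1^\infty x^\rho\log_2 x\cdot x^{s-1}\dd x=\tfrac1{(s+\rho)^2\log2}$, one has $\psi^*(s)=\tfrac1{(s+\rho)^2\log2}+E(s)$ with $E$ holomorphic near $-\rho$. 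Expanding $1-2^{\rho+s}$ about $s=-\rho$ shows $(1-2^{\rho+s})\psi^*(s)=-\tfrac1{s+\rho}-\tfrac{\log2}{2}+O(s+\rho)$ there, while it vanishes at each $s=-\rho-\chi_k$ with $k\ne0$. Hence $g^*(-\rho-\chi_k)=g_1^*(-\rho-\chi_k)$ for $k\ne0$ (which also gives \eqref{c4} in that range) and $g^*(-\rho)\FV=g_1^*(-\rho)-\tfrac{\log2}{2}$; substituting both into the formula for $\hP(k)$ above produces \eqref{c4B}, the extra $+\tfrac12$ at $k=0$ being exactly $-\tfrac1{\log2}\bigl(-\tfrac{\log2}{2}\bigr)$.

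The two points that require genuine care are the Taylor estimate for odd $n$ — one must verify that $r(n)$ really loses a full power of $n$, so that $\eps_1>0$ — and the bookkeeping of the constant $\tfrac{\log2}{2}$ in the Laurent expansion of $(1-2^{\rho+s})\psi^*(s)$ at $s=-\rho$, which is the sole origin of the $+\tfrac12$ term in \eqref{c4B}; everything else is a direct appeal to \refC{C2}, \refT{Theorem4}, \refE{Eodd} and \eqref{mel4}. (Alternatively, one could prove the whole statement by shifting the contour in the Mellin inversion integral for $f^*$, as in \refL{LB1}, but the subtraction argument above is more elementary.)
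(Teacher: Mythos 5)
Your proposal is correct and follows essentially the same route as the paper: subtract $\psi(n)=n^\rho\log_2 n$, observe that $\gL_{\ga,\gb}[\psi](2m)=(2m)^\rho$ exactly while the odd case loses a full power (up to a log), apply \refC{C2}/\refT{Theorem4} to the remainder, and recover the $+\tfrac12$ at $k=0$ from the quadratic term of $1-2^{\rho+s}$ against the double pole $\tfrac{1}{(s+\rho)^2\log 2}$. The only (harmless) organizational difference is that you convert $g_1^*$ to $g^*$ directly via \eqref{mel4} applied to the pair $(\psi,\gL_{\ga,\gb}[\psi])$, whereas the paper routes the same computation through the residues of $f^*=f_1^*+f_2^*$ using \refL{LB1}.
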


\begin{proof}
Assume, without loss of generality, $\eps<1$. The assumption and 
\eqref{b2} then imply that we have, for $n\ge1$ and $t\in[0,1]$,
\begin{align}\label{tb2g}
	|g(n+t)-g(n)|&
	\le |g(n+1)-g(n)|
	=|(n+1)^\rho-n^\rho+g_0(n+1)-g_0(n)|
	\notag\\&
	=O\bigpar{n^{\rho-\eps}},
\end{align}
which together with a Taylor expansion of $(n+t)^{s-1}$ yields, for
$\Re s<-\rho$,
\begin{align}\label{mel12g}
	g^*(s)&
	=\sum_{n\ge1} \int_{0}^{1} g(n+t)(n+t)^{s-1}\dd t
	=\sum_{n\ge1}\Bigpar{ g(n)n^{s-1} 
	+O\bigpar{n^{\rho+\Re s-1-\eps}}} \notag\\
	&=\sum_{n\ge1} n^{\rho+s-1} +
	\sum_{\ge 1} O\bigpar{n^{\rho+\Re s-1-\eps}}\notag\\
	&=\zeta(1-\rho-s)+
	\sum_{n\ge1} O\bigpar{n^{\rho+\Re s-1-\eps}}.
\end{align}
Moreover, each term in the final sum is an entire function in $s$,
and the $O$ is uniform for $s$ in any compact set;
thus the sum converges to an analytic function in $H_\eps$. 
Hence, $g^*(s)$ extends to a meromorphic function
in $H_\eps$, with a single simple pole at $s=-\rho$,
as asserted.

Let $f_1(n):=n^\rho\log_2 n$, and $f_2(n):=f(n)-f_1(n)$, and let
$g_j:=\gL_{\ga,\gb}[f_j]$, $j=1,2$. We have
\begin{align}\label{mel6}
	\gL_{\ga,\gb}[f_1](2n)
	&=(2n)^\rho\log_2(2n)-(\ga+\gb)n^\rho\log_2(n)=(2n)^\rho,
\end{align}
and, with $\psi(x):=x^\rho\log_2(x)$, using the mean-value theorem.
\begin{align}\label{mel7}
	\gL_{\ga,\gb}[f_1](2n+1)
	&=(2n+1)^\rho+2^\rho\psi\bigpar{n+\tfrac12}
	-\ga\psi(n)-\gb\psi(n+1)\notag\\
	&=(2n+1)^\rho+O\bigpar{n^{\rho-1}\log (n+1)}.
\end{align}
In other words, for all $n\ge2$,
\begin{align}\label{mel8}
	g_1(n):=  \gL_{\ga,\gb}[f_1](n)
	=n^\rho+O\bigpar{n^{\rho-1}\log n}
	=n^\rho+O\bigpar{n^{\rho-\eps}},
\end{align}
and thus 
\begin{align}\label{mel9}
	g_2(n):=g(n)-g_1(n)
	=g_0(n)+O\bigpar{n^{\rho-\eps}}
	=O\bigpar{n^{\rho-\eps}}.
\end{align}
Consequently, $g_2^*(s)$ is analytic in the half-plane
$H_\eps:=\para{s:\Re s < -\rho+\eps}$, and, by \refC{C2},
\begin{equation}\label{mel10}
	f_2(x)
	=x^{\rho }P_2(\log_{2}x)-Q_2(x), \qquad x\ge1,
\end{equation}
where $Q_2(x)=\sum_{m\ge1} 2^{-\rho m} g_2\bigpar{2^mx}$, and
$P_2(t)$ is a periodic continuous function. Since
$f(n)=f_1(n)+f_2(n)$, this shows \eqref{tb1} with $P(t):=P_2(t)$ and
$Q(t):=Q_2(t)$.

Furthermore, by \refL{LB1}, 
\begin{align}\label{mel11}
	\hP(k)=
	\hP_2(k)=-\Res\lrsqpar{f_2^*(s);s=-\rho-\chi_k}.
\end{align}
By \eqref{mel4}, $f_2^*(s)$ is meromorphic in $H_\eps$, with poles
only at $-\rho-\chi_k$. Moreover, similarly to \eqref{tb2g}, it
follows from \eqref{b2} that
$|f_1(n+t)-f_1(n)|=O\lrpar{n^{\rho-\eps}}$ for $n\ge1$ and
$t\in[0,1]$, which in turn, similarly to \eqref{mel12g}, yields
\begin{align}\label{mel12f}
	f_1^*(s)
	&=\sum_{n\ge1} \int_{0}^{1} f_1(n+t)(n+t)^{s-1}\dd t
	=\sum_{n\ge1} n^{\rho+s-1}\log_2n +
	\sum_{n\ge1} O\bigpar{n^{\rho+\Re s-1-\eps}}
	\notag\\
	&=-\frac{1}{\log 2}\zeta'(1-\rho-s)+
	\sum_{n\ge1} O\bigpar{n^{\rho+\Re s-1-\eps}}.
\end{align}
Again, each term in the final sum is an entire function in $s$, and
the $O$ is uniform for $s$ in any compact set; thus the sum converges
to an analytic function in $H_\eps$. Hence, $f_1^*(s)$ extends to a
meromorphic function in $H_\eps$, with a single (double) pole at
$s=-\rho$. Furthermore, the residue $\Res\lrsqpar{f_1^*(s);s=-\rho} =
\frac{-1}{\log 2} \Res\lrsqpar{\zeta'(1-\rho-s);s=-\rho}=0$, since
$\zeta'$ is a derivative of a meromorphic function. Consequently,
$f^*(s)=f_1^*(s)+f_2^*(s)$ is meromorphic in $H_\eps$, with poles
only at $-\rho-\chi_k$, and
\begin{align}\label{mel14}
	\Res\lrsqpar{f^*(s);s=-\rho-\chi_k}
	=\Res\lrsqpar{f_2^*(s);s=-\rho-\chi_k},
	\qquad k\in\bbZ.
\end{align}
Thus, by \eqref{mel11} and \eqref{mel14},
\begin{align}\label{mel16a}
	\hP(k)&
	=-\Res\lrsqpar{f^*(s);s=-\rho-\chi_k}.
\end{align}
Consequently,
for $k\neq0$,
\eqref{mel4} yields, 
with $s=-\rho-\chi_k$,
\begin{align}\label{mel16}
	\hP(k)\log 2
	&=g^*(-\rho-\chi_k)
	+ f(1)\int_0^1 \lrpar{1+(2^\rho-1)\gf(u)} 
	(1+u)^{-\rho-\chi_k-1}\dd u,  
\end{align}
which is \eqref{c4B} in this case. For $k=0$, by the expansions
$\zeta(z)=(z-1)\qw+O(1)$ and $1-2^z=-(\log
2)z-\frac{(\log2)^2}{2}z^2+O(z^3)$ for small $|z|$, and the relations
\eqref{mel12f} and \eqref{mel16a}, we have
\begin{align}
	(1-2^{z})f^*(-\rho+z)
	&=(1-2^{z})\Bigpar{\frac{1}{\log2}z^{-2}-\hP(0)z\qw+O(1)}
	\notag\\
	&=-z\qw-\frac{\log 2}{2}+\hP(0)\log 2+O(z).\label{mel20}
\end{align}
Hence,
\begin{align}\label{mel21}
	\hP(0) = \frac12+\frac{1}{\log 2} 
	\bigpar{(1-2^{\rho+s})f^*(s)\big|_{s=-\rho}}\FV,
\end{align}
and \eqref{c4B} for $k=0$ follows from \eqref{mel4}.

Finally, \eqref{mel6} shows that $g_1(n)=n^\rho$ for even $n$. Hence,
if $g(n)=n^\rho$ for even $n$, then $g_2(2m)=0$ for $m\ge1$, and then
$Q(n)=Q_2(n)=0$ for $n\ge1$ by \refE{Eodd}, which proves the final
claim of the theorem.
\end{proof}

\begin{example}\label{EB}
Let $\ga=\gb=2$ (so $\rho=2$), and let $g(n)=n^2$, $n\ge2$. Then
\refT{TB1} applies, with $g_0(n)=0$. Hence, $Q(n)=0$, and \eqref{tb1}
yields
\begin{equation}\label{ve0}
	f(n)=
	n^\rho\log_2 n +n^{\rho }P(\log_{2}n), \qquad n\ge1.
\end{equation}
Furthermore, \eqref{majs} shows that, at least for $\Re s<-2$,
\begin{align}
	g^*(s) = \frac{D(-s-1)}{s(s+1)},
\end{align}
where \eqref{D2} yields, recalling $g(0)=g(1)=0$,
\begin{align}
	D(s)=4+2^{-s}+\sum_{n\ge3}2n^{-s}=2\zeta(s)+2-2^{-s},
    \qquad \Re s >1.
\end{align}
Accordingly,
\begin{align}\label{ve1}
	g^*(s)
	&=\frac{1}{s(s+1)}\Bigpar{2\zeta(-s-1)+2-2^{s+1}}.
\end{align}
This shows that $g^*(s)$ is meromorphic in $\bbC$, with poles at
$0,-1$ and $-2$. The finite value $g^*(-2)\FV=\gamma-\frac{3}4$,
where $\gamma$ is Euler's constant. Hence, if for simplicity $f(1)=0$,
the Fourier series of $P(t)$ is, by \eqref{c4B} and \eqref{ve1},
\begin{align}\label{ve2}
	P(t) = \frac{4\gamma-3}{4\log 2}+\frac12
	+\frac{1}{2\log 2}\sum_{k\neq 0}\frac{4\zeta(1+\chi_k)+3}
	{(1+\chi_k)(2+\chi_k)}\,e^{2k\pi i t}.
\end{align}

This sequence $f(n)$ with $g(n)=n^2$ and $f(1)=0$ is not in OEIS, but
the sequence $f_{\A022560}(n)=\A022560(n-1)$ discussed in
\refE{E22-ns} satisfies the recurrence $\Lambda_{2,2}[f]=
\tr{\frac14n^2}$ with $f_{\A022560}(1)=0$. Hence,
$4f_{\A022560}(n)=f(n)+n^2-S_{2,2}(n)$ with $f(n)$ as above. The
formulas \eqref{f022560}--\eqref{P022560} now follow from \eqref{ve0}
and \eqref{ve2} together with \eqref{hp022}.
\end{example}

\section{A series representation for $\hat{P}(k)$}
\label{AHat-Pk} 

We prove \eqref{E:int-phi} in Remark~\ref{R:int-phi}. For notational
simplicity we consider the case $k=0$; the general case is the same,
with $\rho$ replaced by $\rho+\chi_k$ below. (Note that
$2^{\rho+\chi_k}=2^\rho=\ga+\gb$ for all $k\in\bbZ$.) Consider the
integral
\begin{align}
	J := \int_0^1 \frac{1+(\alpha+\beta-1)\varphi(t)}
	{(1+t)^{\rho+1}}\,\dd t
	= \frac{1-2^{-\rho}}{\rho} + (\alpha+\beta-1)J(1)
\end{align}
where
\begin{align}
	J(m) := \int_0^1 \frac{\varphi(t)}
	{(m+t)^{\rho+1}}\,\dd t.
\end{align}
We now express $J(1)$ in a series form as follows. First, by applying 
the recursive definition \eqref{b4} of $\varphi$:
\begin{align}
	\varphi(t) 
	= \begin{cases}
		\frac{\beta}{\alpha+\beta}\,\varphi(2t),
		&\text{if }0\le t\le\frac12;\\
		\frac{\alpha}{\alpha+\beta}\,\varphi(2t-1)
		+\frac{\beta}{\alpha+\beta},
		&\text{if }\frac12\le t\le1,
	\end{cases}
\end{align}
we obtain, for any $m\ge1$, 
\begin{align}
	J(m) &:= \int_0^1 \frac{\varphi(t)}
	{(m+t)^{\rho+1}}\,\dd t\notag\\
	&= \beta\int_0^1 
	\frac{\varphi(t)}{(2m+t)^{\rho+1}}\,\dd t
	+ \alpha\int_0^1 
	\frac{\varphi(t)}{(2m+1+t)^{\rho+1}}\,\dd t
	+ \frac{\gb}{\ga+\gb}
	\int_{\frac12}^1\frac1{(m+t)^{\rho+1}}\,\dd t\notag\\
	&=: \beta J(2m)+\alpha J(2m+1) +K(m),
\end{align}
say. Iterating this gives, for any $N\ge0$,
\begin{align}\label{cc4}
	J(1) = \sum_{0\le m<N}
	\sum_{0\le j<2^m}\alpha^{\nu(j)}\beta^{m-\nu(j)}K(2^m+j)
    +\sum_{0\le j<2^N}\alpha^{\nu(j)}\beta^{N-\nu(j)}J(2^N+j).
\end{align}
where $\nu(j)$ denotes the number of 1's in $j$'s binary expansion.
Since $|J(m)|,|K(m)| = O(m^{-\rho-1})$, the last sum in \eqref{bb4}
is $O\bigpar{(\ga+\gb)^N2^{-(\rho+1)N}}=O\bigpar{2^{-N}}$ and
similarly the inner sum in the double sum is $O(2^{-m})$; hence we
can let $N\to\infty$, which yields
\begin{align}
	J(1) = \sum_{m\ge0}\sum_{0\le j<2^m}
	\alpha^{\nu(j)}\beta^{m-\nu(j)}K(2^m+j).
\end{align}
Now
\begin{align}
	K(m) 
	=\frac{\beta}{\alpha+\beta}
	\int_{\frac12}^1\frac1{(m+t)^{\rho+1}}\,\dd t
	= \frac{\beta}{\rho(\alpha+\beta)}
	\Lpa{\frac1{(m+\tfrac12)^{\rho}}
	-\frac1{(m+1)^{\rho}}}.
\end{align}
Thus
\begin{align}
	&\int_0^1 \frac{1+(\alpha+\beta-1)\varphi(t)}
	{(1+t)^{\rho+1}}\,\dd t \notag\\
	&\quad = \frac{1-2^{-\rho}}{\rho} 
	+ \frac{\beta(\alpha+\beta-1)}{\rho(\alpha+\beta)}
	\sum_{m\ge0}\sum_{0\le j<2^m}
	\alpha^{\nu(j)}\beta^{m-\nu(j)}
	\Lpa{\frac1{(2^m+j+\tfrac12)^{\rho}}
	-\frac1{(2^m+j+1)^{\rho}}}\notag\\
	&\quad= \frac{\alpha+\beta-1}{\rho(\alpha+\beta)}
	\llpa{1+\beta \sum_{m\ge0}\sum_{0\le j<2^m}
	\alpha^{\nu(j)}\beta^{m-\nu(j)}
	\Lpa{\frac1{(2^m+j+\tfrac12)^{\rho}}
	-\frac1{(2^m+j+1)^{\rho}}}}.
\end{align}
The double sum can be converted into a single one as follows. 
\begin{align}
	&\sum_{m\ge0}\sum_{0\le j<2^m}
	\alpha^{\nu(j)}\beta^{m+1-\nu(j)}
	\Lpa{\frac1{(2^m+j+\tfrac12)^{\rho}}
	-\frac1{(2^m+j+1)^{\rho}}}\notag\\
	&\qquad= 2^\rho \sum_{m\ge0}\sum_{0\le j<2^m}
	\alpha^{\nu(j)}\beta^{m+1-\nu(j)}
	\Lpa{\frac1{(2^{m+1}+2j+1)^{\rho}}
	-\frac1{(2^{m+1}+2j+2)^{\rho}}}\notag\\
	&\qquad= 2^\rho \sum_{m\ge1}\sum_{0\le j<2^{m-1}}
	\alpha^{\nu(j)}\beta^{m-\nu(j)}
	\Lpa{\frac1{(2^{m}+2j+1)^{\rho}}
	-\frac1{(2^{m}+2j+2)^{\rho}}} \notag\\
	&\qquad= 2^\rho\sum_{k\ge2}\frac{(-1)^k}{(k+1)^{\rho}}
	\,\alpha^{\nu(\tr{2^{L_k-1}\{k/2^{L_k}\}})}
	\beta^{{L_k}-\nu(\tr{2^{L_k-1}\{k/2^{L_k}\}})}.
\end{align}

\section{Recurrences with minimisation or maximisation}
\label{Aminmix}

Consider the class of sequences satisfying recurrences of the form
\begin{align}\label{ol1}
	u(n)
	=\min_{1\le k\le
	\left\lfloor \frac{n}{2}\right\rfloor}
	\left\{ \alpha u(k)+\beta u(n-k)\right\} 
	\qquad (n\ge 2),
\end{align}
with $u(1)=1$. 

This was studied in \cite{Chang2000}, where it was shown that if
$\alpha$ and $\beta$ are positive integers with $\beta \ge
\alpha$, then the minimum in \eqref{ol1} is reached at
$k=\left\lfloor \frac{n}{2}\right\rfloor $ and the solution is given 
by 
\begin{equation}\label{e1}
    u(n)
	=1+(\alpha +\beta -1)\sum_{1\le j<n}w(j),  
\end{equation}
where
\begin{equation}\label{e10}
	w(j):=\alpha ^{L_{j}-\nu _{0}(j)}\beta^{\nu _{0}(j)}, 
\end{equation}
and $\nu_{0}(j)$ denotes the number of zeros in the binary expansion 
of $j$. We will extend this result and the arguments in 
\cite{Chang2000}, and prove the following.

\begin{prop}\label{PAA}
Let $\ga,\gb>0$ be real numbers such that either
\begin{romenumerate}

\item \label{PAA1}
$\beta \ge \alpha $ and $\beta \ge 1$, or 

\item  \label{PAA2}
$\alpha \ge\beta $ and $\alpha +\beta \le1$.
\end{romenumerate}
Then the minimum in \eqref{ol1} is reached at $k=\left\lfloor
\frac{n}{2}\right\rfloor $. Hence, \eqref{ol1} reduces to
$\Lambda_{\ga,\gb}[u]=0$, and thus \eqref{ol1} is solved by
$u(n)=S_{\ga,\gb}(n)$.
\end{prop}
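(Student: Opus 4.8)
The plan is to follow the strategy of Chang--Peng--Hwang and Chang--Tsai, but carried out for real $\ga,\gb$ under the two sign/size regimes of the proposition. The key structural fact is that if we already knew that the minimum in \eqref{ol1} is always attained at $k=\ltr{n/2}$, then \eqref{ol1} becomes precisely $u(n)=\ga u(\ltr{n/2})+\gb u(\lcl{n/2})$, i.e. $\gL_{\ga,\gb}[u]=0$ with $u(1)=1$, whence $u(n)=S_{\ga,\gb}(n)$ by \refE{Eg=0}. So the whole content is the claim that $k=\ltr{n/2}$ is optimal, and everything else is bookkeeping with the closed form \eqref{e1}--\eqref{e10}.

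First I would adopt the ansatz that the solution is the function $v(n):=1+(\ga+\gb-1)\sum_{1\le j<n}w(j)$ with $w(j)=\ga^{L_j-\nu_0(j)}\gb^{\nu_0(j)}$, which is exactly the hypothetical value $u(n)$ when the recursion always splits at the middle (one checks by the same computation as in \refE{Enu0}/\eqref{fab} that $\gL_{\ga,\gb}[v]=0$ and $v(1)=1$, so $v=S_{\ga,\gb}$). Then I would prove, by strong induction on $n$, the inequality
\begin{equation*}
    \ga v(k)+\gb v(n-k)\ge v(n)\qquad\text{for all }1\le k\le \ltr{n/2},
\end{equation*}
with equality at $k=\ltr{n/2}$. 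Granting this, an immediate induction shows $u(n)=v(n)=S_{\ga,\gb}(n)$: indeed if $u(m)=v(m)$ for all $m<n$, then $u(n)=\min_{k}\{\ga u(k)+\gb u(n-k)\}=\min_k\{\ga v(k)+\gb v(n-k)\}=v(n)$ using the inequality and the fact that it is tight at the midpoint. The base case $n=1$ is trivial and $n=2,3$ are direct.

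The heart of the argument, and the step I expect to be the main obstacle, is the inequality $\ga v(k)+\gb v(n-k)\ge v(n)$, i.e. after subtracting the constant $1$ and dividing by $\ga+\gb-1$ (taking care of its sign: it is positive in case \ref{PAA1} when $\gb\ge1$, and nonpositive in case \ref{PAA2} when $\ga+\gb\le1$, which is exactly why the direction of optimization flips and why these are the two regimes that work), an inequality about the partial sums $W(n):=\sum_{1\le j<n}w(j)$ of the multiplicative weight $w$. The plan is to reduce it to the two-argument recursive structure of $w$: writing $n=2m$ or $n=2m+1$ and using $w(2j)=\ga\,w(j)$, $w(2j+1)=\gb\,w(j)$ (valid for $j\ge1$; the $j=0$ term needs separate attention since $\nu_0$ conventions differ), one gets a self-similar inequality that should follow from the inductive hypothesis at smaller scales together with the elementary monotonicity/convexity facts forced by the hypotheses $\gb\ge\ga$, $\gb\ge1$ (resp. $\ga\ge\gb$, $\ga+\gb\le1$). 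Concretely I would split into the cases $n$ even and $n$ odd, and within each, compare the split at $k$ with the split at $\ltr{n/2}$ by pushing both one level down the recursion; the $\gb\ge\ga$ hypothesis is what guarantees that moving mass toward the more balanced split does not increase the cost. This is essentially the combinatorial lemma of \cite{Chang2000}, but I would need to recheck every monotonicity step since $\ga,\gb$ are now allowed to be non-integer reals and, in case \ref{PAA2}, less than one, so several inequalities reverse; isolating the precisely correct elementary inequalities on $w$ and $W$ under each of the two hypotheses is the delicate part.

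Finally, once the inequality and hence $u=S_{\ga,\gb}$ is established, the stated conclusions about the periodic function (that $u(n)=n^{\log_2(\ga+\gb)}P_0(\log_2 n)$ with $P_0$ given by \eqref{b16} and having an absolutely convergent Fourier series) follow immediately from \refEs{Eg=0} and \ref{Eg=0b}, with no further work; I would just cite them. I would also remark that in case \ref{PAA2} with $\ga+\gb<1$ one has $\rho=\log_2(\ga+\gb)<0$, so $u(n)\to$ a bounded oscillation rather than growing, but the identity still holds verbatim.
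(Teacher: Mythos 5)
Your framing matches the paper's: verify that the closed form $v(n)=1+(\ga+\gb-1)\sum_{1\le j<n}w(j)$ is $S_{\ga,\gb}(n)$, then reduce everything to the inequality $\ga v(k)+\gb v(n-k)\ge \ga v(\ltr{n/2})+\gb v(\lcl{n/2})$, i.e.\ (after factoring out $\ga+\gb-1$, whose sign is what separates cases \ref{PAA1} and \ref{PAA2}) to a comparison of $\gb\sum_{\lcl{n/2}\le j<n-k}w(j)$ with $\ga\sum_{k\le j<\ltr{n/2}}w(j)$. Up to that point you are on exactly the paper's track. But the entire nontrivial content of the proof is the verification of this block inequality, and you leave it as something that ``should follow from the inductive hypothesis at smaller scales together with elementary monotonicity facts,'' explicitly flagging it as the unresolved obstacle. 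That is a genuine gap: the paper needs two specific devices here, neither of which appears in your sketch. First, a pointwise claim: $\gb\,w(n+2^j)\ge\ga\,w(n)$ under \ref{PAA1} (reversed under \ref{PAA2}), proved from the identity
\begin{equation*}
	\frac{\beta w(n+2^{j})}{\alpha w(n)}
	=\Bigl(\frac{\beta }{\alpha }\Bigr)^{1+\nu _{0}(n+2^{j})-\nu _{0}(n)-(L_{n+2^{j}}-L_{n})}\beta^{L_{n+2^{j}}-L_{n}}
\end{equation*}
together with the digit inequalities $1+\nu_0(n+2^j)-\nu_0(n)\ge L_{n+2^j}-L_n\ge0$. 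Second, a combinatorial pairing lemma: a bijection between the two equal-length blocks $\{k,\dots,\ltr{n/2}-1\}$ and $\{\lcl{n/2},\dots,n-k-1\}$ under which each pair of matched integers differs by a power of $2$; this bijection is itself constructed by a nontrivial induction. The block inequality is then obtained by summing the pointwise claim over the pairs. Your proposed substitute --- pushing both splits one level down via the self-similarity of $w$ --- is not carried out, and it is not clear it closes, because the two blocks being compared are not aligned dyadically (their endpoints $k$ and $n-k$ are arbitrary), which is precisely why the paper resorts to the power-of-$2$ pairing rather than a direct scale recursion.

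Two smaller points. The self-similar relations you quote are reversed: from $w(j)=\ga^{\nu(j)-1}\gb^{\nu_0(j)}$ one gets $w(2j)=\gb\,w(j)$ and $w(2j+1)=\ga\,w(j)$ for $j\ge1$, not $w(2j)=\ga\,w(j)$. And the absolute convergence of the Fourier series of $P_0$ is not part of the proposition being proved, though your citation of \refEs{Eg=0} and \ref{Eg=0b} for the final identity $u(n)=S_{\ga,\gb}(n)=n^{\log_2(\ga+\gb)}P_0(\log_2n)$ is fine.
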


\begin{proof}
We note first that $S_\gab(n)$ is given by the formula in \eqref{e1}
for any $\ga,\gb$. This follows by \eqref{Sgagb}, \eqref{b16}, and
\eqref{bb10} in \refS{S:recurrence}, or by the proof in
\cite{Chang2000}.

It thus remains to show that if $u(n)$ is defined by \eqref{e1},
then
\begin{align}\label{ol2}
	\alpha u(k)+\beta u(n-k)
	\ge \alpha u\lpa{\left\lfloor \tfrac{n}{2}
	\right\rfloor}
	+\beta u\lpa{\left\lceil \tfrac{n}{2}\right\rceil}
\end{align}
for $1\le k<\left\lfloor \frac{n}{2}\right\rfloor $. By (\ref{e1}),
the difference between the two sides of \eqref{ol2} is
\begin{align}\label{ol3}
	(\alpha +\beta -1)\left(\beta 
	\sum_{\left\lceil \frac{n}{2}\right\rceil\le j<n-k}
	w(j)-\alpha \sum_{k\le j<\left\lfloor\frac{n}{2}\right\rfloor}
	w(j)\right).
\end{align}
To prove that this is non-negative, we will show that
\begin{equation}\label{e2}
	\beta \sum_{\left\lceil \frac{n}{2}\right\rceil\le j<n-k}
	w(j)\ge \alpha \sum_{k\le j<\left\lfloor
	\frac{n}{2}\right\rfloor}w(j)
\end{equation}
if \ref{PAA1} holds, and that \eqref{e2} holds with the inequality
reversed if \ref{PAA2} holds. The key is the following claim:\\
\begin{alphenumerate}\em
\item \label{PAAa}
If\/ $\beta \ge \alpha $ and $\beta \ge 1$, then
\begin{align}\label{paaa}
	\beta w(n+2^{j})
	\ge \alpha w(n),\quad \text{for all }
	n\ge 1\text{ and }j\ge 0.  
\end{align}
\item\label{PAAb}
If\/ $\alpha \ge \beta $ and $\beta \le 1$, then
\begin{align}\label{paab}
	\beta w(n+2^{j})\le \alpha w(n),\quad 
	\text{for all }n\ge 1\text{ and }j\ge 0.
\end{align}
\end{alphenumerate}
\emph{Proof of the claim.} From (\ref{e10}), we have
\begin{align}\label{e5}
	\frac{\beta w(n+2^{j})}{\alpha w(n)}&
	=\left( \frac{\beta }{\alpha }\right)^{
	1+\nu _{0}(n+2^{j})-\nu _{0}(n)}
	\alpha^{L_{n+2^{j}}-L_{n}} \notag\\
	&=\left( \frac{\beta }{\alpha }\right)^{
	1+\nu _{0}(n+2^{j})-\nu _{0}(n)-(L_{n+2^{j}}-L_{n})}
	\beta^{L_{n+2^{j}}-L_{n}}. 
\end{align}
It is easily seen that
\begin{equation}\label{e6}
	1+\nu _{0}(n+2^{j})-\nu _{0}(n)
	\ge L_{n+2^{j}}-L_{n}\ge 0.
\end{equation}
Both parts of the claim thus follow from \eqref{e5}. \qed

To show (\ref{e2}), or its converse in case \ref{PAA2}, which will
then complete the proof of the proposition, we combine the claim
above with a pairing between the sets, with
$m:=\lrfloor{\frac{n}{2}}-k$,
\begin{equation}
	\left\{\left\lfloor \frac{n}{2}\right\rfloor -m,
	\cdots ,\left\lfloor \frac{n}{2}\right\rfloor -1\right\} 
	\quad \text{and}\quad \left\{ \left\lceil 
	\frac{n}{2}\right\rceil,
	\cdots ,\left\lceil \frac{n}{2}\right\rceil +m-1\right\},
\end{equation}
such that the difference between the elements of each pair is a power 
of $2$. In other words, the proof is completed by applying the 
following lemma (with a translation).
\end{proof}

\begin{lemma}
Let $n\ge1$ and
\begin{align}
	A &:=\{1,2,\cdots ,n\}, \\
	C_{1} &:=\{n+1,n+2,\cdots ,2n\}, \\
	C_{2} &:=\{n+2,n+3,\cdots ,2n+1\}.
\end{align}
There exist one-to-one mappings
\begin{equation}
	h_{1}:A\rightarrow C_{1}\quad 
	\text{and}\quad h_{2}:A\rightarrow C_{2},
\end{equation}
such that for each $k\in A$ there exist $j_{1},j_{2}$ with
\begin{align}
	h_{1}(k)
	&=k+2^{j_{1}},\quad 
	\text{for some } j_1\ge0, \label{Ah1}\\
	h_{2}(k)
	&=k+2^{j_{1}},\quad \text{for some } j_2\ge0.\label{Ah2}
\end{align}
\end{lemma}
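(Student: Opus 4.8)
The plan is to prove the existence of $h_1$ and $h_2$ simultaneously by strong induction on $n$, exploiting the elementary observation that a power of $2$ is odd only when it equals $2^0=1$. Hence a matching with power-of-$2$ differences must preserve parity except possibly on one element (which then uses the difference $1$), and this suggests reducing the problem from $n$ to $\lfloor n/2\rfloor$ or $\lceil n/2\rceil$ by treating the even and odd residue classes separately. The technical device is a ``doubling'' principle: if $\psi\colon I\to J$ is a bijection between finite sets of integers with every $\psi(k)-k$ a power of $2$, then for each fixed $\epsilon\in\{0,1\}$ the map $2k-\epsilon\mapsto 2\psi(k)-\epsilon$ is a bijection from $\{2k-\epsilon:k\in I\}$ onto $\{2k-\epsilon:k\in J\}$ with differences $2(\psi(k)-k)$, again powers of $2$. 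Since the even (resp.\ odd) elements of $A=\{1,\dots,n\}$ and of $C_1,C_2$ are precisely the images under $k\mapsto 2k$ (resp.\ $k\mapsto 2k-1$) of suitable shorter intervals, this lets one build $h_1,h_2$ for parameter $n$ out of $h_1,h_2$ for smaller parameters.

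After the explicit base case $n=1$ ($1\mapsto 2=1+2^0$ for $h_1$, and $1\mapsto 3=1+2^1$ for $h_2$), one splits on the parity of $n\ge 2$. If $n=2m$ is even, both $A$ and each of $C_1=\{n+1,\dots,2n\}$, $C_2=\{n+2,\dots,2n+1\}$ decompose cleanly into their even and odd halves, every half being a doubled copy of the interval $\{1,\dots,m\}$ matched either to $\{m+1,\dots,2m\}$ (i.e.\ via $h_1^{(m)}$) or to $\{m+2,\dots,2m+1\}$ (i.e.\ via $h_2^{(m)}$); assembling the doubled maps yields $h_1^{(n)}$ and $h_2^{(n)}$. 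If $n=2m+1$ is odd, the odd elements no longer count equally on the two sides, so exactly one element must carry the difference $2^0=1$: for $h_1^{(n)}$ I would set $2m+1\mapsto 2m+2$ and then match the remaining odds $\{1,3,\dots,2m-1\}$ and the evens $\{2,4,\dots,2m\}$, each a doubled copy of $\{1,\dots,m\}$, using $h_2^{(m)}$; for $h_2^{(n)}$ no unit step is needed, and the odds $\{1,3,\dots,2m+1\}$ form a doubled copy of $\{1,\dots,m+1\}$ matched by $h_1^{(m+1)}$ while the evens are handled by $h_2^{(m)}$. In every case one then checks that the images fall inside the prescribed target set and that the two or three image blocks partition it exactly — routine bookkeeping with runs of consecutive integers.

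The main obstacle is the odd case: one has to recognize that a unit step is forced there, pick the element that should carry it (the top element $2m+1$ for $h_1$, none for $h_2$), and confirm that the leftover even/odd blocks are genuine doubled copies of the shorter intervals so that the inductive hypotheses for $h_1^{(m+1)}$ and $h_2^{(m)}$ apply; it also matters that all recursive calls are to strictly smaller parameters once $n\ge 2$, since $\lceil n/2\rceil<n$ for $n\ge 2$, which makes the induction well founded after the base case $n=1$. Finally, since the lemma's mappings differ from $h_1,h_2$ only by the global translation $k\mapsto k+n$ (sending $A=\{1,\dots,n\}$ to $\{n+1,\dots,2n\}$ etc.), the displayed statements \eqref{Ah1}--\eqref{Ah2} follow at once.
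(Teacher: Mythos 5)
Your proof is correct, but it is a genuinely different argument from the one in the paper. The paper writes $n=2^{\ell}+j$ with $0\le j<2^{\ell}$, sends the bottom block $\{1,\dots,2j\}$ to the top of $C_1$ by the single shift $k\mapsto k+2^{\ell+1}$, and is left (after translating by $2j$) with the same problem for the smaller parameter $2^{\ell}-j$; the map $h_2$ is then dismissed as ``similar''. You instead run a mutual induction on $h_1$ and $h_2$ together, splitting domain and codomain by parity and using the doubling principle $2k-\epsilon\mapsto 2\psi(k)-\epsilon$ to reduce each residue class to an instance of size about $n/2$. Your route has two advantages: it constructs $h_1$ and $h_2$ in one uniform recursion rather than leaving one of them to an omitted analogue, and the parity count ($A$ has $\lceil n/2\rceil$ odd elements while $C_1$ has $\lfloor n/2\rfloor$) explains conceptually why exactly one unit step $2^0$ is forced in $h_1$ for odd $n$ and none in $h_2$ --- something the paper's shift-and-recurse argument does not make visible. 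The paper's version is shorter for $h_1$ alone and avoids tracking two interleaved recursions. I verified the bookkeeping in all your cases (even $n=2m$ uses $h_1^{(m)},h_2^{(m)}$; odd $n=2m+1$ uses $h_2^{(m)}$ twice for $h_1^{(n)}$ and $h_1^{(m+1)},h_2^{(m)}$ for $h_2^{(n)}$), and the recursion is well founded since all calls are to parameters $\le\lceil n/2\rceil<n$ for $n\ge2$. One cosmetic point: your closing remark about a ``global translation $k\mapsto k+n$'' is unnecessary --- the maps you build already go from $A$ into $C_1$ and $C_2$ exactly as the lemma demands; the translation only enters when the lemma is applied in Proposition C.1, not in the lemma itself.
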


\begin{proof}
We prove the existence of $h_1$ by induction. Write $n=2^{\ell}+j$
for some $\ell\ge 0$ and $0\le j< 2^{\ell}$. (Thus,
$\ell=L_n$.) We want to show that there exist a one-to-one mapping
\begin{equation}
	h_{1}:\{1,\cdots ,2^{\ell}+j\}
	\rightarrow \{2^{\ell}+j+1,\cdots ,2^{\ell+1}+2j\},
\end{equation}
such that for each $1\le k \le 2^{\ell}+j$, \eqref{Ah1} 
holds.

If $j=0$, we simply define $h_1(k):=k+2^\ell$.

If $j\ge1$, we first define $h_1(k)$ for $k\le 2j$ by
$h_1(k):=k+2^{\ell+1}$. This gives a mapping from $\{1,2,\cdots
,2j\}$ to $\{2^{\ell+1}+1,\cdots ,2^{\ell+1}+2j\}$. We remove these
two blocks, and it remains to define a one-to-one mapping satisfying
\eqref{Ah1} between $\{2j+1,\cdots ,2^{\ell}+j\}$ and
$\{2^{\ell}+j+1,\cdots ,2^{\ell+1}\}$. By subtracting $2j$ from each
term, it is equivalent to showing that there exist such a mapping
from $\{1,\cdots ,m\} $ to $\{m+1,\cdots ,2m\}$, where
$m=2^{\ell}-j$; this is true by the induction hypothesis.

The proof of the existence of $h_2$ is similar (but with 
$n+1=2^\ell+j$); we omit the details.
\end{proof}

If we replace $\min$ by $\max$ in \eqref{ol1}, we obtain a similar
result; see also \cite{AMM-2009} for $\ga>\gb=1$.

\begin{prop}\label{PAmax}
Let $\ga,\gb>0$ be real numbers such that $\alpha \ge \beta$,
$\beta \le 1$, and $\alpha+\beta \ge 1$. Then the maximum
in the recursion
\begin{align}\label{olmax}
	u(n)=\max_{1\le k\le 
	\left\lfloor \frac{n}{2}\right\rfloor}
	\left\{ \alpha u(k)+\beta u(n-k)\right\} 
	\qquad (n\ge 2),
\end{align}
with $u(1)=1$, is attained at $k=\left\lfloor
\frac{n}{2}\right\rfloor $. Hence, \eqref{olmax} reduces to
$\Lambda_{\ga,\gb}[u]=0$, and thus \eqref{olmax} is solved by
$u(n)=S_{\ga,\gb}(n)$.
\end{prop}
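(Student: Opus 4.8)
The plan is to transcribe the proof of \refP{PAA} almost verbatim, replacing part \ref{PAAa} of its internal claim by part \ref{PAAb} and flipping one inequality. First I would note --- exactly as in the proof of \refP{PAA} --- that for any real $\ga,\gb$ the sequence $S_{\ga,\gb}(n)$ is given by the closed form \eqref{e1} (via \eqref{Sgagb}, \eqref{b16} and \eqref{bb10}), and that $S_{\ga,\gb}$ is the unique solution of $\gL_{\ga,\gb}[u]=0$ with $u(1)=1$ by \refE{Eg=0}. Thus it suffices to show that the sequence $u$ defined by \eqref{e1} satisfies, for all $n\ge2$ and $1\le k<\ltr{\tfrac n2}$,
\[
\ga u(k)+\gb u(n-k)
\le \ga u\lpa{\ltr{\tfrac n2}}+\gb u\lpa{\lcl{\tfrac n2}},
\]
so that the maximum in \eqref{olmax} is attained at $k=\ltr{\tfrac n2}$, whence \eqref{olmax} reduces to $\gL_{\ga,\gb}[u]=0$ and $u=S_{\ga,\gb}$.

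If $\ga+\gb=1$, then \eqref{e1} gives $u\equiv1$ and the inequality is trivial, so I would assume $\ga+\gb>1$. As in \eqref{ol3}, the right side minus the left side of the displayed inequality equals
\[
(\ga+\gb-1)\Lpa{\ga\!\!\sum_{k\le j<\tr{n/2}}\!\!w(j)\;-\;\gb\!\!\sum_{\cl{n/2}\le j<n-k}\!\!w(j)},
\]
and since $\ga+\gb-1>0$ under our hypothesis $\ga+\gb\ge1$, it is enough to establish the reverse of \eqref{e2}, namely
\[
\gb\!\!\sum_{\cl{n/2}\le j<n-k}\!\!w(j)\;\le\;\ga\!\!\sum_{k\le j<\tr{n/2}}\!\!w(j).
\]
Here I would invoke part \ref{PAAb} of the claim in the proof of \refP{PAA}: since $\ga\ge\gb$ and $\gb\le1$, one has $\gb\, w(m+2^{i})\le\ga\, w(m)$ for all $m\ge1$ and $i\ge0$. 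Combining this with the pairing lemma --- which, after a translation, supplies a bijection between $\{k,\dots,\tr{n/2}-1\}$ and $\{\cl{n/2},\dots,n-k-1\}$ under which paired elements differ by a power of $2$ --- and summing $\gb\, w(m+2^{i})\le\ga\, w(m)$ over the pairs yields the displayed inequality. The resulting identity $u(n)=n^{\log_2(\ga+\gb)}P_0(\log_2 n)$ with $P_0$ as in \eqref{b16} (and an absolutely convergent Fourier series) then follows from \refEs{Eg=0} and \ref{Eg=0b}, exactly as recorded for the minimisation case in \refE{Ex-min}.

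The only real work is bookkeeping: checking that $\ga\ge\gb$ and $\gb\le1$ are precisely the hypotheses under which part \ref{PAAb} of the claim holds, and that $\ga+\gb\ge1$ is exactly what makes the sign of \eqref{ol3} favourable. With those verifications in place, the argument is a mechanical mirror of the minimisation proof, so I do not anticipate a substantive obstacle; the one point that deserves a sentence of care is the degenerate boundary case $\ga+\gb=1$, which is handled separately above.
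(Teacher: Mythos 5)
Your argument is correct and is essentially the paper's own proof: the paper likewise disposes of \refP{PAmax} by observing that the hypotheses $\ga\ge\gb$, $\gb\le1$ give \eqref{paab}, hence the reverse of \eqref{e2} via the pairing lemma, and that $\ga+\gb\ge1$ makes the prefactor in \eqref{ol3} nonnegative so the difference is $\le0$. Your separate treatment of $\ga+\gb=1$ is harmless but unnecessary, since the factor $\ga+\gb-1=0$ already forces \eqref{ol3} to vanish there.
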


\begin{proof}
The proof above shows that under these conditions, \eqref{paab}
holds, and hence \eqref{e2} holds in the opposite direction. Thus the
difference in \eqref{ol3} is $\le0$. (We do not obtain a second case
with $\ga+\gb<1$; we then would need \eqref{paaa}, but the condition
$\gb\ge1$ in \ref{PAAa} above is incompatible with $\ga+\gb<1$.)
\end{proof}

The results above can be extended to the recurrences of the form
\begin{align}
	u(n)=\min_{1\le k\le
	\left\lfloor \frac{n}{2}\right\rfloor
	}\left\{ \alpha u(k)+\beta u(n-k)\right\} 
	+c\qquad (n\ge 2),
\end{align}
with the same conditions as above on $\ga$ and $\gb$; see 
\cite{Fredman1974,Hwang2003} for more general versions. The solution 
for the recurrence
\begin{align}
	u(n)
	=\ga u\left( \left\lfloor \frac{n}{2}\right\rfloor \right) 
	+\gb u\left( \left\lceil \frac{n}{2}\right\rceil \right) +c
\end{align}
is, with $w(j)$ as in \eqref{e1}--\eqref{e10} above (see also
\eqref{l4.1})
\begin{equation}
	u(n)
	=u(1)+\bigpar{(\alpha +\beta -1)u(1)+c} 
	\sum_{1\le j<n}w(j).
\end{equation}

\section{Nowhere differentiability of $P_{\text{A006581}}(t)$}
\label{AE22-ns}

We prove in this appendix the fractal nature of the periodic function
$P(t)$ arising from A006581 (discussed in \refE{E22-ns} with Fourier
expansion given in \eqref{E:P-A006581}), namely,
$\Lambda_{2,2}[f]=g$ where $g(n) := \{\frac{n}2\}(n-1)$:
\begin{align}
    f(n) = n^2P(\log_2n),	
\end{align}
where, with $\bar{g}(x) := g(x)/x^2$,
\begin{align}
	P(t) = \sum_{m\in\mathbb{Z}} 4^{-m-t}g(2^{m+t})
	=\sum_{m\in\mathbb{Z}} \bar{g}(2^{m+t}).
\end{align}
Here $g(x)$ is extended from $g(n)$ as in \eqref{b2} with $\varphi(t)
=t$: 
\begin{align}
	\bar{g}(x) = \frac1{2x^2}\times\begin{cases}
		\{x\}\tr{x}, 
		&\text{if }\tr{x} \text{ is even};\\
		(1-\{x\})(\tr{x}-1),
		&\text{if }\tr{x} \text{ is odd}.
	\end{cases}
\end{align}
Since $\bar{g}(2^{m+t})=0$ for $t\in[0,1]$ and $m\le 0$, we have 
\begin{align}
	P(t) = \sum_{m\ge1} \bar{g}(2^{m+t})
\qquad (0\le t\le 1).
\end{align}

The method of proof used here to prove the nowhere differentiability 
of $P$ is standard and similar to that for the Takagi function given 
in the survey paper \cite{Allaart2011}.

Let $t\in[0,1)$, and define 
\begin{align}
    \tau_n := \log_2\frac{\tr{2^{n+t}}}{2^n}
	\eqtext{and}
	\tau_n' := \log_2\frac{\tr{2^{n+t}}+1}{2^n}.
\end{align}
To prove that $P$ does not have a finite derivative at $t$, it 
suffices to show that the sequence
\begin{align}\label{E:P-ratio}
    \frac{P(\tau_n)-P(\tau_n')}
	{\tau_n-\tau_n'}
	= \sum_{1\le m\le n}\frac{\bar{g}(2^{m+\tau_n})-
	\bar{g}(2^{m+\tau_n'})}{\tau_n-\tau_n'}
\end{align}
does not converge to a finite limit. Here we used the relation 
$\bar{g}(2^{m+\tau_n})=\bar{g}(2^{m+\tau_n'})=0$ for $m>n$. Now 
for $\theta\in[0,1)$
\begin{align}
	\left\{
	\begin{aligned}
		\bar{g}(2k+\theta)
		&= \frac{k\theta}{(2k+\theta)^2}\\
		\bar{g}(2k+1+\theta)
		&= \frac{k(1-\theta)}{(2k+1+\theta)^2}
	\end{aligned}
	\right.
\end{align}
so that, taking the right derivative at integer points here and below,
\begin{align}\label{ccg'}
	\left\{
	\begin{aligned}
		\bar{g}'(2k+\theta)
		&= \frac{k(2k-\theta)}{2(2k+\theta)^3}\\
		\bar{g}'(2k+1+\theta)
		&= -\frac{k(2k+3-\theta)}{(2k+1+\theta)^3}
.	\end{aligned}
	\right.
\end{align}
If $1\le m\le n$, then
\begin{align}\label{cc5}
	\floor{2^{m+\tau_n}} \le 2^{m+\tau_n}
	\le 2^{m+t}
	< 2^{m+\tau'_n}
	\le \floor{2^{m+\tau_n}}+1.
\end{align}
It follows that $h(x):=\bar g(2^x)$ is infinitely differentiable on
$[m+\tau_n,m+\tau_n']$, and it is easily seen from \eqref{ccg'} that
$h''(x)=O(1)$ (uniformly in $m$ and $n$). We have, for some
$\tau''_n\in(\tau_n,\tau'_n)$,
\begin{align}
	\frac{\bar{g}(2^{m+\tau_n})-
	\bar{g}(2^{m+\tau_n'})}{\tau_n-\tau_n'}
	&=\frac{h(m+\tau_n)-h(m+\tau_n')}{\tau_n-\tau_n'}
	= h'(m+\tau''_n)\notag\\
	&=h'(m+t)+O(|\tau''_n-t|)
	=h'(m+t)+O(|\tau'_n-\tau_n|)\notag\\
	&=h'(m+t)+O(2^{-n}).
\end{align}
Hence, \eqref{E:P-ratio} implies that
\begin{align}
    \frac{P(\tau_n)-P(\tau_n')}{\tau_n-\tau_n'}
	= \sum_{1\le m\le n}h'(m+t)+O(n2^{-n}),
\end{align}
and thus, if $P$ is differentiable at $t$, then the sum
\begin{align}\label{cc6}
    \sum_{ m\ge 1}h'(m+t)
\end{align}
converges (and equals $P'(t)$).

On the other hand, it follows easily from \eqref{ccg'} that if
$x\ge2$, then $x\bar g'(x)\ge \frac19$ for even $\tr{x}$ and $x\bar
g'(x)\le -\frac14$ for odd $\tr{x}$. Hence, $|h'(m+t)|=|2^{m+t}\bar
g'(2^{m+t})\log 2|\ge \frac{\log2}9$ for all $m\ge1$. Consequently,
the sum \eqref{cc6} diverges for any $t$, and thus $P$ is nowhere
differentiable.

Moreover, we note that if $2^t$ is a dyadic rational, then for all
large $m$, $2^{m+t}$ is an even integer, and thus
$h'(m+t)\ge\frac{\log 2}{9}$. Hence, in this case the sum \eqref{cc6}
diverges to $+\infty$, and thus so does
$\bigpar{P(\tau_n)-P(\tau'_n)}/(\tau_n-\tau'_n)$ in
\eqref{E:P-ratio}; consequently, $P$ is not Lipschitz.

We do not know whether $P$ is H{\"o}lder continuous, and leave that
as an open problem. Note that \refL{LProp1} does not apply since
\eqref{c3} does not hold.

\bibliographystyle{abbrv}
\bibliography{dac-ab-2022-oct}

\end{document}